\newcommand\reallywidehat[1]{%
\savestack{\tmpbox}{\stretchto{%
  \scaleto{%
    \scalerel*[\widthof{\ensuremath{#1}}]{\kern-.6pt\bigwedge\kern-.6pt}%
    {\rule[-\textheight/2]{1ex}{\textheight}}
  }{\textheight}%
}{0.5ex}}%
\stackon[1pt]{#1}{\tmpbox}%
}
\renewcommand{\phi}{\varphi}
\renewcommand{\ker}{\Ker}
\renewcommand{\Re}{\textup{Re }}
\renewcommand{\Im}{\textup{Im }}
\newcommand{\mc}[1]{\mathcal{#1}}
\newcommand{\mb}[1]{\mathbb{#1}}
\newcommand{\ms}[1]{\mathscr{#1}}
\newcommand{\mbb}[1]{\mathbbm{#1}}
\newcommand{\tint}{{\textstyle\int}}
\DeclareMathOperator{\dive}{div}
\def\P{P_{>N}}
\def\be{\begin{equation}}
\def\ee{\end{equation}}
\def\bea{\begin{eqnarray}}
\def\eea{\end{eqnarray}}
\def\nn{\nonumber}
\def\T{\mathbb{T}}
\def\C{\mathbb{C}}
\def\Z{\mathbb{Z}}
\def\N{\mathbb{N}}
\def\B{\mathscr{B}}
\def\Ga{\mathscr{G}}
\DeclareMathSymbol{\leqslant}{\mathalpha}{AMSa}{"36} 
\DeclareMathSymbol{\geqslant}{\mathalpha}{AMSa}{"3E} 
\DeclareMathSymbol{\eset}{\mathalpha}{AMSb}{"3F}     
\renewcommand{\leq}{\;\leqslant\;}                   
\renewcommand{\geq}{\;\geqslant\;}                   
\DeclareMathOperator{\supp}{supp}
\DeclareMathOperator{\Ker}{Ker}
\DeclareMathOperator{\Tr}{Tr}
\DeclareMathOperator{\dd}{dd}
\DeclareMathOperator{\Span}{span}
\DeclareMathOperator{\degb}{\overline{deg}}
\newcommand{\E}[1]{\mathbb{E}\left[#1\right]}
\def\a{\alpha}
\def\e{\varepsilon}
\def\d{\delta}
\def\g{\gamma}
\def\b{\beta}
\def\D{\Delta}
\def\r{\rho}
\def\s{\sigma}
\def\R{\mathbb{R}}
\def\C{\mathbb{C}}
\def\E{\mathbb{E}}
\theoremstyle{plain}
\newtheorem{theorem}{Theorem}[section]
\newtheorem{lemma}[theorem]{Lemma}
\newtheorem{proposition}[theorem]{Proposition}
\newtheorem{corollary}[theorem]{Corollary}
\theoremstyle{definition}
\theoremstyle{remark}
\newtheorem{remark}[theorem]{Remark}
\numberwithin{equation}{section}
\definecolor{light}{gray}{.9}
\author{Giuseppe Genovese}
\address{Giuseppe Genovese: Institut f\"ur Mathematik, Universit\"at Z\"urich,
CH-8057 Z\"urich, Switzerland.}
\email{giuseppe.genovese@math.uzh.ch}
\author{Renato Luc\`a}
\address{Department Mathematik und Informatik, Universit\"at Basel Spiegelgasse 1, CH-4051 Basel, Switzerland}
\email{renato.luca@unibas.ch}
\author{Daniele Valeri}
\address{School of Mathematics and Statistics, University of Glasgow, G12 8QQ Glasgow, UK}
\email{daniele.valeri@glasgow.ac.uk}
\title[Invariant measures for the DNLS equation]
{Invariant measures for the periodic
derivative nonlinear Schr\"odinger equation}
\date{\today}
\subjclass[2000]{35Q30, 35BXX, 37K05, 37L50, 35Q55, 37K10, 37K30, 17B69, 17B80}
\keywords{Gibbs measures, invariant measures, DNLS, integrable systems}
\begin{document}

\begin{abstract}
We construct invariant measures associated to the integrals of motion of the periodic derivative nonlinear Schr\"odinger equation (DNLS) for small data in $L^2$ and we show these measures to be absolutely continuous with respect to the Gaussian measure. The key ingredient of the proof is the analysis of the gauge group of transformations associated to DNLS. As an intermediate step for our main result, we prove quasi-invariance with respect to the gauge maps of the Gaussian measure on $L^2$ with covariance $(\mathbb I+(-\D)^k)^{-1}$ for any $k\geq2$. 
\end{abstract}

\maketitle

\section{Introduction}

In this paper we continue our studies on the periodic DNLS equation
\be\label{eq:DNLS}
\left\{
\begin{array}{rcl}
i\partial_t \psi + \psi'' & = & i\beta\left(\psi|\psi|^2\right)'  \\
\psi(x,0) & = &  \psi_{0}(x) \,,\quad x\in\T\,,
\end{array}
\right.
\ee
where $\psi(x,t) : \mathbb{T} \times \mathbb{R} \rightarrow \mathbb{C}$,
$\psi_{0}(x) : \mathbb{T} \rightarrow \mathbb{C}$,
$\psi'(x,t)$ denotes the derivative of $\psi$ with respect to $x$, and 
$\b\in\R$ is a real parameter. We denote by $\Phi_t$ the associated flow-map. 

This is a dispersive nonlinear model describing the motion along the longitudinal direction of a circularly polarized wave, generated in a low density plasma by an external magnetic field \cite{moj}. It is an integrable system \cite{KN78} (see also \cite{DSK13}),
in the sense that there is an infinite sequence of linearly independent quantities (integrals of motion) which 
are conserved by the flow of \eqref{eq:DNLS} for sufficiently regular solutions.

In our previous work \cite{GLV16} we constructed a family of Gibbs measures, supported on
Sobolev spaces of increasing regularity, associated to the integrals of motion of the DNLS equation.
In this paper we construct a sequence of measures invariant along the flow. We prove these measures to be absolutely continuous with respect to the Gaussian measures, however we cannot show them to coincide with the Gibbs measures (albeit this is expected to be true). 

The studies of PDEs from the perspective of statistical mechanics started with the seminal paper by Lebowitz, Rose and Speer
\cite{Leb}, where the periodic one dimensional NLS equation was studied by introducing the statistical
ensembles naturally associated to the Hamiltonian functional.
Successively, starting from the paper \cite{B94} Bourgain gave fundamental contributions to the development of this field, for a comprehensive exposition we refer to \cite{Bo} and the references therein. 
For integrable PDEs one can profit from an infinite number of higher 
Hamiltonian functionals, in order to construct infinitely many invariant Gibbs measures. This was originally noted 
by Zhidkov \cite{Zh01}, in the context of
Korteweg-de Vries (KdV) equation and cubic nonlinear Schr\"odinger (NLS) equation on $\T$.
A similar result was achieved in the last years for the Benjamin-Ono equation 
on $\T$ via a series of papers by 
Tzvetkov, Visciglia and Deng \cite{TzPTRF,TV13a,TV13b,TV14,D14,DTV14}. In this case (likewise for DNLS) a more careful construction of the (invariant) measures is 
required compared to KdV and NLS. 
Recently a renewed interest invested the subject and numerous works treating different aspects of it for a large class of equations appeared. However we are not attempting here to give an exhaustive account of the literature, but we focus just on DNLS. 

The construction of the Gibbs measure associated to the energy functional of the DNLS equation
\be\label{eq:energia}
E_1[\psi]
 =
\frac12 \|\psi\|_{\dot{H}^1}^2
+\frac{3i}4\beta\int|\psi|^2\psi'\bar\psi
+\frac{\beta^2}4\|\psi\|_{L^6}^6 \, ,
\ee
was achieved in \cite{TT10}, while in \cite{NOR-BS12} and \cite{NR-BSS11} the measure was proven to be invariant. 
The proof of \cite{NOR-BS12} uses quite a sophisticated strategy, that we briefly explain. 
The best local well-posedness result for the DNLS equation in Sobolev spaces is for data in $H^{1/2}$ \cite{Herr}, which falls outside the support of the Gibbs measure (for existence of weak solutions below $H^{1/2}$ see \cite{tak16}). However in \cite{GH} local well-posedness in the Fourier-Lebesgue spaces~$\mathcal FL^{s,r}(\T)$ with $r\in[2,\infty)$ and $(s-1)r<-1$ is proven. The authors of \cite{NOR-BS12} showed these spaces to be of full measure or more precisely that $(\imath,H^1,\mathcal FL^{s,r})$ is an abstract Wiener space, where $\imath\,:\,H^1\mapsto\mathcal FL^{s,r}$ is the inclusion map (see \cite{kuo}). Then they establish energy growth estimates for each single solution, in a localised-in-time version of the Bourgain space $X^{2/3-,1/2}_{3}$, for initial data in $\mathcal FL^{2/3-,3}$. 



In the present paper, we opt for a more probabilistic approach, closer to the one developed by
Tzvetkov and Visciglia in the context of the Benjamin-Ono equation. 
As in \cite{NOR-BS12}, the gauge transformation introduced by Herr in the periodic setting (see \cite{Herr})
constitutes one of the main ingredient of our proof, even though we make a different gauge choice. Indeed our gauge simplifies the integrals of motion rather than the equation. More precisely we consider a one-parameter family of gauge transformations (see \eqref{gauge_change} in Section \ref{sec:gauge}), and for each integral of motion we select an appropriate value of the parameter. One of the advantages of this approach is that we can simply work in Sobolev spaces, without introducing any auxiliary functional space.

In the rest of the introduction we present the set-up in which our main Theorem \ref{Th:Main} is stated and we explain the strategy of the proof.

\subsection{Set-up and Main Results}

We introduce here the objects we are going to deal with. 
According to a standard notation we denote by $H^s(\T)$, $s \geq 0$, the completion of $C^{\infty}(\T)$ with 
respect to the norm induced by the inner product
$$
(f, g)_{H^s}:=\sum_{n\in\Z} (1+n^{2s})  f(n) \bar g(n) \,,
$$
where $f(n)$ is the $n$-th Fourier coefficient of $f$. For every $s \geq 0$, $H^s(\T)$ is a separable
Hilbert space, with $H^{0}(\T)=L^2(\T)$. A function in $H^s(\T)$ is represented as a 
sequence $\{f(n)\}_{n\in \mb Z}$ such that $\sum_{|n|\leq N}(1+ n^{2s})|f(n)|^2$ converges as $N \to \infty$. 
We also use the homogeneous Sobolev spaces $\dot H^s(\T)$, defined as the completion of $C^{\infty}(\T)$ with respect to 
the norm induced by the inner product
$$
(f, g)_{\dot H^s}:=\sum_{n\in\Z} n^{2s}  f(n) \bar g(n) \,.
$$
For $N\geq0$, we consider the canonical projections 
\begin{equation}\label{eq:proj}
P_{N} : L^{2}(\T) \mapsto E_N:=\Span_{\mb C}\{e^{inx}\,:\, |n|\leq N\} \, ,
\end{equation}
defined as
$$
P_N f := \sum_{|n| \leq N}  e^{inx} f(n) \,,\quad P_{N=\infty}f=f\,.  
$$
The orthogonal projections are $P_{>N}:=\mathbb{I}-P_N$.

The space $L^2(\T)$ can be equipped with a measurable-space structure as follows. 
Let $A\in\ms B(\mb C^{2N+1})$ be a Borel subset of $\mb C^{2N+1}$. We introduce the cylindrical sets
\begin{equation}\label{CylSets}
M_N(A):=\{ f \in L^2(\T)\,:\,(f(-N),\ldots,f(N))\in A\}
\,.
\end{equation}
Hence, we define
$\mathcal T_{N}:=\{M_N(A)\}_{A\in\mathscr{B}(\C^{2N+1})}$
and $\mathcal T := \bigcup_{N \in \N} \mathcal T_{N}$, namely the algebra of cylindrical sets with Borel basis given by \eqref{CylSets}.
We also denote by $\s(\mathcal T)$ the smallest $\s$-algebra generated by $\mathcal T$.
For any $N \in \N_0 := \N \cup \{ 0 \}$, $\mathcal T_{N}$ is isomorphic to $\mathscr{B}(\C^{2N+1})$, therefore we can identify 
$\s(\mathcal T)$ with  
$\mathscr{B}(L^2(\T))$, namely the Borel $\s$-algebra on~$L^{2}(\T)$.
The Lebesgue measure on $\C^{2N+1}$ naturally induces a measure on $(M_{N}(\C^{2N+1}), \mathcal T_{N})$, which (with a little abuse) 
we still refer to as Lebesgue measure, through  
\begin{equation}\label{NewLebesgue}
| M_{N} (A)| := c_N \int_{A} \bigg( \prod_{|n| \leq N} df(n)d\bar f(n) \bigg) \, .
\end{equation}
where $c_N$ is a suitable constant. 
%
%
%
For any $k\in\N$, let $\mathbb{I}+(-\D)^k$ be the closure in $L^2(\T)$ of the operator 
$1+\left(-\frac{d^{2}}{d x^{2}}\right)^k$ acting on $C^{\infty}(\T)$. This is a positive self-adjoint 
operator with a trivial kernel and its inverse $(\mathbb{I}+(-\D)^k)^{-1}$ is bounded and trace class.
Therefore there exists a centred Gaussian probability measure with covariance $(\mathbb{I}+(-\D)^k)^{-1}$ 
(a standard reference is \cite{kuo}), which we denote by $\g_k$, such that 
\begin{equation}\label{Def:gammaK}
\g_k(M_{N}(A)):= \frac{1}{Z_N}
\int_A \bigg( \prod_{|n| \leq N} df(n)d\bar f(n) \bigg) e^{-\frac12\sum_{|n|\leq N}(1+n^{2k})|f(n)|^2 }
\, ,
\end{equation}
where $Z_{N}$ is the normalisation constant.

For any $k\in\N$ the triple $(L^2(\T), \B(L^2(\T)), \g_k)$ is a Gaussian probability space. It is worth to recall that each measure $\g_k$ 
concentrates on functions with less than $k-\frac12$ derivatives in $L^2(\T)$. More precisely
$$\g_k \Bigg( \bigcap_{s < k-\frac{1}{2}}  H^{s}(\T) \Bigg)=1 \, ,\quad \g_k \Bigg(H^{k-\frac12}(\T) \Bigg)=0\,. $$  
With $L^p(\g_k)$ we denote the $L^p$ spaces associated to $\g_k$. 

\medskip

We can now introduce the Gibbs measures constructed in \cite{GLV16}. 
Let $R>0$ and let $\chi_R(x):=\chi(x/R)$ where $\chi: \R \to [0,1]$ is a smooth compactly supported function  
such that $\chi(x) = 1$ for $|x| \leq 1/2$ and $\chi(x) = 0$ for $|x| > 1$.
For $k\geq2$, let us fix $R_m>0$, for $m=0,\dots,k-1$, and
define the $k$-th Gibbs measure associated to the DNLS equation by
\be\label{eq:GIBBS-measure}
\rho_{k}(A):=  \int_A
\left(
\prod_{m=0}^{k-1}
\chi_{R_m}\left(E_{m} [\psi] \right)
\right)
e^{- Q_k [\psi] }\gamma_{k}(d\psi), \quad A\in\B(L^2(\T))\,,
\ee
where
\begin{equation}\label{Def:Qk}
Q_k [\psi] := E_k[\psi]-\frac12\| \psi \|_{\dot H^k}^2 
\end{equation}
and $E_{1}, \ldots, E_{k}$ are integrals of motion of the DNLS equation; see \eqref{integrals_DNLS}, \eqref{20170524:eq1}. The measure $\r_k$ must be understood as the weak limit of the following sequence of measures. Given $A\in \B(L^2(\T))$, we denote
\begin{equation}\label{SecCyl}
\mc M_N (A) := \{ f \in L^{2}(\T) :   P_N f \in P_N A    \}
\end{equation}
the (cylindrical) set 
of all the $L^{2}(\T)$ functions with base $P_N A$. 
Then we define
$$
\r_{k,N}(A):= \int_{\mc M_N(A)}
\left(
\prod_{m=0}^{k-1}
\chi_{R_m}\left(E_{m} [P_N\psi] \right)
\right)
e^{- Q_k [P_N\psi] }\gamma_{k}(d\psi) \,.
$$
Then the main result of \cite{GLV16} can be reformulated as follows:

\begin{theorem}[\cite{GLV16}]\label{Th:Old}
Let $k\geq 2$ and let $R_0$ sufficiently small. Then $(L^2(\T),\B(L^2(\T)),\r_k)$ is a probability space.
Moreover, there exists $p_0=p_0(R_0,\dots,R_{k-1},k,|\beta|)>1$ such that, for all $1\leq p<p_0$, the Radon-Nykodim derivative
$\frac{d\r_k}{d\g_k}$ belongs to $L^{p_0}(\g_k)$. We can take $p_0$ arbitrarily large (but not $p_0 = \infty$) provided we choose a sufficiently smaller $R_0$.
\end{theorem}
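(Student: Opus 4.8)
The plan is to deduce the whole statement from two estimates on the truncated densities. Define
\[
F_N[\psi]:=\Big(\prod_{m=0}^{k-1}\chi_{R_m}(E_m[P_N\psi])\Big)e^{-Q_k[P_N\psi]},\qquad
F[\psi]:=\Big(\prod_{m=0}^{k-1}\chi_{R_m}(E_m[\psi])\Big)e^{-Q_k[\psi]},
\]
so that $\r_{k,N}(A)=\int_{\mc M_N(A)}F_N\,d\g_k$ and the asserted density is $F=\frac{d\r_k}{d\g_k}$. I would establish: (a) $F_N\to F$ for $\g_k$-a.e.\ $\psi$; and (b) a uniform bound $\sup_N\|F_N\|_{L^p(\g_k)}\le C_p<\infty$ for every $p<p_0$. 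Granting these, the argument closes at once: taking $p>1$ in (b) makes $\{F_N\}$ uniformly $\g_k$-integrable, so (a) improves to convergence in $L^1(\g_k)$; since $F_N$ depends on $\psi$ only through $P_N\psi$ and $\mathbf 1_{\mc M_N(A)}\to\mathbf 1_A$ $\g_k$-a.e., this gives $\r_{k,N}(A)\to\int_A F\,d\g_k=:\r_k(A)$, while Fatou's lemma yields $\|F\|_{L^p(\g_k)}\le\liminf_N\|F_N\|_{L^p(\g_k)}<\infty$ for all $p<p_0$. As $F\ge0$ with $0<\int F\,d\g_k<\infty$, the limit $\r_k$ is a finite measure absolutely continuous with respect to $\g_k$, and (after normalisation) $(L^2(\T),\B(L^2(\T)),\r_k)$ is a probability space.

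For (a) I would use that $\g_k$ is supported on $\bigcap_{s<k-1/2}H^s(\T)$, on which $P_N\psi\to\psi$ in every such $H^s$ for $\g_k$-a.e.\ $\psi$. The functionals $E_0,\dots,E_{k-1}$ together with the monomials of $Q_k$ of derivative order $\le k-1$ are continuous along this approximation by Sobolev multiplication, hence converge a.e.; the single borderline monomial of $Q_k$ reaching derivative order $k$ converges $\g_k$-a.s.\ by a Wiener-chaos argument (this is where the gauge-simplified form of the integrals of motion is convenient). Continuity of the cut-offs $\chi_{R_m}$ and of $t\mapsto e^{-t}$ then gives $F_N\to F$ a.e.

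The heart of the matter is the uniform bound (b). First I would turn the cut-offs into \emph{deterministic} a priori bounds: on $\{\chi_{R_0}(E_0[\psi])\neq0\}$ one has $\|\psi\|_{L^2}^2\lesssim R_0$, and then, inducting on $m=1,\dots,k-1$ and using that the nonlinear part of each $E_m$ is dominated by lower Sobolev norms once the mass is small, $\chi_{R_m}(E_m[\psi])\neq0$ forces $\|\psi\|_{\dot H^m}\le M_m(R_0,\dots,R_m)$; hence on $\supp\prod_m\chi_{R_m}$ one controls $\|\psi\|_{H^{k-1}}\le M$. I would then split $Q_k=Q_k^{\mathrm{det}}+W_k$, where $Q_k^{\mathrm{det}}$ collects every monomial all of whose factors carry at most $k-1$ derivatives: these are bounded on the support by a constant $C(M,|\beta|)$, hence contribute only a harmless factor $e^{pC}$. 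The remainder $W_k$ is the single irreducible top-order term, schematically $W_k\sim\beta\,\mathrm{Im}\int_{\T}\psi^{(k)}\,\overline{\psi^{(k-1)}}\,|\psi|^2\,dx$ (the analogous real combination reduces by integration by parts to controlled quantities, but this imaginary one does not). Thus (b) reduces to the uniform-in-$N$ estimate $\int e^{-pW_k[P_N\psi]}\,\mathbf 1_{\{\|\psi\|_{L^2}^2\lesssim R_0\}}\,d\g_k\le C_p$.

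This last estimate is the main obstacle. One cannot simply bound $e^{-pW_k}\le e^{p|W_k|}$: being a nontrivial degree-four Wiener chaos, $W_k$ has stretched-exponential tails, so $e^{p|W_k|}$ is never $\g_k$-integrable. The argument must instead combine the smallness furnished by the mass cut-off — which makes the effective coupling of $W_k$ of size $O(|\beta|R_0^{\theta})$ — with sharp probabilistic (hypercontractivity and large-deviation) bounds for $W_k$ under $\g_k$ exploiting its precise gauge-simplified form. It is the interplay between the smallness of $R_0$ and these chaos estimates that fixes a threshold $p_0=p_0(R_0,\dots,R_{k-1},k,|\beta|)$, necessarily finite because $W_k$ is genuinely present, with $p_0\to\infty$ as $R_0\downarrow0$. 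Carrying out this borderline estimate and tracking the dependence of $p_0$ on $|\beta|$ and $R_0$ is the technical core of the proof.
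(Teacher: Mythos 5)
Your proposal is correct and follows essentially the same route as the paper's proof (which is the argument of \cite{GLV16}, mirrored in this paper by Proposition \ref{sec5prop1} and Lemma \ref{lemma:bound-exp}): almost-sure convergence of the truncated densities together with uniform-in-$N$ $L^p$ bounds, the latter reduced via the cut-offs and the small-mass condition to the exponential integrability of the single quartic term $\beta\,\Im\int\psi^{(k)}\bar\psi^{(k-1)}|\psi|^2$, which is precisely the Thomann--Tzvetkov-type tail estimate of \cite[Lemma 5.3]{GLV16} (reproduced here, in gauged form, as Lemma \ref{lemma:bound-exp}). The only slip is your parenthetical appeal to the gauge-simplified form of the integrals of motion: Theorem \ref{Th:Old} concerns the ungauged measure, where no gauge choice is available, but this is immaterial since the Wick/Littlewood--Paley arguments handle that quartic term directly.
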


The small mass condition $R_0 \ll 1$ deserves few comments. 
As already mentioned, the periodic DNLS equation has been shown to be locally well-posed for initial data in $H^{s\geq 1/2}$ in \cite{Herr}. 
Then, a standard procedure allows to globalise the local $H^1$ solutions
with $\| \psi_{0} \|_{L^{2}} < \delta$, as long as~$\delta$ is sufficiently small, by using the integral of motion $E_1$ and the  
Gagliardo--Nirenberg inequality
\be\label{eq:GN}
\|f\|^3_{L^6(\mathbb{T})}\leq
\|f\|_{\dot{H}^{1}(\mathbb{T})}\|f\|_{L^2(\mathbb{T})}^{2}
+
\frac{1}{2 \pi} \|f\|^3_{L^2(\mathbb{T})}\,.
\ee
However this approach does not give the best possible value for $\delta$, which is an interesting open problem. 
The highest value of the mass for which global existence in $H^1(\T)$ holds is $\delta=2\sqrt{\pi / |\beta|}$. This was shown for the non-periodic framework in \cite{Wu13} and the argument was adapted to periodic DNLS in \cite{MO15} (similarly, for the best result in $H^{\frac12}(\T)$ see \cite{Mon17}). Existence of global solution of DNLS on $\R$ without any condition on the mass has been proven by inverse scattering method in \cite{Sul}.
In our approach the small mass condition is required to prove the integrability of the Gibbs densities. 
However from the standpoint of integrable systems it is reasonable to think that in the role of $R_0$ could be replaced by any $R_j$ (once one looks at sufficiently regular solutions).

The main contribution of the present paper follows.

\begin{theorem}\label{Th:Main}
Let $k\geq2$ and let $R_0$ be sufficiently small. Then there exists a probability measure $\hat\r_k$ on $(L^2(\T),\mathscr{B}(L^2(\T)))$ a.c. w.r.t. $\g_k$, such that the flow-map~$\Phi_t$ associated to DNLS is measure preserving
in~$(L^2(\T),\mathscr{B}(L^2(\T)),\hat\r_k)$.
\end{theorem}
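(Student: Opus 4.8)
\emph{Strategy.} The plan is to follow the probabilistic scheme of Tzvetkov and Visciglia: to realise $\hat\r_k$ as a weak limit of finite--dimensional invariant measures obtained by Fourier truncation, and to transport the (formal) invariance of the Gibbs density along the flow to the limit. Two structural facts make this viable at the regularity of $\g_k$. First, since $k\geq2$ the measure $\g_k$ (hence any measure a.c.\ w.r.t.\ it) concentrates on $\bigcap_{s<k-\frac12}H^s(\T)\subset H^1(\T)$, and on the region $\{E_0[\psi]=\|\psi\|_{L^2}^2\leq R_0\}$ the mass is small; combining Herr's local theory with the conservation of $E_1$ and the Gagliardo--Nirenberg inequality \eqref{eq:GN} yields a global flow $\Phi_t$, defined $\g_k$--a.e.\ on the small--mass region and preserving the regularity of the data. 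Second, once the quadratic part is absorbed into the Gaussian, the density $\frac{d\r_k}{d\g_k}$ is, up to the cutoffs, the conserved weight $\big(\prod_{m=0}^{k-1}\chi_{R_m}(E_m)\big)e^{-\frac12 E_0-E_k}$ against formal Lebesgue measure; so $\r_k$ is formally (a function of integrals of motion)$\times$(Lebesgue), and its invariance would follow from an infinite--dimensional Liouville theorem together with exact conservation of the $E_m$.

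\emph{Finite--dimensional step.} I would first make this rigorous after truncation. Let $\Phi_t^N$ be the flow of the Hamiltonian system on $E_N$ generated by the truncated energy $E_1[P_N\,\cdot\,]$. Being Hamiltonian, $\Phi_t^N$ preserves the Lebesgue measure \eqref{NewLebesgue} on $E_N$ (Liouville) and conserves $E_0[P_N\,\cdot\,]$ and $E_1[P_N\,\cdot\,]$. The only obstruction to the exact invariance of $\r_{k,N}$ under $\Phi_t^N$ is therefore the failure of the \emph{higher} integrals $E_2[P_N\,\cdot\,],\dots,E_k[P_N\,\cdot\,]$ to be conserved, since Fourier truncation destroys integrability. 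Here the gauge enters as advertised: for each $E_m$ I would pass, via the one--parameter gauge map \eqref{gauge_change} with the parameter tuned to that $E_m$, to the frame in which $E_m$ takes its simplified form, and estimate there the defect $\frac{d}{dt}E_m[P_N\Phi_t^N\psi]$. Since the gauge is a nonlinear change of variables on $L^2(\T)$, this is only useful if the reference Gaussian survives it: this is exactly the intermediate quasi--invariance statement, which guarantees that pushing $\g_k$ through the gauge produces a measure a.c.\ w.r.t.\ $\g_k$ with Radon--Nikodym factor under $L^p$ control.

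\emph{Compactness and passage to the limit.} The uniform $L^{p_0}(\g_k)$ bound on the Gibbs densities furnished by Theorem \ref{Th:Old} (and its truncated counterpart) yields uniform integrability of the family $\{\frac{d\r_{k,N}}{d\g_k}\}_N$, hence weak relative compactness of $\{\r_{k,N}\}_N$; I would extract a subsequence converging to a limit $\hat\r_k$, automatically a.c.\ w.r.t.\ $\g_k$. To transfer invariance, one combines the global well--posedness above with a Bourgain--type approximation $\Phi_t^N\to\Phi_t$ on a set of full $\g_k$--measure and the vanishing of the invariance defect, so that the identity $(\Phi_t^N)_*\r_{k,N}\approx\r_{k,N}$ passes to the limit and gives $(\Phi_t)_*\hat\r_k=\hat\r_k$ for every $t$. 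Because $\hat\r_k$ is obtained only through a subsequential limit, this argument yields invariance but no identification with $\r_k$, consistently with the statement.

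\emph{Main obstacle.} The crux is the finite--dimensional step: quantifying the non--conservation of $E_2,\dots,E_k$ along the truncated flow and showing that the resulting defect is integrable against $\r_{k,N}$ \emph{uniformly in} $N$ and vanishes as $N\to\infty$. This is where the two structural inputs are indispensable: the gauge choice, which reduces each $E_m$ to a form whose truncation error is controllable, and the quasi--invariance of $\g_k$ under the gauge maps, which lets one carry those estimates between the original and the gauged frames without losing integrability with respect to the reference Gaussian.
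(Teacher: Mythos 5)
Your proposal assembles the right ingredients (truncation, Liouville, asymptotic conservation, the Tzvetkov--Visciglia scheme, the gauge group, quasi-invariance), but the way you put them together contains a genuine gap, and it sits exactly at what you correctly identify as the crux. You truncate the \emph{ungauged} DNLS equation and propose to estimate the conservation defects $\frac{d}{dt}E_m[P_N\Phi^N_t\psi]$ by passing, for each $E_m$ separately, to a gauged frame in which that $E_m$ simplifies. This step fails, because the gauge does not commute with the Fourier projection: $\Ga_\a(E_N)\not\subseteq E_N$, and the conjugation identity $\Phi_{t,\a}=\Ga_\a\,\Phi_t\,\Ga_{-\a}$ of \eqref{eq:G-Flusso} holds only for the \emph{full} flows. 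Consequently the defect of $E_m$ along the truncated DNLS flow is \emph{not} the defect of the simplified $\mc E_m$ along any truncated gauged flow, and there is nothing to estimate ``in the gauged frame''. Quasi-invariance of $\g_k$ under $\Ga_\a$ (Theorem \ref{th:gauge}) cannot repair this: it transports null sets and absolute continuity, not time-derivative estimates along two finite-dimensional flows that are not conjugate to each other. Using a \emph{different} gauge parameter for each $E_m$ creates a further incompatibility, since the invariance argument needs all the cut-off functionals and the exponential weight to live in a single frame. Finally, note that if your scheme did go through, the subsequential weak limit of your $\r_{k,N}$ would be the Gibbs measure $\r_k$ itself (this is how $\r_k$ is defined, Theorem \ref{Th:Old}), so you would have proven invariance of $\r_k$ --- precisely what the paper states it cannot do; this is a symptom that the defect estimate in the ungauged frame is the true obstruction, not a technicality.

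The paper's architecture avoids all of this by working in the gauged frame from the outset, with a \emph{single} parameter $\alpha_k=-\frac{2k+1}{2k+2}\beta$ for each $k$, chosen to kill the problematic top-order term in $\mc E_k$ (Corollary \ref{Nontiene}); the lower $\mc E_\ell$, $\ell<k$, retain a cubic term, but of lower differential order, which is harmless on the support of $\g_k$. One truncates the \emph{gauged} equation \eqref{20140509:DNLSapprox}; since this truncated system contains nonlocal terms ($\mu[\phi_N]$, $\Gamma[\phi_N]$) and is not obviously Hamiltonian, preservation of Lebesgue measure is proved by a direct divergence computation (Proposition \ref{LebMeasPres}) rather than by Liouville's theorem. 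The $L^2(\g_k)$-vanishing of $\frac{d}{dt}\mc E_\ell$ at $t=0$ is then proved, with Wick's theorem handling the hardest terms (Proposition \ref{Prop: energie}); the gauged Gibbs measure $\tilde\r_k$ is constructed and shown invariant under the gauged flow (Propositions \ref{sec5prop1} and \ref{Prop:invarianza-gauged}); and only at the end is $\hat\r_k:=\tilde\r_k\circ\Ga_{\a_k}$ defined, its invariance under $\Phi_t$ following from the conjugation identity for the full flows, while quasi-invariance enters solely to give absolute continuity of $\hat\r_k$ with respect to $\g_k$ (Remark \ref{AcRemark}). A smaller slip in your write-up: the truncated DNLS flow is generated by $E_{1/2}[P_N\,\cdot\,]$ with the bracket of Proposition \ref{HamStruct}, not by $E_1[P_N\,\cdot\,]$, and it conserves only the mass and its own Hamiltonian --- the non-conservation of $E_1,\dots,E_k$ under truncation is exactly what must be dealt with.
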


An immediate but significant corollary follows by the Poincar\'e recurrence Theorem:
\begin{corollary}\label{Cr:Poincare}
Let $k \geq 2$ and let $\psi$ be a solution of the DNLS equation with initial datum $\psi(x, 0)\in H^s(\T)$ with $s < k-\frac12$. 
For $\hat \r_k$-a.e. $\psi(x, 0)$ there exists a divergent sequence $\{t_n\}_{n\in\N}$ such that 
$$
\lim_{n\to\infty}\|\psi(x, t_n)-\psi(x, 0)\|_{H^s}=0 \, .
$$
\end{corollary}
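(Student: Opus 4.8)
The plan is to deduce the statement from the classical Poincar\'e recurrence theorem applied to the measure-preserving system furnished by Theorem \ref{Th:Main}. First I would record two preliminary facts. Since $\hat\r_k$ is absolutely continuous with respect to $\g_k$ and $\g_k\big(\bigcap_{s<k-\frac12}H^s(\T)\big)=1$, one has $\hat\r_k\big(\bigcap_{s<k-\frac12}H^s(\T)\big)=1$; hence for the fixed $s<k-\frac12$ in the statement the $H^s$-norm is $\hat\r_k$-a.e.\ well defined, and $\hat\r_k$-a.e.\ datum generates a globally defined solution. Second, Theorem \ref{Th:Main} gives that $\Phi_t$ is measure preserving on $(L^2(\T),\B(L^2(\T)),\hat\r_k)$, so that by the flow (semigroup) property $T:=\Phi_1$ is a measurable, measure-preserving self-map, with $\Phi_n=T^n$.

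Next I would exploit separability. The metric space $H^s(\T)$ is separable, so for each $m\in\N$ it admits a countable cover $\{B_{m,j}\}_{j\in\N}$ by open balls of $H^s$-radius $1/m$. Split these into the zero-measure balls and the positive-measure balls. For each $(m,j)$ with $\hat\r_k(B_{m,j})>0$, the Poincar\'e recurrence theorem applied to $T$ and $B_{m,j}$ yields a null set $N_{m,j}\subset B_{m,j}$ outside of which every point returns to $B_{m,j}$ under $T^n$ for infinitely many $n\geq1$. Let $N$ be the union of: (i) the complement of $\bigcap_{s<k-\frac12}H^s(\T)$; (ii) the data where the flow is not globally defined; (iii) the countable union of all zero-measure balls; and (iv) the countable union $\bigcup_{m,j}N_{m,j}$. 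Each piece is $\hat\r_k$-null, so $\hat\r_k(N)=0$.

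Now I would conclude by a diagonal extraction. Fix $\psi(x,0)\notin N$. For each $m$, since the balls cover $H^s(\T)$ and by (iii) the datum avoids every zero-measure ball, $\psi(x,0)$ lies in some ball $B_{m,j(m)}$ of radius $1/m$ and \emph{positive} measure; by (iv) it returns to $B_{m,j(m)}$ infinitely often, so one may pick an integer return time $t_m>m$ with $\Phi_{t_m}(\psi(\cdot,0))\in B_{m,j(m)}$. Then
$$
\|\psi(x,t_m)-\psi(x,0)\|_{H^s} \;<\; \frac{2}{m}\;\longrightarrow\;0\,,\qquad t_m\to\infty\,,
$$
which is exactly the asserted divergent sequence. (If desired, one obtains recurrence simultaneously for all $s<k-\frac12$ by running the argument along a sequence $s_\ell\uparrow k-\frac12$ and intersecting the finitely many exceptional null sets.)

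The only genuinely delicate point is to make sure the hypotheses of Poincar\'e recurrence are literally in place: that $T=\Phi_1$ is a bona fide measurable, measure-preserving self-map of the full-measure set on which the flow is globally defined. This is precisely the content of Theorem \ref{Th:Main} together with the global well-posedness underlying it; once that is granted, the remaining ingredients (countable covers by separability, discarding the countably many null balls, the diagonal extraction) are soft and routine.
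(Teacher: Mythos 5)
Your proposal is correct and follows exactly the route the paper intends: the paper states this corollary without proof as an ``immediate'' consequence of the Poincar\'e recurrence theorem applied to the measure-preserving system $(L^2(\T),\mathscr{B}(L^2(\T)),\hat\r_k,\Phi_t)$ furnished by Theorem \ref{Th:Main}, together with the fact that $\hat\r_k\ll\g_k$ concentrates on $\bigcap_{s<k-\frac12}H^s(\T)$. Your write-up simply fills in the standard details (separable cover by $H^s$-balls, discarding null balls, recurrence of the time-one map, diagonal extraction), all of which are sound.
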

An analog conclusion for $k=1$ follows from \cite{NOR-BS12}. To the best of our knowledge, these are the sole known results on the long-time behaviour of the DNLS equation. 

\subsection{Strategy of the Proof}

An alternative formulation of our main Theorem \ref{Th:Main} is that for any $k\geq2$ (in fact $k\geq1$ considering also the result of \cite{NOR-BS12}) the DNLS equation has the structure of an infinite-dimensional (Hamiltonian) dynamical system.
Since the earlier works of Bourgain and Zhidkov it has revealed useful to approximate the infinite dimensional problem with 
a finite dimensional one, by considering the evolution of the first $|n| \leq N$ Fourier modes of the solutions. These systems are actually Hamiltonian,
but in general they do not preserve all the integrals of motion. This is often a major issue to cope with in this class of problems. However one expects the integrals of motion to be conserved in the limit $N \to \infty$. Following an approach developed by Tzvetkov and Visciglia for the 
Benjamin-Ono equation, we will show that the derivative of the integrals of motion along the flow of the truncated systems vanishes in the $L^2(\gamma_k)$ mean. Actually, as first observed in \cite{TV13b}, one can reduce to consider only the derivative at the initial time, which is a crucial simplification.

It is helpful to recall that the integrals of motion of DNLS have the following
form
\begin{equation}\label{hcqeip}
E_{k}[\psi] = \frac12\|\psi\|^2_{\dot H^k} - \frac 12 \b(2k+1) \Im \int \psi^{(k)} \bar\psi^{(k-1)} |\psi|^{2} +\mbox{remainders}\,,
\qquad
k \geq 2 \,,
\end{equation}
where
we consider as remainders all the terms that are bounded in the support of $\gamma_k$.
The difficulty to show the asymptotic (w.r.t. $N$) conservation of $E_{k}$ comes form the second addendum in the r.h.s. of equation \eqref{hcqeip}. Notably the integrals of motion of the Benjamin-Ono equation have an analog structure. However in that case a convenient cancellation coming from the symmetries of the problem simplifies substantially the computations.
We cannot find a similar cancellation here. Nevertheless it is possible to eliminate the troubling term using a suitable gauge transformation. 
As already mentioned, these gauge transformations form a one-parameter group $\Ga_\a$ indexed by $\alpha \in \R$ (see \eqref{gauge_change}). 
A generic gauge choice yields the following expression for the integrals of motion of the gauged equation    
\begin{equation}\nonumber
\mc E_k [\phi] = \frac12\|\phi\|^2_{\dot{H}^k}
+ ik\alpha\mu\int \bar \phi^{(k)} \phi^{(k-1)}
-\frac 12 \left((2k+2)\alpha+(2k+1)\beta\right)
\Im \int \phi^{(k)}\bar\phi^{(k-1)} |\phi|^2
+ \mbox{remainders}
\,,
\end{equation}
where $\phi = \Ga_{\a} \psi$ is the solution of the gauged equation and 
we shortened 
$$
\mu := \frac{1}{2\pi} \| \phi \|^{2}_{L^{2}} = \frac{1}{2\pi} \| \psi \|^{2}_{L^{2}} \, .
$$
In general we will use the notation $\mu[f]:= \frac{1}{2\pi}\| f \|^2_{L^2}$ and we will simply write $\mu$ if there will be no ambiguity.
We recover \eqref{hcqeip} as $\alpha=0$.
Setting 
\begin{equation}\label{OCOA}
\alpha = -\frac{2k+1}{2k+2}\beta \, ,
\end{equation}
we reduce to
\begin{equation}\nonumber
\mc E_k [\phi] = \frac12\|\phi\|^2_{\dot{H}^k}
- i k \frac{2k+1}{2k+2}\beta  \mu \int \bar \phi^{(k)} \phi^{(k-1)}
+ \mbox{remainders}
\,.
\end{equation} 
This form of the integrals of motion is much more suitable in order to prove the asymptotic conservation property and such a reduction is the crux of our proof.
Of course also the flow of DNLS changes accordingly to the gauge transformation: indeed our gauge choice leads to a somewhat more involved form
for the equation (see \eqref{eq:GDNLS}). However this does not introduce significant 
difficulties, as the form of the nonlinearity is essentially the same and we are working with rather regular solutions (at least in $H^{s > 5/4}$). 
It is worthy to point out the difference with what is usually done in the low regularity theory for DNLS,
where the choice of the gauge parameter $\alpha = -\beta$ aims to simplify the equation; see also Remark \ref{EdRem}. 

The next step is to define for any $k\geq2$ a gauged Gibbs measure starting from the gauge-transformed (or {\em gauged}) integrals of motion and to prove its invariance w.r.t. the gauged flow. This requires some groundwork, namely a careful analysis of the DNLS-flow and gauge-flow maps, in order to adapt the strategy of \cite{TV13b}.

The invariance of the gauged Gibbs measure under the gauged flow easily implies the invariance of the push-forward of it through $\Ga_\a$ under the DNLS flow. This will be our invariant measure $\hat \r_k$. We stress that in principle one expects $\r_k=\hat \r_k$. The missing step to show the invariance of the Gibbs measures is the proof of absolute continuity of the pull-back $\g_k\circ \Ga_\a$ w.r.t. $\g_k$ with the explicit density. So far what we can prove is the following theorem:
\begin{theorem}\label{th:gauge}
Let $R_0>0$ small enough and $$\tilde\g_k(A)=\g_k(A\cap\{ f \in L^2 : \mu[f] \leq R_0\})\,.$$
Then for any $k\geq2$ and $\a \in \R$ the measure $\tilde \g_k \circ \Ga_\a$ is absolutely continuous w.r.t.~$ \g_k$.
\end{theorem}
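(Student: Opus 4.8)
The plan is to prove a quantitative, uniform-in-$N$ version of absolute continuity for finite-dimensional truncations of the gauge maps and then to pass to the limit. I would first record two structural facts about $\Ga_\a$ that drive the whole argument. On one hand $\Ga_\a$ acts by a pointwise unimodular phase, $\Ga_\a f(x)=e^{i\a\Theta[f](x)}f(x)$ with $\Theta[f]$ a primitive of $|f|^2-\mu$ (see \eqref{gauge_change}); in particular it preserves the $L^2$ mass, so the sublevel set $\{\mu[f]\leq R_0\}$ is $\Ga_\a$-invariant and commutes with the restriction defining $\tilde\g_k$. On the other hand $\Ga_\a$ is the time-$\a$ map of the autonomous flow $\partial_s\phi=i\Theta[\phi]\phi$, so absolute continuity can be tracked in $s$ through a transport (Liouville) identity. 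Since $\Ga_\a\Ga_{\a'}=\Ga_{\a+\a'}$ and the mass ball is invariant, it suffices to treat $|\a|$ small and then iterate along the group to reach all $\a\in\R$.

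Next I would work on $E_N$. Let $\Ga^N_\a$ be the truncated gauge flow obtained by projecting the vector field $X(\phi)=i\Theta[\phi]\phi$ with $P_N$; this is a global smooth flow on $\C^{2N+1}$ (global existence follows because the truncation still controls $\|P_N f\|_{L^2}$), and $\Ga^N_\a\to\Ga_\a$ $\g_k$-a.s. The push-forward $\tilde\g_{k,N}\circ\Ga^N_\a$ is absolutely continuous with respect to $\g_{k,N}$, with density
\[
g_N(f)=\mathbbm{1}_{\{\mu\leq R_0\}}(f)\,\exp\!\Big(-\int_0^\a D_N\big(\Ga^N_s f\big)\,ds\Big),\qquad D_N(\phi):=\dive X_N(\phi)-\big\langle X_N(\phi),(\mathbb I+(-\D)^k)\phi\big\rangle,
\]
where the first term in $D_N$ is the infinitesimal log-Jacobian (Liouville) and the second is the rate of change of the Gaussian weight $\tfrac12\|\cdot\|_{H^k}^2$ along the flow. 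The problem is thereby reduced to proving $\sup_N\|g_N\|_{L^p(\g_k)}<\infty$ for some $p>1$.

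The heart of the matter, and the \emph{main obstacle}, is the uniform $L^p(\g_k)$ control of the Gaussian-weight increment $\tfrac12\big(\|\Ga^N_\a f\|_{\dot H^k}^2-\|f\|_{\dot H^k}^2\big)$; the Euclidean term $\dive X_N$ is bounded by a direct computation, a pointwise phase rotation being nearly volume preserving. Expanding $(e^{i\a\Theta}f)^{(k)}$ by Leibniz, the top-order term $e^{i\a\Theta}f^{(k)}$ has $L^2$ norm equal to $\|f^{(k)}\|_{L^2}$ by unimodularity, hence it cancels in the difference; once the modulus is taken the exponential drops out entirely, and what remains is a finite linear combination of monomials in $f,\bar f$ and their derivatives, i.e.\ a finite-degree element of the Wiener chaos over $(L^2,\g_k)$. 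Its most dangerous piece is $\a k\,\Im\int f^{(k)}\bar f^{(k-1)}|f|^2$, precisely the singular term appearing in \eqref{hcqeip}. By Nelson hypercontractivity it suffices to bound each such monomial in $L^2(\g_k)$, which reduces to square-summability of the associated Fourier multipliers: using that under $\g_k$ the mode $f(n)$ has variance $\sim(1+n^{2k})^{-1}$, the relevant sums converge exactly for $k\geq2$ (the self-contractions of the most singular term vanishing by the $n\mapsto-n$ oddness forced by $\Im$). Finally the restriction to $\{\mu\leq R_0\}$ with $R_0$ small makes the resulting constants uniform in $N$ and in $s\in[0,\a]$.

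It remains to pass to the limit. The a.s.\ convergence $\Ga^N_\a\to\Ga_\a$ identifies the weak limit of $\tilde\g_{k,N}\circ\Ga^N_\a$ as $\tilde\g_k\circ\Ga_\a$, while the uniform bound $\sup_N\|g_N\|_{L^p(\g_k)}<\infty$ produces, along a subsequence, a weak-$L^p(\g_k)$ limit $g_\infty$. Lower semicontinuity of the $L^p$ norm together with testing against cylinder functions then yields $\tilde\g_k\circ\Ga_\a=g_\infty\,\g_k$, hence absolute continuity with density in $L^p(\g_k)$; iterating the small-$\a$ statement along the group covers all $\a\in\R$.
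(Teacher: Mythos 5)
Your scaffolding — truncated gauge flows $\Ga^N_\a$, the Liouville/Jacobian formula, the bound on the divergence, the identification of the quartic integrals with derivative distribution $(k,\b_2,\b_3,\b_4)$, $\b_2+\b_3+\b_4=k-1$, as the enemy, and the group-property iteration to reach all $\a\in\R$ — matches the paper (Proposition \ref{FlowMapBij}, Lemma \ref{lemma:div}, Section \ref{sect:gauge2}). But the central step of your plan, the reduction to $\sup_N\|g_N\|_{L^p(\g_k)}<\infty$, cannot be carried out with the estimates you propose. The dangerous terms in the increment of $\tfrac12\|\cdot\|^2_{\dot H^k}$ are degree-four Wiener chaos, and for degree-$d$ chaos Nelson's hypercontractive estimate gives $\|F\|_{L^p(\g_k)}\lesssim p^{d/2}\|F\|_{L^2(\g_k)}$, i.e.\ only \emph{quadratic} growth in $p$ here. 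This controls the exponent, not its exponential: a degree-four chaos has no Gaussian exponential moments in general ($\E[e^{c|F|}]=\infty$ for every $c>0$), so no $L^2$ bound on the monomials can yield $\|e^{F}\|_{L^p}<\infty$. Exponential integrability of precisely these quartic integrals is true only after restriction to $\{\mu\leq R_0\}$ with $R_0$ small, and this is the analytic core of the paper (Lemma \ref{lemma:bound-exp}: a tail bound $\lesssim e^{-ct}$ proved by splitting the regimes $t\leq\sqrt N$ and $t>\sqrt N$ and a Littlewood--Paley argument); from it the linear-in-$p$ bound of Proposition \ref{lemma:Lp} follows via $|x|^p/p^p\leq e^{|x|}$. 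Your proposal treats the small-mass cutoff as a bookkeeping device ensuring uniformity in $N$, when in fact it is what makes exponential integrability — equivalently, linear rather than quadratic growth in $p$ — possible at all; note that quadratic growth is too weak even for the soft argument, since the optimization $p\sim\log(1/\delta)$ closes only when the constant is linear in $p$.

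There is a second, structural obstruction: your density $g_N$ involves $\int_0^\a D_N(\Ga^N_s f)\,ds$ evaluated \emph{along the flow}. To estimate $\|g_N\|_{L^p(\g_k)}$ you must control the law of $\Ga^N_s f$ under $\g_k$ for $0<s<\a$, uniformly in $N$ — which is exactly the quasi-invariance statement being proven, so the argument is circular. The paper escapes this by never estimating the density: by the one-parameter group property, the derivative $\frac{d}{d\a}\tilde\g_{k,N}(\Ga^N_\a(A))$ at an arbitrary $\a$ is reduced to a derivative at $\a=0$ (Lemma \ref{lemma:t0}), where every integrand is paired against the explicit Gaussian measure; this yields the differential inequality $\frac{d}{d\a}y\lesssim p\,y^{1-1/p}$ of Lemma \ref{TzvOhLemma1}, which is then integrated and optimized in $p$. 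This is precisely why Theorem \ref{th:gauge} asserts absolute continuity with no $L^p$ bound on the Radon--Nikodym derivative, whereas your proposal would prove the strictly stronger statement with a density in $L^p$ — something the introduction explicitly notes the soft method does not provide. To repair your argument you would need exponential moments of the quartic chaos composed with the truncated gauge flow, uniformly in $N$ and $s$ (not available from any estimate in the paper); otherwise one must abandon the explicit density and run the Tzvetkov-type scheme as the paper does.
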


The absolute continuity of $\hat\r_k$ w.r.t. $\g_k$ is then a direct consequence of Theorem \ref{th:gauge}; see Remark \ref{AcRemark}.

The change of variable formula for $k=1$ was established in \cite{NR-BSS11}. This is however a very special case, as the typical trajectories for $\g_1$ are complex Brownian bridges, whose properties are crucially employed in the argument of \cite{NR-BSS11}. For more regular processes one cannot expect to reproduce the same proof and some new idea is needed. Since the work of Ramer \cite{ramer}, much attention has been given to the transformation properties of Gaussian measures under anticipative transformations (as the DNLS gauge is). However the gauge group does not match the typology of transformations studied by Ramer onward. For this reason the study of the quasi-invariance of $\g_k$ under the gauge map is of independent interest. 

Recently Tzvetkov \cite{sigma} proposed a strategy for proving quasi-invariance of the Gaussian measure under a one-parameter group of transformations via a {\em soft} argument, which does not provide the explicit density. The method has been successively refined in \cite{OT1, NLW, OT3}. We use this approach to prove Theorem \ref{th:gauge}, from which we deduce the absolutely continuity of $\hat \r_k$ w.r.t. $\g_k$. To prove that the Gibbs measures are invariant, one should known the exact form of the densities after the change of variables given by the gauge also for $k\geq2$. As in the case $k=1$, these densities should complete exactly the part of the integrals of motion missing in the gauged Gibbs measure. 
We do not give here further details, leaving the discussion of this problem to future works. 

\begin{remark}\label{EdRem}
Finally we come back briefly to the case $k=1$. 
As we have noted, the gauge transformation relative to the choice \eqref{OCOA} introduces a significant simplification in the form of the conservation laws. This allows us to obtain the key Proposition \ref{Prop: energie}, which can be easily extended to the case $k=1$. 
In \cite[Theorem4.2]{NOR-BS12} they prove a (deterministic) analogous of this proposition, which is again the main step in the proof of the invariance of the measure, but their choice $\alpha = - \beta$ does not give any simplification in this part. 
Our argument does not immediately cover the case $k = 1$, since the missing step is to extend the stability Proposition \ref{Prop:nearness} to low regularity.
In particular, the works of Herr \cite{Herr}
and Gr\"unrock--Herr \cite{GH} suggest that such a result should be harder to achieve under our choice \eqref{OCOA} of $\alpha$ in place of the usual $\alpha = - \beta$.   
\end{remark}

\subsection{Organisation of the paper}
The paper is organised as follows.
In Section \ref{sec:gauge} we introduce the gauge transformation used throughout the paper and
we analyse the way the DNLS equation and its integrals of motion change according to it.
This leads to the gauged DNLS equation \eqref{eq:GDNLS}, which we refer to as GDNLS, and to the gauged integrals of motion $\mc E_{\ell}$, $\ell\in\N_0$,
defined in \eqref{eq:def-energie-gauged}.
The main results of this section are Corollary \ref{DanCor2} and \ref{Nontiene} where the explicit representation of the gauged integrals of motion
$\mc E_\ell$ is obtained.
In Section \ref{sect:flows} we introduce the truncated GDNLS equation \eqref{20140509:DNLSapprox} and we show
in Proposition \ref{LebMeasPres} that its
flow preserves the Lebesgue measure \eqref{NewLebesgue}.
Moreover, we show that this flow is close to the one of the GDNLS equations
in a suitable Sobolev topology for short time, see Proposition \ref{Prop:nearness}.
Section \ref{sect:Wick} is devoted to the study of the asymptotic conservation of the integrals of motion in the probabilistic sense.
We show in Proposition \ref{Prop: energie} that the $L^2$ norm w.r.t. $\g_k$ ($k\geq 2$) of the time derivative calculated at $t=0$ of 
the integrals of motion $\mc E_\ell$  
vanishes as $N\to\infty$ (namely as the truncation disappears).
In order to prove this result we need to study the asymptotic conservation of the monomials appearing in the explicit form of the integrals of motion $\mc E_\ell$ obtained in Section 2.
For most of them we have also convergence $\g_k$-a.s. for $k\geq2$,
as proved in Lemma \ref{LemmaObv} and \ref{UsLem}.
However this is can not be easily proved in general. The more complicated terms are handled using Wick 
theorem in Lemma \ref{Lemma:Wick6} and \ref{Lemma:Wick4}.
In Section \ref{Sect:Proof} we perform the construction of the invariant measures of Theorem \ref{Th:Main}.
First we construct the gauged Gibbs measures, essentially repeating the argument used for the Gibbs measures in \cite{GLV16} starting by the gauged integrals of motions. Then, following \cite{TV13b},
we prove that they are invariant under the flow of the GDNLS equation, using the results of Section \ref{sect:flows} and Section \ref{sect:Wick}. 
The last two sections are devoted to the proof of absolute continuity of the invariant measures w.r..t the Gaussian measures via the quasi-invariance of the latter under the gauge flow. In Section \ref{sect:gauge1} we introduce and analyse the truncated gauge flow in analogy with what was done in Section \ref{sect:flows}. In Section \ref{sect:gauge1} we exploit the argument of \cite{sigma} to prove Theorem \ref{th:gauge}. This ultimates also the proof of Theorem \ref{Th:Main}.

\subsection*{Notations}
Throughout $f$ will always be a generic complex function, $\psi$ a solution of DNLS, $\phi$ the image of $\psi$ via the gauge transformation, $u$ can be either $\phi$ or $\bar\phi$. 
We denote by $f(n)$ the $n$-th Fourier coefficient of $f:\T \to \C$. 
We set $\mu[f]:= \frac{1}{2\pi}\| f \|^2_{L^2}$ and sometimes we shorten this simply writing $\mu$. 
For the ease of notation, $\E[\cdot]$ denotes the expectation value w.r.t. $\g_k$, regardless of $k$. Anyway the particular $\g_k$ considered will be always clear from the context. 
$B^{s}(R)$ is the ball of center zero and radius $R$ in the topology induced by $\| \cdot \|_{H^{s}}$. We write $X \lesssim Y$ to denote that $X \leq C Y$ for some positive constant $C$ independent on $X,Y$. We also use the symbol $\mc O$ for the Landau big O.
We denote $\N_0=\N\cup\{0\}$. We use the following notations for further (space) derivatives of the solutions 
$\phi' := \partial_x \phi$, $\phi'' := \partial_x^2 \phi$, $\phi^{(k)} := \partial_x^k \phi$, $k \geq 3$.

\subsection*{Acknowledgements} This work was partially developed during the visit of the first two authors to the CIRM, Trento (through the program Research in Pairs) and to the Yau Mathematical Sciences Center of Tsinghua University, Beijing. Both these institutions are thankfully acknowledged. Furthermore, we are grateful to Lorenzo Carvelli for many stimulating discussions. 
The work of GG is supported through NCCR SwissMAP.
The work of RL is supported through the ERC grant 676675 FLIRT.
The work of DV was supported 
through the NSFC  ``Research Fund for International Young Scientists'' grant and through a Tshinghua University startup research grant when working in the Yau Mathematical Sciences Center.


\section{Gauge Transformations}\label{sec:gauge}

The DNLS equation has interesting transformation properties with 
respect to a group of gauge maps 
(introduced in the periodic setting in \cite{Herr}) which will be now discussed.

For $\a \in \R$ let $\Ga_\a \, : \, L^2(\mathbb{T}) \to L^2(\mathbb{T})$ be defined by
\be\label{gauge_change}
(\Ga_\a f )(x):= e^{ i \alpha \mathcal{I}[f(x)] }  f(x)\,.
\ee
where
\begin{equation}\label{DefMathcali}
 \mathcal{I}[f(x)] := \frac{1}{2\pi }\int_0^{2\pi }d\theta\int_{\theta}^x\left(|f(y)|^2-\frac{\|f\|_{L^2(\T)}^2}{2\pi }\right)dy \,.
\end{equation}
One can easily check that the (real) function $\mathcal{I}[f(x)]$ is the unique zero average ($2 \pi$-periodic) primitive of $|f(x)|^2- (2 \pi)^{-1} \|f\|_{L^2}^2$.
Note that~$|f|=|\Ga_\a f|$ 
and $\Ga_\a f $ is $2\pi$-periodic.
Hence, $\mc G_{\alpha}$ maps $L^2(\T)$ into $L^2(\T)$ preserving the norm (namely $\| \Ga_\a  f  \|_{L^2} = \|f\|_{L^2}$). Using 
that $\mc I[f]=\mc I[\mc G_\alpha(f)]$
one can easily show that the map $\alpha \to \Ga_\a$ is a one parameter group of transformations on $(\R, +)$, namely
\begin{equation}\label{eq:gauge_properties}
\Ga_0=\mathbb{I}
\qquad
\text{and}
\qquad
\Ga_{\alpha_1}\circ\Ga_{\alpha_2}=\Ga_{\alpha_1+\alpha_2}\,,
\quad
\text{for any }\alpha_1,\alpha_2\in\R
\,. 
\end{equation}
For any $s \geq 0$ the gauge transformation $\Ga_\a$ is also 
a homeomorphism of  
$H^{s}(\T)$ into itself. This is an immediate consequence of the following useful inequality
\begin{equation}\label{lemma:gauge-bound-Hs}
\left\| \left( e^{i \alpha \mathcal{I}[f]}  - e^{ i \alpha \mathcal{I}[g]}  \right) h \right\|_{H^{s}} \leq C e^{|\alpha| C ( \|f\|_{H^{s}}^{2} + \|g\|_{H^{s}}^{2} )}
(\|f\|_{H^{s}} + \|g\|_{H^{s}} ) \|f-g\|_{H^{s}} \| h \|_{H^{s}} \, ,
\end{equation}
where $C$ only depends on $s$, proved in \cite{Herr} in the case $\alpha=-1$ (the adaptation of the proof to the general case $\alpha \in \R$ is straightforward). 
Let indeed assume $f, g \in B^{s}(R)$, 
where $B^{s}(R)$ is the ball of center zero and radius $R >0$ in the topology induced by $\| \cdot \|_{H^{s}}$. Using~\eqref{lemma:gauge-bound-Hs} one easily deduces
\begin{align}\label{lemma:gauge-bound-HsBis}
\| \Ga_\a f - \Ga_\a g \|_{H^{s}} 
& \leq \left\| \left( e^{i \alpha \mathcal{I}[f]}  - e^{i \alpha \mathcal{I}[g]}  \right) f \right\|_{H^{s}} 
+  \left\| \left( e^{i \alpha \mathcal{I}[g]}  - 1  \right) (f-g) \right\|_{H^{s}} + \| f-g \|_{H^{s}}
\\ \nonumber
&  \leq \big( 2 \tilde C R^{2} e^{ |\alpha| 2 \tilde C R^{2}}  \! + \! \tilde C R^{2} e^{ |\alpha| \tilde C R^{2}} \! + \! 1 \big) \| f-g \|_{H^{s}}
\\ \nonumber
&
\leq   C R^{2} e^{ |\alpha| C R^{2}}  \| f-g \|_{H^{s}} \, ,  
\end{align}
where the constants $C \geq\tilde C> 1$ here still only depends on $s$.

\subsection{Gauged DNLS equation}

Let $\psi$ be a solution of the DNLS equation \eqref{eq:DNLS}.
For any $\alpha\in\mb R$, we set for brevity 
\begin{equation}\label{GRelation1}
\phi:=\Ga_{\a}\psi
\,.
\end{equation}
From 
\eqref{eq:gauge_properties} we clearly have
$\psi=\ms G_{-\alpha}(\phi)$.
We also 
have 
\begin{equation}\label{eq:gauge_der}
\psi^{(k)} = \partial_x^k(\ms G_{-\alpha}(\phi))=e^{-i\alpha \mathcal{I}(\psi)} \left( \partial_{x} - i \alpha ( |\phi|^{2} - \mu [\phi])  \right)^k \phi \, ,
\end{equation}
where $\mu[f] := \frac{1}{2\pi} \| f \|_{L^2}^2$.
Note that, since $|\phi| = |\psi|$, we have $\mu[\phi(x, t)] = \mu[\psi(x, t)]$ and, since this 
quantity is conserved by the flow of the DNLS equation (see~\eqref{integrals_DNLS}, 
namely $\mu[\psi(x, t)]=\mu[\psi(x, 0)]$ for all $t\in\mb R$) we will often simply denote it by $\mu$. 

The following proposition specifies the gauged form of the DNLS equation. For brevity we denote throughout by $\phi$ the solution
of this one parameter family of equations, even if $\phi$ depends on the choice of the parameter $\alpha$, see \eqref{GRelation1}. 
\begin{proposition}\label{proposition:GDNLS}
Let $\psi$ be a solution of the DNLS equation \eqref{eq:DNLS}. Then, for any $\alpha\in\mb R$,
the function $\phi=\ms G_{\alpha}(\psi)$ satisfies the equation
\begin{equation}\label{eq:GDNLS}
i \partial_{t} \phi  
+ \phi'' 
+2i\alpha\mu\phi'=
  i c_1  |\phi|^{2} \phi'
+ i c_2 \phi^{2} \bar{\phi}'
+ c_3  |\phi|^{4} \phi
+ c_4 \mu |\phi|^{2}\phi 
+ \Gamma [\phi] \phi \, ,
\end{equation} 
where 
\begin{equation}\label{eq:GDNLSconst}
c_1 = 2 (\alpha +  \beta)\,,
\qquad 
c_2 = 2 \alpha + \beta\,,
\qquad
c_3 =  -\alpha^{2} - \frac{\alpha \beta}{2}\,,
\qquad
c_4 = - \alpha \beta 
\end{equation}
and  
\begin{equation}\label{eq:GDNLSgamma}
\Gamma [f] 
= 
\left(\frac{3\alpha \beta }{4 \pi}  +\frac{\alpha^2}{\pi}\right)\| f \|_{L^{4}}^{4} - \alpha^{2} \mu[f]^{2} +  \frac{i \alpha}{\pi} \int_{\mathbb{T}} f' \bar{f}  \, .
\end{equation}
\end{proposition}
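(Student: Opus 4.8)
The plan is to derive \eqref{eq:GDNLS} by substituting $\psi=\Ga_{-\alpha}\phi$ into the DNLS equation \eqref{eq:DNLS} and collecting terms. Since $|\phi|=|\psi|$ one has $\mc I[\psi]=\mc I[\phi]$, so $\psi=e^{-i\alpha\mc I[\phi]}\phi$ and every spatial derivative is governed by the conjugation rule behind \eqref{eq:gauge_der}: writing $D:=\partial_x-i\alpha(|\phi|^2-\mu)$, one has $\partial_x(e^{-i\alpha\mc I}g)=e^{-i\alpha\mc I}Dg$ for any $g$, whence $\psi''=e^{-i\alpha\mc I}D^2\phi$ and $(\psi|\psi|^2)'=e^{-i\alpha\mc I}D(\phi|\phi|^2)$. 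Expanding
$$D^2\phi=\phi''-i\alpha(|\phi|^2)'\phi-2i\alpha(|\phi|^2-\mu)\phi'-\alpha^2(|\phi|^2-\mu)^2\phi$$
and $D(\phi|\phi|^2)=(\phi|\phi|^2)'-i\alpha(|\phi|^2-\mu)\phi|\phi|^2$ already produces all the polynomial monomials on the right-hand side of \eqref{eq:GDNLS}, once $e^{-i\alpha\mc I}$ is factored out.

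The genuinely nonlocal ingredient, and the one requiring care, is the time derivative of the gauge phase. From $\partial_t\psi=e^{-i\alpha\mc I}(\partial_t\phi-i\alpha(\partial_t\mc I[\psi])\phi)$ the term $-\alpha(\partial_t\mc I[\psi])\phi$ must be computed. I would obtain it from the local mass-conservation law: a direct manipulation of \eqref{eq:DNLS} gives the continuity equation
$$\partial_t|\psi|^2=\partial_x\!\Big(\tfrac32\beta|\psi|^4-2\,\Im(\bar\psi\psi')\Big),$$
and differentiating \eqref{DefMathcali} in $t$ then yields $\partial_t\mc I[\psi]=G-\langle G\rangle$, where $G:=\tfrac32\beta|\psi|^4-2\,\Im(\bar\psi\psi')$ and $\langle G\rangle:=\tfrac1{2\pi}\int_\T G$ is its spatial mean. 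Rewriting $G$ in terms of $\phi$ via $|\psi|=|\phi|$ and the identity $\Im(\bar\psi\psi')=\Im(\bar\phi\phi')-\alpha(|\phi|^2-\mu)|\phi|^2$ decomposes $\partial_t\mc I$ into a local part and a mean part.

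To finish, after cancelling $e^{-i\alpha\mc I}$ and moving $2i\alpha\mu\phi'$ to the left, I would sort the right-hand side by monomial type. The local part of $-\alpha(\partial_t\mc I)\phi$ contributes $2\alpha\,\Im(\bar\phi\phi')\phi=-i\alpha\big(|\phi|^2\phi'-\phi^2\bar\phi'\big)$, which is precisely what corrects the naive coefficients to $c_1=2(\alpha+\beta)$ and $c_2=2\alpha+\beta$; the remaining quartic and $\mu$-weighted cubic terms combine with the $\alpha^2(|\phi|^2-\mu)^2\phi$ and $\alpha\beta(|\phi|^2-\mu)\phi|\phi|^2$ contributions to give $c_3$ and $c_4$. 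Finally, the mean part $\alpha\langle G\rangle\phi$ together with the constant $\alpha^2\mu^2\phi$ coming from $(|\phi|^2-\mu)^2$ assembles into $\Gamma[\phi]\phi$: the identity $\int_\T|\phi|^2=2\pi\mu$ turns the mean of $|\phi|^2$ into $\mu^2$, while $\int_\T\phi'\bar\phi$ is purely imaginary, so that $-\tfrac{\alpha}{\pi}\Im\int_\T\bar\phi\phi'=\tfrac{i\alpha}{\pi}\int_\T\phi'\bar\phi$, reproducing \eqref{eq:GDNLSgamma}.

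The computation is entirely elementary, so the main obstacle is bookkeeping rather than any conceptual difficulty. The one subtle point is the correct handling of the anticipative phase $\mc I$: deriving its time derivative through the mass continuity law, and above all correctly partitioning the cross term $\Im(\bar\psi\psi')$ into the local contribution that modifies $c_1,c_2$ and the spatial-mean contribution that enters $\Gamma$. This split is exactly where a miscalculation would creep in, and tracking it is the crux of the argument.
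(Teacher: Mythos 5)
Your proposal is correct and follows essentially the same route as the paper: substitute $\psi=\Ga_{-\alpha}\phi$ into \eqref{eq:DNLS}, expand the spatial derivatives via \eqref{eq:gauge_der} (your operator $D=\partial_x-i\alpha(|\phi|^2-\mu)$ is exactly that formula, and your expansions of $D^2\phi$ and $D(\phi|\phi|^2)$ reproduce \eqref{20170407:eq1}), and compute $\partial_t\mc I[\psi]$ from the DNLS equation. Your derivation of $\partial_t\mc I[\psi]=G-\langle G\rangle$ through the mass continuity law $\partial_t|\psi|^2=\partial_x\big(\tfrac32\beta|\psi|^4-2\Im(\bar\psi\psi')\big)$ is precisely the "DNLS equation plus integration by parts" computation the paper invokes for \eqref{20170407:eq4}, and your final bookkeeping of the coefficients $c_1,\dots,c_4$ and $\Gamma$ checks out.
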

\begin{proof}
Using \eqref{eq:gauge_der} we get
\begin{equation}
\begin{split}\label{20170407:eq1}
\psi''
&
=e^{-i\alpha \mathcal{I}(\psi)} \left( \phi''-3i\alpha|\phi|^2\phi'-i\alpha\phi^2\bar\phi'
+ 2i\alpha\mu\phi'
\right.
\\
&\left.
-\alpha^2|\phi|^4\phi+2\alpha^2\mu|\phi|^2\phi-\alpha^2\mu^2\phi\right) 
\,,
\\
(| \psi |^{2} \psi)'
&
=e^{-i\alpha \mathcal{I}(\psi)} 
 \left( 
2|\phi|^{2} \phi'+\phi^2\bar\phi'
  -i\alpha|\phi|^{4} \phi +i \alpha \mu| \phi|^{2}\phi \right) \, .
\end{split}
\end{equation}
Note that
\begin{equation}\label{20170407:eq2}
\partial_t\psi=e^{-i\alpha\mc I[\psi]}\left(\partial_t\phi-i\alpha\phi\partial_t\mc I[\psi]\right)
\,.
\end{equation}
Using the DNLS equation \eqref{eq:DNLS}, integration by parts, and equation \eqref{eq:gauge_der} it is straightforward to get
\begin{equation}\label{20170407:eq4}
\partial_{t}\mc I [\psi] 
=
i \phi'\bar{\phi}-i\phi\bar\phi'
+ \big(2\alpha+\frac{3\beta}{2}\big)  |\phi|^{4}
-2\alpha\mu|\phi|^2
+2\alpha\mu^2
-\big(\frac{3\beta}{4\pi}+\frac{\alpha}{\pi}\big) \| \phi \|_{L^{4}}^{4}
-\frac{i}{\pi}\int_{\mbb T}\phi'\bar\phi
\,.
\end{equation}
Substituting \eqref{20170407:eq1} and \eqref{20170407:eq2} in the DNLS equation \eqref{eq:DNLS} and using
equation \eqref{20170407:eq4} we get the statement.
\end{proof}
\begin{remark}\label{RealCoeff}
Note that $c_1,c_2,c_3$ and $c_4$ in \eqref{eq:GDNLSconst}, and $\Gamma [f]$ in \eqref{eq:GDNLSgamma} are real (indeed, using integration by parts, one can check that
$\int f' \bar f$ is purely imaginary). 
\end{remark}

We call the equation \eqref{eq:GDNLS} with coefficients given
by \eqref{eq:GDNLSconst} and \eqref{eq:GDNLSgamma} the \emph{gauged derivative nonlinear Schr\"odinger} (GDNLS) equation.
We recall that $\Phi_t$ denotes the flow-map of the DNLS equation \eqref{eq:DNLS}. Then, the flow defined by the GDNLS equation
\eqref{eq:GDNLS} is given by
\be\label{eq:G-Flusso}
\Phi_{t,\a}:=\Ga_\a \, \Phi_t \, \Ga_{-\a}\,,
\qquad \alpha\in\mb R\,. 
\ee
\begin{remark}
For the choice of $\alpha=-\beta=-1$ the GDNLS equation \eqref{eq:GDNLS} already appeared in \cite{Herr} and \cite{NOR-BS12}.
\end{remark}

\subsection{Integrals of motion}\label{sec:2.2}
Recall from \cite{KN78} (see also \cite{DSK13,GLV16}) that there exists an infinite sequence of integrals of motion
$\{E_{k}[\psi]\}_{k\in\frac12\N_0}$
for the DNLS equation \eqref{eq:DNLS}.
The first few of them are listed below:
\begin{align}
\begin{split}\label{integrals_DNLS}
E_0 [\psi]
& =
\frac12\|\psi\|_{L^2}^2
\, ,
\\
E_{\frac12}[\psi]
& =
\frac i2\int\psi'\bar\psi
+\frac\beta4\|\psi\|_{L^4}^4  \, ,
\\
E_1[\psi]
& =
\frac12 \|\psi\|_{\dot{H}^1}^2
+\frac{3i}4\beta\int|\psi|^2\psi'\bar\psi
+\frac{\beta^2}4\|\psi\|_{L^6}^6 \, ,
\\
E_{\frac32}[\psi] 
& = \frac i2\int\psi''\bar\psi'
+\frac\beta4\int\left((\psi')^2\bar\psi^2
+8|\psi|^2\psi'\bar\psi'+\psi^2(\bar\psi')^2
\right)
\\
&+\frac{5 i}{4}\beta^2\int|\psi|^4\psi'\bar\psi
+\frac 5{16}\beta^3\|\psi\|_{L^8}^8 \, ,
\\
E_2[\psi]
& = 
\frac12\|\psi\|_{\dot{H}^2}^2
+\frac{5i}4\beta\int|\psi|^2\left(\psi''\bar\psi'-\psi'\bar\psi''\right)
+
\frac{5}{4}\beta^2\int|\psi|^2\left(
(\psi')^2\bar\psi^2+\psi^2(\bar\psi')^2
\right)
\\
&
+\frac{25}{4}\beta^2\int|\psi|^4\psi'\bar\psi'
+\frac{35i}{16}\beta^3\int|\psi|^6\psi'\bar\psi
+\frac{7}{16}\beta^4\|\psi\|_{L^{10}}^{10}  \, .
\end{split}
\end{align}
\begin{remark}
These integrals of motion are slightly different from those appearing in the introduction of
\cite{GLV16}, where there is a typo in the coefficient of $\beta$ of $E_{1}[\psi]$.
\end{remark}
In this section we study the way the sequence of integrals of motion of the DNLS equation $\{E_{k}[\psi]\}_{k\in\frac12\N_0}$
changes under the gauge transformation $\Ga_\a$.

Recall that we have denoted by $\psi$ a solution of the DNLS equation \eqref{eq:DNLS} and we proved in 
Proposition \ref{proposition:GDNLS} that $\phi=\ms G_{\alpha}(\psi)$ 
is a solution of the GDNLS equation \eqref{eq:GDNLS}, for every $\alpha\in\mb R$.
We can rewrite the integrals of motion $E_{k}[\psi]$, $k\in\frac12\N_0$,
in terms of the new variables $\phi^{(k)} := \partial_{x}^k \phi$ in 
the following way
\be\label{eq:def-energie-gauged}
\mc E_{k}[\phi] : = E_{k}[\Ga_{-\alpha} \phi ]  
= E_{k}[\psi]\,,
\qquad
k\in\frac12\N_0
\, .
\ee
Again, we will omit the dependence on $\alpha$ of $\mc E_{k}$ in order to simplify the notations.
Clearly, when $\a = 0$ we have $\mc E_{k}=E_{k}$.
By a direct calculation one can check that the first few integrals of motion \eqref{integrals_DNLS} of the DNLS equation rewrite 
in the new variables $\phi^{(k)}$ as follows:
\begin{align}
\mc E_{0} [\phi] 
& =
\frac{1}{2}\|\phi\|^2_{L^2}\,,
\notag\\
\mc E_{\frac12}[\phi] 
& =
\frac i2\int\phi'\bar\phi
+\frac{1}{4}(2\alpha+\beta)\|\phi\|_{L^4}^4-\pi\alpha\mu^2\,,
\label{legge12}
\\
\mc E_{1}[\phi] 
& = 
\frac12 \| \phi \|_{\dot{H}^{1}}^{2}
+ i \alpha \mu \int \phi \bar\phi'
+ \frac i4 \left( 4\alpha+3\beta\right)\int |\phi|^{2} \phi' \bar{\phi}
\notag\\
&+
\pi\alpha^2\mu^3- \frac\alpha4\left( 4\alpha + 3\beta\right)\mu  \| \phi \|_{L^{4}}^{4} 
+ \frac14( \alpha+\beta)(2\alpha+\beta) \| \phi \|_{L^{6}}^{6} \, ,
\notag
\end{align}
\begin{align}
\mc E_{\frac32}[\phi]
& =
\frac i2\int\phi''\bar\phi'
-\frac32 \alpha\mu\int \phi'\bar\phi'
+\frac14(2\alpha+\beta)\int\left((\phi')^2\bar\phi^2
+\phi^2(\bar\phi')^2
\right)
\notag\\
&
+\frac12(5\alpha+4\beta)\int|\phi|^2\phi'\bar\phi'
+\frac32 i \alpha^2\mu^2\int \phi'\bar\phi
-3i\alpha(\alpha+\beta)\mu\int |\phi|^2\phi'\bar\phi
\notag\\
&
+\frac i4(6\alpha^2+12\alpha\beta+5\beta^2)\int|\phi|^4\phi'\bar\phi
-\pi\alpha^3\mu^4
+\frac32\alpha^2(\alpha+\beta)\mu^2\|\phi\|_{L^4}^4
\notag\\
&
-\frac14\alpha(6\alpha^2+12\alpha\beta+5\beta^2)\mu\|\phi\|_{L^6}^6
+\frac1{16}(2\alpha+\beta)(4\alpha^2+10\alpha\beta+5\beta^2)\|\phi\|_{L^8}^8
\notag\,,
\end{align}
\begin{align}
\mc E_{2}[\phi] &=
\frac{1}{2}\|\phi\|_{\dot{H}^2}^2
-2 i \alpha  \mu  \int \phi''\bar\phi'
+\frac{1}{4} i (6 \alpha +5 \beta )
\int|\phi|^2\left(\phi''\bar\phi'
-\phi'\bar\phi''\right)
\notag\\
&
-\frac{1}{2} i \alpha \int \left((\phi')^2\bar\phi\bar\phi'
-\phi\phi'(\bar\phi')^2\right)
+3 \alpha ^2 \mu ^2 \int \phi '\bar\phi'
-10 \alpha (\alpha +\beta )\mu\int |\phi|^2\phi'\bar\phi'
\notag\\
&
-\frac{1}{4} \alpha (8 \alpha +5 \beta )  \mu 
\int\left((\phi')^2\bar\phi^2+\phi^2(\bar\phi')^2\right)
+\frac{1}{4} (4 \alpha +5 \beta ) (8 \alpha +5 \beta )\int |\phi|^4\phi'\bar\phi'
\notag\\
&
+\frac{5}{4} (\alpha +\beta ) (2 \alpha +\beta )
\int|\phi|^2\left(\phi^2(\bar\phi')^2+\bar\phi^2 (\phi')^2\right)
-2 i \alpha ^3 \mu ^3 \int \phi'\bar\phi
\notag\\
&
+\frac{3}{4} i \alpha ^2 \mu ^2 (4 \alpha +5 \beta )
\int|\phi|^2\left(\phi'\bar\phi
-\phi\bar\phi'\right)
\notag\\
&
-\frac{3}{4} i \alpha \mu\left(4 \alpha ^2+10 \alpha  \beta +5 \beta ^2\right) 
\int|\phi|^4\left(\phi'\bar\phi
-\phi\bar\phi'\right)
\notag\\
&
+\frac{1}{16} i \left(32 \alpha ^3+120 \alpha ^2 \beta +120 \alpha  \beta ^2+35 \beta ^3\right)
\!\int\!\! |\phi|^6\phi'\bar\phi
+\pi \alpha ^4 \mu ^5
\notag\\
&-\frac{1}{2} \alpha ^3 (4 \alpha +5 \beta )\mu ^3\|\phi\|_{L^4}^4
+\frac{3}{4} \alpha ^2 \left(4 \alpha ^2+10 \alpha  \beta +5 \beta ^2\right)\mu ^2 
\|\phi\|_{L^6}^6
\notag\\
&-\frac{1}{16} \alpha \left(32 \alpha ^3+120 \alpha ^2 \beta +120 \alpha  \beta ^2+35 \beta ^3\right)\mu
\|\phi\|_{L^8}^8
\notag\\
&
+\frac{1}{16} (\alpha +\beta ) (2 \alpha +\beta ) \left(4 \alpha ^2+14 \alpha  \beta +7 \beta ^2\right)
\|\phi\|_{L^{10}}^{10}
\notag\,.
\end{align}
We have $\psi(x, t)=\Phi_t(\psi(x, 0))$, as $\psi$ solves \eqref{eq:DNLS}. Hence, from \eqref{eq:def-energie-gauged}
and the conservation of $E_{k}[\psi]$ for the DNLS flow,
it is $\mc E_k[\phi(x, t)]=E_k[\psi(x, 0)]$. In other words, $\mc E_k[\phi]$ is an integral of motion
for the GDNLS equation \eqref{eq:GDNLS} for every $k\in\frac12\N_0$.

We want to give a more detailed description of the integrals of motion $\{\mc E_{k}[\phi]\}_{k\in\N_0}$ defined in \eqref{eq:def-energie-gauged}. We start by reviewing some results from \cite{GLV16} about the structure of the integrals of motion
$E_{k}[\psi]$ of the DNLS equation.

Let $\mc V=[\psi^{(n)},\bar\psi^{(n)}\mid n\in\N_0]$ be the algebra of differential polynomials in the variables $\psi$ and $\bar\psi$.
On the differential algebra $\mc V$ we have the usual polynomial degree, which we denoted by $\deg$, defined by setting $\deg(\psi^{(n)})=\deg(\bar\psi^{(n)})=1$, for every
$n\in\N_0$,
and the usual differential degree, which we denoted by $\dd$, defined by setting $\dd(\psi^{(n)})=\dd(\bar\psi^{(n)})=n$, for every $n\in\N_0$.
For $n\in\N_0$, we also let $\mc V_n=\{f\in\mc V\mid \frac{\partial f}{\partial u^{(m)}}=0,
\text{ for every }m>n, u=\psi\text{ or }\bar\psi\}$.

It is shown in \cite{GLV16} that there exists an infinite sequence $\{h_k\}_{k\in\N_0}\subset\mc V$ such that
the local functionals $\tint h_k$, $k\in\N_0$, are integrals of motion for the DNLS equation \eqref{eq:DNLS}. Moreover
\begin{equation}\label{eq:hk}
h_{k}=\sum_{m=0}^{k}\beta^mh_{k,m}
\,,
\end{equation}
with $\deg(h_{k,m})=2m+2$ and $\dd(h_{k,m})=k-m$, for every $m=0,\dots,k$.
The integrals of motion $\{E_k[\psi]\}_{k\in\N_0}$, introduced at the beginning of this section are defined by
\begin{equation}\label{20170524:eq1}
E_{k}[\psi]=\tint h_{2k}\,,
\qquad k\in\N_0
\,.
\end{equation}
\begin{remark}
We recall from \cite{GLV16} that the Gibbs measures for the DNLS equation are associated to the integrals $\tint h_{2k}$, $k\in\N_0$.
\end{remark}
Let us introduce an integral grading on $\mc V$, which we denote by $\degb$, by setting
$$
\degb(\psi^{(n)})=-\degb(\bar\psi^{(n)})=1\,,
\quad
n\in\N_0
\,.
$$
We also write $\mc V=\bigoplus_{m\in\mb Z}\mc V[m]$, where $\mc V[m]=\{f\in\mc V\mid \degb(f)=m\}$ denotes the space of
homogeneous elements of degree $m\in\mb Z$. 
\begin{lemma}\label{20170522:lem1}
 Let $m\in\mb Z$, and let $f\not\in\mb C$ be such that $f\in\mc V[m]$.
\begin{enumerate}[a)]
\item For every $n\in\N_0$, we have that $\degb(\frac{\partial f}{\partial\psi^{(n)}})=\degb(f)-1$ and 
$\degb(\frac{\partial f}{\partial\bar\psi^{(n)}})=\degb(f)+1$.
\item
We have that $\degb(\partial f)=\degb(f)$. Hence, $\degb(\partial^n f)=\degb(f)$, for every $n\in\N_0$.
\item
If $f\not\in\partial\mc V$, then $\degb(\frac{\delta f}{\delta\psi})=\degb(f)-1$ and $\degb(\frac{\delta f}{\delta\bar\psi})=\degb(f)+1$.
\end{enumerate}
\end{lemma}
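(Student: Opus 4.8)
The plan is to reduce all three parts to a single bookkeeping principle: under the grading $\degb$ the generators split into the ``positive'' variables $\psi^{(n)}$ of weight $+1$ and the ``negative'' variables $\bar\psi^{(n)}$ of weight $-1$, and each of the three operators in the statement acts on these weights in a completely controlled way. It is convenient to encode $\degb$ through the Euler-type derivation
\[
D := \sum_{n\in\N_0}\Big(\psi^{(n)}\frac{\partial}{\partial\psi^{(n)}}-\bar\psi^{(n)}\frac{\partial}{\partial\bar\psi^{(n)}}\Big),
\]
which satisfies $Df=m\,f$ exactly when $f\in\mc V[m]$, so that homogeneity is the same as being an eigenvector of $D$. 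Since every $f\in\mc V[m]$ is a linear combination of monomials of the same $\degb$, and all the operators involved are linear, it suffices to track how each operator changes the quantity (number of $\psi$-factors) minus (number of $\bar\psi$-factors) of a monomial. Throughout I read the equalities $\degb(\cdots)=m\pm1$ as the containments ``$\cdots\in\mc V[m\pm1]$'', so that the (possibly vanishing) output is handled uniformly.

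For part (a) I would observe that $\partial/\partial\psi^{(n)}$ applied to a monomial either annihilates it or removes exactly one factor $\psi^{(n)}$, lowering the $\psi$-count by one and leaving the $\bar\psi$-count untouched; hence the weight drops by $1$ and $\frac{\partial f}{\partial\psi^{(n)}}\in\mc V[m-1]$, and symmetrically $\frac{\partial f}{\partial\bar\psi^{(n)}}\in\mc V[m+1]$. At the operator level this is the pair of commutation relations $[D,\partial/\partial\psi^{(n)}]=-\partial/\partial\psi^{(n)}$ and $[D,\partial/\partial\bar\psi^{(n)}]=+\partial/\partial\bar\psi^{(n)}$, which give $D(\frac{\partial f}{\partial\psi^{(n)}})=(m-1)\frac{\partial f}{\partial\psi^{(n)}}$ and the $\bar\psi$-analogue.

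For part (b) the key point is that the total derivative $\partial$ is a derivation sending each generator to a generator of the \emph{same} weight, namely $\partial\psi^{(n)}=\psi^{(n+1)}$ and $\partial\bar\psi^{(n)}=\bar\psi^{(n+1)}$. A derivation preserving the weights of the generators preserves the entire grading, so $\partial(\mc V[m])\subseteq\mc V[m]$; equivalently $[D,\partial]=0$, which drops out by combining the relations of part (a) with $[D,\psi^{(n+1)}\cdot]=+\psi^{(n+1)}\cdot$ and $[D,\bar\psi^{(n+1)}\cdot]=-\bar\psi^{(n+1)}\cdot$. Iterating yields $\degb(\partial^{n}f)=\degb(f)$.

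Part (c) is then immediate from the explicit formula $\frac{\delta f}{\delta\psi}=\sum_{n\in\N_0}(-\partial)^{n}\frac{\partial f}{\partial\psi^{(n)}}$: by part (a) each summand $\frac{\partial f}{\partial\psi^{(n)}}$ lies in $\mc V[m-1]$, by part (b) applying $(-\partial)^{n}$ keeps it in $\mc V[m-1]$, and since $\mc V[m-1]$ is a linear subspace the whole sum remains there; likewise $\frac{\delta f}{\delta\bar\psi}\in\mc V[m+1]$. The one genuinely substantive point — and the place where the hypotheses $f\notin\C$ and $f\notin\partial\mc V$ enter — is that the assigned degree is only meaningful when the variational derivative is not identically zero. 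This is exactly the statement that the kernel of the Euler operator on $\mc V$ consists of total derivatives up to constants, i.e. that $\frac{\delta f}{\delta\psi}=\frac{\delta f}{\delta\bar\psi}=0$ forces $f\in\partial\mc V+\C$. I expect this non-degeneracy, rather than the homogeneity itself, to be the main obstacle; I would dispose of it either by citing the corresponding property of the variational complex of $\mc V$ from \cite{GLV16} or by reproving it via the standard integration-by-parts normal form for differential polynomials.
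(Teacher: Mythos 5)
Your proposal is correct and follows essentially the same route as the paper's proof: part (a) by direct inspection of how the partial derivatives act on monomial weights, part (b) from part (a) together with the formula $\partial=\sum_{n\in\N_0}(\psi^{(n+1)}\frac{\partial}{\partial\psi^{(n)}}+\bar\psi^{(n+1)}\frac{\partial}{\partial\bar\psi^{(n)}})$, and part (c) from (a), (b), the explicit formula for the variational derivatives, and the kernel identity $\ker\frac{\delta}{\delta\psi}=\mb C+\partial\mc V$ (which the paper cites from the literature on the variational complex, exactly the non-degeneracy point you isolated). Your Euler-operator formalism is just a more explicit packaging of the same bookkeeping, so no substantive difference remains.
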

\begin{proof}
Part  a) is clear. Part b) follows from part a) and the fact that
$\partial=\sum_{n\in\N_0}(\psi^{(n+1)}\frac{\partial}{\partial\psi^{(n)}}+\bar\psi^{(n+1)}\frac{\partial}{\partial\bar\psi^{(n)}})$.
Part c) follows from parts a) and b), the definition of the variational derivatives
$\frac{\delta}{\delta\psi}=\sum_{n\in\N_0}(-\partial)^n\frac{\partial}{\partial\psi^{(n)}}$
and $\frac{\delta}{\delta\bar\psi}=\sum_{n\in\N_0}(-\partial)^n\frac{\partial}{\partial\bar\psi^{(n)}}$,
and the fact that $\ker\frac{\delta}{\delta \psi}=\ker\frac{\delta}{\delta\psi}=\mb C+\partial\mc V$ (see \cite{BDSK09}).
\end{proof}
\begin{proposition}\label{20170522:prop1}
$h_{k}\in\mc V[0]$ for every $k\in\N_0$.
\end{proposition}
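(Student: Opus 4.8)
The plan is to read the condition $\degb=0$ as the infinitesimal $U(1)$ phase invariance of the DNLS hierarchy, and to prove the homogeneity of the densities $h_k$ by induction on $k$, using the bi-Hamiltonian (Lenard--Magri) recursion of \cite{GLV16} that generates them, with Lemma \ref{20170522:lem1} serving as the bookkeeping device for how $\degb$ travels through $\partial$ and the variational derivatives. First I would record that the DNLS equation \eqref{eq:DNLS} is itself $\degb$-homogeneous: with $\degb(\psi^{(n)})=1$ and $\degb(\bar\psi^{(n)})=-1$, both the linear part $i\psi''$ and the nonlinearity $\beta(|\psi|^2\psi)'$ lie in $\mc V[1]$ (the latter because $|\psi|^2\psi=\psi^2\bar\psi$ has $\degb=1$ and $\partial$ preserves $\degb$ by Lemma \ref{20170522:lem1}b), while the conjugate equation lives in $\mc V[-1]$. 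Equivalently, the evolutionary vector field of DNLS commutes with the symmetry derivation $D:=i\sum_{n}(\psi^{(n)}\frac{\partial}{\partial\psi^{(n)}}-\bar\psi^{(n)}\frac{\partial}{\partial\bar\psi^{(n)}})$, which acts on $\mc V[m]$ as multiplication by $im$; consequently the Poisson structures of the hierarchy and the resulting recursion operator respect this grading. The base case is immediate, since $E_0=\tint h_0$ forces $h_0=\tfrac12\psi\bar\psi\in\mc V[0]$.

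For the inductive step, assume $h_k\in\mc V[0]$ (and $h_k\notin\partial\mc V$, as it is a nontrivial conserved density). By Lemma \ref{20170522:lem1}c the gradient satisfies $\frac{\delta h_k}{\delta\psi}\in\mc V[-1]$ and $\frac{\delta h_k}{\delta\bar\psi}\in\mc V[1]$. Feeding this through the $\degb$-compatible recursion operator, which maps $(\mc V[-1],\mc V[1])$-valued gradients to gradients with the same grading, yields $\frac{\delta h_{k+1}}{\delta\psi}\in\mc V[-1]$ and $\frac{\delta h_{k+1}}{\delta\bar\psi}\in\mc V[1]$. To return from the gradient to the density, decompose $h_{k+1}=\sum_m h_{k+1}[m]$ into $\degb$-homogeneous pieces. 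Since $\frac{\delta}{\delta\psi}$ maps $\mc V[m]$ into $\mc V[m-1]$ as a graded linear operator (by parts a, b of Lemma \ref{20170522:lem1}), matching homogeneous components forces $\frac{\delta h_{k+1}[m]}{\delta\psi}=\frac{\delta h_{k+1}[m]}{\delta\bar\psi}=0$ for every $m\neq0$. As the joint kernel of the two variational derivatives is $\mb C+\partial\mc V$ (the fact invoked in the proof of Lemma \ref{20170522:lem1}, cf. \cite{BDSK09}) and a homogeneous element of nonzero degree cannot be a nonzero constant, we conclude $h_{k+1}[m]\in\partial\mc V$ for all $m\neq0$; that is, $h_{k+1}\in\mc V[0]$ modulo total derivatives.

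The main obstacle is exactly the total-derivative ambiguity in that last line: the proposition claims $h_k\in\mc V[0]$ as an element of $\mc V$, not merely after integration, so one must rule out spurious nonzero-$\degb$ exact terms in the representatives produced by \cite{GLV16}. I would close this gap by realising the recursion at the level of \emph{densities} rather than functionals: the generating procedure for the $h_k$ is assembled from $\partial$, multiplication by $\psi,\bar\psi$, and the operators $\frac{\partial}{\partial\psi^{(n)}},\frac{\partial}{\partial\bar\psi^{(n)}}$, each of which shifts $\degb$ by a definite amount according to parts a and b of Lemma \ref{20170522:lem1}. Tracking these operations through the construction shows inductively that every monomial of $h_k$ is balanced, which is precisely the content of the shape \eqref{eq:hk}: the requirement $\degb(h_{k,m})=0$ says that each $h_{k,m}$, of polynomial degree $2m+2$, carries exactly $m+1$ factors of $\psi$-type and $m+1$ of $\bar\psi$-type. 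This density-level verification, rather than the soft symmetry argument alone, is what genuinely pins down the homogeneity asserted in the statement.
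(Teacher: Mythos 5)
Your overall strategy coincides with the paper's: both proofs proceed by induction on $k$ through the recursion of \cite{GLV16}, with Lemma \ref{20170522:lem1} doing the grading bookkeeping and the base case checked by inspection. The difference lies in where the actual work is done, and it is exactly there that your argument has a gap. You assert that the recursion operator ``maps $(\mc V[-1],\mc V[1])$-valued gradients to gradients with the same grading,'' justifying this by the claim that the $U(1)$ phase symmetry of the DNLS vector field forces the Poisson structures of the hierarchy to respect the grading. This is precisely the point that cannot be taken for granted: in the form in which the recursion is actually available, namely \eqref{20141102:eq3NP}, the coefficients involve $\psi+\bar\psi$ (twice the real part of $\psi$), which is \emph{not} $\degb$-homogeneous, together with an auxiliary density $g$ defined only implicitly by the first identity. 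So the grading compatibility is not manifest, ``each building block is graded'' is false as stated, and the implication ``the flow is $U(1)$-invariant, hence the recursion operator is equivariant'' is a non sequitur without an explicit check of the Hamiltonian pair. The paper's inductive step consists exactly in extracting the compatibility from this non-equivariant-looking form: by Lemma \ref{20170522:lem1} b), c) applied to the first identity of \eqref{20141102:eq3NP}, one shows that the \emph{combination} $(\psi+\bar\psi)g$ lies in $\mc V[0]$ (even though neither factor is homogeneous); only then does the right-hand side of the second identity become homogeneous of degree $-1$, which via Lemma \ref{20170522:lem1} c) forces $\degb(h_{k+1})=0$. Without this computation, or an explicit verification that the Kaup--Newell bi-Hamiltonian pair is equivariant under $\psi\mapsto e^{i\theta}\psi$, your inductive step rests on an unproved structural assertion, and that assertion is the crux of the proposition.

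On the other hand, your discussion of the passage from gradients back to densities is more careful than the paper's. You correctly observe that homogeneity of $\frac{\delta h_{k+1}}{\delta\psi}$ and $\frac{\delta h_{k+1}}{\delta\bar\psi}$ only pins down $h_{k+1}$ modulo $\mb C+\partial\mc V$, so what one obtains directly is $h_{k+1}\in\mc V[0]$ up to total derivatives. This is a real (if minor) imprecision in the paper's own proof; it is harmless because the recursion defines the $h_k$ only up to total derivatives in the first place, so one may simply take the representative with the nonzero-degree exact pieces discarded, and all later uses involve only $\tint h_{2k}$. Your proposed repair (``tracking the operations through the construction shows every monomial is balanced'') is, however, only a sketch, and your appeal to \eqref{eq:hk} is circular: \eqref{eq:hk} records the polynomial and differential degrees of the $h_{k,m}$, not their $\degb$-homogeneity, which is what is being proved.
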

\begin{proof}
The differential polynomials $h_{k}\in\mc V$ are inductively defined (up to total derivatives) by the recurrence relation (2.10) in \cite{GLV16}. In terms
of the variables $\psi=a-ib$ and $\bar\psi=a+ib$, it becomes ($k\in\N_0$)
\begin{equation}\label{20141102:eq3NP}
\left\{
\begin{array}{l}
\displaystyle{
\partial((\psi + \bar\psi) g)
= -2 \psi \partial\left( \frac{\delta h_{k}}{\delta \psi} \right) -2 \bar\psi \partial\left( \frac{\delta h_{k}}{\delta \bar\psi} \right)
}
\\
\displaystyle{
 \frac{\delta h_{k+1}}{\delta \psi} 
=  -i \partial\left( \frac{\delta h_k}{\delta \psi} \right)  - \frac{\beta}{2} \bar\psi(\psi + \bar\psi)  g
}
\\
\displaystyle{
 \frac{\delta h_{k+1}}{\delta \bar\psi} 
=  i \partial\left( \frac{\delta h_k}{\delta \bar\psi} \right)  - \frac{\beta}{2} \psi(\psi + \bar\psi)  g \, ,
}
\end{array}
\right.
\end{equation}
where $g$, $\frac{\delta h_{k+1}}{\delta\psi}$ and $\frac{\delta h_{k+1}}{\delta \bar\psi}$ are uniquely determined by this recurrence
if we know $\frac{\delta h_{k}}{\delta\psi}$ and $\frac{\delta h_k}{\delta\bar\psi}$. We can directly check from equation \eqref{integrals_DNLS}
that $\degb(h_0)=0$. Moreover, since the integrals of motion $\tint h_k$ are non-trivial, we have that $h_k\not\in\partial\mc V$, for every $k\in\N_0$.
Let us assume that $\degb(h_k)=0$ and let us show that $\degb(h_{k+1})=0$. By Lemma \ref{20170522:lem1} b) and c), from the first identity
in \eqref{20141102:eq3NP} we get that $\degb((\psi+\bar\psi)g)=0$. Hence, using Lemma \ref{20170522:lem1} b) the RHS of the second identity
in \eqref{20141102:eq3NP} is homogeneous of degree $\degb(h_k)-1=-1$ by inductive assumption. This force $\degb(h_{k+1})=0$
using Lemma \ref{20170522:lem1} c).
\end{proof}

Let us denote by $\tilde{\mc V}=\mb C[\phi^{(n)},\bar\phi^{(n)}\mid n\in\N_0]$ the algebra of differential polynomials
in the variables $\phi$ and $\bar\phi$. By an abuse of notation we denote with the same symbols
the polynomials and differential gradings of $\mc V$ and $\tilde{\mc V}$. Clearly, on $\tilde{\mc V}$, they are defined by
$\deg(\phi^{(n)})=\deg(\bar\phi^{(n)})=1$, and $\dd(\phi^{(n)})=\dd(\bar\phi^{(n)})=n$, for $n\in\N_0$.
For $n\in\N_0$, we also let $\tilde{\mc V}_n=\{f\in\tilde{\mc V}\mid \frac{\partial f}{\partial u^{(m)}}=0,
\text{ for every }m>n, u=\phi\text{ or }\bar\phi\}$.
Using equation \eqref{eq:gauge_der} we get a linear map
\begin{equation}\label{eq:gauge_map}
\mc V[0]\to\widetilde{\mc V}[\alpha,\mu]
\end{equation}
from the space $\mc V[0]$ to the algebra of polynomials in the variables $\alpha$ and $\mu$ with coefficients in $\widetilde{\mc V}$.
\begin{lemma}\label{lemma:energie-gauge-preciso}
For every $k\in\N_0$, the integrals of motion $\mc E_k[\phi]$ of the GDNLS equation have the form
\begin{equation}\label{MainReprForm}
\mc E_k[\phi]
=\sum_{m=0}^{2k}\sum_{p=0}^{2k-m}\sum_{q=0}^p
\beta^m\alpha^p\mu^q\tint\widetilde{h}_{2k,m;p,q}
\,,
\end{equation}
where $\widetilde h_{2k,m;p,q}\in\tilde{\mc V}$ are such that $\deg(\widetilde{h}_{2k,m;p,q})=2(m+p+1-q)$ and $\dd(\widetilde{h}_{2k,m;p,q})=2k-m-p$.
\end{lemma}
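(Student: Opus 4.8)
The plan is to start from the representation $E_k[\psi]=\tint h_{2k}$ with $h_{2k}=\sum_{m=0}^{2k}\beta^m h_{2k,m}$, where by \eqref{eq:hk} each component satisfies $\deg(h_{2k,m})=2m+2$ and $\dd(h_{2k,m})=2k-m$, and to substitute the gauge relations \eqref{eq:gauge_der} directly into this expression. Writing $D:=\partial_x-i\alpha(|\phi|^2-\mu)$ and its conjugate $\bar D:=\partial_x+i\alpha(|\phi|^2-\mu)$, equation \eqref{eq:gauge_der} reads $\psi^{(a)}=e^{-i\alpha\mc I}D^a\phi$ and $\bar\psi^{(a)}=e^{i\alpha\mc I}\bar D^a\bar\phi$. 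Because $h_{2k}\in\mc V[0]$ by Proposition \ref{20170522:prop1}, and since the $\degb$- and $\deg$-gradings are compatible, each monomial of $h_{2k,m}$ carries exactly $m+1$ factors $\psi^{(\cdot)}$ and $m+1$ factors $\bar\psi^{(\cdot)}$. Hence in every monomial the scalar prefactors $e^{-i\alpha\mc I}$ and $e^{+i\alpha\mc I}$ occur with equal multiplicity $m+1$ and cancel, so the substitution lands $h_{2k}$ in $\widetilde{\mc V}[\alpha,\mu]$ as predicted by the map \eqref{eq:gauge_map}; this cancellation is precisely where $\degb(h_{2k})=0$ is indispensable.

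The core of the argument is then a bookkeeping lemma for $D^a\phi$. Expanding $D=\partial_x+A$ with $A=-i\alpha|\phi|^2+i\alpha\mu$ a multiplication operator, I would collect $D^a\phi$ by powers of $\alpha$ and $\mu$ and verify that the coefficient of $\alpha^p\mu^q$ is bi-homogeneous with $\deg=1+2(p-q)$ and $\dd=a-p$, for $0\le q\le p\le a$. Three facts must be checked: (i) each of the $p$ applications of $A$ produces one factor $\alpha$, with the $\mu$-branch chosen $q\le p$ times and the $|\phi|^2$-branch $p-q$ times, so the power of $\mu$ never exceeds that of $\alpha$; (ii) the $|\phi|^2$-branch multiplies by $\phi\bar\phi$ and raises $\deg$ by $2$, while the $\mu$-branch and every $\partial_x$ leave $\deg$ unchanged, giving $\deg=1+2(p-q)$; (iii) by the Leibniz rule each of the $a-p$ applications of $\partial_x$ raises the differential degree of each resulting monomial by exactly one (and $\alpha,\mu$ are constant in $x$), so $\dd=a-p$. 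The conjugate statement for $\bar D^a\bar\phi$ is identical, with $\degb=-1$.

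Finally I would multiply these expansions over the $m+1$ unbarred and $m+1$ barred factors of a monomial of $h_{2k,m}$ and sum the gradings. If $p$ and $q$ denote the total accumulated powers of $\alpha$ and $\mu$, then $q\le p$ (a sum of the per-factor bounds $q_i\le p_i$), the polynomial degree totals $(m+1)+(m+1)+2(p-q)=2(m+p+1-q)$, and the differential degree totals $\big(\sum_i a_i+\sum_j b_j\big)-p=(2k-m)-p$, which are exactly the asserted values of $\deg(\widetilde h_{2k,m;p,q})$ and $\dd(\widetilde h_{2k,m;p,q})$. The range $0\le p\le 2k-m$ follows since $p$ is bounded by the total differential degree $\sum_i a_i+\sum_j b_j=2k-m$, and carrying the prefactor $\beta^m$ from $h_{2k}=\sum_m\beta^m h_{2k,m}$ reproduces the triple sum \eqref{MainReprForm}. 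I expect the only genuinely delicate point to be the per-monomial cancellation of the exponentials coupled with the claim that $\partial_x$ raises $\dd$ by exactly one under Leibniz; once these are secured, the degree count is purely combinatorial.
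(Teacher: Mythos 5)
Your proposal is correct and takes essentially the same route as the paper's proof: the paper likewise expands the gauge relation as $\psi^{(n)}=e^{-i\alpha\mc I[\psi]}\sum_{p=0}^n\sum_{q=0}^p\alpha^p\mu^q P_{p,q}$ with $\deg P_{p,q}=2(p-q)+1$ and $\dd(P_{p,q})=n-p$ (your bookkeeping lemma for $D^a\phi$), invokes Proposition \ref{20170522:prop1} so that $h_{2k}$ maps into $\widetilde{\mc V}[\alpha,\mu]$ (your cancellation of the exponential factors via $\degb(h_{2k})=0$), and then substitutes into \eqref{eq:hk} to read off the degree count. Your write-up simply makes explicit the combinatorial details that the paper's terse proof leaves to the reader.
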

\begin{proof}
By expanding the RHS of equation \eqref{eq:gauge_der} we have that ($n\in\N_0$)
\begin{equation}\label{20170522:eq1}
\psi^{(n)}=e^{-i\alpha\mc I[\psi]}\sum_{p=0}^n\sum_{q=0}^p\alpha^p\mu^q P_{p,q}
\,,
\end{equation}
where $P_{p,q}\in\tilde{\mc V}$ are such that $\deg P_{p,q}=2(p-q)+1$ and $\dd(P_{p,q})=n-p$.
By Proposition \ref{20170522:prop1}, and equations \eqref{eq:def-energie-gauged} and \eqref{20170524:eq1} it follows that $\mc E_k[\phi]=\tint\tilde h_{2k}$, where $\tilde h_{2k}$ is the image of $h_{2k}$ in $\tilde{\mc V}[\alpha,\mu]$.
The result thus follows by substituting \eqref{20170522:eq1} in \eqref{eq:hk}.
\end{proof}
By expanding the RHS of equation \eqref{eq:gauge_der}, we can write ($k\geq2$)
\begin{equation}\label{eq:gauge_der_expanded}
\psi^{(k)}=e^{-i\alpha\mc I(\psi)}
(
\phi^{(k)} - i k \alpha |\phi|^2\phi^{(k-1)} + i k \alpha\mu\phi^{(k-1)}
- i \alpha(|\phi|^2)^{(k-1)}\phi+g_{k}
)
\,,
\end{equation}
where $g_{k}\in\tilde{\mc V}_{k-2}$. 
\begin{lemma}\label{lem:gauge1}
Under the gauge transformation \eqref{gauge_change} we have ($k\geq2$)
$$
\| \psi \|^{2}_{\dot{H}^{k}}=
\|\phi\|^2_{\dot{H}^k}
+2ik\alpha\mu\int\phi^{(k-1)}\bar\phi^{(k)}
+i(k+1)\alpha\int|\phi^2|
\left(\phi^{(k)}\bar\phi^{(k-1)}-\phi^{(k-1)}\bar\phi^{(k)}\right)
+\int r_{k}
\,,
$$
where $r_{k}\in\mc V_{k-1}$.
\end{lemma}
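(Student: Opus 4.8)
The plan is to substitute the expansion \eqref{eq:gauge_der_expanded} of $\psi^{(k)}$ directly into $\|\psi\|_{\dot H^k}^2=\int\psi^{(k)}\bar\psi^{(k)}$. Since the phase $e^{-i\alpha\mc I(\psi)}$ is real-valued it has modulus one and cancels against its conjugate, so that $\|\psi\|_{\dot H^k}^2=\int|A|^2$, where $A:=\phi^{(k)}+B+g_{k}$, $g_k\in\tilde{\mc V}_{k-2}$, and $B:=-ik\alpha|\phi|^2\phi^{(k-1)}+ik\alpha\mu\phi^{(k-1)}-i\alpha(|\phi|^2)^{(k-1)}\phi$ collects the three correction terms of order $k-1$. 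The only summand of $A$ carrying a derivative of order $k$ is $\phi^{(k)}$; every other summand lies in $\tilde{\mc V}_{k-1}$. Expanding the square I would write $\int|A|^2=\int|\phi^{(k)}|^2+\left(\int\phi^{(k)}\overline{(B+g_k)}+\int\bar\phi^{(k)}(B+g_k)\right)+\int|B+g_k|^2$. The first term is $\|\phi\|_{\dot H^k}^2$, and the last integrand belongs to $\tilde{\mc V}_{k-1}$, hence contributes to $\int r_k$. Thus everything reduces to analysing the two cross terms, which are complex conjugates of one another.

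For the $g_k$ piece of the cross term, a single integration by parts on $\T$ (no boundary contribution) turns $\int\phi^{(k)}\bar g_k$ into $-\int\phi^{(k-1)}\bar g_k'$; since $g_k\in\tilde{\mc V}_{k-2}$ we have $\bar g_k'\in\tilde{\mc V}_{k-1}$, so this too is absorbed into $\int r_k$. It remains to treat $\int\phi^{(k)}\bar B$ and its conjugate, and I split $\bar B$ into its three summands. The $\mu$-summand gives $-ik\alpha\mu\int\phi^{(k)}\bar\phi^{(k-1)}$; integrating by parts this equals $ik\alpha\mu\int\phi^{(k-1)}\bar\phi^{(k)}$, and adding the conjugate contribution yields exactly $2ik\alpha\mu\int\phi^{(k-1)}\bar\phi^{(k)}$, the first displayed term.

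The crux is the interaction of the remaining two summands. The summand $ik\alpha|\phi|^2\bar\phi^{(k-1)}$ of $\bar B$ produces $ik\alpha\int|\phi|^2\phi^{(k)}\bar\phi^{(k-1)}$. For the summand $i\alpha(|\phi|^2)^{(k-1)}\bar\phi$ I would expand $(|\phi|^2)^{(k-1)}=\partial^{k-1}(\phi\bar\phi)$ by the Leibniz rule and isolate the term $j=0$, namely $\phi\bar\phi^{(k-1)}$; it contributes precisely $i\alpha\int|\phi|^2\phi^{(k)}\bar\phi^{(k-1)}$, which has the same structure. Adding these gives $i(k+1)\alpha\int|\phi|^2\phi^{(k)}\bar\phi^{(k-1)}$, and together with the conjugate contribution this is the second displayed term $i(k+1)\alpha\int|\phi|^2(\phi^{(k)}\bar\phi^{(k-1)}-\phi^{(k-1)}\bar\phi^{(k)})$. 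This emergence of the coefficient $k+1$ from the combination of the explicit gauge term with the top Leibniz term is the one genuinely nontrivial point.

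Finally, all Leibniz terms with $1\le j\le k-1$ in $(|\phi|^2)^{(k-1)}$ have to be shown to land in $\int r_k$. Each such term is $i\alpha\binom{k-1}{j}\int\phi^{(k)}\bar\phi\,\phi^{(j)}\bar\phi^{(k-1-j)}$; integrating by parts to move the derivative off $\phi^{(k)}$ replaces it by $\phi^{(k-1)}$ and raises the remaining orders by one, which for $1\le j\le k-2$ keeps every factor of order $\le k-1$, hence in $\tilde{\mc V}_{k-1}$. The borderline case $j=k-1$, equal to $i\alpha\int\phi^{(k)}\phi^{(k-1)}\bar\phi^2$, is handled using $\phi^{(k)}\phi^{(k-1)}=\tfrac12\partial\big((\phi^{(k-1)})^2\big)$: one integration by parts gives $-i\alpha\int(\phi^{(k-1)})^2\bar\phi\bar\phi'\in\tilde{\mc V}_{k-1}$. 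The conjugate cross term is disposed of identically by symmetry, and collecting the surviving pieces yields the claimed identity. The main obstacle is not any single estimate but the bookkeeping: keeping track, under repeated integration by parts on $\T$, of exactly which products retain a derivative of order $k$ and must be either combined into the two explicit terms or reduced, and which fall into $\tilde{\mc V}_{k-1}$.
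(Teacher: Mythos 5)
Your proof is correct and is essentially the paper's own argument: substitute the expansion \eqref{eq:gauge_der_expanded} into $\|\psi\|^2_{\dot{H}^k}$, isolate the cross terms, and reduce them using integration by parts, the Leibniz rule, and the perfect-derivative identity $\phi^{(k)}\phi^{(k-1)}=\tfrac12\partial\big((\phi^{(k-1)})^2\big)$, which is exactly how the coefficient $k+1$ and the remainder in $\tilde{\mc V}_{k-1}$ arise in the paper. The only cosmetic difference is the order of operations — the paper first integrates by parts and then applies Leibniz to $(|\phi|^2)^{(k)}$, while you apply Leibniz to $(|\phi|^2)^{(k-1)}$ directly and integrate by parts term by term — but the intermediate terms and the final bookkeeping coincide.
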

\begin{proof}
Using equation \eqref{eq:gauge_der_expanded} and integration by parts we get
\begin{align}
\begin{split}\label{20161117:eq1}
\|\psi\|_{\dot{H}^k}^2&=
\|\phi\|_{\dot{H}^k}^2
+2ik\alpha\mu\int\phi^{(k-1)}\bar\phi^{(k)}
+ik\alpha\int|\phi|^2\left(\phi^{(k)}\bar\phi^{(k-1)}-\phi^{(k-1)}\bar\phi^{(k)}\right)
\\
&
+i\alpha\int (|\phi|^2)^{(k-1)}\left(\phi^{(k)}\bar\phi-\phi\bar\phi^{(k)}\right)
+\int\phi^{(k)}\bar g_k+\int g_k\bar\phi^{(k)}+\int r_k\,,
\end{split}
\end{align}
where $r_k\in\tilde{\mc V}_{k-1}$. Note that, using integration by part we have
\begin{equation}\label{20161117:eq2}
\int \phi^{(k)}\bar g_k=-\int \phi^{(k-1)}\partial \bar g_k=\int f
\,,
\end{equation}
for some $f\in\tilde{\mc V}_{k-1}$, since $\partial \bar g_k\in\tilde{\mc V}_{k-1}$. Similarly we have
\begin{equation}\label{20161117:eq3}
\int g_k\bar\phi^{(k)}=\int h
\,,
\end{equation}
for some $h\in\tilde{\mc V}_{k-1}$. Using again integration by parts we have
\begin{align}
\begin{split}\label{20161117:eq4}
\int (|\phi|^2)^{(k-1)}\phi^{(k)}\bar\phi
&=
-\int (|\phi|^2)^{(k)}\phi^{(k-1)}\bar\phi
+\int q_k
\\
&
=-\int \phi^{(k)}\phi^{(k-1)}\bar\phi^2
-\int |\phi|^2\phi^{(k-1)}\bar\phi^{(k)}
+\int \tilde q_k\,,
\end{split}
\end{align}
where $q_k,\tilde q_k\in\tilde{\mc V}_{k-1}$. In the second identity we have used 
$$
\partial^k(ab)=\sum_{i=0}^k\binom{k}{i}a^{(i)}b^{(k-i)}
\,,
$$
which holds for any $a,b\in\mc V$. Recall from \cite[Proof of Corollary 2.9]{GLV16} that
\begin{equation}\label{20161117:eq5}
\phi^{(k)}\phi^{(k-1)}\bar\phi^2=\int l
\,,
\end{equation}
for some $l\in\tilde{\mc V}_{k-1}$. Using equation \eqref{20161117:eq4} (and its conjugate version) and \eqref{20161117:eq5}
we get
\begin{equation}\label{20161117:eq6}
\int (|\phi|^2)^{(k-1)}\left(\phi^{(k)}\bar\phi-\phi\bar\phi^{(k)}\right)
=
\int |\phi|^2\left(\phi^{(k)}\bar\phi^{(k-1)}-\phi^{(k-1)}\bar\phi^{(k)}\right)
+\int \tilde{\tilde q}_k
\,,
\end{equation}
where $\tilde{\tilde q}_k\in\mc V_{k-1}$.
The proof is concluded by combining equations \eqref{20161117:eq1}, \eqref{20161117:eq2}, \eqref{20161117:eq3} and
\eqref{20161117:eq6}.
\end{proof}
\begin{lemma}\label{lem:gauge2}
Under the gauge transformation \eqref{gauge_change} we have ($k\geq2$)
$$
\int| \psi |^{2}\psi^{(k)}\bar\psi^{(k-1)}=
\int| \phi |^{2}\phi^{(k)}\bar\phi^{(k-1)}
+\int\tilde r_k
\,,
$$
where $\tilde r_{k}\in\tilde{\mc V}_{k-1}$.
\end{lemma}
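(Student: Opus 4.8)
The plan is to prove the identity $\int |\psi|^2 \psi^{(k)} \bar\psi^{(k-1)} = \int |\phi|^2 \phi^{(k)} \bar\phi^{(k-1)} + \int \tilde r_k$ with $\tilde r_k \in \tilde{\mc V}_{k-1}$ by substituting the gauge expansion \eqref{eq:gauge_der_expanded} for $\psi^{(k)}$ and $\psi^{(k-1)}$ into the integrand and tracking which terms survive modulo the ideal of ``lower-order'' densities (those lying in $\tilde{\mc V}_{k-1}$, i.e. depending on at most $k-1$ derivatives, up to total derivatives). First I would observe that $|\psi|^2 = |\phi|^2$ pointwise, so the exponential prefactors $e^{-i\alpha\mc I(\psi)}$ coming from $\psi^{(k)}$ and $\overline{\psi^{(k-1)}}$ cancel against each other, and the modulus factor passes through unchanged. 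This reduces the claim to expanding $|\phi|^2 \cdot (\psi^{(k)}$-bracket$) \cdot \overline{(\psi^{(k-1)}$-bracket$)}$ where the brackets are the parenthetical polynomials in \eqref{eq:gauge_der_expanded}.

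The key computational step is then a degree/order bookkeeping argument. For the factor $\psi^{(k)}$, \eqref{eq:gauge_der_expanded} gives the leading term $\phi^{(k)}$ plus terms each of which either carries an explicit factor of $\alpha$ and at most $k-1$ derivatives, or lies in $\tilde{\mc V}_{k-2}$ (the $g_k$ remainder). For the factor $\bar\psi^{(k-1)}$ I would apply the conjugate of \eqref{eq:gauge_der_expanded} with $k$ replaced by $k-1$, whose leading term is $\bar\phi^{(k-1)}$ and whose remaining terms lie in $\tilde{\mc V}_{k-2}$. Multiplying out, the only product that can contribute a genuine $k$-th derivative on one factor and a $(k-1)$-th on the other is the leading$\times$leading term $|\phi|^2\phi^{(k)}\bar\phi^{(k-1)}$; every cross term pairs a top-order derivative against a factor of order at most $k-1$, so after at most one integration by parts (to shift a derivative off the highest-order factor, as was done in Lemma \ref{lem:gauge1}, equations \eqref{20161117:eq2}--\eqref{20161117:eq4}) it becomes an element of $\tilde{\mc V}_{k-1}$ modulo $\partial\tilde{\mc V}$. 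I would absorb all such terms into $\int \tilde r_k$.

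The main obstacle I anticipate is a single borderline term: the product of the leading $\phi^{(k)}$ from $\psi^{(k)}$ against a term in the $\bar\psi^{(k-1)}$ bracket that still carries order $k-1$, or symmetrically the $-ik\alpha|\phi|^2\phi^{(k-1)}$ term from $\psi^{(k)}$ against the leading $\bar\phi^{(k-1)}$, since naively these involve a total of $2k-1$ derivatives and are not obviously reducible. Here I would use integration by parts to move the $\partial^{(k)}$ off the top factor, exactly as in \eqref{20161117:eq4}, producing $\int(|\phi|^2\cdots)^{(1)}\phi^{(k-1)}(\cdots)$ type expressions; combined with the reduction \eqref{20161117:eq5} from \cite[Proof of Corollary 2.9]{GLV16} that forces products like $\phi^{(k)}\phi^{(k-1)}\bar\phi^2$ to lie in $\partial\tilde{\mc V}+\tilde{\mc V}_{k-1}$, these borderline contributions collapse into $\tilde{\mc V}_{k-1}$. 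Once every cross term is shown to be lower order, collecting the surviving leading term yields the stated identity, and I would conclude by noting that the accumulated remainder $\tilde r_k$ manifestly lies in $\tilde{\mc V}_{k-1}$ as claimed.
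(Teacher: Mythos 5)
Your proposal is correct and follows essentially the same route as the paper's (deliberately terse) proof, which invokes exactly the ingredients you use: the expansion \eqref{eq:gauge_der_expanded}, integration by parts, and the inclusion $\partial\tilde{\mc V}_n\subset\tilde{\mc V}_{n+1}$. The only remark is that your anticipated ``borderline term'' is illusory: membership in $\tilde{\mc V}_{k-1}$ is governed by the maximal order of derivative appearing, not by the total differential degree, so a term like $|\phi|^{4}\phi^{(k-1)}\bar\phi^{(k-1)}$ is already an admissible remainder, and since every non-leading term of the $\bar\psi^{(k-1)}$ expansion lies in $\tilde{\mc V}_{k-2}$, the single integration by parts $\int\phi^{(k)}F=-\int\phi^{(k-1)}\partial F$ with $F\in\tilde{\mc V}_{k-2}$ (so $\partial F\in\tilde{\mc V}_{k-1}$) disposes of all the cross terms, with no need for the reduction \eqref{20161117:eq5}.
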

\begin{proof}
It is immediate using equation \eqref{eq:gauge_der_expanded}, integration by parts and the fact that
$\partial \tilde{\mc V}_n\subset\tilde{\mc V}_{n+1}$, for every $n\in\N_0$.
\end{proof}
Recall from \cite{GLV16} that, for $k\geq2$, we have 
\begin{equation}\label{eq:E_k}
E_{k}[\psi] = \frac12 \| \psi \|^{2}_{\dot{H}^{k}}
+ i \frac{2k+1}{4} \beta \int |\psi|^2\left(\psi^{(k)}\bar\psi^{(k-1)}-\psi^{(k-1)}\bar\psi^{(k)}\right)
+\int R_{k}
\,,
\end{equation}
where $R_k\in\mc V_{k-1}$.
Hence, we get the following result about the structure of the integrals of motion of the GDNLS equation.
\begin{corollary}\label{DanCor2}
For every $k\geq2$, the integrals of motion
of the GDNLS equation \eqref{eq:E_k} may be written as
\begin{align}\label{FBAK}
\begin{split}
\mc E_k[\phi] &=\frac12\|\phi\|^2_{\dot{H}^k}
+ik\alpha\mu\int\phi^{(k-1)}\bar\phi^{(k)}
\\
&+\frac i4\left((2k+2)\alpha+(2k+1)\beta\right)
\int|\phi|^2\left(\phi^{(k)}\bar\phi^{(k-1)}-\phi^{(k-1)}\bar\phi^{(k)}\right)
+\int R_{k}
\,,
\end{split}
\end{align}
where $R_{k}\in\tilde{\mc V}_{k-1}$.
\end{corollary}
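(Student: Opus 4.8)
The plan is to treat this corollary as a pure bookkeeping consequence of the normal form \eqref{eq:E_k} for $E_k[\psi]$ together with the two gauge-transformation lemmas already established. Since by definition \eqref{eq:def-energie-gauged} we have $\mc E_k[\phi]=E_k[\psi]$ with $\psi=\Ga_{-\a}\phi$, it suffices to rewrite each of the three pieces on the right-hand side of \eqref{eq:E_k} in terms of $\phi$, isolate the contributions that genuinely involve the top derivative $\phi^{(k)}$, and absorb everything of differential order at most $k-1$ into a single remainder $\int R_k\in\tilde{\mc V}_{k-1}$.

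Concretely, I would first substitute Lemma \ref{lem:gauge1} into the leading term $\frac12\|\psi\|^2_{\dot H^k}$. This produces $\frac12\|\phi\|^2_{\dot H^k}$, the term $ik\a\mu\int\phi^{(k-1)}\bar\phi^{(k)}$ (exactly the second summand of \eqref{FBAK}), the term $\frac{i(k+1)\a}{2}\int|\phi|^2(\phi^{(k)}\bar\phi^{(k-1)}-\phi^{(k-1)}\bar\phi^{(k)})$, and a lower-order piece $\frac12\int r_k$. Next I would apply Lemma \ref{lem:gauge2} and its complex conjugate to the cubic term of \eqref{eq:E_k}; since $|\psi|^2$ is real this turns $\int|\psi|^2(\psi^{(k)}\bar\psi^{(k-1)}-\psi^{(k-1)}\bar\psi^{(k)})$ into $\int|\phi|^2(\phi^{(k)}\bar\phi^{(k-1)}-\phi^{(k-1)}\bar\phi^{(k)})$ plus a term in $\tilde{\mc V}_{k-1}$, so the cubic contributes the coefficient $\frac{i(2k+1)}{4}\b$. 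Collecting the coefficient of $\int|\phi|^2(\phi^{(k)}\bar\phi^{(k-1)}-\phi^{(k-1)}\bar\phi^{(k)})$ then gives
\[
\frac{i(k+1)\a}{2}+\frac{i(2k+1)\b}{4}=\frac i4\big((2k+2)\a+(2k+1)\b\big),
\]
which is precisely the factor appearing in \eqref{FBAK}, while the $ik\a\mu$ coefficient already matches verbatim.

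The one point deserving genuine care, and the step I expect to be the main obstacle, is showing that the leftover integrand $\int R_k$ from \eqref{eq:E_k} remains of differential order at most $k-1$ after being gauged, i.e. lands in $\tilde{\mc V}_{k-1}$ rather than picking up a $\phi^{(k)}$ through the expansion \eqref{eq:gauge_der}. Here I would use that $R_k$, being the difference $h_{2k}$ minus the two explicit integrands up to total derivatives, with each of those integrands having $\degb=0$, may be taken in $\mc V[0]$ by Proposition \ref{20170522:prop1}; consequently in the substitution \eqref{eq:gauge_der} the factors $e^{\mp i\a\mc I[\psi]}$ cancel exactly, while \eqref{eq:gauge_der} never raises the differential order, so a member of $\mc V_{k-1}\cap\mc V[0]$ is sent into $\tilde{\mc V}_{k-1}$. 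The remaining remainders $\frac12\int r_k$ and the $\tilde r_k$-type terms from Lemmas \ref{lem:gauge1} and \ref{lem:gauge2} are already in $\tilde{\mc V}_{k-1}$, so all of them combine into a single $R_k\in\tilde{\mc V}_{k-1}$ (reusing the symbol), which completes the argument.
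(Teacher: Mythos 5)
Your proposal is correct and follows essentially the same route as the paper: the paper's proof of Corollary \ref{DanCor2} consists precisely of combining the definition \eqref{eq:def-energie-gauged}, the normal form \eqref{eq:E_k}, and Lemmas \ref{lem:gauge1} and \ref{lem:gauge2}, with the coefficient bookkeeping $\frac{i(k+1)\alpha}{2}+\frac{i(2k+1)\beta}{4}=\frac{i}{4}\left((2k+2)\alpha+(2k+1)\beta\right)$ exactly as you wrote. Your extra paragraph justifying that the gauged remainder stays in $\tilde{\mc V}_{k-1}$ (via Proposition \ref{20170522:prop1}, the cancellation of the exponential factors on $\mc V[0]$, and the fact that the substitution \eqref{eq:gauge_der} does not raise the differential order) makes explicit what the paper leaves implicit in the machinery of Lemma \ref{lemma:energie-gauge-preciso}, and is a welcome clarification rather than a deviation.
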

\begin{proof}
It follows by the definition of $\mc E_{k}[\phi]$ given in \eqref{eq:def-energie-gauged},
equation \eqref{eq:E_k} and Lemmas \ref{lem:gauge1} and \ref{lem:gauge2}.
\end{proof}
In Section \ref{sect:Wick}, for every $k\geq2$, we will make a choice of the parameter $\alpha$ in order to simplify the expression of the integrals of motion $\mc E_k[\phi]$ given by \eqref{FBAK} as stated in the next result.
\begin{corollary}\label{Nontiene}
Let $k \geq 2$ and let us fix $\alpha = -\frac{2k+1}{2k+2}\beta$.
Then, the integrals of motion of the GDNLS equation have the form ($\ell\in\N_0$):
\begin{align}\label{eq:energie-dopo-gaugeFacili}
\mc E_\ell[\phi] 
= 
\frac12\|\phi\|^2_{\dot{H}^\ell}
& 
+ i \frac{2k+1 }{2k+2} \ell \beta \mu \int\phi^{(\ell)}\bar\phi^{(\ell-1)}
\\ \nonumber
&
+ \frac{i}{4} \frac{\ell - k}{k+1} \beta
\int|\phi|^2\left(\phi^{(\ell)}\bar\phi^{(\ell-1)}-\phi^{(\ell-1)}\bar\phi^{(\ell)}\right)
+\int R_{\ell}
\, ,
\end{align}
where $R_{\ell}\in\tilde{\mc V}_{\ell-1}$. For $\ell \neq k$ any monomial $h$ giving contribution to $R_{\ell}$ satisfies $\dd(h)\leq 2\ell-1$.
For $\ell = k$ the term $R_k$ can be decomposed as $R_{k,W} + R_{k,P}$, where $R_{k,W}$
is a linear combination of monomials of the following form (and of 
their conjugates)
\begin{equation}\label{Zoo}
\begin{array}{lllll}
1) &  \varphi^{(k-1)}\bar\varphi^{(k-1)}\varphi'\bar\varphi \,,&\quad&
2) & \varphi^{(k-1)}\varphi^{(k-1)}\bar\varphi'\bar\varphi  \,,
\\
3) & \phi^{(k-1)}\bar\phi^{(k-1)}\,,&\quad&
4) &  \varphi^{(k-1)}\varphi^{(k-1)}\bar\varphi^2 \, ,
\\
5)&
\varphi^{(k-1)}\varphi^{(k-1)}\varphi\bar\varphi^3 \,,&\quad&
6) &  \varphi^{(k-1)}\bar\varphi^{(k-1)}|\varphi|^{2m}\,,\quad m=1,2\,, 
\end{array}
\end{equation}
while $R_{k,P}$ is a linear combination of monomials of the form ($m\geq2$)
\begin{equation}\label{OTEASYFINALE}
u^{(\alpha_1)} \dots u^{(\alpha_m)}\,,
\qquad 
k-1\geq\alpha_1 \geq \dots \geq \alpha_m\geq0\,,
\quad \alpha_2\leq k-2 \, ,
\end{equation}
where $u$ may be either $\varphi$ or $\bar{\varphi}$. 
\end{corollary}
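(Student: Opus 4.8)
The plan is to separate two claims: the closed form \eqref{eq:energie-dopo-gaugeFacili}, which is bookkeeping on top of Corollary \ref{DanCor2}, and the structural description of the remainder $R_\ell$, which rests on the three gradings $\deg$, $\dd$, $\degb$ together with the degree constraints of Lemma \ref{lemma:energie-gauge-preciso}.

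First I would derive \eqref{eq:energie-dopo-gaugeFacili}. For $\ell\geq2$, Corollary \ref{DanCor2} already expresses $\mc E_\ell[\phi]$ (with an arbitrary gauge parameter) as $\frac12\|\phi\|^2_{\dot H^\ell}$ plus $i\ell\alpha\mu\int\phi^{(\ell-1)}\bar\phi^{(\ell)}$, plus $\frac i4\big((2\ell+2)\alpha+(2\ell+1)\beta\big)\int|\phi|^2\big(\phi^{(\ell)}\bar\phi^{(\ell-1)}-\phi^{(\ell-1)}\bar\phi^{(\ell)}\big)$, plus $\tint R_\ell$ with $R_\ell\in\tilde{\mc V}_{\ell-1}$. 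Substituting the value $\alpha=-\frac{2k+1}{2k+2}\beta$ from \eqref{OCOA}, using the integration by parts $\int\phi^{(\ell-1)}\bar\phi^{(\ell)}=-\int\phi^{(\ell)}\bar\phi^{(\ell-1)}$ in the second term, and simplifying $(2\ell+2)\alpha+(2\ell+1)\beta=\beta\frac{\ell-k}{k+1}$ in the third, reproduces exactly the three displayed coefficients; the cases $\ell=0,1$ are checked directly from \eqref{legge12}. In particular, for $\ell=k$ the third coefficient vanishes, which is the whole point of the choice \eqref{OCOA}.

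Next I would bound the differential degree of $R_\ell$. By Lemma \ref{lemma:energie-gauge-preciso} every monomial of $\mc E_\ell$ has the shape $\beta^m\alpha^p\mu^q\tint\widetilde h_{2\ell,m;p,q}$ with $\dd(\widetilde h_{2\ell,m;p,q})=2\ell-m-p$; hence $\dd\leq2\ell$, with equality only for $m=p=0$, and that sole top monomial is $\frac12\|\phi\|^2_{\dot H^\ell}$, which is not part of $R_\ell$. Therefore every monomial of $R_\ell$ has $\dd\leq2\ell-1$, which is the claim for $\ell\neq k$. Moreover, since the gauge substitution \eqref{eq:gauge_der} is built from $\degb$-preserving operations and $\degb(h_{2\ell})=0$ by Proposition \ref{20170522:prop1}, every monomial of $R_\ell$ has $\degb=0$, i.e. carries equally many $\phi$- and $\bar\phi$-factors (in particular $\deg$ is even).

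Finally, for $\ell=k$ I would classify $R_k$ by the number of top-order factors $\phi^{(k-1)},\bar\phi^{(k-1)}$. Monomials $u^{(\alpha_1)}\cdots u^{(\alpha_m)}$ (ordered $\alpha_1\geq\cdots\geq\alpha_m$) with at most one such factor, i.e. $\alpha_2\leq k-2$, are exactly the family \eqref{OTEASYFINALE} and I collect them into $R_{k,P}$. For the rest, two factors of order $k-1$ already contribute $2(k-1)$ to $\dd$, so by the previous step $\dd\in\{2k-2,2k-1\}$; this forces $m+p\in\{2,1\}$ in the notation of Lemma \ref{lemma:energie-gauge-preciso}, whence $\deg(\widetilde h_{2k,m;p,q})=2(m+p+1-q)\leq6$. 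Enumerating, under $\degb=0$ and even degree $\leq6$, all monomials with exactly two order-$(k-1)$ factors whose residual differential degree is carried either by order-$0$ factors (when $\dd=2k-2$) or by one order-$1$ factor and the rest order-$0$ (when $\dd=2k-1$), one obtains precisely the six families \eqref{Zoo} and their conjugates, which make up $R_{k,W}$; the configuration with three or more top-order factors is impossible, since it would force $\dd\geq3(k-1)>2k-1$ for $k\geq3$, and for $k=2$ it is reabsorbed into \eqref{Zoo} through the coincidence $\phi^{(k-1)}=\phi'$. I expect this last enumeration to be the only delicate point: its finiteness and completeness hinge on using all three gradings simultaneously, and crucially on the bound $\deg\leq6$ extracted from Lemma \ref{lemma:energie-gauge-preciso}, without which the monomials of a fixed differential degree would be unboundedly many. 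Everything else is routine algebra.
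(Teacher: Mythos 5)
Your route coincides with the paper's own proof, which is essentially a one-liner: substitute $\alpha=-\frac{2k+1}{2k+2}\beta$ into \eqref{FBAK} of Corollary \ref{DanCor2} and invoke Lemma \ref{lemma:energie-gauge-preciso}. Your coefficient algebra (the identity $(2\ell+2)\alpha+(2\ell+1)\beta=\beta\,\frac{\ell-k}{k+1}$ and the integration by parts in the $\mu$-term), the derivation of $\dd(h)\leq 2\ell-1$ from $\dd(\widetilde h_{2\ell,m;p,q})=2\ell-m-p$, and the dichotomy defining $R_{k,P}$ versus $R_{k,W}$ by the number of order-$(k-1)$ factors are all correct expansions of what the paper leaves implicit.

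The gap is in the enumeration for $\ell=k$, exactly the step you flag as delicate. You extract from Lemma \ref{lemma:energie-gauge-preciso} only the uncoupled bound $\deg(\widetilde h_{2k,m;p,q})\leq 6$ and then enumerate monomials with two order-$(k-1)$ factors, $\degb=0$, even degree $\leq 6$, and residual factors of order $0$ (when $\dd=2k-2$) or one factor of order $1$ plus order-$0$ factors (when $\dd=2k-1$). Under these constraints the monomial $\varphi^{(k-1)}\bar\varphi^{(k-1)}\varphi'\bar\varphi\,|\varphi|^2$ (degree $6$, $\dd=2k-1$, $\degb=0$) is admissible, and it belongs to none of the families \eqref{Zoo}; the same goes for $\varphi^{(k-1)}\varphi^{(k-1)}\bar\varphi'\bar\varphi\,|\varphi|^2$. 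So the enumeration as stated does not return ``precisely'' the six families, and your argument proves a strictly weaker structure theorem than the one claimed. The repair uses only the lemma you already cite, but with the two gradings coupled: since $\dd=2k-(m+p)$ while the total polynomial degree (counting each factor $\mu$ as degree $2$) is $2(m+p+1)$, a monomial with $\dd=2k-1$ forces $m+p=1$, hence total degree $4$ (degree $2$ being impossible at that differential degree, as two factors of order $\leq k-1$ give $\dd\leq 2k-2$), which yields exactly items 1), 2) and conjugates; and $\dd=2k-2$ forces $m+p=2$, total degree $6$, split as $\mu^2\cdot(\deg 2)$, $\mu\cdot(\deg 4)$, $\deg 6$ according to $q=2,1,0$, which yields items 3)--6) and conjugates. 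A smaller caveat: your claim that $\ell=1$ ``reproduces exactly'' the displayed coefficients is not accurate, since \eqref{legge12} gives $\frac i4(4\alpha+3\beta)\int|\phi|^2\phi'\bar\phi=\frac i8\,\frac{1-k}{k+1}\beta\int|\phi|^2(\phi'\bar\phi-\phi\bar\phi')$, i.e.\ half the third coefficient in \eqref{eq:energie-dopo-gaugeFacili}; this factor-two mismatch for small $\ell$ is inherited from the paper (whose \eqref{eq:E_k} and Corollary \ref{DanCor2} are stated only for indices $\geq 2$), but a careful write-up should flag it rather than assert exact agreement.
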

\begin{proof}
It follows from Corollary \ref{DanCor2}, letting $\alpha = - \frac{2k+1}{2k+2} \beta$ into \eqref{FBAK}, and Lemma \ref{lemma:energie-gauge-preciso}.
\end{proof}


\section{Truncated GDNLS equation}\label{sect:flows}

We recall that given a function $f:\T \to \C$ we denote by $f(n)$ its $n$-th Fourier coefficient, and that the canonical projections (see \eqref{eq:proj})
$P_N$, for $N\in\N_0$, are defined by
$$
P_N f := \sum_{|n| \leq N}  e^{inx} f(n) \, ,\quad P_{>N}:=\mathbb{I}-P_N\,.
$$
For every $N\in\N_0$, we define the \emph{truncated GDNLS equation} as the following equation
\begin{align}\label{20140509:DNLSapprox}
    i \partial_{t} \phi_N  
     + \phi_N''  +  2 i \alpha \mu[\phi_N] \phi_N'
 &  =
 i c_1 P_{N} ( | \phi_N|^{2}  \phi_N' )
   + i c_2 P_{N} ( \phi_N^{2}  \bar \phi_N' )
\\ \nonumber
& +  c_3 P_{N} ( | \phi_N|^{4}  \phi_N )
   + c_4 \mu[\phi_N] P_{N} ( | \phi_N|^{2}  \phi_N ) 
    + \Gamma [\phi_N] \phi_N \, ,
\end{align}
with initial datum
\begin{equation}\label{20140509:DNLSapproxDatum}
 \phi_N (x, 0) := P_{N} \phi(x,0) \, ,
 \end{equation}
where the constants $c_{j}$ and the functional $\Gamma$ have been defined in \eqref{eq:GDNLSconst}
and  \eqref{eq:GDNLSgamma}. 
Again, for brevity, we simply denote the solutions of \eqref{20140509:DNLSapprox} by $\phi_{N}$ even if they also depend on the parameter~$\alpha$.
Note that any solution of the truncated GDNLS equation \eqref{20140509:DNLSapprox}-\eqref{20140509:DNLSapproxDatum} satisfies 
$\phi_{N} = P_{N} \phi_{N}$. We denote with  
$ \Phi^N_{t,\alpha} $ 
the flow associated to equation~\eqref{20140509:DNLSapprox}.
It is immediate to show that the flow map is locally (in time) well defined looking at the Fourier transform of equation \eqref{20140509:DNLSapprox}.
and solving the associated ordinary differential equation. Then, since the local existence time only depends on the $L^{2}$ norm 
of the initial datum which is a conserved quantity, as shown in Proposition \ref{GaugedMassCons}, also the global well-posedness follows.   

When $\alpha=0$ the truncated GDNLS equation reduces to the truncated DNLS equation
\begin{equation}\label{20140509:UngDNLSapprox}
    i \partial_{t} \psi_{N}  
     + \psi_{N}''  
 =
    i \beta  P_{N} ( | \psi_{N}|^{2}  \psi_{N})' \,,
\end{equation}
with initial datum 
\begin{equation}\label{20140509:UngDNLSapproxDatum}
 \psi_N (x, 0) := P_{N} \psi(x,0) \, .
 \end{equation}
Indeed, for $\alpha=0$ we have $c_{1}= 2\beta$, $c_{2} = \beta$, $c_{3} = c_{4} =c_{5} = \Gamma = 0$. 
We shorten $\Phi^N_{t,\alpha=0} =\Phi^N_{t}$ for the flow associated to \eqref{20140509:UngDNLSapprox}.

Since $\psi_{N} = P_{N} \psi_{N}$, passing to the Fourier coefficients, equation 
\eqref{20140509:UngDNLSapprox} rewrites as a system of ordinary differential equations  
\begin{equation}\label{20140509:DNLSfourier}
\frac{d}{dt} \psi_N(n) = -in^2 \psi_N(n)
+i\beta n  \!\!\!\!\!\!\!  \sum_{\substack{|k|,|\ell|,|m|\leq N \\ k+m=\ell+n}} \!\!\!\!\!\!\! \psi_N(k) \bar\psi_N(\ell) \psi_N(m)
\,, \qquad  |n|\leq N\, ,
\end{equation}
which can be written in a Hamiltonian form and preserves the Euclidean norm. This well-known facts are stated
without proof in the next proposition.
\begin{proposition}\label{HamStruct}
Let $\{\cdot,\cdot\}$ be the Poisson bracket defined through
$$
\{\psi_N(n),\psi_N(n)\}=\{\bar\psi_N(n),\bar\psi_N(n)\}=0\,,
\qquad
\{\psi_N(n),\bar\psi_N(m)\}=-2in\delta_{n,m}\,.
$$
and define $h=E_{\frac12}[P_N\psi]$ (see \eqref{integrals_DNLS}), namely
$$
h
=-\frac12\sum_{|m|\leq N}m\psi_N(m)\bar\psi_N(m)
+\frac\beta4\sum_{\substack{|p|,|q|,|k|,|\ell|\leq N\\p+k=q+\ell}}
\psi_N(p)\bar\psi_N(q)\psi_N(k)\bar\psi_N(\ell)\,.
$$
Then, the system (\ref{20140509:DNLSfourier}) can be written as 
$$
\frac{d}{dt} \psi_N(n)
= \{h,\psi_N(n)\}\,,\quad |n|\leq N
\,.
$$
Moreover, we have
\be\label{eq:cons-L^2-flusso-tronc}
\frac{d}{dt} \sum_{|n|\leq N}|\psi_N(n)|^2=0\,.
\ee
\end{proposition}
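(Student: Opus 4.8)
Both assertions are purely computational once the action of the bracket on the generators is understood, so I would treat them as direct verifications. For the Hamiltonian form the plan is first to record the general rule for bracketing a polynomial against a single coordinate: extending $\{\cdot,\cdot\}$ to the whole polynomial algebra in the $\psi_N(m),\bar\psi_N(m)$ by bilinearity and the Leibniz rule, and using the defining relations $\{\psi_N(m),\psi_N(n)\}=0$ together with the antisymmetry consequence $\{\bar\psi_N(m),\psi_N(n)\}=2in\delta_{m,n}$, one gets for any such polynomial $F$ the identity
$$
\{F,\psi_N(n)\}=2in\,\frac{\partial F}{\partial\bar\psi_N(n)}\,.
$$
I would then simply apply this with $F=h$. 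Differentiating the quadratic part of $h$ yields $-\tfrac12 n\,\psi_N(n)$, and differentiating the quartic part — using that the summand is symmetric under exchanging the two $\bar\psi_N$ factors, which produces a factor $2$ — yields $\tfrac{\beta}{2}\sum_{k+m=\ell+n}\psi_N(k)\bar\psi_N(\ell)\psi_N(m)$. Multiplying by $2in$ reproduces exactly the right-hand side of \eqref{20140509:DNLSfourier}. No step here is delicate; the only things to watch are the combinatorial factor from the two conjugate factors and the bookkeeping of the convolution constraint.

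For the conservation of the Euclidean norm I would avoid the Fourier-space cancellation (which does close, but requires an index-relabelling argument to see the relevant sum is real) and instead argue in physical space. Since every solution satisfies $\psi_N=P_N\psi_N$, Parseval identifies $\sum_{|n|\leq N}|\psi_N(n)|^2$ with a constant multiple of $\|\psi_N\|_{L^2}^2$, so it suffices to show $\frac{d}{dt}\|\psi_N\|_{L^2}^2=0$. Writing $\frac{d}{dt}\|\psi_N\|_{L^2}^2=2\Re\int_{\T}\bar\psi_N\,\partial_t\psi_N$ and inserting \eqref{20140509:UngDNLSapprox} in the form $\partial_t\psi_N=i\psi_N''+\beta P_N(|\psi_N|^2\psi_N)'$, the linear term contributes $-i\int_\T|\psi_N'|^2$ after one integration by parts, hence is purely imaginary and drops out under $\Re$. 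For the nonlinear term I would use that $\bar\psi_N=P_N\bar\psi_N$, so that $P_N$ may be discarded against $\bar\psi_N$ by self-adjointness; integrating by parts once and taking twice the real part, the result collapses to $-\tfrac{\beta}{2}\int_\T\big((|\psi_N|^2)^2\big)'$, which vanishes as the integral of a total derivative over $\T$.

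The only genuine obstacle is this nonlinear cancellation in the second part. In Fourier variables it is somewhat obscured and one is tempted to chase signs; the clean route is the physical-space computation above, whose crucial points are (i) that the projector can be removed because $\bar\psi_N$ is already band-limited, and (ii) that after integration by parts the surviving quantity is an exact $x$-derivative. One could alternatively deduce the conservation from the Hamiltonian form by verifying $\{h,E_0\}=0$ with $E_0=\tfrac12\sum_{|n|\leq N}|\psi_N(n)|^2$, but this merely re-encodes the same cancellation and offers no simplification.
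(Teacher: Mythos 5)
Your proposal is correct. Note, however, that the paper offers nothing to compare it against: Proposition \ref{HamStruct} is explicitly introduced as ``well-known facts \dots stated without proof'', so both of your verifications are genuinely supplementary. The Hamiltonian part checks out: with the bracket extended by the Leibniz rule one indeed gets $\{F,\psi_N(n)\}=2in\,\partial F/\partial\bar\psi_N(n)$, the quadratic part of $h$ gives $-in^2\psi_N(n)$, and the symmetry factor $2$ from the two $\bar\psi_N$ slots in the quartic sum produces exactly $i\beta n\sum_{k+m=\ell+n}\psi_N(k)\bar\psi_N(\ell)\psi_N(m)$, matching \eqref{20140509:DNLSfourier}. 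For the conservation of the Euclidean norm, your physical-space argument is essentially the proof the paper does give later for the more general Proposition \ref{GaugedMassCons} (truncated GDNLS, arbitrary $\alpha$), of which \eqref{eq:cons-L^2-flusso-tronc} is the case $\alpha=0$: there too one writes $\frac{d}{dt}\int|\phi_N|^2=2\Re\int\bar\phi_N\,\partial_t\phi_N$, removes the projector by orthogonality against the band-limited factor, and kills the remaining terms by the parity observations that $\int f''\bar f$ is real and $\int |f|^2 f'\bar f$ is purely imaginary --- the same cancellation you express as the vanishing of the total derivative $\int\bigl((|\psi_N|^2)^2\bigr)'$. So your write-up is a sound, self-contained substitute for the omitted proof, and its second half is consistent with the technique the authors actually use in the gauged setting.
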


Equation \eqref{eq:cons-L^2-flusso-tronc} means that the mass $\mu$ is an integral of motion of the truncated GDNLS equation
when $\a =0$. This actually holds for  
any value of $\a \in \R$, as shown in Proposition \ref{GaugedMassCons}.
As we already observed, as consequence we have that
the truncated flow map
$\Phi_{t,\a}^{N}$, $t \in \R$, is globally well defined for initial data in~$L^2$.

\begin{proposition}\label{GaugedMassCons}
Let $\phi_N$ be a solution of the truncated GDNLS equation \eqref{20140509:DNLSapprox}. Then, we have
$$
\frac{d}{dt} \sum_{|n|\leq N}|\phi_N(n)|^2=0
\,. 
$$
\end{proposition}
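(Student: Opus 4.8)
The plan is to show that the quantity $\sum_{|n|\le N}|\phi_N(n)|^2 = \|\phi_N\|_{L^2}^2 = 2\pi\,\mu[\phi_N]$ is conserved by differentiating it in time and using the equation \eqref{20140509:DNLSapprox} to show the derivative vanishes. Since $\phi_N = P_N\phi_N$, I would work directly with the $L^2$ inner product rather than component-wise: writing $\frac{d}{dt}\|\phi_N\|_{L^2}^2 = 2\,\Re\int_{\T} \bar\phi_N\,\partial_t\phi_N$ and substituting $\partial_t\phi_N$ from \eqref{20140509:DNLSapprox}. The key structural point is that the right-hand side of the truncated equation consists of a linear part ($\phi_N''$ and $2i\alpha\mu\phi_N'$), projected nonlinear terms carrying the coefficients $c_1,\dots,c_4$, and the term $\Gamma[\phi_N]\phi_N$.

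The first step is to handle the $\Gamma$ term: since $\Gamma[\phi_N]$ is a real scalar (by Remark \ref{RealCoeff}), its contribution to $\Re\int \bar\phi_N\cdot i^{-1}\Gamma[\phi_N]\phi_N$ is of the form $\Gamma[\phi_N]\,\Re\big(\tfrac{1}{i}\|\phi_N\|_{L^2}^2\big)$, which vanishes because $\|\phi_N\|_{L^2}^2$ is real and $\Re(\tfrac{1}{i}x)=0$ for real $x$. The same mechanism kills the linear term $2i\alpha\mu\phi_N'$ and the dispersive term $\phi_N''$: integrating by parts, $\int \bar\phi_N\phi_N''=-\int|\phi_N'|^2$ is real, so $\Re(\tfrac{1}{i}\cdot\text{real})=0$, and similarly $\int\bar\phi_N\phi_N' = -\int\phi_N\bar\phi_N'$ is purely imaginary, so again the real part of $\tfrac{1}{i}$ times it vanishes. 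The second step is the crucial one: the nonlinear terms. Here I would exploit that $\phi_N=P_N\phi_N$ lets me move the projection $P_N$ off the nonlinearity and onto $\bar\phi_N$ harmlessly, since $\langle\bar\phi_N, P_N F\rangle = \langle P_N\bar\phi_N, F\rangle = \langle\bar\phi_N, F\rangle$. After removing the projections, the question reduces to the conservation law for the \emph{untruncated} GDNLS nonlinearity, which follows by integration by parts from the algebraic structure of the terms $c_1|\phi|^2\phi'+c_2\phi^2\bar\phi'+c_3|\phi|^4\phi+c_4\mu|\phi|^2\phi$ — each contributes a purely imaginary integral (up to the real prefactor $i$), so its real part cancels.

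The main obstacle I anticipate is the interplay between the projection $P_N$ and the requirement that each nonlinear contribution be real after pairing with $\bar\phi_N$. The trick $\langle\bar\phi_N,P_N F\rangle=\langle\bar\phi_N,F\rangle$ is what makes the truncated conservation law inherit the structure of the continuous one; without it, the projection would spoil the integration-by-parts cancellations (exactly the phenomenon noted in the introduction, that truncated systems need not preserve all integrals of motion). Once the projections are removed, I would verify that $\Re\!\int\bar\phi_N(c_1|\phi_N|^2\phi_N'+c_2\phi_N^2\bar\phi_N')=0$ by integration by parts: the $c_1$ term pairs with $|\phi_N|^2\bar\phi_N\phi_N'$, whose real part is a total derivative of $|\phi_N|^4/4$ hence integrates to zero, and the $c_2$ term involves $\phi_N^2|\bar\phi_N'|$-type expressions that combine with its conjugate to vanish; the $c_3,c_4$ terms give $\Re(\tfrac1i\cdot\text{real})=0$ directly since $|\phi_N|^4|\phi_N|^2$ and $\mu|\phi_N|^4$ are real densities. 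Collecting all pieces shows the time derivative is zero, establishing that $\mu[\phi_N]$ — equivalently the Euclidean norm of the Fourier vector — is conserved for every $\alpha\in\R$, which is precisely the statement.
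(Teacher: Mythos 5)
Your proposal is correct and follows essentially the same route as the paper's proof: pairing the equation with $\bar\phi_N$, discarding the projections via orthogonality (since $\phi_N=P_N\phi_N$), and then using the reality of the coefficients $c_1,\dots,c_4$ and of $\Gamma[\phi_N]$ together with the integration-by-parts facts that $\int \bar f f''$ is real and $\int |f|^2 \bar f f'$ is purely imaginary (equivalently, a total derivative of $|f|^4/4$). The only difference is cosmetic bookkeeping in how the remaining terms are grouped, not a difference in substance.
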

\begin{proof}
We want to show that $\frac{d}{dt} \int |\phi_N|^2 = 2\Re \int \bar\phi_N \partial_t\phi_N=0$. 
By the truncated GDNLS equation \eqref{20140509:DNLSapprox} we get
\begin{align}\nonumber
     \partial_{t} \phi_N  
 &    = i \phi_N''    \!-\! 2  \alpha \mu[\phi_N] \phi_N'
 +
  c_1 | \phi_N|^{2}  \phi_N' 
   +  c_2  \phi_N^{2}  \bar \phi_N' 
    \!-\! i  c_3  | \phi_N|^{4}  \phi_N
    \!-\! i c_4 \mu[\phi_N] | \phi_N|^{2}  \phi_N  
    \!-\! i \Gamma [\phi_N] \phi_N
\\\nonumber
& - c_1 P_{>N} ( | \phi_N|^{2}  \phi_N' )
   -  c_2 P_{>N} ( \phi_N^{2}  \bar \phi_N' )
    +i  c_3 P_{>N} ( | \phi_N|^{4}  \phi_N )
    +i c_4 \mu[\phi_N] P_{>N} ( | \phi_N|^{2}  \phi_N ) 
\,.
\end{align}
Using the above formula, Remark \ref{RealCoeff}, and the 
orthogonality relation $\int P_{>N}(f) P_N(g)= 0$, which holds for any $f,g\in L^2(\mb T)$, we are left to show that
\begin{equation}\label{20171023:eq1}
\Re\int \bar\phi_N\partial\phi_N
=\Re\int
i\bar \phi_N\phi_N''
+ c_1 | \phi_N|^{2}  \bar\phi_N \phi_N' 
+  c_2  |\phi_N|^2 \phi_N  \bar \phi_N' 
=0
\,.
\end{equation}
Using integration by parts, it is straightforward to check that $\int f''\bar f$ is real, while $\int |f|^2f'\bar f$ is purely imaginary, for every
$f\in L^2(\mb T)$. Hence, equation \eqref{20171023:eq1} follows by the above considerations and the fact that $c_1$ and $c_2$ are real numbers, see Remark \ref{RealCoeff}.
\end{proof}

Another consequence of Proposition \ref{HamStruct} is that, when $\alpha =0$, the truncated flow relative to \eqref{20140509:DNLSapprox} preserves the 
Lebesgue measure on $M_N(\mathbb{C}^{2N+1})$ (see \eqref{NewLebesgue}). 
Again, this is indeed the case for any $\alpha \in \R$ (see \cite{NOR-BS12} for $\alpha=-1$).
Recall that 
$M_N(A)$ is the cylindrical set \eqref{CylSets} with base $A \in \mathscr{B}(\C^{2N+1})$ and $\mathcal T_{N}$ is the $\sigma$-algebra generated
by these sets.

\begin{proposition}\label{LebMeasPres}
The flow $\Phi^N_{t,\alpha}$ preserves the Lebesgue measure \eqref{NewLebesgue} on~$(M_{N}(\mathbb{C}^{2N+1}), \mathcal T_{N})$.  
\end{proposition}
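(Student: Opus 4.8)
The plan is to reduce the statement to Liouville's theorem for the finite system of ODEs governing the Fourier coefficients. Since any solution satisfies $\phi_N=P_N\phi_N$, equation \eqref{20140509:DNLSapprox} is equivalent to a system $\frac{d}{dt}\phi_N(n)=V_n$, $|n|\leq N$, where $V_n$ depends on $\{\phi_N(m),\bar\phi_N(m)\}_{|m|\leq N}$, and the Lebesgue measure \eqref{NewLebesgue} is $c_N\prod_{|n|\leq N}d\phi_N(n)\,d\bar\phi_N(n)$. Treating $\phi_N(n)$ and $\bar\phi_N(n)=\overline{\phi_N(n)}$ as independent variables in the Wirtinger calculus, the flow $\Phi^N_{t,\alpha}$ preserves this measure if and only if the vector field is divergence free, i.e.
$$
\sum_{|n|\leq N}\left(\frac{\partial V_n}{\partial\phi_N(n)}+\frac{\partial\overline{V_n}}{\partial\bar\phi_N(n)}\right)=2\,\Re\sum_{|n|\leq N}\frac{\partial V_n}{\partial\phi_N(n)}=0\,,
$$
where I used $\frac{\partial\overline{V_n}}{\partial\bar\phi_N(n)}=\overline{\frac{\partial V_n}{\partial\phi_N(n)}}$. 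Thus it suffices to prove that $\sum_{|n|\leq N}\frac{\partial V_n}{\partial\phi_N(n)}$ is purely imaginary.

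To identify $V_n$ I would start from the expression for $\partial_t\phi_N$ already solved in the proof of Proposition \ref{GaugedMassCons}. For $|n|\leq N$ the projection $P_N$ acts as the identity on the $n$-th Fourier coefficient, so $V_n$ is simply the $n$-th Fourier coefficient of the (unprojected) field
$$
i\phi_N''-2\alpha\mu\phi_N'+c_1|\phi_N|^2\phi_N'+c_2\phi_N^2\bar\phi_N'-ic_3|\phi_N|^4\phi_N-ic_4\mu|\phi_N|^2\phi_N-i\Gamma[\phi_N]\phi_N\,.
$$
Passing to Fourier modes, each monomial becomes a momentum-conserving convolution sum over $|m|\leq N$, and I would then differentiate term by term with respect to $\phi_N(n)$ and sum over $n$. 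The diagonal derivative $\frac{\partial V_n}{\partial\phi_N(n)}$ selects, for every holomorphic factor $\phi_N$ in a monomial, the contribution in which that factor carries the index $n$; momentum conservation then forces the surviving indices to pair up.

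The mechanism behind the cancellation is that each of these summed diagonal derivatives turns out to be purely imaginary. Indeed, the coefficients $c_1,c_2,c_3,c_4$ and the scalar $\Gamma[\phi_N]$ are real (Remark \ref{RealCoeff}); the dispersive term contributes $-i\sum_{|n|\leq N}n^2$, which is imaginary; any momentum factor that survives the pairing appears either as $\sum_{|n|\leq N}n$, which vanishes by the symmetry of the truncation, or as a real quantity such as $\sum_{|m|\leq N}m|\phi_N(m)|^2=\frac{1}{2\pi i}\int\phi_N'\bar\phi_N$, which is real because $\int\phi_N'\bar\phi_N$ is purely imaginary (Remark \ref{RealCoeff}). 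Finally, the purely potential terms produce fully contracted momentum-conserving sums of the form $\sum_{k+m=\ell+j}\phi_N(k)\phi_N(m)\bar\phi_N(\ell)\bar\phi_N(j)$, which are the Fourier representation of $\|\phi_N\|_{L^4}^4$ and hence real. Consequently every term contributes a purely imaginary amount and $\Re\sum_{|n|\leq N}\frac{\partial V_n}{\partial\phi_N(n)}=0$.

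The step I expect to require the most care is the $\Gamma[\phi_N]\phi_N$ term, together with the $\mu$-dependent terms, because here the diagonal derivative also strikes the nonlocal scalar factors $\mu=\frac{1}{2\pi}\|\phi_N\|_{L^2}^2$ and $\Gamma[\phi_N]$, producing extra contributions. For these I would use that $\sum_{|n|\leq N}\phi_N(n)\frac{\partial}{\partial\phi_N(n)}$ is the Euler operator counting the holomorphic degree, so it acts diagonally on each homogeneous piece of $\Gamma$; combined with the reality of $\|\phi_N\|_{L^4}^4$ and of $\mu$, and with the fact that $\int\phi_N'\bar\phi_N$ is purely imaginary (so that the term $\frac{i\alpha}{\pi}\int\phi_N'\bar\phi_N$ in $\Gamma$ is real), one checks that these extra contributions remain purely imaginary as well. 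Assembling all the terms then yields the divergence-free condition and hence the claim; as a consistency check, for $\alpha=0$ the statement reduces to the Hamiltonian structure of Proposition \ref{HamStruct}.
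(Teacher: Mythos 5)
Your proposal is correct and follows essentially the same route as the paper: pass to Fourier coordinates, invoke Liouville's theorem, and verify term by term that the vector field is divergence free, using the reality of $c_1,\dots,c_4$ and $\Gamma$, the symmetry $\sum_{|n|\leq N} n=0$, and the reality of the fully contracted sums. The only cosmetic difference is that you exploit $\frac{\partial \overline{V_n}}{\partial\bar\phi_N(n)}=\overline{\frac{\partial V_n}{\partial\phi_N(n)}}$ to reduce everything to showing $\sum_n \partial V_n/\partial\phi_N(n)$ is purely imaginary, whereas the paper writes out both the holomorphic and antiholomorphic diagonal derivatives and observes their cancellation.
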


\begin{proof}
Passing to the Fourier coefficients, the truncated GDNLS equation \eqref{20140509:DNLSapprox}
rewrites as
\begin{equation}
     \frac{d}{dt} \phi_N (n) 
     = F_{n},
     \qquad |n|\leq N\,,
 \end{equation}
where
\begin{equation}\label{decF}
F_{n}:= F^{0}_{n} +  F^{1}_{n} + F^{2}_{n} + F^{3}_{n} + F^{4}_{n}\, ,
\end{equation}
with
\begin{align}
\begin{split}\label{20170611:eq1}
F^{0}_{n} :&= 
i \phi_N''(n)  -  2  \alpha \mu[\phi_N] \phi_N' (n)
\, ,
\\
F^{1}_{n} :&= 
 c_1 ( P_{N} ( | \phi_N|^{2}  \phi_N' ))(n)
+
c_2 ( P_{N} (  \phi_N^{2} \bar \phi_N' ))(n)
\\
&
= \sum_{\substack{|k|, |\ell|, |m| \leq N\\k +\ell= m+n }}
i(c_{1}  \ell-c_2 m)\phi_N (k) \phi_N (\ell)\bar \phi_N (m) \, ,
\\
F^{2}_{n} :&= - i  c_3 (P_{N} ( | \phi_N|^{4}  \phi_N ))(n)
\\
&
=-i c_3
\sum_{\substack{|k|, |\ell|, |m|,|p|,|q| \leq N\\k +\ell+p= m+n+q }}
\phi_N (k) \phi_N (\ell)\bar \phi_N (m)\phi_N (p)\bar \phi_N (q) \, ,
\\
F^{3}_{n} :&= - i c_4 \mu[\phi_N] (P_{N} ( | \phi_N|^{2}  \phi_N ))(n)
=-i c_4\mu[\phi_N]
\sum_{\substack{|k|, |\ell|, |m|\leq N\\k +\ell= m+n}}
\phi_N (k) \phi_N (\ell)\bar \phi_N (m)\, ,
\\
F^{4}_{n} :&=     - i ( \Gamma [\phi_N] \phi_N)(n)
\\
&=-i(\frac{3\alpha\beta}{4\pi}+\frac{\alpha^2}{\pi})\mu[\phi_N^2]\phi_N(n)+i\alpha^2\mu[\phi_N]^2\phi_N(n)
+\frac{i\alpha}{\pi}\phi_N(n)\sum_{|m|\leq N}m|\phi_N(m)|^2
\, .
\end{split}
\end{align}
We will show that 
\begin{equation}\label{20170611:toprove}
\dive F = 0 \, ,
\end{equation}
where the divergence operator is defined as
\begin{equation}\label{Def:dive}
\dive F = \sum_{|n| \leq N} \left( 
\frac{ \partial F_{n} }{ \partial \phi_{N} (n) } + \frac{ \partial \bar{F}_{n} }{ \partial \bar\phi_{N} (n) }  \right) \, .
\end{equation}
analyzing separately all these contributions. Recalling that the Lebesgue measure on 
$M_{N}(\C^{2N+1})$ has density proportional to $\prod_{|n| \leq N} d\phi_{N}(n)d \bar \phi_{N} (n)$, this proves the statement.  

Note that
$$
\mu[\phi_N]=\sum_{|m|\leq N}|\phi_N(m)|^2=\overline{\mu[\phi_N]}
\,,
\qquad
\frac{\partial \mu[\phi_N]}{\partial\phi_N(n)}=\bar \phi_N(n)
\,,
\qquad
\frac{\partial\mu[\phi_N]}{\partial\bar\phi_N(n)}=\phi_N(n)
\,.
$$
Then, from equations \eqref{20170611:eq1} it is straightforward to get
\begin{align}
\begin{split}\label{20170611:eq2}
\frac{\partial F^{0}_{n}}{\partial \phi_N (n)} &= - i n^{2} - 2 i \alpha n \mu[\phi_N] - 2i \alpha n |\phi_N(n)|^{2}
=-\frac{\partial \bar F^{0}_{n}}{\partial \bar \phi_N (n)}\,,
\\
\frac{\partial F^{1}_{n}}{\partial \phi_N (n)}
&=  \sum_{|m| \leq N} i\left((m+n)c_1 - 2m c_2 \right) |\phi_N(m)|^{2}
=-\frac{\partial \bar F^{1}_{n}}{\partial \bar \phi_N (n)} 
\,,
\\
\frac{\partial F^{2}_{n}}{\partial \phi_N (n)}& =-ic_3 \mu[\phi_N^2]
=-\frac{\partial \bar F^{2}_{n}}{\partial \bar \phi_N (n)}
\,,
\\
\frac{\partial F^{3}_{n}}{\partial \phi_N (n)}
&=-i c_4\bar\phi_N(n)
\sum_{\substack{|k|, |\ell|, |m|\leq N\\k +\ell= m+n}}
\phi_N (k) \phi_N (\ell)\bar \phi_N (m)
-2ic_4\mu[\phi_N]^2
\,,
\\
\frac{\partial \bar F^{3}_{n}}{\partial \bar \phi_N (n)}
&=i c_4\phi_N(n)
\sum_{\substack{|k|, |\ell|, |m|\leq N\\k +\ell= m+n}}
\bar\phi_N (k) \bar\phi_N (\ell) \phi_N (m)
+2ic_4\mu[\phi_N]^2
\,,
\\
\frac{\partial F^{4}_{n}}{\partial \phi_N (n)}
&=-i(\frac{3\alpha\beta}{2\pi}+\frac{2\alpha^2}{\pi})\phi_N(n)
\sum_{\substack{|k|, |\ell|, |m|\leq N\\k +\ell= m+n}}
\bar\phi_N (k) \bar\phi_N (\ell)\phi_N (m)
+\frac{i\alpha}{\pi}\sum_{|m|\leq N}m|\phi_N(m)|^2\\
&-i(\frac{3\alpha\beta}{4\pi}+\frac{\alpha^2}{\pi})\mu[\phi_N^2]
+i\alpha(\frac{m}{\pi}+2\alpha\mu[\phi_N])|\phi_N(n)|^2+i\alpha^2\mu[\phi_N]^2
\\
\frac{\partial \bar F^{4}_{n}}{\partial \bar\phi_N (n)}
&=i(\frac{3\alpha\beta}{2\pi}+\frac{2\alpha^2}{\pi})\bar\phi_N(n)
\sum_{\substack{|k|, |\ell|, |m|\leq N\\k +\ell= m+n}}
\phi_N (k) \phi_N (\ell)\bar\phi_N (m)
-\frac{i\alpha}{\pi}\sum_{|m|\leq N}m|\phi_N(m)|^2\\
&+i(\frac{3\alpha\beta}{4\pi}+\frac{\alpha^2}{\pi})\mu[\phi_N^2]
-i\alpha(\frac{m}{\pi}+2\alpha\mu[\phi_N])|\phi_N(n)|^2-i\alpha^2\mu[\phi_N]^2
\,.
\end{split}
\end{align}
Equation \eqref{20170611:toprove} follows immediately from the decomposition of $F$ given in \eqref{decF} and taking the sum
over all $|n|\leq N$ of the terms in \eqref{20170611:eq2}.
\end{proof}

Now we establish a nearness property of the gauged flow to the truncated one which will be used in the sequel. Let us recall that $B^{r}(R)$
is the ball in $H^{r}(\T)$ of radius $R$, centered at zero. We write $\Phi_{t,\alpha} ( A ) := \{ \Phi_{t,\alpha} ( f ) : f \in A  \}$ and 
$\Phi^N_{t,\alpha} ( A ) := \{ \Phi^N_{t,\alpha} ( f ) : f \in A  \}$.
The following proposition is the main achievement of this section.  

\begin{proposition}\label{Prop:nearness}
Let $0 \leq s < r$ with $r > 5/4$ and $R >1$. 
For every~$\varepsilon > 0$, there exists $N^{*}=N^{*}(\e)\in \N$, depending also on $s,r, |\alpha|, R, |\beta|$, such that
\begin{equation}\label{FlowControl}
\Phi^N_{t,\alpha} ( A ) \subseteq  \Phi_{t,\alpha} ( A ) + B^s( \varepsilon )\,, 
\quad  |t| \leq t_{R} \,,
\quad  N > N^{*}\, ,\quad A \subset B^{r}(R) \, ,
\end{equation}
where $0 < t_{R} < 1$ is a sufficiently small threshold which depends on $R,r,|\alpha|, |\beta|$. 
\end{proposition}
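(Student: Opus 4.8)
\emph{Plan and reduction.} Recall that by \eqref{20140509:DNLSapproxDatum} the truncated evolution starts from the projected datum, so for $f\in A\subset B^{r}(R)$ I write $\phi(t):=\Phi_{t,\alpha}(f)$ and $\phi_N(t):=\Phi^N_{t,\alpha}(f)$, with $\phi(0)=f$ and $\phi_N(0)=P_Nf$. The inclusion \eqref{FlowControl} is equivalent to the uniform estimate
\begin{equation}\nonumber
\lim_{N\to\infty}\ \sup_{f\in B^{r}(R)}\ \sup_{|t|\leq t_R}\ \big\|\Phi^N_{t,\alpha}(f)-\Phi_{t,\alpha}(f)\big\|_{H^{s}}=0\,,
\end{equation}
since every $g=\Phi^N_{t,\alpha}(f)\in\Phi^N_{t,\alpha}(A)$ then lies within $H^{s}$-distance $\varepsilon$ of $\Phi_{t,\alpha}(f)\in\Phi_{t,\alpha}(A)$ as soon as $N>N^{*}(\e)$. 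I will prove this by controlling the difference $w:=\phi-\phi_N$, whose datum $w(0)=P_{>N}f$ obeys $\|w(0)\|_{H^{\sigma}}\lesssim N^{-(r-\sigma)}R$ for every $\sigma<r$; this is what makes the gap $s<r$ enter, and what forces the argument below rather than a naive $H^r$ comparison.

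\emph{Step 1: uniform local theory in $H^{r}$.} First I would show that there are $t_R\in(0,1)$ and a radius $\lesssim R$, depending only on $R,r,|\alpha|,|\beta|$, so that for every $f\in B^{r}(R)$ and every $N$ (including $N=\infty$) one has $\phi,\phi_N\in B^{r}(2R)$ on $[-t_R,t_R]$. This comes from an $H^{r}$ energy estimate for \eqref{eq:GDNLS} and \eqref{20140509:DNLSapprox}. The only terms threatening a derivative loss are the derivative nonlinearities $c_1|\phi|^{2}\phi'$ and $c_2\phi^{2}\bar\phi'$; applying $(1-\partial_x^2)^{r/2}$, pairing with $(1-\partial_x^2)^{r/2}\phi$ and taking the real part, the top-order contribution of $c_1|\phi|^{2}\phi'$ equals $-c_1\int\partial_x(|\phi|^{2})\,\big|(1-\partial_x^2)^{r/2}\phi\big|^{2}$ (and analogously for $c_2$), the real part surviving \emph{precisely} because $c_1,c_2$ are real (Remark \ref{RealCoeff}). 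Hence no net derivative is lost, and the remaining multilinear terms are handled by Sobolev and H\"older inequalities, which close for $r>5/4$. Since $P_N$ commutes with $\partial_x$ and is a contraction on every $H^{\sigma}$, the estimate for \eqref{20140509:DNLSapprox} mirrors that for \eqref{eq:GDNLS} and is uniform in $N$, the conserved mass (Propositions \ref{HamStruct} and \ref{GaugedMassCons}) fixing the lower-order coefficients. A continuation argument then yields the claimed $t_R$ and $H^{r}$ bound.

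\emph{Step 2: difference estimate and truncation error.} Subtracting \eqref{20140509:DNLSapprox} from \eqref{eq:GDNLS} gives, schematically,
\begin{equation}\nonumber
i\partial_t w+w''+2i\alpha\mu\,w'=\mathcal L[\phi,\phi_N]\,w+\big(\mu[\phi]-\mu[\phi_N]\big)\,\mathcal M[\phi_N]+P_{>N}\,\mathcal N[\phi]\,,
\end{equation}
where $\mathcal L[\phi,\phi_N]\,w$ collects the differences of the nonlinear terms (each multilinear, linear in $w$, with coefficients bounded in $H^{r}$ by Step 1), $\mathcal M[\phi_N]$ is a remainder bounded in $H^{s_0}$, and $P_{>N}\mathcal N[\phi]$ is the truncation error obtained by replacing $P_N$ by the identity on the full solution. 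I would run the energy estimate at a level $H^{s_0}$ with $0\leq s_0<r-1$: the would-be derivative loss in $\mathcal L[\phi,\phi_N]\,w$ is removed by the same real-coefficient integration by parts as in Step 1, so that $\frac{d}{dt}\|w\|_{H^{s_0}}^{2}\lesssim C(R)\|w\|_{H^{s_0}}^{2}+\|w\|_{H^{s_0}}\,\|P_{>N}\mathcal N[\phi]\|_{H^{s_0}}$, while $\mu[\phi]-\mu[\phi_N]=\frac{1}{2\pi}\|P_{>N}f\|_{L^{2}}^{2}$ is time-independent and $\mathcal O(N^{-2r}R^{2})$. As $\mathcal N[\phi]\in H^{r-1}$ with norm $\lesssim C(R)$, one gets $\|P_{>N}\mathcal N[\phi]\|_{H^{s_0}}\lesssim N^{-(r-1-s_0)}C(R)\to0$, and $\|w(0)\|_{H^{s_0}}\lesssim N^{-(r-s_0)}R\to0$. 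Gronwall's inequality then yields $\sup_{|t|\leq t_R}\|w\|_{H^{s_0}}\to0$ uniformly on $B^{r}(R)$.

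\emph{Conclusion and main obstacle.} If $s<r-1$ I take $s_0=s$ and conclude directly. If $r-1\leq s<r$, I interpolate the smallness of $\|w\|_{H^{s_0}}$ against the uniform bound $\|w\|_{H^{r}}\leq\|\phi\|_{H^{r}}+\|\phi_N\|_{H^{r}}\lesssim R$ from Step 1, via $\|w\|_{H^{s}}\lesssim\|w\|_{H^{s_0}}^{\theta}\|w\|_{H^{r}}^{1-\theta}$ with $s=\theta s_0+(1-\theta)r$ and $\theta\in(0,1)$, obtaining $H^{s}$-smallness for every $s<r$. This establishes the uniform estimate above, and hence \eqref{FlowControl}, upon choosing $N^{*}$ so that the right-hand side falls below $\varepsilon$. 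I expect the crux of the argument — and the source of both the threshold $r>5/4$ and the gap $s<r$ — to be the derivative nonlinearity: the energy method loses a derivative unless the top-order term is reorganised by an integration by parts that is legitimate only because $c_1,c_2$ are real, and the projected datum $P_Nf$ converges to $f$ uniformly on $B^{r}(R)$ only in the weaker topology $H^{s}$, $s<r$.
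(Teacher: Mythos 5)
Your overall architecture matches the paper's: an $N$-uniform a priori $H^r$ bound on $[-t_R,t_R]$, a difference (energy) estimate at low regularity where the derivative nonlinearity is defused by the real-coefficient integration by parts, a bound $\mathcal O(N^{-(r-1-s_0)})$ on the truncation error $P_{>N}\mathcal N[\phi]$, and finally interpolation against the uniform $H^r$ bound to reach every $s<r$ (the paper interpolates between $L^2$ and $H^r$; your $H^{s_0}$--$H^r$ variant is equivalent and, with $s_0=0$ allowed, contains it). Your observation that $\mu[\phi]-\mu[\phi_N]=\frac{1}{2\pi}\|P_{>N}f\|_{L^2}^2$ is constant in time is correct and even slightly cleaner than the paper's treatment of that term.

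However, there is a genuine gap, and it sits exactly where you locate the ``crux'': the claim in Step 1 that the $H^r$ energy estimate, after the integration by parts, ``closes for $r>5/4$'' by Sobolev and H\"older alone. It does not. After reorganising the top-order terms, the energy inequality (both for the $H^r$ bound and for your $H^{s_0}$ difference estimate) has the form $\partial_t\|\phi_N\|_{H^r}^2\lesssim (1+\|\phi_N'\|_{L^\infty})(1+\|\phi_N\|_{H^r}^5)$, and the same quantity $\|\phi'\|_{L^\infty}+\|\phi_N'\|_{L^\infty}$ hides inside your Gronwall constant $C(R)$ in Step 2. Sobolev embedding controls $\|\phi_N'\|_{L^\infty}$ by $\|\phi_N\|_{H^r}$ only when $r>3/2$; for $5/4<r\leq 3/2$ a pure energy method cannot close, so your argument as written proves the proposition only with the threshold $r>3/2$. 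The paper's additional ingredient, which you never invoke, is a dispersive input: the Burq--G\'erard--Tzvetkov Strichartz estimate $\|e^{it\partial_x^2}\Delta_j f\|_{L^4_tL^\infty_x}\lesssim\|\Delta_jf\|_{L^2}$, used (Lemma \ref{TVLemma}) to prove the $N$-uniform time-averaged bound
\begin{equation}\nonumber
\int_0^T\|\phi_N'(\cdot,s)\|_{L^\infty}\,ds\lesssim T^{3/4}\Bigl(\sup_{s\leq T}\|\phi_N\|_{H^r}+|\beta|T\bigl(1+\sup_{s\leq T}\|\phi_N\|_{H^r}^5\bigr)\Bigr),\qquad r>5/4\,.
\end{equation}
It is this $L^1_tL^\infty_x$ (indeed $L^4_tL^\infty_x$) control of $\phi_N'$, fed into the integrated form of the energy inequality via a bootstrap, that produces the threshold $5/4$; the real-coefficient integration by parts by itself only buys you $3/2$. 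To repair your proof you would need to add this Strichartz lemma, replace the pointwise-in-time bound on $\|\phi'\|_{L^\infty}$ by its time integral in both the continuation argument of Step 1 and the Gronwall exponent of Step 2, and check that the resulting estimates are uniform in $N$ (which requires $[P_N,\partial_x]=0$ and the frequency-localised form of the Strichartz bound, as in the paper).
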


We need two accessory lemmas. The first one is a (local in time) integral estimate for solutions of the truncated GDNLS. Similar estimates have been proved for the Benjamin-Ono equation in~\cite{TV14} and the argument easily adapts to GDNLS.

\begin{lemma}\label{TVLemma}
Let $N \in \N \cup \{ \infty \}$.
For all $r > 5/4$ and for any $T \in [0,1]$ we have
\begin{equation}\label{MainEst2Bis}
\int_{0}^{T} \| \phi_{N}' (x, s)\|_{L^{\infty}(\T)} ds 
\leq 
C T^{3/4} 
\left( \sup_{s\in [0,T]} \| \phi_{N}(x, s)\|_{H^{r}(\T)}  + |\beta| T \big( 1+ \sup_{s\in [0,T]} \| \phi_{N}(x, s)\|_{H^{r}(\T)}^{5} \big) \right) \, ,
\end{equation}
where $\phi_{\infty} := \phi$ is a solution to the GDNLS equation \eqref{eq:GDNLS}, $\phi_{N}$ is a solution to the truncated GDNLS equation
\eqref{20140509:DNLSapprox},
and $C$ is a constant depending on $r, |\alpha|, |\beta|$ but uniform over $N$.
\end{lemma}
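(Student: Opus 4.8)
The plan is to reduce everything to a single space--time Strichartz estimate for the linear Schr\"odinger group and to treat the nonlinearity perturbatively through Duhamel's formula. First I would apply H\"older's inequality in time to pass from $L^1_t L^\infty_x$ to the Strichartz norm, namely $\int_0^T\|\phi_N'(x,s)\|_{L^\infty(\T)}\,ds\leq T^{3/4}\|\phi_N'\|_{L^4_t([0,T])L^\infty_x(\T)}$; this already produces the factor $T^{3/4}$ appearing in \eqref{MainEst2Bis} and reduces the lemma to a bound for $\|\phi_N'\|_{L^4_t L^\infty_x}$. The linear part of the (truncated) GDNLS equation is governed by the constant-coefficient operator $i\partial_t+\partial_x^2+2i\alpha\mu\partial_x$: since the mass $\mu=\mu[\phi_N]$ is conserved by Proposition \ref{GaugedMassCons}, the associated propagator $e^{tL}$ is the free Schr\"odinger propagator $e^{it\partial_x^2}$ composed with spatial translation by $2\alpha\mu t$ (a Galilean boost). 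As translations act isometrically on every $L^p_x(\T)$, $e^{tL}$ obeys exactly the same space--time bounds as $e^{it\partial_x^2}$, with constants uniform in $N$ because $P_N$ is bounded on every $H^s$ and $L^p$ and commutes with $e^{tL}$.

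Second, I would insert Duhamel's formula $\phi_N(t)=e^{tL}\phi_N(0)-i\int_0^t e^{(t-s)L}\,\mathcal N(\phi_N(s))\,ds$, where $\mathcal N$ collects the nonlinear terms on the right-hand side of \eqref{20140509:DNLSapprox}. For the \emph{linear} contribution I would use the periodic $L^4$ Strichartz estimate of Zygmund--Bourgain, $\|e^{it\partial_x^2}f\|_{L^4_{t,x}([0,1]\times\T)}\lesssim\|f\|_{L^2}$, together with the one-dimensional Sobolev embedding $W^{s,4}(\T)\hookrightarrow L^\infty(\T)$, valid for $s>1/4$. Combining them (Sobolev in $x$, then the $L^4$ bound in space--time) gives $\|e^{tL}f\|_{L^4_t L^\infty_x}\lesssim\|f\|_{H^{1/4+\e}}$, hence $\|\partial_x e^{tL}\phi_N(0)\|_{L^4_t L^\infty_x}\lesssim\|\phi_N(0)\|_{H^{5/4+\e}}\leq\sup_{s}\|\phi_N(s)\|_{H^r}$ for any $r>5/4$. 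This is precisely where the threshold $r>5/4$ enters: the genuine $L^4_t L^\infty_x$ endpoint (the one-dimensional Schr\"odinger endpoint) fails, so one must spend the extra $\e>0$ derivatives, which explains the strict inequality.

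Third, for the \emph{Duhamel} contribution I would estimate each nonlinear term after a Minkowski integral inequality in $s$. The derivative-free quintic terms $|\phi_N|^4\phi_N$, $\mu|\phi_N|^2\phi_N$ and $\Gamma[\phi_N]\phi_N$ lie in $H^r$ with norm $\lesssim\|\phi_N\|_{H^r}^5$, since $H^r$ is a Banach algebra for $r>1/2$ and $\Gamma[\phi_N]$ is a scalar bounded by $\|\phi_N\|_{H^r}^4$. The delicate terms are the derivative cubics $|\phi_N|^2\phi_N'$ and $\phi_N^2\bar\phi_N'$: writing them as the product of $|\phi_N|^2\in H^r$ with $\phi_N'\in H^{r-1}$, the product estimate $\|fg\|_{H^{r-1}}\lesssim\|f\|_{H^r}\|g\|_{H^{r-1}}$ (multiplication by $H^r$, $r>1/2$, on $H^{r-1}$) places them in $H^{r-1}$ with norm $\lesssim\|\phi_N\|_{H^r}^3$. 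To turn this into a bound on $\|\partial_x(\text{Duhamel})\|_{L^4_t L^\infty_x}$ \emph{without} losing the extra derivative, I would invoke the inhomogeneous (retarded) smoothing estimate for the Schr\"odinger group, which gains a full derivative on the forcing term and therefore absorbs the $\partial_x$ falling on the cubic nonlinearity; transferred to the periodic, short-time regime as in \cite{TV14}, this is again uniform in $N$. The remaining time integration produces the extra factor $T$, the coefficients $c_1,\dots,c_4$ and $\Gamma$ contribute the prefactor $|\beta|$, and the elementary bound $x^3+x^5\lesssim 1+x^5$ collapses the cubic and quintic contributions into the term $|\beta|T(1+\sup_s\|\phi_N\|_{H^r}^5)$ of \eqref{MainEst2Bis}.

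The main obstacle is exactly the derivative nonlinearity in the Duhamel term. A naive estimate (Sobolev embedding plus unitarity of $e^{tL}$, or Minkowski combined with the \emph{homogeneous} Strichartz bound) would require the cubic term to lie in $H^r$, i.e.\ $\phi_N\in H^{r+1}$, forcing the far worse threshold $r>3/2$ and the loss of one full derivative. Recovering the sharp $r>5/4$ therefore hinges on genuinely dispersive smoothing (the gain of a derivative in the inhomogeneous Schr\"odinger estimate), and the only real work is to verify that this smoothing, together with the $L^4$ Strichartz bound, transfers to the torus on the interval $[0,T]\subset[0,1]$ with constants independent of $N$ and of the frequency support of $\phi_N$.
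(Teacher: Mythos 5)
Your reduction of the lemma to an $L^4_tL^\infty_x$ bound and your treatment of the \emph{linear} flow are sound: H\"older in time produces the $T^{3/4}$, the Galilean removal of $2i\alpha\mu\phi_N'$ is legitimate because $\mu$ is conserved (Proposition \ref{GaugedMassCons}), and Zygmund's periodic $L^4_{t,x}$ estimate combined with $W^{1/4+\e,4}(\T)\hookrightarrow L^\infty(\T)$ indeed gives $\|e^{it\partial_x^2}f\|_{L^4_tL^\infty_x}\lesssim\|f\|_{H^{1/4+\e}}$, hence the correct threshold $r>5/4$ for the homogeneous piece. The gap is in the Duhamel term. The tool you invoke there --- a retarded estimate for the Schr\"odinger group that \emph{gains a full derivative} on the forcing (Kenig--Ponce--Vega inhomogeneous smoothing) --- is a genuinely Euclidean phenomenon and is false on $\T$: on a compact manifold there is no local smoothing at all. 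Concretely, for a single frequency $e^{it\partial_x^2}e^{inx}=e^{-itn^2}e^{inx}$, so no space--time norm of the free evolution gains any power of $n$; and for the retarded operator, taking $F(s)=e^{is\partial_x^2}e^{inx}$ gives $\int_0^te^{i(t-s)\partial_x^2}F(s)\,ds=t\,e^{-itn^2}e^{inx}$, whose derivative is of size $n$ while every reasonable norm of $F$ is $O(1)$; hence no derivative-gaining inhomogeneous estimate can hold uniformly in $n$. So the step you defer as ``the only real work'' (transferring the smoothing to the torus) cannot be carried out, and the appeal to \cite{TV14} does not rescue it: that paper does not use smoothing but precisely the mechanism described next.

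What replaces smoothing in the paper's proof is a frequency-dependent partition of the time interval. One writes $\phi_N'=\sum_j\Delta_j\phi_N'$ in Littlewood--Paley pieces, splits $[0,T]$ into $2^j$ subintervals of length $T2^{-j}$ for the $j$-th piece, and on each subinterval applies H\"older in time followed by the frequency-localized Strichartz bounds \eqref{Strich1}--\eqref{Strich2} of Burq--G\'erard--Tzvetkov applied to the Duhamel formula \eqref{FIntEq}, and then Bernstein. The bookkeeping is the whole point: H\"older on the short interval yields $(T2^{-j})^{3/4}$ rather than $T^{3/4}$; the derivative falling on the forcing $Z(\phi_N)'$ costs $2^{j}$; and the net factor $2^{-3j/4}\cdot 2^{j}\cdot 2^{-j(1/4+\e)}=2^{-j\e}$ is summable in $j$, while the time integrals over subintervals simply concatenate to $\int_0^T$. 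For the homogeneous piece one pays the number $2^j$ of subintervals, compensated by the factor $2^{-j(1+\e)}$ from Bernstein. Together with $\|Z(\phi_N)\|_{H^{1/4+\e}}\lesssim 1+\|\phi_N\|^5_{H^{5/4+\e}}$ (the bound \eqref{HugeNonlin}), this absorbs the extra derivative with only a $1/4+\e$ loss, i.e. $r>5/4$, entirely without any smoothing effect. Your argument would become a proof if the smoothing step were replaced by this interval-subdivision mechanism (or by an $X^{s,b}$-type multilinear estimate); as written, its central step fails on $\T$.
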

\begin{proof}

Let $N \in \N \cup \{ \infty \}$ and denote
$$
\Delta_{0}:= P_{1}, \qquad  
\Delta_{j} := P_{2^{j}} - P_{2^{j-1}}, \qquad j\geq1\, ,
$$
Since $\phi_{N}$ is a solution of the truncated GDNLS equation \eqref{20140509:DNLSapprox} or of the non truncated equation \eqref{eq:GDNLS}, 
when we pass to 
the integral formulation and we apply the operator $\Delta_{j} \partial_{x}$, since $[P_{N}, \partial_{x}]= 0$, we arrive to

\begin{equation}\label{FIntEq}
     \partial_{t} \Delta_{j} \phi_{N}'  
 =
     e^{it\partial_{x}^{2}} \Delta_{j} \phi_{N}'
 + 
     \beta \int_{0}^{t} e^{i(t-s)\partial_{x}^{2}}  \Delta_{j} P_N Z(\phi_N) '  \,,
\end{equation}
where we have denoted
\begin{align} \nonumber
Z(\phi_N) 
& 
:= 
c_1   | \phi_N|^{2}  \phi_N' 
   +  c_2     \phi_N^{2}  \bar \phi_N'   
\\ \nonumber   
& 
+   2  \alpha \mu[\phi_N] \phi_N' -i  c_3 P_{N} ( | \phi_N|^{4}  \phi_N )
-i c_4 \mu[\phi_N] P_{N} ( | \phi_N|^{2}  \phi_N ) 
-i \Gamma [\phi_N] \phi_N \, .
\end{align}
Using the algebra property of $H^{s}$, we easily get 
\begin{equation}\label{HugeNonlin}
\| Z(\phi_N) \|_{H^{s}} \leq C ( 1 + \| \phi_{N} \|^{5}_{H^{s+1}} ) \, , 
\end{equation}
for some $C$ that only depends on $s, |\alpha|, |\beta|$.
Now, let $\{t_\ell\}_{\ell=0, \ldots, 2^j}\subset[0,T]$ be
such that 
$$
t_{0}=0\,,\qquad t_{2^j}=T\qquad\text{and}\qquad
t_{\ell}-t_{\ell-1} = \frac{T}{2^{j}} \, .
$$
Looking at the integral equation \eqref{FIntEq}, 
using the Strichartz-type estimates (see \cite{BGT})
\begin{equation}\label{Strich1}
\left( \int_{0}^{T2^{-j}} \| e^{it \partial_{x}^{2}} \Delta_{j}  f \|^{4}_{L^{\infty}(\T)} dt \right)^{\frac{1}{4}} \leq C  \| \Delta_{j} f \|_{L^{2}(\T)} \, ,
\end{equation}
\begin{equation}\label{Strich2}
\left( \int_{0}^{T2^{-j}} \left\| \int_{0}^{t} e^{i(t-s) \partial_{x}^{2}} \Delta_{j}  F(x, s) \right\|^{4}_{L^{\infty}(\T)} dt \right)^{\frac{1}{4}}  
\leq C  \int_{0}^{T2^{-j}} \| \Delta_{j} F(x, s) \|_{L^{2}(\T)} ds \, ,
\end{equation}
valid for all $T \in [0,1]$ and $j\geq1$, for some absolute constant $C$, we can bound 
\begin{align}
\int_{t_{\ell}}^{t_{\ell+1}} & \|  \Delta_{j} \phi_{N}'(x, t) \|_{L^{\infty}(\T)} dt
\leq T^{\frac{3}{4}} 2^{-\frac{3}{4}j} 
\left( \int_{t_{\ell}}^{t_{\ell+1}} \| \Delta_{j} \phi_{N}' (x, t) \|^{4}_{L^{\infty}(\T)} dt \right)^{\frac{1}{4}}
\\ \nonumber
&
\lesssim T^{\frac{3}{4}} 2^{-\frac{3}{4}j} 
\left( \| \Delta_{j} \phi_{N}'(x, t_{\ell}) \|_{L^{2}(\T)} 
+  |\beta|  \int_{t_{\ell}}^{t_{\ell+1}} \!\!\!\!\! \| \Delta_{j}  Z(\phi_N)'(x, s) \|_{L^{2}(\T)} ds \right) 
\\ \nonumber
&
\lesssim T^{\frac{3}{4}} 2^{-j(1+\varepsilon)} 
 \| \Delta_{j} \phi_{N}'(x, t_{\ell}) \|_{H^{\frac{1}{4}+\varepsilon}(\T)} 
 + |\beta| T^{\frac{3}{4}} 2^{-j\varepsilon}  \int_{t_{\ell}}^{t_{\ell+1}} 
 \!\!\!\!\! \| \Delta_{j}  Z(\phi_N)(x, s) \|_{H^{\frac{1}{4}+ \varepsilon}(\T)} ds  \, ,
\end{align} 
for all $\varepsilon >0$. More precisely, 
in the first bound we used H\"older inequality to dominate the~$L^{1}([t_{\ell},t_{\ell+1}])$ norm with the $L^{4}([t_{\ell},t_{\ell+1}])$ norm, 
the second bound is an immediate consequence of~\eqref{FIntEq}-\eqref{Strich2} and the last one is the Bernstein inequality, since we are localising the frequencies 
over the annulus $2^{\ell} <|n|\leq 2^{\ell+1}$. Thus, summing over $\ell=0, \ldots, 2^{j}-1$, we obtain
\begin{align}\label{TVFIN}
\int_{0}^{T} 
& 
\|  \Delta_{j} \phi_{N}'(x, t) \|_{L^{\infty}(\T)} dt \lesssim
T^{\frac{3}{4}}  
\sup_{t \in [0,T]} \| \Delta_{j} \phi_{N}'(x, t) \|_{H^{\frac{1}{4}+\varepsilon}(\T)} \sum_{\ell=0}^{2^{j}-1} 2^{-j(1+\varepsilon)}
\\ \nonumber
& 
+ |\beta| T^{\frac{3}{4}} 2^{-j\varepsilon}  
\sum_{\ell=0}^{2^{j}-1} \int_{t_{\ell}}^{t_{\ell+1}} \| \Delta_{j}  Z(\phi_N)(x, s) \|_{H^{\frac{1}{4}+ \varepsilon}(\T)} ds 
\\ \nonumber
& 
\leq
 |\beta| T^{\frac{3}{4}} 2^{-j\varepsilon} 
\left( 
\sup_{t \in [0,T]} \| \Delta_{j} \phi_{N}'(x, t) \|_{H^{\frac{1}{4}+\varepsilon}(\T)} 
+ |\beta| \int_{0}^{T} \| \Delta_{j} Z(\phi_N)(x, s) \|_{H^{\frac{1}{4}+ \varepsilon}(\T)} ds
\right)\,.
\end{align}  
Since $\phi_{N}' = \sum_{j \in \Z_{+}} \Delta_{j} \phi_{N}'$, the~\eqref{MainEst2Bis} 
follows, for any $r = \frac{5}{4} + \varepsilon$, by~\eqref{TVFIN} and \eqref{HugeNonlin}, via triangle inequality.
\end{proof}


Next we show that the Sobolev norms stay bounded for short time under the evolution of the GDNLS flow. 
We remark that this is interesting 
for solutions of the truncated GDNLS equation~\eqref{20140509:DNLSapprox}, for 
which no further integrals of motion are available.

\begin{lemma}\label{lemma:limitatezza-loc-Hs-nongauge}
Let $N \in \N \cup \{ \infty \}$, $r > 5/4$, $R > 1$ and $\phi(x,0) \in B^{r}(R)$. 
There exists $t_{R} > 0$ such that  
\be\label{eq:limitatezza-loc-Hs-nongauge}
\sup_{|t| \leq t_{R} } \| \Phi^{N}_{t, \alpha} (\phi(x,0))\|_{H^r} \leq  5R    \, ,
\ee
where $\Phi_{t, \alpha}^{\infty} := \Phi_{t, \alpha}$ is the flow associated to \eqref{eq:GDNLS}
and $\Phi^{N}_{t, \alpha}$ is the flow associated to
\eqref{20140509:DNLSapprox}. 
The threshold~$t_{R}$ depends on $R$ and on $|\alpha|, |\beta|$ and $r$ but is independent on $N$.
\end{lemma}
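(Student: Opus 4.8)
The plan is to establish \eqref{eq:limitatezza-loc-Hs-nongauge} through an $H^r$ energy estimate closed by a continuity argument that feeds on Lemma~\ref{TVLemma}. A preliminary remark makes the whole computation uniform in $N\in\N\cup\{\infty\}$: since $\phi_N=P_N\phi_N$ and $P_N$ is a self-adjoint projection commuting with $\langle\partial_x\rangle^r$, in every $H^r$ pairing below, the projector $P_N$ standing in front of each nonlinear term of \eqref{20140509:DNLSapprox} may be transferred onto the factor $\langle\partial_x\rangle^r\phi_N$, where it acts as the identity. Consequently the differential inequality one derives for $\|\phi_N\|_{H^r}^2$ coincides with the one for the genuine GDNLS flow \eqref{eq:GDNLS}, with all constants independent of $N$, and it suffices to treat the two cases simultaneously.

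First I would differentiate $\|\phi_N(t)\|_{H^r}^2$ in time, using $\partial_t\phi_N=i\phi_N''-2\alpha\mu\phi_N'+c_1|\phi_N|^2\phi_N'+c_2\phi_N^2\bar\phi_N'-ic_3|\phi_N|^4\phi_N-ic_4\mu|\phi_N|^2\phi_N-i\Gamma[\phi_N]\phi_N$. The dispersive term $i\phi_N''$ and the transport term $-2\alpha\mu\phi_N'$ contribute nothing to $\frac{d}{dt}\|\phi_N\|_{H^r}^2$ by skew-symmetry (the first because $\langle\langle\partial_x\rangle^r\phi_N,\langle\partial_x\rangle^r\phi_N''\rangle$ is real, the second after one integration by parts), and the scalar term $-i\Gamma[\phi_N]\phi_N$ drops as well since $\Gamma[\phi_N]\in\R$ (Remark~\ref{RealCoeff}). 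The purely algebraic nonlinear terms $|\phi_N|^4\phi_N$ and $\mu|\phi_N|^2\phi_N$ are harmless: by the algebra property of $H^r$ (valid since $r>5/4>1/2$) their contribution is bounded by $\lesssim(1+\|\phi_N\|_{H^r}^4)\|\phi_N\|_{H^r}^2$.

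The crux is the two derivative-carrying terms $c_1|\phi_N|^2\phi_N'$ and $c_2\phi_N^2\bar\phi_N'$, which a naive Leibniz expansion would make lose one derivative. Here I would exploit the same mechanism as in the proof of Proposition~\ref{GaugedMassCons}: writing $v:=\langle\partial_x\rangle^r\phi_N$, the top-order piece of $\Re\langle v,\langle\partial_x\rangle^r(|\phi_N|^2\phi_N')\rangle$ is $\Re\int|\phi_N|^2\bar v\,v'=-\tfrac12\int(|\phi_N|^2)'|v|^2$ after integration by parts, which is controlled by $\|\phi_N\|_{L^\infty}\|\phi_N'\|_{L^\infty}\|\phi_N\|_{\dot H^r}^2$; the analogous manipulation (now using $\bar v\,\bar v'=\tfrac12(\bar v^2)'$) handles $c_2\phi_N^2\bar\phi_N'$, while the remaining commutators are estimated by the Kato--Ponce inequality, producing only factors $\|\phi_N\|_{W^{1,\infty}}\|\phi_N\|_{H^r}^2$. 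Altogether this yields a differential inequality of the shape $\frac{d}{dt}\|\phi_N\|_{H^r}^2\lesssim(1+\|\phi_N\|_{H^r}^4)\|\phi_N\|_{H^r}^2+\|\phi_N'\|_{L^\infty}\|\phi_N\|_{H^r}^2$.

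Finally I would integrate this inequality on $[0,t]$ and close it by a bootstrap. Assuming $\sup_{[0,T]}\|\phi_N\|_{H^r}\leq 5R$ on a maximal subinterval, Lemma~\ref{TVLemma} gives $\int_0^T\|\phi_N'\|_{L^\infty}\,ds\lesssim T^{3/4}\big(R+|\beta|(1+R^5)\big)$, so for $t\leq t_R$ the right-hand side is bounded by $\|\phi_N(0)\|_{H^r}^2+C_R(t_R+t_R^{3/4})$; choosing $t_R=t_R(R,r,|\alpha|,|\beta|)$ small enough (and independent of $N$) makes this strictly less than $(5R)^2$, say $\leq 2R^2$, so by continuity the maximal subinterval is all of $[0,t_R]$. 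Running the same argument for negative times (the equation is time-reversible) gives the bound for $|t|\leq t_R$ and completes the proof. The main obstacle throughout is precisely the derivative loss in the two cubic derivative nonlinearities; the integration-by-parts cancellation above is exactly what converts it into the integrable quantity $\|\phi_N'\|_{L^\infty}$ estimated in Lemma~\ref{TVLemma}.
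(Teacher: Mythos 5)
Your proposal is correct and follows essentially the same route as the paper's proof: an $H^r$ energy estimate (with the projector $P_N$ harmlessly absorbed), the integration-by-parts cancellation plus Kato--Ponce commutator bound for the two derivative nonlinearities, the algebra property for the remaining terms, and a continuity/bootstrap argument closed by the dispersive bound on $\int_0^T\|\phi_N'\|_{L^\infty}\,ds$ from Lemma~\ref{TVLemma}, with time reversal for negative times. The only differences are cosmetic (you note the exact cancellation of the $\Gamma[\phi_N]\phi_N$ term and close the bootstrap by a direct Gr\"onwall-type integration, where the paper bounds that term crudely and uses an explicit ODE comparison yielding the threshold $cR^{-32/3}$).
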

\begin{proof}
Let $N \in \N \cup \{\infty\}$.  
Recall that, when $\alpha = 0$, we have denoted by $\phi_{N} = \Phi_{t}^{N} (\phi(x,0))$ a solution to the truncated DNLS
equation \eqref{20140509:UngDNLSapprox} with initial datum~$P_{N} \phi(x,0)$, while~$\phi_{\infty}$ is a solution to the
DNLS equation \eqref{eq:DNLS} with initial datum 
$P_{\infty} \phi(x,0) = \phi(x,0)$.
We will show that
\begin{equation}\label{ochoBis}
\sup_{|t| \leq  c R^{-\frac{32}{3}} } \, \sup_{ \phi(x,0) \in B^{r}(R)}   \| \phi_{N} \|_{H^r} \leq 5R \, ,
\end{equation}  
for a sufficiently small constant $c > 0$ which depends on $|\alpha|, |\beta|$ and $r$.
First we prove this for $t \in [0, c R^{ -\frac{32}{3} }]$.
We apply the Bessel kernel $J^{r}$, namely the operator with symbol
$(1 + n^2 )^{r/2}$, to the truncated or non truncated GDNLS equation. 
Since $J^{r}$ commutes with 
$\partial_{x}$ and $P_N$ and since $\partial_{x} J^{r} = J^{r+1}$, 
we get
\begin{align}\nonumber
\partial_t J^{r} \phi_N - i J^{r+2} \phi_N  
& =  
c_1  P_N J^{r}( | \phi_N|^{2}  \phi_N' )
   +  c_2 P_N J^{r} (  \phi_N^{2}  \bar \phi_N' )  
+   2  \alpha \mu[\phi_N] J^{r+1} \phi_N 
\\ \nonumber
&
-i  c_3  P_N J^{r} ( | \phi_N|^{4}  \phi_N )
-i c_4 \mu[\phi_N]  P_N J^{r} ( | \phi_N|^{2}  \phi_N ) 
-i \Gamma [\phi_N] P_N J^{r} \phi_N \, .
\end{align}
We take the $L^2$ inner product of this equation against $J^{r} \phi_N = P_{N} J^{r} \phi_N$, and then the real part, so that, using integration by parts
and recalling that $\| f \|_{H^{r}} \simeq \|J^{r} f \|_{L^{2}}$,  
we obtain
\begin{equation}\label{STTB}
\partial_t \|  \phi_N\|^2_{H^{r}} 
\simeq 
\Re (Z_1(\phi_N) + Z_2(\phi_N) + Z_3(\phi_N)) \, ,
\end{equation}
where
$$
Z_1(\phi_N) := c_1 \int   ( J^{r} ( |\phi_N|^2  \phi'_N) ) J^{r}\bar\phi_N\,,
\qquad
Z_2(\phi_N) := c_2 \int   (J^{r} ( \phi_N^2  \bar\phi'_N) ) J^{r}\bar\phi_N \, .
$$
$$
Z_3(\phi_N) := -i  c_3  P_N J^{r} ( | \phi_N|^{4}  \phi_N )
-i c_4 \mu[\phi_N]  P_N J^{r} ( | \phi_N|^{2}  \phi_N ) 
-i \Gamma [\phi_N] P_N J^{r} \phi_N \, ,
$$
and we have used that $\Re \int (J^{r+1} \phi_N) J^{r} \bar \phi_N =0$.
Integrating by parts we get
\begin{equation}\nonumber
Z_1(\phi_N)  = c_1 \widetilde{Z_1}(\phi_N) + c_1 \int ([J^{r}, |\phi_N|^2 ]  \phi'_N)  J^{r}\bar\phi_N  \, ,
\end{equation}
where
$$
\widetilde{Z_1}(\phi_N) := \int  |\phi_N|^2 (J^{r} \phi'_N)  J^{r}\bar\phi_N.
$$
But, integrating by parts, we notice
\begin{equation}
\widetilde{Z_1}(\phi_N)  = - \overline{ \widetilde{Z_1} (\phi_N)} -  \int  (|\phi_N|^2)' |J^{r} \phi_N|^{2}   \, ,
\end{equation}
so that, 
$$
\Re \widetilde{Z_1} (\phi_N) = - \frac{1}{2}  \int  (|\phi_N|^2)' |J^{r} \phi_N|^{2}   \, .
$$
Thus 
$$
\Re Z_1 (\phi_N) = - \frac{c_1}{2}  \int  (|\phi_N|^2)' |J^{r} \phi_N|^{2} + c_1 \int ([J^{r}, |\phi_N|^2 ]  \phi'_N)  J^{r}\bar\phi_N
$$
Using the Kato--Ponce commutator estimate \cite{KatoPonce}:
$$
\| [J^{r}, f] g\|_{L^{2}} \lesssim \|  f' \|_{L^{\infty}} \| g \|_{H^{r-1}} + \| f \|_{H^{r}} \| g \|_{L^{\infty}} \, ,
$$
with $f=|\phi_N|^2$ and $g=\phi'_N$, the Sobolev embedding $H^{r} \hookrightarrow L^{\infty}$, 
and the algebra property of~$H^{r}$, we can bound   
\begin{align}\nonumber
| \Re Z_1 (\phi_N) |   
& 
\leq
C   \Big( \| (| \phi_{N}|^{2})' \|_{L^{\infty}}  
\| J^{r} \phi_{N} \|_{L^{2}}
 +
\|  (|\phi_N|^{2})' \|_{L^{\infty}} \|  \phi_{N}' \|_{H^{r-1}}  
 + 
\| |\phi_N|^{2} \|_{H^{r}} \|  \phi'_N  \|_{L^{\infty}} \Big) \| J^{r} \bar\phi_{N} \|_{L^{2}}
\\ \label{PTI1}
& 
\leq C \| \phi'_{N} \|_{L^{\infty}} \| \phi_{N} \|_{L^{\infty}}  \|  \phi_{N} \|^{2}_{H^{r}}
+
\| \phi'_{N} \|_{L^{\infty}}  \|  \phi_{N} \|^{3}_{H^{r}} 
\lesssim 
\| \phi'_{N} \|_{L^{\infty}}  \|  \phi_{N} \|^{3}_{H^{r}} \, ,
\end{align}
were $C$ are possibly increasing constants which only depend on $r, |\alpha|, |\beta|$. Similarly
\begin{equation}\nonumber
Z_2 (\phi_N)  = c_2   \widetilde{Z_2}(\phi_N) + c_2 \int ([J^{r}, \phi_N^2 ]  \bar\phi'_N)  J^{r}\bar\phi_N 
\end{equation}
where
$$
\widetilde{Z_2}(\phi_N)
:=
\int  \phi_N^2 (J^{r} \bar\phi'_N)  J^{r}\bar\phi_N
$$
and, integrating by parts, we notice
\begin{equation}
\widetilde{Z_2} (\phi_N) = - \widetilde{Z_2} (\phi_N) - \int  (\phi_N^2)' |J^{r} \phi_N|^{2}  \, ,
\end{equation}
namely
$$
\widetilde{Z_2} (\phi_N) = - \frac12 \int  (\phi_N^2)' |J^{r} \phi_N|^{2}
$$
so that
$$
 Z_2 (\phi_N) = - \frac{c_2}{2}  \int  (\phi_N^2)' |J^{r} \phi_N|^{2}  + c_2  \int ([J^{r}, \phi_N^2 ]  \bar\phi'_N)  J^{r}\bar\phi_N \, ,
$$
and (here we let $f=\phi_N^2$ and $g=\bar\phi'_N$) 
\begin{align}\nonumber
| Z_2 (\phi_N) | 
& \lesssim  \Big( \| ( \phi_{N}^{2})' \|_{L^{\infty}}  
\| J^{r} \phi_{N} \|_{L^{2}}
 +
\|  (\phi_N^{2})' \|_{L^{\infty}} \|  \bar\phi'_{N} \|_{H^{r-1}}  
+ 
\| \phi_N^{2} \|_{H^{r}} \|  \bar\phi'_N  \|_{L^{\infty}} \Big) \| J^{r} \bar\phi_{N} \|_{L^{2}}
\\ \label{PTI2}
& \lesssim  \| \phi'_{N} \|_{L^{\infty}}  \| \phi_{N} \|_{L^{\infty}} \|  \phi_{N} \|^{2}_{H^{r}} 
+ \| \phi'_{N} \|_{L^{\infty}}   \|  \phi_{N} \|^{3}_{H^{r}} 
\lesssim 
\| \phi'_{N} \|_{L^{\infty}}  \|  \phi_{N} \|^{3}_{H^{r}}  \, . 
\end{align}
On the other hand, using the algebra property of $H^{r}$, we easily get 
\begin{equation}\label{HugeNonlinBis}
\| Z_3(\phi_N) \|_{H^{r}} \leq C ( 1 + \| \phi_{N} \|^{5}_{H^{r}} ) \, ,
\end{equation}
for some $C$ that only depends on $s, |\alpha|, |\beta|$.
Plugging \eqref{PTI1}, \eqref{PTI2} and \eqref{HugeNonlinBis} into \eqref{STTB}, we arrive to
\begin{align}\label{MainEstPreq}
\partial_t \|  \phi_N \|^2_{H^{r}} 
\leq C^*
(1+ \| \phi'_{N} \|_{L^{\infty}} ) (1+ \|  \phi_{N} \|^{5}_{H^{r}} ) \, ,
\end{align}
for some larger $C^*$ that only depends on $s, |\alpha|, |\beta|$. We take $C^* >2$.
Now we use that $\partial_{t} \eta = \frac{2}{3} a \eta^{5/2}$ for $\eta = (1-a)^{-2/3}$ with the choice $\eta(t) = \|  \phi_N (x, t)\|^2_{H^{r}}$.
Distinguishing the time regimes where $0 \leq \eta(t) \leq 1$ and where $\eta(t) >1$, 
the estimate \eqref{MainEstPreq} implies, via comparison principle,
the following a priori bound
\begin{align}\nonumber 
 \|   \phi_N (x, t) \|^{2}_{H^{r}}  
& \leq
\| \phi_{N}(x, 0) \|^{2}_{H^{r}} 
 \Bigg(
 2 C^* 
\int_{0}^{t}(1+ \| \phi'_{N}(x,s) \|_{L^{\infty}} )ds \, +
\\ \label{questaqua}
& 
\!\!\!\!\!\!\!\!\!\!\!\!\!\!\!\!\!
+
\left(  1 - 3C^*  \| \phi_{N}(x, 0) \|^{3}_{H^{r}}   \int_{0}^{t} \left(1 + \| \phi'_{N}(x, s) \|_{L^{\infty}} \right) ds   \right)^{-\frac{2}{3}}
\Bigg) < 1 +4 \| \phi_{N}(x, 0) \|_{H^{r}} \leq 5R     \, ,
\end{align}
as long as
\begin{equation}\label{TLI}
3 C^*(  \| \phi_{N}(x, 0) \|^{2}_{H^{r}}+  \| \phi_{N}(x, 0) \|^{3}_{H^{r}} )  \int_{0}^{t} (1 + \| \phi'_{N}(x, s) \|_{L^{\infty}} ds ) < \frac12 \, .
\end{equation}
Now we set
$$
X(t, \phi_{N}(x, 0)) := \sup_{s \in [0,t]} \| \phi_{N}(x, t) \|_{H^{r}} \, 
$$
and
we assume that $\left\{ t > 0 : X(t) > 5R \right\}$ is not empty, otherwise the statement follows in the larger time 
regime $t \in [0,\infty]$.
Then we set
$$
T(\phi_{N}(x, 0)) = \inf \left\{ t > 0 : X(t) > 5R \right\} \, .
$$
Our goal is to show that $T(\phi_{N}(x, 0)) > c R^{ -\frac{32}{3} }$ for any $P_{N} \phi(x,0) \in B^{r}(R)$, that would 
imply the statement in the case $t \in [0, c R^{ -\frac{32}{3} }]$. 
By the continuity of $t \to \| \phi_{N}(x, t) \|_{H^{r}}$ it is clear that 
\be\label{eq:contradd-XT}
X(T(\phi_{N}(x, 0)), \phi_{N}(x, 0)) = 5R \, .\ee
 With a little abuse 
of notation we will simply denote these quantities by $T$ and $X(T)$, namely we will omit the dependence on $\phi_{N}(x, 0)$.  
Let now assume that $T \leq c R^{ -\frac{32}{3} }$.   
We deduce a contradiction by \eqref{questaqua}, using the dispersive estimate (\ref{MainEst2Bis}), proven in Lemma \ref{TVLemma}. 
Since $\phi_{N}(x, 0) \in B^{r}(R)$ and $R >1$, 
the \eqref{TLI} would be true once
\begin{equation}\label{TLIBIS}
6 C^* R^{3} \int_{0}^{t} (1 + \| \phi'_{N}(x, s) \|_{L^{\infty}} ds ) < \frac12
\end{equation}
Since $T \in[0, c R^{-\frac{32}{3}}]$, $X(T)= 5R$, the \eqref{MainEst2Bis} gives 
(assume $c < 1$ so that $T < 1$)
\begin{align}\nonumber
6 C^* R^{3}   \int_{0}^{T}
&
( 1 + \| \phi'_{N}(x, s) \|_{L^{\infty}} ) ds  
\leq 
6 C^* R^{3}  (T + C T^{\frac{3}{4}} (X(T) + |\beta| T X^{5}))
\\ \nonumber
&
\leq 6C^* 2^{5}R^{8}  T^{\frac{3}{4}} (1 + C(1 + |\beta|))
= 
6C^* 2^{5} c^{\frac{3}{4}} (1 + C(1 + |\beta|)) < \frac{1}{2}
 \, , 
\end{align}
where in the last inequality we have chosen $c$ sufficiently small. 
Thus \eqref{TLIBIS} and so \eqref{TLI} are satisfied, so that we can apply the \eqref{questaqua} 
with $t \in[0,T]$ and we get $X(T) < 5R$ in contradiction with (\ref{eq:contradd-XT}).

This completes the proof of~\eqref{ochoBis} in the case $t \in [0, cR^{ -\frac{32}{3} }]$.
As for $t \in [- c R^{ -\frac{32}{3} }, 0]$, one can consider the 
equations \eqref{20140509:DNLSapprox}, \eqref{eq:GDNLS} with $\phi_N$ replaced by $\bar\phi_N$, that is satisfied by $\phi_N(-x,-t)$, for which the argument above applies with 
obvious modifications, to show that  
\eqref{ochoBis} holds also when we restrict to negative times thus concluding the proof.
\end{proof}


Now we can prove Proposition \ref{Prop:nearness}. 

\begin{proof}[Proof of Proposition \ref{Prop:nearness}]
Let $N \in \N \cup \{\infty\}$. Assume that we have shown  
\begin{equation}\label{L2small}
\sup_{ |t| \leq t_{R} }  \, \sup_{\phi(x,0) \in B^{r}(R)}
\|  \Phi_{t,\alpha}(\phi(x,0)) - \Phi^N_{t,\alpha}(\phi(x,0))  \|_{L^2} \to 0
\quad \mbox{as} \quad N \to \infty \, ,
\end{equation}
for some $t_{R} \in (0,1)$ that we conveniently choose to be the threshold quantity in 
Lemma~\eqref{lemma:limitatezza-loc-Hs-nongauge}. This allows us to
use the ($N$-uniform) bound \eqref{eq:limitatezza-loc-Hs-nongauge} to get, for $0 \leq s <r$, the following
\begin{align}\nonumber
\|  & \Phi_{t,\alpha}(\phi(x,0))    - \Phi^N_{t,\alpha}(\phi(x,0))  \|_{H^{s}} 
\\ \nonumber
&\leq \|  \Phi_{t,\alpha}(\phi(x,0)) - \Phi^N_{t,\alpha}(\phi(x,0))  \|^{(r-s)/r}_{L^{2}}
\|  \Phi_{t,\alpha}(\phi(x,0)) - \Phi^N_{t,\alpha}(\phi(x,0))  \|_{H^{r}}^{s/r} 
\\ \nonumber
&
\leq (10 R)^{s/r}
 \|  \Phi_{t,\alpha}(\phi(x,0)) - \Phi^N_{t,\alpha}(\phi(x,0))  \|^{(r-s)/r}_{L^{2}} \, ,
\end{align}
for all $\phi(x,0) \in B^{r}(R)$. Thus,
using~\eqref{L2small}, we would get 
\begin{equation}
\sup_{ |t| \leq t_{R} } \,  \sup_{\phi(x,0) \in B^{r}(R)}
\| \Phi_{t,\alpha}(\phi(x,0)) - \Phi^N_{t,\alpha}(\phi(x,0))  \|_{H^{s}} \to 0
\quad 
\mbox{as}
\quad N \to \infty \, ,
\end{equation}
which implies~\eqref{FlowControl}.
It remains to prove the~\eqref{L2small}. 
We consider the difference 
\begin{equation}\label{Def:Delta}
\delta_N(x, t) := \phi(x, t) - \phi_{N}(x, t)  :=  \Phi_{t}(\phi(x,0)) - \Phi^N_{t}(\phi(x,0) )
\end{equation}
that, recalling \eqref{eq:GDNLS} and \eqref{20140509:DNLSapprox}, 
solves the equation
\be\label{DNLSDelta}
\left \{
\begin{array}{rcl}
     \partial_{t} \delta_N - i \delta_N'' + 2 \alpha \mu \delta_N'  & = & 
     - 2 \alpha (\mu[\phi] - \mu[\phi_N]) \phi_N'+
     \sum_{\ell=1}^{7} Z_{\ell}[\phi, \phi_N]+
       P_{>N} \sum_{\ell = 1}^{4} R_{\ell} [\phi_N] \, ,
   \\ 
    \delta_N(x, 0)& = & P_{> N} \phi(x,0) \, ,
\end{array}
\right.
\ee
where, with $c_5 := \frac{3\alpha \beta }{4 \pi} +\frac{\alpha^2}{\pi}$, 
$c_6:= -\alpha^2$ and $c_7 := \frac{i \alpha}{\pi}$, we have denoted  
\begin{equation}\nonumber
Z_1 [\phi,  \phi_{N} ] := 
 c_1 (|\phi|^{2} \phi' - |\phi_{N}|^{2} \phi_{N}')    ,
\qquad 
Z_2 [\phi,  \phi_{N} ] := 
c_2 (\phi^{2} \bar\phi' - \phi_N^{2} \bar\phi_N')   \, ,
\end{equation}
\begin{equation}\nonumber
Z_3 [\phi,  \phi_{N} ] := 
-i c_3 (|\phi|^{4} \phi - |\phi_N|^{4} \phi_N )   ,
\qquad 
Z_4 [\phi,  \phi_{N} ] := 
-i c_4 (\mu[\phi] |\phi|^{2} \phi - \mu[\phi_N] |\phi_N|^{2} \phi_N )   \, ,
\end{equation}
\begin{equation}\nonumber
Z_5 [\phi,  \phi_{N} ] := 
-i c_5 (  \| \phi \|_{L^{4}}^{4} - \| \phi_N \|_{L^{4}}^{4}), 
\qquad
Z_6 [\phi,  \phi_{N} ] := 
-i c_6 ( \mu[\phi]^{2} - \mu[\phi_N]^{2}  ), 
\qquad
\end{equation}
\begin{equation}\nonumber
Z_7 [\phi,  \phi_{N} ] := 
-i c_7 (  \int_{\mathbb{T}} \phi' \bar{\phi} - \int_{\mathbb{T}} \phi_N' \bar{\phi_N}   )
  \, ,
\end{equation}
and
\begin{equation}\nonumber
R_{1}[\phi_N] := (|\phi_{N}|^{2} \phi_{N})'  ,
\qquad 
R_2 [\phi_{N} ] := 
c_2 \phi_N^{2} \bar\phi_N'    \, ,
\end{equation}
\begin{equation}\nonumber
R_3 [\phi_{N} ] := 
-i c_3 |\phi_N|^{4} \phi_N     ,
R_4 [\phi_{N} ] := 
-i c_4 \mu[\phi_N] |\phi_N|^{2} \phi_N    ,
\end{equation}
Now we take the $L^{2}$ inner product of equation~\eqref{DNLSDelta} with~$\delta_N$, and then 
the real part, so that, after integration by parts, we arrive to 
\begin{equation}\label{DiffEnEst}
 \partial_{t} \int |\delta_N|^2 
 =  
     - 2 \alpha  \Re \int (\mu[\phi] - \mu[\phi_N]) \phi_N' \bar \delta_N+ 
     \sum_{\ell=1}^{7} \Re \int  Z_{\ell}[\phi, \phi_N] \bar \delta_N+
       P_{>N}  \Re \int \sum_{\ell = 1}^{4} R_{\ell} [\phi_N] \bar \delta_N \, .
\end{equation}
Now we need to bound the terms on the right hand side of \eqref{DiffEnEst}. 
Notice that, since $A \subset B^r(R)$, by Lemma~\ref{lemma:limitatezza-loc-Hs-nongauge} we have 
that $\phi(x,t), \phi_N(x,t) \subset B^r(5R)$ for $|t| \leq t_R$. Thus
\begin{align}\nonumber
\left| \Re  \int (\mu[\phi] - \mu[\phi_N]) \phi_N' \bar \delta_N  \right| 
& 
= 
\left| \Re  \left( \int ( |\phi| - |\phi_N|) ( |\phi| + |\phi_N|)  ) \right) \left( \int  \phi_N' \bar \delta_N  \right) \right| 
\\ \label{R0}
&
\leq
\| \delta_N \|_{L^{2}}^{2} ( \| \phi \|_{L^{2}} + \| \phi_{N} \|_{L^{2}} ) \| \phi_N' \|_{L^{2}} \lesssim R^2  \| \delta_N \|_{L^{2}}^{2} \, .
\end{align}
Then
\begin{align}\nonumber
\left| \Re \int Z_1[\phi,  \phi_{N} ] \bar \delta_N \right|& 
= |c_1| \left| \Re \int
  ( |\phi|^{2} \phi' - |\phi_N|^{2} \phi_N'  ) \bar \delta_N \right|
\\ \nonumber
& 
= |c_1| \left| \Re \int  |\phi|^{2} \delta_N' \bar \delta_N +  (|\phi| + |\phi_{N}|) (|\phi| - |\phi_{N}|) \phi'_N \bar\delta_N  \right|
\end{align}
and since
$$
\Re \int  |\phi|^{2} \delta_N' \bar \delta_N =  \frac12 \int (|\delta_N|^{2})' |\phi|^{2} = - \frac12 \int |\delta_N|^{2} (|\phi|^{2})' ,
$$
we arrive to
\begin{align}\nonumber 
\left| \Re \int Z_1[\phi,  \phi_{N} ] \bar \delta_N \right|
& \lesssim 
|c_1| \| \delta_N \|_{L^{2}}^{2} \|\phi \|_{L^{\infty}} \| \phi' \|_{L^{\infty}} 
+ |c_1| \| \delta_N \|_{L^{2}}^{2} ( \|\phi \|_{L^{\infty}} + \|\phi_N \|_{L^{\infty}} ) \| \phi'_N \|_{L^{\infty}}
  \\ \label{R1}
& \leq C R (  \| \phi' \|_{L^{\infty}} +  \| \phi'_N \|_{L^{\infty}})\| \delta_N \|_{L^{2}}^{2}  \, ,
\end{align}
where hereafter $C$ denotes several constants, possibly increasing from line to line, 
that only depend on $|\alpha|, |\beta|, r$.
Similarly
\begin{align}\nonumber
\left| \Re \int Z_2 [\phi,  \phi_{N} ]  \bar \delta_N  \right| 
& 
= |c_2| \left| 
\Re \int
( \phi^{2} \bar\phi' - \phi_N^{2} \bar\phi_N' )  \bar \delta_N \right|
\\ \nonumber
&
= |c_2| \left| 
\Re \int  \phi^2 \bar\delta_N'  \bar\delta_N +  (\phi + \phi_N) \bar\phi'_N |\delta_N|^{2}  \right| 
\end{align}
and since
$$
\int  \phi^2 \bar\delta_N'  \bar\delta_N = \frac12 \int \phi^{2} (\bar\delta_N^{2})' = -\frac12 \int (\phi^{2})' \bar\delta_N^{2} \, ,
$$
we arrive to
\begin{align}\nonumber
\left| \Re \int Z_2 [\phi,  \phi_{N} ]  \bar \delta_N  \right|
& \lesssim 
|c_2| \| \delta_N \|_{L^{2}}^{2} \|\phi \|_{L^{\infty}} \| \phi' \|_{L^{\infty}} 
+ |c_2| \| \delta_N \|_{L^{2}}^{2} ( \|\phi \|_{L^{\infty}} + \|\phi_N \|_{L^{\infty}} ) \| \phi'_N \|_{L^{\infty}}
  \\ \label{R2}
& \leq C R (  \| \phi' \|_{L^{\infty}} +  \| \phi'_N \|_{L^{\infty}})\| \delta_N \|_{L^{2}}^{2}  \, .
\end{align}
Moreover
\begin{align}\nonumber
\left| \Re \int Z_3[\phi,  \phi_{N} ] \bar \delta_N \right|& 
= |c_3| \left| \Re \int
  |\phi|^4   |\delta_N|^2  + |\phi_N| ( |\phi|^4 - |\phi_N|^4 )  \bar \delta_N \right|
\\ \nonumber
&
= |c_3| \left| \Re \int
  |\phi|^4   |\delta_N|^2  + |\phi_N| ( |\phi| - |\phi_N| ) ( |\phi| + |\phi_N| ) ( |\phi|^2 + |\phi_N|^2 )  \bar \delta_N \right|
\\ \label{R3}
& 
\lesssim |c_3| \| \delta_N\|_{L^2}^{2} ( \| \phi \|_{L^{\infty}}^4 + \| \phi_N \|_{L^{\infty}}^4   ) 
\leq R^4 \| \delta_N\|_{L^2}^{2} \, ,
\end{align}
and similarly
\begin{align}\nonumber
& \left|  \Re \int Z_4 [\phi,  \phi_{N} ] \bar \delta_N \right|  
= |c_4| \left| \Re \int (\mu[\phi] |\phi|^{2} \phi - \mu[\phi_N] |\phi_N|^{2} \phi_N )  \bar \delta_N \right| 
\\ \nonumber
&
 = |c_4| \left| \Re \int \mu[\phi] |\phi|^2 |\delta_N|^2 
 + 
 \mu[\phi] (|\phi| - |\phi_N|) (|\phi| + |\phi_N|) \phi_N   \bar \delta_N
\right.
\\ \nonumber
&
\left.
 +
\left( \int ( |\phi| - |\phi_N| ) (  |\phi| + |\phi_N|  )\right) 
\left( \Re \int |\phi_N|^2 \phi_N \bar \delta_N \right)
  \right| 
 \\ \nonumber
 & 
 \lesssim 
 |c_4| \|  \delta_N \|_{L^{2}}^{2} \Big( \| \phi_N \|_{L^{2}}^{2} \| \phi_N \|_{L^{\infty}}^{2}  
 + \| \phi_N \|^2_{L^{2}}  (\| \phi\|^{2}_{L^{\infty}} + \| \phi_{N}\|^{2}_{L^{\infty}} ) 
 +   \| \phi\|^{4}_{L^{\infty}} + \| \phi_{N}\|^{4}_{L^{\infty}} \Big) 
\\ \label{R4}
& \leq R^4   \|  \delta_N \|_{L^{2}}^{2} \, .
 \end{align}
We finally estimate 
\begin{align}\nonumber
& \left|  \Re \int Z_5[\phi, \phi_N] \bar \delta_N  \right| 
\\ \nonumber
& = |c_5| \left|   \left( \int |\phi|^4 \right)    \left( \int   |\delta_N|^2 \right)  
+ \left( \int |\phi|^4 - |\phi_N|^4 \right) \left( \int  \bar \phi_N \bar \delta_N \right)
 \right|
 \\ \nonumber
 &
 \leq 
|c_5| \| \phi \|_{L^4}^4 \| \delta \|_{L^{2}}^{2} + |c_5|  \| \phi_N \|_{L^2} \| \delta_N \|_{L^2}
 \int  (|\phi| - |\phi_N|) (|\phi| + |\phi_N|) (|\phi|^2 + |\phi_N|^2) 
\\ \label{R5}
&
\leq C  R^4 \| \delta_N \|_{L^{2}}^{2} + ( \| \phi\|_{L^2}^4 + \| \phi_N \|_{L^2}^4 ) \| \delta_N \|^2_{L^2}  
\leq C R^4 \| \delta_N \|_{L^{2}}^{2}  \, ,
\end{align}
and
\begin{align}\nonumber
& \left|  \Re \int Z_6 [\phi, \phi_N] \bar \delta_N  \right| 
= 
|c_6| \left| \left( \int |\phi|^2 \right)^2 \left( \int   |\delta_N|^2 \right) +  \left( \int |\phi|^2 - |\phi_N|^2  \right)^2 \left( \int \bar \phi_N \bar \delta_N \right)  \right|
\\ \nonumber
&
\lesssim |c_6| R^4  \| \delta_N \|_{L^2}^2 +  R \left( \int ( |\phi| - |\phi_N| ) ( |\phi| + |\phi_N| )  \right) \left( \int \bar \phi_N \bar \delta_N \right) 
\\ \label{R6}
&
\lesssim |c_6| R^4  \| \delta_N \|_{L^2}^2 
+  R^2  ( \| \phi \|_{L^2} + \| \phi_N \|_{L^2} )  \| \phi_N \|_{L^2}  \| \delta_N \|_{L^2}^2  \leq C R^4  \| \delta_N \|_{L^2}^2 \, ,
\end{align}
and 
\begin{align}\nonumber
& \left|  \Re \int Z_7 [\phi, \phi_N] \bar \delta_N  \right| 
= 
|c_7|  
\left| 
\left( \int \phi' \bar \phi \right) \left( \int   |\delta_N|^2 \right) 
+
\left(\int \phi' \bar \phi - \phi'_N \bar \phi_N  \right)
\left( \int \bar \phi_N \bar \delta_N \right)
\right|
\\ \nonumber
&
\lesssim  
|c_7|   \| \phi' \|_{L^2} \| \phi \|_{L^2} \| \delta_N \|^2_{L^2}
+ \left( \int \phi' \bar \delta_N - \delta'_N \bar \phi_N   \right) 
\left( \int \bar \phi_N \bar \delta_N \right)
\\ \nonumber
& 
=
|c_7|   \| \phi' \|_{L^2} \| \phi \|_{L^2} \| \delta_N \|^2_{L^2}
+ \left( \int \phi' \bar \delta_N + \delta_N  \bar \phi_N'    \right) 
\left( \int \bar \phi_N \bar \delta_N \right)
\\ \label{R7}
&  \leq C R^{2} \| \delta_N \|^2_{L^2} + (\| \phi' \|_{L^2} + \| \phi'_N \|_{L^2} ) \| \phi_N \|_{L^2} \| \delta_N \|^2_{L^2}
\leq    C R^{2} \| \delta_N \|^2_{L^2} \, . 
\end{align}
Letting 
$$
\eta(t)=  \sup_{x \in \T} \left ( |\phi'(x, t)|^2 + |\phi'_N(x, t)|^2  \right) \, ,
$$
and
plugging \eqref{R0}, \eqref{R1}, \eqref{R2}. \eqref{R3}, \eqref{R4}, \eqref{R5}, \eqref{R6}, \eqref{R7} 
into \eqref{DiffEnEst} we arrive to
\begin{equation}\label{DiffEnEstFinal}
\partial_{t} \int |\delta(x, t)|^{2} 
\leq  C ( R^4 + R \eta(t)  ) \int |\delta(x, t)|^{2}  
+ P_{>N}  \Re \int \sum_{\ell = 1}^{4} R_{\ell} [\phi_N] \bar \delta_N 
\qquad |t| \leq t_R  \, .
\end{equation}
Using the algebra property of $H^{r-1}$ and Lemma \ref{lemma:limitatezza-loc-Hs-nongauge} 
it is immediate to show that $R_{\ell} [\phi_N(x,t)] \bar \delta_N(x,t)$ belong to $B^{r-1}(C R^5)$ for all $\ell=1, \ldots, 4$ and for $|t| \leq t_R$, 
so that 
$$
\left| P_{>N}  \Re \int \sum_{\ell = 1}^{4} R_{\ell} [\phi_N] \bar \delta_N \right| \leq C R^6 N^{-2(r-1)} \qquad |t| \leq t_R  \, .
$$
so that, using the Gr\"onwall inequality, the estimate~\eqref{DiffEnEstFinal} gives
\begin{align}\nonumber
\int |\delta(x, t)|^{2} 
& 
\leq e^{C R^5 t + C R \int_{0}^{t} \eta(\tau) d\tau  }  \int |\delta(x, 0)|^{2}  + C R^6 N^{-2(r-1)}  \int_0^t e^{C R^5 (t-s) + C R \int_{s}^{t} \eta(\tau) d\tau } ds
\\ \nonumber
&
\leq C e^{C R^6 t  }  R^2 N^{-2r}  + C R^6 N^{-2(r-1)} t e^{C R^6} , \qquad |t| \leq t_R \, ,
\end{align}
that implies the \eqref{L2small}, so that the proof is concluded.
\end{proof}


\section{Asymptotic Stationarity of the integrals of motion}\label{sect:Wick}

Let $k \geq 2$. In this section we only work with the truncated GDNLS under the choice
\be\label{eq:scelta-alpha}\alpha = \alpha_k := -\frac{2k+1}{2k+2}\beta \, ,\ee
namely
\begin{align}\label{GeFInal}
    i \partial_{t} \phi_{N}  
     + \phi_{N}''  - i \beta  \mu \frac{2k+1}{k+1}    \phi_{N}'
 &  =
 i c_1 P_{N} ( | \phi_{N}|^{2}  \phi_{N}' )
   + i c_2  P_{N} ( (  \phi_{N})^{2}  \bar{\phi}_{N}' )
\\ \nonumber
& +  c_3 P_{N} ( | \phi_{N}|^{4}  \phi_{N} )
   + c_4 \mu P_{N} ( | \phi_{N}|^{2}  \phi_{N} ) 
   + \Gamma [\phi_{N}] \phi_{N} \, ,
\end{align}
with initial datum
\begin{equation}
 \phi_{N} (x, 0) := P_{N} \phi(x,0) \, ,
 \end{equation}
 where
\begin{equation}
c_1 = \frac{\beta}{k+1} \,,
\qquad 
c_2 = - \frac{k}{k+1} \beta\,,
\qquad
c_3 =  -\frac{k(2k+1)}{4(k+1)^2}\beta^2 \,,
\qquad
c_4 = \frac{  \beta }{2} \frac{2k+1}{k+1} \, ,
\end{equation}
and  
\begin{equation}
\Gamma [\phi_{N}] 
= \frac{(2k+1)(k-1)}{8\pi(k+1)^2}\beta^2 \| \phi_N \|_{L^{4}}^{4} - \frac{(2k+1)^{2}}{4(k+1)^{2}} \mu^{2} - \frac{i (2k+1)}{2(k+1)} \int_{\mathbb{T}} \phi_N' \bar \phi_N  \, .
\end{equation}
We have denoted the associated flow by 
 $\Phi^N_{t,\a_k} $. We write $\Phi_{t,\a_k} = \Phi^\infty_{t,\a_k}$ for the flow 
 associated to the non truncated equation.

This choice of $\alpha$ simplifies the
(higher order) integrals of motion $\mc E_\ell[\phi_N]$, $\ell\in\mb N_0$, which take the following form for $\ell=k$ (see~\eqref{eq:energie-dopo-gaugeFacili}):
\begin{equation}\nonumber
\mc E_k[\varphi_{N}] =
\frac12\|\varphi_{N} \|^2_{\dot{H}^k} + i \frac{2k+1 }{2k+2} k \beta \mu \int \varphi_{N}^{(k)}  \bar \varphi_{N}^{(k-1)}
+\int R_{k}\, .
\end{equation}

The main goal of this section is to prove the following estimate.
\begin{proposition}\label{Prop: energie} 
Let $k \geq 2$, $0 \leq \ell \leq k$ and (\ref{eq:scelta-alpha}). 
We have
\be\label{eq:goal}
\lim_{N \to \infty} \left\| \frac{d}{dt} \mc E_{\ell}[ \Phi^N_{t,\a_k} (\varphi(x,0)) ] \Big |_{t=0}\right\|_{L^2(\g_k)} \!\!\!\!\!\!\!\!\!\! = \  0 \, .
\ee
\end{proposition}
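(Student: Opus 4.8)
The plan is to write $\frac{d}{dt}\mc E_\ell[\phi_N]\big|_{t=0}$ explicitly as a single local functional and then to estimate its $L^2(\g_k)$ norm by exploiting the Gaussian structure of $\g_k$. Throughout set $\phi_N(x,0)=P_N\varphi(x,0)$, so that $\phi_N(0)$ is band-limited. First I would differentiate the representation \eqref{eq:energie-dopo-gaugeFacili} of $\mc E_\ell$ and substitute $\partial_t\phi_N|_{t=0}$ from the truncated equation \eqref{GeFInal}. The linear (dispersive and transport) contributions integrate by parts to purely imaginary quantities and drop after taking the real part, exactly as in \eqref{20171023:eq1}. The decisive structural input is that $\mc E_\ell$ is a constant of motion for the \emph{untruncated} GDNLS flow $\Phi_{t,\a_k}$; since the truncated field \eqref{GeFInal} differs from the untruncated one only by the high-frequency projection $P_{>N}$ applied to the four nonlinear terms $|\phi_N|^2\phi_N'$, $\phi_N^2\bar\phi_N'$, $|\phi_N|^4\phi_N$ and $\mu|\phi_N|^2\phi_N$, the non-decaying ($O(1)$) parts of the derivative cancel by conservation, and the remainder, expanded in the Fourier modes of $\phi_N$, is a finite combination of quartic and sextic monomials constrained by momentum conservation and carrying a high-frequency factor.

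Here the gauge choice \eqref{eq:scelta-alpha} enters crucially. Inspecting \eqref{eq:energie-dopo-gaugeFacili}, the coefficient $\frac{\ell-k}{k+1}$ of the $\dd=2\ell-1$ monomial $\int|\phi|^2(\phi^{(\ell)}\bar\phi^{(\ell-1)}-\phi^{(\ell-1)}\bar\phi^{(\ell)})$ vanishes precisely at $\ell=k$. Thus, at the critical level $\ell=k$, the only top-order pieces of $\mc E_k$ are $\frac12\|\phi\|^2_{\dot H^k}$ and the scalar-prefactor term $i\frac{2k+1}{2k+2}k\beta\mu\int\phi^{(k)}\bar\phi^{(k-1)}$, whose most singular time-derivatives cancel against one another; this is what renders the problem tractable and is the reason we adopt \eqref{OCOA} rather than the usual $\a=-\beta$.

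It then remains to prove that the surviving functional tends to $0$ in $L^2(\g_k)$ as $N\to\infty$. I would split according to $\ell$. For $0\le\ell<k$ every monomial has differential degree $\le 2\ell\le 2k-2$, comfortably below the threshold $k-\frac12$ at which $\g_k$ concentrates; the corresponding Fourier series converge $\g_k$-almost surely and are dominated, so one passes to the limit by dominated convergence, as established in the forthcoming Lemmas \ref{LemmaObv} and \ref{UsLem}. The genuinely critical case is $\ell=k$, where, by Corollary \ref{Nontiene}, the dangerous contributions come from the monomials collected in $R_{k,W}$, see \eqref{Zoo}, while those in $R_{k,P}$, see \eqref{OTEASYFINALE}, are subcritical and again amenable to the almost-sure analysis.

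The main obstacle is precisely the family $R_{k,W}$ in \eqref{Zoo}. These monomials carry two factors of order $k-1$, e.g. $\phi^{(k-1)}\bar\phi^{(k-1)}|\phi|^{2m}$; since the modes of $\phi^{(k)}$ have $\g_k$-variance $n^{2k}(1+n^{2k})^{-1}\simeq 1$ (so that $\phi^{(k)}\notin L^2$ and $\phi^{(k-1)}$ sits in the borderline space $H^{1/2-}$), no pathwise or almost-sure bound is available and the naive estimates only give $O(1)$. For these I would compute the second moment directly: expanding $\|\cdot\|^2_{L^2(\g_k)}$ by Wick's theorem reduces the square of each functional to a sum over pairings of the Gaussian modes $\{\phi_N(n)\}$, weighted by the covariances $(1+n^{2k})^{-1}$ and constrained by momentum conservation together with the high-frequency cut $P_{>N}$. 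The decay in $N$ must then be extracted from the frequency sums, using that at least one frequency is forced to exceed $N$ and that the surviving combinations enjoy the cancellations produced by the gauge. This is carried out for the quartic and sextic monomials in the forthcoming Lemmas \ref{Lemma:Wick4} and \ref{Lemma:Wick6}; combining them with the almost-sure estimates for $R_{k,P}$ and for $\ell<k$ yields \eqref{eq:goal}. I expect the sextic terms of $R_{k,W}$ to be the delicate point, since there the number of near-critical factors is largest and the power gained from the high-frequency constraint is smallest.
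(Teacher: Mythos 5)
Your proposal follows essentially the same route as the paper's proof: reduce $\frac{d}{dt}\mc E_\ell[\Phi^N_{t,\a_k}(\varphi)]\big|_{t=0}$ to the action of the high-frequency truncation remainder (the paper's operator $D_N$ and identity \eqref{DN=ddt}), observe that the gauge choice \eqref{eq:scelta-alpha} eliminates the dangerous cubic term at $\ell=k$, treat the case $\ell<k$ and the class $R_{k,P}$ by the almost-sure/dominated-convergence Lemmas \ref{LemmaObv} and \ref{UsLem}, and handle the critical class $R_{k,W}$ by the Wick-theorem Lemmas \ref{Lemma:Wick6} and \ref{Lemma:Wick4}. The only slight imprecision is that the two quadratic terms $\|P_N\phi\|^2_{\dot H^k}$ and $\mu\int (P_N\phi)^{(k)}(P_N\bar\phi)^{(k-1)}$ need not cancel against one another: each is individually annihilated by the truncation remainder, by orthogonality (see \eqref{BilinearGuys1}).
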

For $\ell = 0$, equation \eqref{eq:goal} is a consequence of the (stronger) result provided by Proposition \ref{GaugedMassCons}.
\begin{proof}
 Following \cite{Zh01} we define the linear operator $D_{N}$ which acts on the multilinear form 
$\int u^{(\alpha_{1})} \dots u^{(\alpha_{m})}$ according to the Leibniz rule   
\begin{align}\label{TimeDerivativeOp}
D_{N} & \int u^{(\alpha_{1})}  \dots u^{(\alpha_{m})} 
\\ \nonumber
& := i
\sum_{j =1}^{m} 
  \int u^{(\alpha_{1})} \!\!\! \dots P_{> N} \left(
i c_{1}  |u|^{2} u'
+ i c_{2} u^{2} \bar{u}' 
+
c_{3}  |u|^{4} u
+ c_{4} |u|^{2} u \right)^{(\alpha_{j})} \!\!\! \dots u^{(\alpha_{m})} \, . 
\end{align}
Notice that equation \eqref{GeFInal} can be rewritten as
\begin{align}\nonumber
    \partial_{t} \phi_{N}  &
    \! = \!
     i  \phi_{N}''  
      +   \beta  \mu \frac{2k+1}{k+1}    \phi_{N}'
   +
  c_1  | \phi_{N}|^{2}  \phi_{N}' 
  \! +   c_2 (  \phi_{N})^{2}  \bar{\phi}_{N}' 
 \! -i   c_3  | \phi_{N}|^{4}  \phi_{N} 
  \!  -i  c_4 \mu | \phi_{N}|^{2}  \phi_{N}  
   \! -i  \Gamma [\phi_{N}] \phi_{N} 
 \\ \nonumber 
 &  
 - c_1 P_{> N} ( | \phi_{N}|^{2}  \phi_{N}' )
  \!  -    c_2  P_{> N} ( (  \phi_{N})^{2}  \bar{\phi}_{N}' )
 \! + i   c_3 P_{> N} ( | \phi_{N}|^{4}  \phi_{N} )
   \! +i   c_4 \mu P_{> N} ( | \phi_{N}|^{2}  \phi_{N} ) 
      \, ,
\end{align}
where the first line is the GDNLS equation \eqref{eq:GDNLS} for $\alpha = -\frac{2k+1}{2k+2}\beta$, whose flow preserves the
integrals of motion $\mc E_{\ell}$. More precisely
$$
\frac{d}{dt} \mc E_{\ell}[ \Phi_{t,\a_k} (P_N \varphi(x,0)) ] = 0 \, .
$$
Using this, the fact that $\mc E_{\ell}$ are linear combinations of multilinear forms, and the fact that  
$\Phi^N_{0,\a_k} = P_N = \Phi_{0,\a_k} P_N$, one can easily check that
\begin{equation}\label{DN=ddt}
\frac{d}{dt} \mc E_{\ell}[\Phi^N_{t,\a_k}( \phi(x,0))] \Big|_{t=0} = D_{N}  \mc E_{\ell}[P_N  \phi(x,0)] \, ,
\end{equation}
Actually, the \eqref{DN=ddt} holds at any time $t \in \R$. However, we will only use this identity in the case $t=0$.

In order to simplify the notation, until the end of the section we will write $\phi$ in place of $\phi(x,0)$. Nevertheless, 
all the functions we will consider are always calculated at time $t=0$. 
Notice that, by orthogonality
\begin{equation} \label{BilinearGuys1}
D_{N}   \| P_N \phi \|^2_{\dot{H}^\ell} = 0, 
\quad 
\mbox{for all $\ell \in \N_{0}$,}
\qquad
D_{N} \int (P_N \phi)^{(\ell)}( P_N \bar \phi)^{(\ell-1)} =0 ,
\quad \mbox{for all $\ell \in \N$}\, .
\end{equation}
We will show that 
\begin{equation}\label{eq:goalBis}
\lim_{N \to \infty} \left\| D_N \mc E_{\ell}[ P_N \varphi ] \right\|_{L^2(\g_k)}  = \  0 \, ,
\end{equation}
for all $0 \leq \ell  \leq k$, that imples \eqref{eq:goal} by \eqref{DN=ddt}.
\subsubsection*{Case $0 \leq \ell  \leq k-1$}
In this case the gauged integrals of motion have the form (see~\eqref{eq:energie-dopo-gaugeFacili}):
\begin{align}\nonumber
 \mc E_{\ell}  [P_N \phi ] 
&  = \frac12\| P_N \phi \|^2_{\dot{H}^\ell} + i \frac{2k+1 }{2k+2} \ell \beta \mu \int (P_N \phi)^{(\ell)} (P_N \bar \phi)^{(\ell-1)}
+
\\ \nonumber
&
 \frac{i}{4} \frac{\ell - k}{k+1} \beta
\int | P_N \varphi  |^2\left((P_N \phi)^{(\ell)} (P_N \bar \phi)^{(\ell-1)}-(P_N \phi)^{(\ell-1)} (P_N\bar \phi_N)^{(\ell)}\right)
+\int R_{\ell}\,.
\end{align} 
When we apply the operator $D_{N}$, we are
left to consider only its action on the last two terms in the RHS, since the first two terms give no contribution by \eqref{BilinearGuys1}.
Recalling that $R_{\ell}\in\mc V_{\ell-1}$ and using the fact that $\ell -1 \leq k-2$, we have that $R_\ell$ is a a linear combination of monomials of the form 
\begin{equation}\label{20170614:eq1}
D_{N} \int u_{N}^{(\alpha_{1})} \dots u_{N}^{(\alpha_{j})} \dots u_{N}^{(\alpha_{m})}, 
\qquad \alpha_{j} \leq k-2, \quad j=1, \ldots, m \, ,
\end{equation}
where $u_{N}$ is either $P_N \phi $ or $P_N \bar \phi$. 
Thus, using \eqref{TimeDerivativeOp} and reordering the indexes, we have that \eqref{20170614:eq1}
is a  linear combination of
terms of the form
\begin{equation}\label{TrivialForm}
\int u_{N}^{(\beta_{1})} \dots u_{N}^{(\beta_{m-1})} \ P_{> N} ( u_{N}^{(\beta_{m})} \ldots u_{N}^{(\beta_{m+r})} ),
\quad r = 2 \ \mbox{or} \ 4 \, ,
\end{equation}
with $\beta_{j} \leq k-2$, $j=1, \ldots, m+r-1$ and $\beta_{m+r} \leq k-1$. Hence the contribution of these terms
to \eqref{eq:goalBis} is 
$$
\int u_{N}^{(\beta_{1})} \dots u_{N}^{(\beta_{m-1})} \ P_{> N} ( u_{N}^{(\beta_{m})} \ldots u_{N}^{(\beta_{m+r})} )  \, ,
$$
which goes to zero, as $N \to \infty$, 
due to the forthcoming Lemma~\ref{LemmaObv}. 

Finally, we apply $D_N$ to the term
$\int|(P_N \phi) |^2((P_N \phi)^{(\ell)} (P_N \bar \phi)^{(\ell-1)}-(P_N \phi)^{(\ell-1)} (P_N \bar \phi)^{(\ell)})$.
By a simple integration by parts argument, we obtain a linear combination of terms
like~\eqref{TrivialForm} (for $m=3$), whose contribution to \eqref{eq:goalBis} vanishes as $N \to \infty$, and of terms 
of the form 
\begin{equation}\label{TrivialFormBis}
\int u_{N}^{(\beta_{1})}u_{N}^{(\beta_{2})} \ P_{> N} ( u_{N}^{(\beta_{3})} \ldots u_{N}^{(\beta_{3+r})} ),
\quad r = 2 \ \mbox{or} \ 4 \, ,
\end{equation}
with $\beta_{j} \leq k-2$ for all $j=2,\dots,r+2$ and $\beta_{1}, \beta_{3+r} \leq k-1$,
whose contribution to \eqref{eq:goalBis} again vanishes by Lemma \ref{LemmaObv}.

\subsubsection*{Case $\ell = k$}
The gauged integral of motion has the form (see~\eqref{eq:energie-dopo-gaugeFacili}):
\begin{equation}
\mc E_k[P_N \phi] =
\frac12\|P_N \phi \|^2_{\dot{H}^k} + i \frac{2k+1 }{2k+2} k \beta \mu \int (P_N \phi)^{(k)}   (P_N \bar \phi)^{(k-1)}
+\int R_{k}\, ,
\end{equation}
and again, because of \eqref{DN=ddt} and \eqref{BilinearGuys1}, we only need to evaluate $D_{N} \int R_{k}$.
Let us split the terms appearing in $R_{k}$ into two classes according to Corollary \ref{Nontiene} . The first class $R_{k,P}$ contains terms  
of the form
\begin{equation}\label{OTEASY}
u_{N}^{(\alpha_1)}  \dots u_{N}^{(\alpha_m)}, \qquad 
\alpha_j \leq k-2, \quad \text{for }j = 1, \ldots, m-1\,, \qquad \alpha_m \leq k-1 \, ,
\end{equation}
where again $u_{N}$ is either $ P_N \varphi_{N}$ or $P_N \bar \varphi$.
The second class $R_{k,W}$ contains the terms of the form~\eqref{Zoo}. The 
contribution of the terms of $R_{k,P}$ to \eqref{eq:goalBis} is zero in the limit $N \to \infty$, as shown in the forthcoming Lemma~\ref{UsLem}. 
The harder terms of class $R_{k,W}$ are treated in the following way.  First of all, recalling \eqref{BilinearGuys1}, we do not need to consider the 
terms of the kind 3) in \eqref{Zoo}. Then we notice that $D_{N}$ maps a generic element of $R_{k,W}$ into a linear combination of monomials that, 
after integation over $\T$, are, modulo conjugation, of the following two types:
\begin{equation}\label{AllOfThem1}
\int \P \left( (P_N \phi)^{(\beta_{1})}  (P_N \bar \phi)^{(\beta_{5})} (P_N \phi)^{(\beta_{2})} \right)
\P \left(   (P_N \bar \phi)^{(\beta_{6})} (P_N \phi)^{(\beta_{3})}  (P_N \bar \phi)^{(\beta_{7})} (P_N \phi)^{(\beta_{4})}  (P_N \bar \phi)^{(\beta_{8})} \right) \, ,
\ee
with $\b_1+\ldots+\b_8=2k-1$, $\b_1 \leq k$ and $\beta_{j} \leq k-1$, $j = 1, \ldots, 8$, or 
\begin{equation}\label{AllOfThem2}
\int \P \left( (P_N \phi)^{(\beta_{1})}  (P_N \bar \phi)^{(\beta_{4})} (P_N \phi)^{(\beta_{2})} \right)
\P \left(   (P_N \bar \phi)^{(\beta_{5})} (P_N \phi)^{(\beta_{3})}  (P_N \bar \phi)^{(\beta_{6})} \right)  \,,
\end{equation}
with $\b_1+\ldots+\b_6\in\{2k,2k-1\}$, $\b_1 \leq k$ and $\beta_{j} \leq k-1$, $j = 1, \ldots, 6$. 
These terms give a null contribution to \eqref{eq:goalBis}, in the limit $N \to \infty$, as consequence of Lemmas \ref{Lemma:Wick6}
and  \ref{Lemma:Wick4}. 
Thus the proof is concluded.  
\end{proof}
\begin{remark}
Since each integral of motion $\mc E_{\ell}[\phi]$ is a linear combination of multilinear forms,
using the hypercontractivity of the measure $\g_k$, 
the estimate \eqref{eq:goal}  can be promoted to any $L^p(\g_k)$ norm, for $p < \infty$.
\end{remark}

We again recall that in the rest of the section all functions will be evaluated at time $t=0$, even though not explicitly indicated.

\begin{lemma}\label{LemmaObv}
Let $k\geq2$, $m, r \in \N_{0}$ with $m \geq 2$ and $\beta_{j} \leq k-2$ for all $j=1, \ldots m-2$, $j= m, \ldots, m+ r-1$ and $\beta_{m-1}, \beta_{m+r} \leq k-1$. 
Then, letting 
$u_{N}$ denote either $P_N \varphi$ or $P_N \bar\varphi$, we have
\be\label{PAS}
\int u_{N}^{(\beta_{1})}\dots u_{N}^{(\beta_{m-1})} \ P_{> N} ( u_{N}^{(\beta_{m})} \ldots u_{N}^{(\beta_{m+r})} )  \to 0 \quad \mbox{as} \quad N\to\infty \, ,
\ee
$\g_k$-a.s. and in $L^{2}(\g_k)$ mean.
The same is true if we rather assume $\beta_{j} \leq k-2$ for all $j=1, \ldots m + r -2$ and $\beta_{m + r -1}, \beta_{m+r} \leq k-1$.  
\end{lemma}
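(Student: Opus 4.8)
The plan is to reduce \eqref{PAS} to a Cauchy--Schwarz bound carrying a negative power of $N$, and then to promote the resulting almost sure decay to $L^2(\g_k)$ decay by dominated convergence. Write $A_N:=u_N^{(\beta_1)}\cdots u_N^{(\beta_{m-1})}$ and $B_N:=u_N^{(\beta_m)}\cdots u_N^{(\beta_{m+r})}$, so that the integrand in \eqref{PAS} is $\int A_N\,\P B_N$. Since $\P$ is self-adjoint, idempotent and real (hence commutes with conjugation), one has $\int A_N\,\P B_N=\int(\P A_N)(\P B_N)$, and therefore
$$
\left| \int A_N\,\P B_N \right|
\leq \| \P A_N \|_{L^2}\,\| \P B_N \|_{L^2}
\leq \| A_N \|_{L^2}\, N^{-\sigma}\, \| B_N \|_{H^{\sigma}} ,
$$
for any $\sigma>0$ with $B_N\in H^\sigma$, using $\|\P f\|_{L^2}\leq N^{-\sigma}\|f\|_{H^\sigma}$.

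The second step is the regularity bookkeeping. As $\g_k$ charges $\bigcap_{s<k-\frac12}H^s$, almost surely $\varphi\in H^{k-\frac12-\e}$ for every small $\e>0$, and $\|u_N\|_{H^\tau}=\|P_N\varphi\|_{H^\tau}\leq\|\varphi\|_{H^\tau}$ uniformly in $N$ (the same holds with $\bar\varphi$). Consequently a factor $u_N^{(\beta_j)}$ with $\beta_j\leq k-2$ lies in $H^{3/2-\e}$, an algebra embedded in $L^\infty$, whereas a factor with $\beta_j\leq k-1$ lies only in $H^{1/2-\e}$. Under the first hypothesis $B_N$ contains exactly one such rough factor, so the one-dimensional product estimates give $B_N\in H^{1/2-\e}$; under the second hypothesis $B_N$ contains two rough factors and the sharp rule $H^{1/2-\e}\cdot H^{1/2-\e}\hookrightarrow H^{1/2-2\e}$ yields $B_N\in H^{1/2-2\e}$. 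In either case $B_N\in H^{\sigma}$ with $\sigma=\sigma(\e)>0$ and, crucially, with a product bound $\|B_N\|_{H^\sigma}\leq C\prod_j\|\varphi\|_{H^{s_j}}$, $s_j\leq k-\frac12-\e$, uniform in $N$. A parallel H\"older estimate (smooth factors in $L^\infty$, the possible rough factor in $L^2$) controls $\|A_N\|_{L^2}$ by a similar product.

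Combining the two steps gives the pointwise-in-$\varphi$ estimate $\big|\int A_N\,\P B_N\big|\leq C(\varphi)\,N^{-\sigma}$, where $C(\varphi)$ is a finite product of norms $\|\varphi\|_{H^{s_j}}$ with $s_j<k-\frac12$; this is almost surely finite and independent of $N$, which already proves the a.s. part of \eqref{PAS}. For the $L^2(\g_k)$ statement I would invoke dominated convergence: the squared integrand is dominated by $C(\varphi)^2$, and since each $\|\varphi\|_{H^{s_j}}$ is the norm of a Gaussian vector it has finite moments of every order (Fernique), so $C(\varphi)^2\in L^1(\g_k)$ by H\"older; as $\big|\int A_N\,\P B_N\big|^2\to0$ almost surely, the mean also tends to $0$.

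I expect the main obstacle to be the regularity bookkeeping when two rough factors are present: one must check that a product of two $H^{1/2-\e}$ functions (times several $L^\infty$ factors) still lands in some $H^\sigma$ with $\sigma>0$, which forces $\e<\frac14$ and relies on the sharp one-dimensional multiplication estimate rather than the naive algebra property. A secondary point is to keep every product bound uniform in $N$---guaranteed by $\|P_N\varphi\|_{H^\tau}\leq\|\varphi\|_{H^\tau}$---and to confirm that the dominating function $C(\varphi)$ truly lies in $L^2(\g_k)$, which is exactly where the concentration of $\g_k$ on $H^{k-\frac12-}$ enters.
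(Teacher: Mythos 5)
Your proposal is correct, but it follows a genuinely different route from the paper's own proof. The paper argues softly: after the same initial H\"older/Sobolev step bounding the integral by $\|\varphi\|_{H^{k-1}}^{m-1}\,\| P_{>N}\big(u_N^{(\beta_m)}\cdots u_N^{(\beta_{m+r})}\big)\|_{L^2}$, it splits this last norm by the triangle inequality into the tail $\| P_{>N}\big(u^{(\beta_m)}\cdots u^{(\beta_{m+r})}\big)\|_{L^2}$ of the \emph{fixed} limit function, which vanishes $\gamma_k$-a.s.\ merely because that function lies in $L^2$, plus the difference $\|u_N^{(\beta_m)}\cdots u_N^{(\beta_{m+r})}-u^{(\beta_m)}\cdots u^{(\beta_{m+r})}\|_{L^2}$, which is killed by a telescoping induction on $r$ using $P_N\varphi\to\varphi$ in $H^{k-1}$; the $L^2(\gamma_k)$ statement then follows by dominated convergence and Fernique, with no rate. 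You instead exploit the full a.s.\ regularity $\varphi\in H^{k-\frac12-\varepsilon}$ of the support of $\gamma_k$, together with one-dimensional product estimates, to place $B_N$ in some $H^{\sigma}$, $\sigma>0$, \emph{uniformly in $N$}, and then gain a factor $N^{-\sigma}$ from $\|P_{>N}f\|_{L^2}\leq N^{-\sigma}\|f\|_{H^{\sigma}}$. This buys an explicit polynomial rate (which the paper deliberately does not pursue for these terms; see the remark following Lemma \ref{Lemma:Wick4}), removes the induction on $r$ entirely, and renders the $L^2(\gamma_k)$ part immediate --- indeed dominated convergence is not even needed, since $\big\|\int A_N P_{>N}B_N\big\|_{L^2(\gamma_k)}\leq N^{-\sigma}\|C(\varphi)\|_{L^2(\gamma_k)}$ with $\|C(\varphi)\|_{L^2(\gamma_k)}$ finite by Fernique. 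The price is that you need sharper multiplication estimates and the genuine support regularity, whereas the paper gets by with $\varphi\in H^{k-1}$ and elementary manipulations; on the other hand, your bookkeeping is actually more careful than the paper's in the second hypothesis with $r\geq 1$, where two factors of order $k-1$ sit inside the projection and a naive $\|\varphi\|_{H^{k-1}}$-based $L^2$ bound on their product does not suffice. Two minor points: (i) the embedding $H^{1/2-\varepsilon}\cdot H^{1/2-\varepsilon}\hookrightarrow H^{1/2-2\varepsilon}$ you quote is the borderline case $s=s_1+s_2-\tfrac12$ of the one-dimensional product rule; the safe statement is the embedding into $H^{\sigma}$ for every $\sigma<\tfrac12-2\varepsilon$, which is all you need since any positive $\sigma$ suffices once $\varepsilon<\tfrac14$; (ii) your identity $\int A_N\,P_{>N}B_N=\int (P_{>N}A_N)(P_{>N}B_N)$ is correct for the bilinear (non-Hermitian) pairing because the symbol of $P_{>N}$ is even, but it is dispensable: Cauchy--Schwarz gives $\big|\int A_N\,P_{>N}B_N\big|\leq\|A_N\|_{L^2}\|P_{>N}B_N\|_{L^2}$ directly.
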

\begin{proof}
In both cases H\"older inequality and of Sobolev embedding yield
\begin{align}\label{DomConv}
\big| \tint u_{N}^{(\beta_{1})}\dots u_{N}^{(\beta_{m-1})} 
& \!\!
\ P_{> N} ( u_{N}^{(\beta_{m})} \ldots u_{N}^{(\beta_{m+r})} )  \big|
\\ \nonumber
& \lesssim  \| \phi \|^{m-1}_{ H^{k-1}} 
\| P_{>N}(u_{N}^{(\beta_m)}\dots u_{N}^{(\beta_{m+r})}) \|_{L^{2}}  \, .
\end{align}
We decompose
\begin{align}\label{AEdecomp}
\| & P_{>N} u_{N}^{(\beta_m)}\dots u_{N}^{(\beta_{m+r})} \|_{L^{2}}
\\ \nonumber
&
\leq 
\|  P_{>N}(u^{(\beta_m)}\dots u^{(\beta_{m+r})}) \|_{L^{2}} 
+ 
\|  ( u_{N}^{(\beta_m)}\dots u_{N}^{(\beta_{m+r})} - u^{(\beta_m)}\dots u^{(\beta_{m+r})}) \|_{L^{2}} \, .
\end{align}
where $u$ is either $\phi$ or $\bar \phi$, in such a way that $u_N \to u$ in the $H^{r-1}$ topology.
Since
$$
\| u^{(\beta_m)}\dots u^{(\beta_{m+r})} \|_{L^{2}} \leq \| \phi \|^{r+1}_{H^{k-1}} \, , 
$$
recalling that $\| \phi \|_{ H^{k-1}} \leq C$ for $\g_k$-almost any $\phi$,
we see that $ \|  P_{>N}(u^{(\beta_m)}\dots u^{(\beta_{m+r})}) \|_{L^{2}}$ converges to zero 
$\g_k$-a.s. as $N \to \infty$.
Then, taking advantage of the
multilinearity of the monomials involved, also the second term on the right hand side of \eqref{AEdecomp} similarly vanishes.
Indeed, this is clear when $r=0$ and, assuming this true for any integer smaller than $r$, we can show it decomposing
\begin{align}\nonumber
 \| & P_{>N} ( u_{N}^{(\beta_m)}\dots u_{N}^{(\beta_{m+r})} - u^{(\beta_m)}\dots u^{(\beta_{m+r})}) \|_{L^{2}}
\leq
 \|  u_{N}^{(\beta_m)}\dots u_{N}^{(\beta_{m+r-1})} (u_{N}^{\beta_{m+r}} - u^{\beta_{m+r}} )  \|_{L^{2}}
\\ \nonumber
& +
\| ( u_{N}^{(\beta_m)}\dots u_{N}^{(\beta_{m+r-1})} - u^{(\beta_m)}\dots u^{(\beta_{m+r-1})} ) u^{\beta_{m+r}}   \|_{L^{2}} 
\\ \nonumber
& \leq \| \phi  \|^{r}_{H^{k-1}} \| u_{N}^{\beta_{m+r}} \! - \!  u^{\beta_{m+r}}  \|_{L^{2}} + 
\|  u_{N}^{(\beta_m)}\dots u_{N}^{(\beta_{m+r-1})} \! - \! u^{(\beta_m)}\dots u^{(\beta_{m+r-1})} \|_{L^{2}} \| \phi \|_{H^{k-1}} \, ,
\end{align}
that, as a consequence of the induction assumption, clearly goes to zero $\g_k$-a.s., as $N \to \infty$. 

In conclusion, we have shown that $\|  P_{>N}(u_N^{(\beta_m)}\dots u_N^{(\beta_{m+r})}) \|_{L^{2}}$ 
converges to zero $\g_{k}$-a.s., as $N\to\infty$. 
Looking at the \eqref{DomConv} and recalling that $\| \phi \|_{H^{k-1}}$ is $\g_k$-a.s. finite, we have shown that
$ \tint u_N^{(\beta_{1})}\dots u_N^{(\beta_{m-1})}  P_{> N} ( u_N^{(\beta_{m})} \ldots u_N^{(\beta_{m+r})}) $
converges to zero 
$\g_k$-almost surely. Since 
$$
\big| \tint u_{N}^{(\beta_{1})}\dots u_{N}^{(\beta_{m-1})}  P_{> N} ( u_{N}^{(\beta_{m})} \ldots u_{N}^{(\beta_{m+r})} )  \big|^2
\lesssim \| \phi \|^{2(m+r)}_{ H^{k-1}} \, ,
$$
and $\| \phi \|^{2(m+r)}_{ H^{k-1}}$ is integrable with respect to $\g_k$,
also the~$L^{2}(\g_k)$ convergence follows and the proof is concluded. 
The integrability of $\| \phi \|^{2(m+r)}_{ H^{k-1}}$ is a consequence of the Fernique Theorem; see \cite{kuo} Chapter 3 Theorem 3.1.
\end{proof}


\begin{lemma}\label{UsLem}
Let $k\geq2$, $2 \leq m \in \N$ and $\alpha_1 \leq \dots \leq \alpha_{j} \ldots \leq \alpha_m$, with $\alpha_m\leq k-1$ and $\alpha_{m-1}\leq k-2$. Then, letting 
$u_N$ denote either $P_N \varphi $ or $P_N \bar\varphi$, we have
\be\label{PASBis}
D_{N} \int u_N^{(\alpha_1)}\dots u_N^{(\alpha_m)}\longrightarrow0 \quad \mbox{as }N\to\infty \, ,
\ee
$\g_k$-a.s. and in $L^{2}(\g_k)$ mean.
\end{lemma}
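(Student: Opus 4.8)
The plan is to apply the operator $D_N$ explicitly through its Leibniz definition \eqref{TimeDerivativeOp} and then reduce every resulting term to an expression covered by Lemma \ref{LemmaObv}. Writing $G := i c_1 |u_N|^2 u_N' + i c_2 u_N^2 \bar u_N' + c_3 |u_N|^4 u_N + c_4 |u_N|^2 u_N$ for the nonlinearity, \eqref{TimeDerivativeOp} gives
\[
D_N \tint u_N^{(\alpha_1)} \cdots u_N^{(\alpha_m)} = i \sum_{j=1}^m \tint \Big( \prod_{i \neq j} u_N^{(\alpha_i)} \Big) \P\big( G^{(\alpha_j)} \big),
\]
so it suffices to treat each summand separately. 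Expanding $G^{(\alpha_j)} = \partial_x^{\alpha_j} G$ by the Leibniz rule, each summand becomes a finite linear combination of monomials $\tint \big( \prod_{i \neq j} u_N^{(\alpha_i)} \big)\, \P\big( u_N^{(\gamma_1)} \cdots u_N^{(\gamma_p)} \big)$, where $p \in \{3,5\}$ and $\gamma_1 + \cdots + \gamma_p = \alpha_j + 1$ for the two cubic pieces of $G$ (which carry one spatial derivative) and $\gamma_1 + \cdots + \gamma_p = \alpha_j$ for the two quintic pieces.

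Next I would carry out the differential-degree bookkeeping, whose outcome is that in every such monomial at most two factors have differential degree $\geq k-1$ and none has degree $>k$. Outside $\P$ there is at most one factor of degree $\geq k-1$: none if $j=m$ (since $\{\alpha_i : i<m\}$ are all $\leq k-2$), and only possibly $\alpha_m \leq k-1$ if $j\neq m$. Inside $\P$, since $\gamma_1 + \cdots + \gamma_p \leq \alpha_j + 1 \leq k$, the number of factors of degree $\geq k-1$ is at most $\lfloor k/(k-1)\rfloor$, i.e. at most one for $k \geq 3$ and at most two for $k=2$. Crucially, producing a big outer factor requires $j \neq m$, which forces $\alpha_j \leq k-2$ and hence keeps the inner sum $\leq k-1$ with at most one big inner factor; conversely two big inner factors can arise only when $j=m$, leaving no big outer factor. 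Thus the total number of factors of degree $\geq k-1$ never exceeds two. After reordering, every monomial all of whose degrees are $\leq k-1$ is then exactly of one of the two types in Lemma \ref{LemmaObv}, and therefore tends to $0$ both $\g_k$-a.s. and in $L^2(\g_k)$.

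The only genuinely delicate contributions are the cubic ones with $j=m$ and $\alpha_m = k-1$ for which all $k-1$ derivatives fall on the derivative-bearing factor, producing a factor $u_N^{(k)}$ inside $\P$, e.g. $\tint \big( \prod_{i<m} u_N^{(\alpha_i)} \big)\, \P\big( u_N \bar u_N u_N^{(k)} \big)$ (and its conjugate-type analogue coming from $c_2 u_N^2 \bar u_N'$). For these I would integrate by parts once: using $[\P,\partial_x]=0$ together with $u_N \bar u_N u_N^{(k)} = \partial_x\big(u_N \bar u_N u_N^{(k-1)}\big) - (u_N \bar u_N)' u_N^{(k-1)}$, the total-derivative part moves a derivative onto the outer product $\prod_{i<m} u_N^{(\alpha_i)}$, raising a single outer index from $\leq k-2$ to $\leq k-1$ while the inner factor drops to $u_N^{(k-1)}$, and the remaining part $\P\big((u_N\bar u_N)' u_N^{(k-1)}\big)$ already has maximal inner degree $k-1$. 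Either way one lands on monomials with no factor of degree $k$ and at most two factors of degree $k-1$, again covered by Lemma \ref{LemmaObv}. Summing the finitely many terms produced by $D_N$ then yields \eqref{PASBis}. The main obstacle is precisely this last step: the top derivative $u_N^{(k)}$ genuinely exceeds the range admitted by Lemma \ref{LemmaObv} and must be lowered by the single integration by parts, all while preserving the constraint that no more than two indices reach $k-1$.
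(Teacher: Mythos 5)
Your proof is correct and follows essentially the same route as the paper's: expand $D_N$ via the Leibniz rule, observe that all resulting monomials fall under Lemma \ref{LemmaObv} except those where the full $\alpha_m+1=k$ derivatives land on a single factor inside $P_{>N}$, and remove that top-order factor $u_N^{(k)}$ by a single integration by parts, after which both resulting families of terms (one outer index raised to $\leq k-1$ with one big inner factor, or all outer indices $\leq k-2$ with two inner factors $\leq k-1$) are exactly the two admissible configurations of Lemma \ref{LemmaObv}. Your write-up is in fact more explicit than the paper's (which only states the two index configurations and invokes the integration by parts in one line), and the only blemish is the harmless mislabeling of the cubic term $|u|^2u$ as ``quintic''.
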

\begin{proof}
Recalling \eqref{TimeDerivativeOp}, after reordering the indexes, the expression \eqref{PASBis} is   
a linear combination of
terms of the form
\begin{equation}\label{GenForm}
\int u_N^{(\beta_{1})} \dots u_N^{(\beta_{m-1})} \ P_{> N} ( u_N^{(\beta_{m})} \ldots u_N^{(\beta_{m+r})} ) \, ,
\end{equation}
where $3 \leq r \in \N$ and either 
\begin{equation}\label{FinalmCasoDiff}
\beta_{j} \leq k-2 \quad \mbox{for $j=1, \ldots, m+r-1$}, \qquad \beta_{m+r} \leq k \, , 
\end{equation}
or 
\begin{equation}\label{FinalmCasoFacil}
\beta_{j} \leq k-2 \quad \mbox{for $j=1, \ldots m-2$, $j= m, \ldots, m+ r-1$}, \qquad \beta_{m-1}, \beta_{m+r} \leq k-1 \, .
\end{equation}
If we are in the case of \eqref{FinalmCasoFacil}, we can simply apply Lemma \ref{LemmaObv} to deduce the statement.
In the case \eqref{FinalmCasoDiff}, after integration by parts we reduce to a linear combination of terms of the
form \eqref{GenForm}, but with $\beta_{j}$ which satisfies \eqref{FinalmCasoFacil} or such that
$$
\beta_{j} \leq k-2 \quad \mbox{for all $j=1, \ldots m + r -2$}, \qquad  \beta_{m + r -1}, \beta_{m+r} \leq k-1 \, .
$$
Since we are still under the assumptions of Lemma \ref{LemmaObv}, we can use it to control these terms too, so that the proof is concluded.
\end{proof}

To evaluate the contribution of \eqref{AllOfThem1}, \eqref{AllOfThem2} to \eqref{eq:goalBis} we need a different approach, based on the Wick theorem,
that we shall use in the following form. Let $\ell \in \N$ and $S_{\ell}$ be the 
symmetric group on $\{1,\dots,\ell\}$, whose elements are denoted by $\s$. Then
\be\label{eq:Wick}
\E \left[ \prod_{j=1}^{\ell}  \phi(n_j)  \bar \phi(m_j)  \right]=
\sum_{\sigma \in S_{\ell}}\prod_{j=1}^{\ell} \frac{\d_{m_j,n_{\sigma(j)}}}{1+n_{\sigma(j)}^{2k}} \, ,
\ee
where we recall that $\E$ is the expectation with respect to the Gaussian measure $\g_k$, so that
\begin{equation}\label{ExpVsL2}
\E \left[ \, f \bar f  \, \right] = \| f \|^{2}_{L^{2}(\g_k)} \, .
\end{equation} 
We say that $\sigma$ contracts the pairs of indexes $(m_j,n_{\sigma(j)})$.  

\begin{lemma}\label{Lemma:Wick6}
Let $N \in \N$, $0 \leq \beta_1 \leq k$ and $0 \leq \beta_{j} \leq k-1$ for $j = 2, \ldots, 8$. 
We have that
\begin{align}\label{Eq:Wick1Giuseppe}
\bigg\| \int \P  &\left( (P_N \phi)^{(\beta_{1})}  (P_N \bar \phi)^{(\beta_{5})} (P_N \phi)^{(\beta_{2})} \right)
\\ \nonumber
&
\P \left(  (P_N \bar \phi)^{(\beta_{6})} (P_N \phi)^{(\beta_{3})}  (P_N \bar \phi)^{(\beta_{7})} (P_N \phi)^{(\beta_{4})}  (P_N \bar \phi)^{(\beta_{8})} \right) \bigg\|_{L^{2}(\gamma_{k})}\lesssim \frac{\ln N}{\sqrt N} \, .
\end{align}
\end{lemma}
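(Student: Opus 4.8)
The plan is to pass to Fourier coefficients and estimate the $L^{2}(\g_k)$ norm through the Wick expansion \eqref{eq:Wick}. Denote by $X$ the quantity inside the norm in \eqref{Eq:Wick1Giuseppe}, and let $A$, $B$ be the two products to which the projections $\P$ are applied, both built from $P_N\phi$; thus $X=\int_{\T}(\P A)(\P B)$. Expanding in Fourier and using that $\int_{\T}(\P f)(\P g)$ retains only the modes of total frequency zero (both forced to have modulus $>N$), one writes
\[
X=\sum_{\vec n} c(\vec n)\,\phi(n_1)\phi(n_2)\phi(n_3)\phi(n_4)\,\overline{\phi(n_5)\,\phi(n_6)\,\phi(n_7)\,\phi(n_8)},
\]
where $\vec n=(n_1,\dots,n_8)$ runs over $|n_j|\leq N$ subject to the zero total-frequency relation $n_1+n_2+n_3+n_4=n_5+n_6+n_7+n_8$ and the high-frequency constraint $|a|>N$ with $a:=n_1+n_2-n_5$, while $c(\vec n)=\prod_j(\pm i n_j)^{\beta_j}$ collects the derivative weights; the single effect of the two projections $\P$ is the constraint $|a|>N$. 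Writing $\langle n\rangle:=(1+n^{2})^{1/2}$ and bounding $|n_j|^{\beta_j}\leq\langle n_j\rangle^{\beta_j}$, which is nondecreasing in $\beta_j$, I would first reduce to the extremal exponents $\beta_1=k$ and $\beta_2=\dots=\beta_8=k-1$, for which the estimate is hardest.

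By \eqref{ExpVsL2} we must bound $\E[X\bar X]$. The product $X\bar X$ has degree $8$ in the coefficients $\phi(\cdot)$ and degree $8$ in $\overline{\phi(\cdot)}$, so \eqref{eq:Wick} expands it as a sum over the pairings $\sigma\in S_8$ contracting each unconjugated coefficient with one conjugated one, every contracted pair of equal frequency $n$ carrying a covariance $1/(1+n^{2k})$. Absorbing the derivative weights, a frequency $n$ contracted between two factors of exponents $\beta,\beta'$ produces
\[
\frac{|n|^{\beta+\beta'}}{1+n^{2k}}\leq\langle n\rangle^{\beta+\beta'-2k},
\]
which is $\lesssim\langle n\rangle^{-2}$ if $\beta,\beta'\leq k-1$, is $\lesssim\langle n\rangle^{-1}$ if exactly one exponent equals $k$, and is $\lesssim1$ if both do — the exponent $k$ being carried only by the $\beta_1$-factor of $X$ and its conjugate image in $\bar X$. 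The estimate of the resulting frequency sums rests on the elementary bounds
\[
\sum_{n}\langle n\rangle^{-2}\langle p-n\rangle^{-2}\lesssim\langle p\rangle^{-2},\qquad
\sum_{n}\langle n\rangle^{-1}\langle p-n\rangle^{-1}\lesssim\langle p\rangle^{-1}\ln(2+|p|),
\]
together with $\sum_{|n|>M}\langle n\rangle^{-2}\lesssim M^{-1}$ and $\sum_{|n|\leq M}\langle n\rangle^{-1}\lesssim\ln(2+M)$.

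The decisive issue is how the critical factor $\phi^{(k)}$, carried by $n_1$, is contracted and how the constraint $|a|>N$ is spent. In the diagonal pairings, where $n_1$ meets the conjugate image of the same factor, the associated weight is only $O(1)$ in $n_1$, but then $n_1$ is pinned by the zero-frequency relation while $|a|>N$ forces one of the remaining, summable, frequencies beyond $N$; this yields a clean factor $N^{-1}$ with no logarithmic loss. In the off-diagonal pairings the exponent-$k$ factor of $X$ and that of $\bar X$ are instead contracted against exponent-$(k-1)$ factors, producing up to two near-critical weights $\langle\cdot\rangle^{-1}$; here the constraint $|a|>N$ both caps the borderline summations at frequency scale $N$, so that each contributes at most $\ln N$, and simultaneously extracts the overall decay $N^{-1}$ by pinning a large frequency. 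Summing over the finitely many pairing classes, the worst contribution is $\lesssim(\ln N)^{2}/N$, whence $\|X\|_{L^{2}(\g_k)}\lesssim\ln N/\sqrt N$.

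The main obstacle is exactly this bookkeeping for the critical pairings: after solving the delta constraints one must identify which linear combination of the surviving frequencies is forced to exceed $N$, confirm that the near-critical sums each cost only a logarithm while the high-frequency gain supplies the single power $N^{-1}$, and rule out any pairing that conspires to produce a third logarithm or a weaker power of $N$. Once the extremal exponents are fixed and the pairings are organised by the fate of the two exponent-$k$ factors, the remaining work is the systematic application of the four elementary sum bounds above.
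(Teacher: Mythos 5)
Your proposal is correct and follows essentially the same route as the paper: Fourier expansion plus the Wick formula, a case split according to whether the two exponent-$k$ factors (that of $X$ and that of $\bar X$) are contracted with each other or with exponent-$(k-1)$ factors, momentum conservation pinning one frequency, and the high-frequency constraint forcing a large summable frequency, yielding $N^{-1}$ in the diagonal case and at worst $(\ln N)^2 N^{-1}$ in the off-diagonal case, hence $\ln N/\sqrt N$ after taking the square root. The bookkeeping you defer is precisely what the paper carries out by covering each constraint set $A_N^{\sigma}$ with the sets $A_N^{\sigma,\ell}=\{|n_\ell|>N/5\}$ and summing with one variable eliminated, and it confirms your predicted case-by-case bounds (with the caveat that the logarithmic caps come from the support restriction $|n_j|\leq N$ of $P_N$ rather than from the constraint $|a|>N$ itself).
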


\begin{proof}
Writing
\begin{align}\nonumber
\P & \left( (P_N \phi)^{(\beta_{1})}  (P_N \bar \phi)^{(\beta_{5})} (P_N \phi)^{(\beta_{2})} \right)
\P \left(  (P_N \bar \phi)^{(\beta_{6})} (P_N \phi)^{(\beta_{3})}  (P_N \bar \phi)^{(\beta_{7})} (P_N \phi)^{(\beta_{4})}  (P_N \bar \phi)^{(\beta_{8})} \right) 
\\ \nonumber
& =    \!\!\!\!\!\!\!\!\!\!\!\!\!\!\!\!
\sum_{\substack{ | n_{1} - m_{1} + n_{2} | > N \\ | m_{2} - n_{3} + m_{3} - n_{4} + m_{4}| > N  \\ |n_j|, |m_j|  \leq N, \, j = 1, \ldots, 4} }
\!\!\!\!\ 
\left(\prod_{j=1}^{4} (i n_j)^{\beta_j}  (-i m_j )^{\beta_{j+4}} \phi (n_j)   \bar \phi(m_j) \! \right)
\! e^{i(n_1 - m_1 + n_2 -m_2 + n_3 - m_3 + n_4 - m_4)x}
\end{align}
and its conjugate as
\begin{align}\nonumber
\P & \left(  (P_N \bar \phi)^{(\beta_{1})}  (P_N \phi)^{(\beta_{5})}  (P_N \bar \phi)^{(\beta_{2})} \right)
\P \left( (P_N \phi)^{(\beta_{6})}  (P_N \bar \phi)^{(\beta_{3})}  (P_N \phi)^{(\beta_{7})}  (P_N \bar \phi)^{(\beta_{4})}  (P_N \phi)^{(\beta_{8})} \right) 
\\ \nonumber
& = \!\!\!\!\!\!\!\!\!\!\!\!\!\!\!\!
\sum_{\substack{ | m_{5} - n_{5} + m_{6} | > N \\ | n_{6} - m_{7} + n_{7} - m_{8} + n_{8}| > N  \\ |n_j|, |m_j| \leq N, \, j = 5, \ldots, 8} } 
\!\!
\left(\prod_{j=5}^{8} (i n_j )^{\beta_j}  (-i m_j )^{\beta_{j-4}} \phi(n_j)   \bar \phi(m_j) \! \right)
\! e^{-i(m_5 - n_5 + m_6 - n_6 + m_7 - n_7 + m_8 - n_8)x} \, ,
\end{align}
using \eqref{ExpVsL2}, 
we see that the square of the l.h.s. of  (\ref{Eq:Wick1Giuseppe}) can be written in the compact form
$$
\sum_{A_{N}} 
 \prod_{j=1}^8 \left[n_{j}^{\beta_{j}} m_{j}^{\beta_{[j+4]_8}}\right]\E\left[\prod_{j=1}^8  \phi(n_j)   \bar \phi(m_j)\right]\,,
$$
where, letting 
$ n = (n_1, \ldots, n_8)$, $m = (m_{1}, \ldots, m_{8})$ 
we have defined
\begin{equation}\nonumber
A_{N} := \left\{ (n, m) : 
\begin{array}{l}
\sum_{j=1}^4 n_{j}=\sum_{j=1}^4 m_{j}\,,\quad \sum_{j=5}^8n_{j}=\sum_{j=5}^8 m_{j}
\\
|n_{j}| \leq N, \ |m_{j}| \leq N, \quad j =1, \ldots, 8\,,
\\
| n_{1} - m_{1} + n_{2} | > N\,, \quad 
| m_{5} - n_{5} + m_{6}| > N\,,
\\
| m_{2} - n_{3} + m_{3} - n_{4} + m_{4}| > N \,,
\\
|  n_{6} - m_{7} + n_{7} - m_{8} + n_{8}| > N 
\end{array}
\right\} \, .
\end{equation}
Since
$n_{1} = \sum_{j=1}^{4} m_j - \sum_{j=2}^{4} n_{j}$ and $m_5 = \sum_{j=5}^{8} n_{j} - \sum_{j=6}^{8} m_1$, the 
condition $| n_{1} - m_{1} + n_{2} | > N$ reduces to $| m_{2} - n_{3} + m_{3} - n_{4} + m_{4}| > N$ 
and $| m_{5} - n_{5} + m_{6}| > N$ reduces to $|n_{6} - m_{7} + n_{7} - m_{8} + n_{8}| > N$. Thus we can rewrite $A_{N}$
as
\begin{equation}\nonumber
A_{N} = \left\{ (n, m) : 
\begin{array}{l}
\sum_{j=1}^4 n_{j}=\sum_{j=1}^4 m_{j}, \quad \sum_{j=5}^8n_{j}=\sum_{j=5}^8 m_{j}
\\
|n_{j}| \leq N, \ |m_{j}| \leq N, \quad j =1, \ldots, 8\,,
\\
| m_{2} - n_{3} + m_{3} - n_{4} + m_{4}| > N \,,
 \quad |  n_{6} - m_{7} + n_{7} - m_{8} + n_{8}| > N  
\end{array}
\right\} \, .
\end{equation}
Now we use the Wick formula (\ref{eq:Wick}), with $\ell=8$, and obtain
\begin{align}\nonumber
&
\sum_{A_{N}} 
\prod_{j=1}^8 \left[n_{j}^{\beta_{j}} m_{j}^{\beta_{[j+4]_8}}\right]\E\left[\prod_{j=1}^8  \phi(n_j)   \bar \phi(m_j)\right]
=
\sum_{A_{N}}  \prod_{j=1}^8 \left[n_{j}^{\beta_{j}} m_{j}^{\beta_{[j+4]_8}}\right]
\sum_{\sigma \in S_{8}}\prod_{j=1}^{8} \frac{\d_{m_j,n_{\sigma(j)}}}{1+n_{\sigma(j)}^{2k}}
\\ \label{TTSH}
& 
=
\sum_{\sigma \in S_{8}}
\sum_{A^{\sigma}_{N}}  \prod_{j=1}^8 \left[n_{j}^{\beta_{j}} n_{\sigma(j)}^{\beta_{[j+4]_8}}\right]
\prod_{j=1}^{8}\frac{1}{1+n_{\sigma(j)}^{2k}}
=
\sum_{\sigma \in S_{8}} 
\sum_{A^{\sigma}_{N}} 
\prod_{j=1}^8\frac{  n_j^{  \b_{j} + \b_{[\sigma^{-1}(j)+4]_8 } } } {1+n_j^{2k}}
\end{align}
where 
\begin{equation}\nonumber
A^{\sigma}_{N} := \left\{ n :
\begin{array}{l}
\sum_{j=1}^4n_{j} =\sum_{j=1}^4 n_{\s(j)}, \quad
\sum_{j=5}^8n_{j}=\sum_{j=5}^8 n_{\s(j)} ,
\\
|n_{j}| \leq N, \quad j =1, \ldots, 8\,,
\\
| n_{\s(2)} - n_{3} + n_{\s(3)} - n_{4} + n_{\s(4)}| > N,
\\
|  n_{6} - n_{\s(7)} + n_{7} - n_{\s(8)} + n_{8}| > N 
\end{array}
\right\}   \, .
\end{equation}
We will bound the sum over $\sigma$ in \eqref{TTSH}, term by term, distinguishing the permutations which 
satisfies $\sigma(5) = 1$ and the ones such that~$\sigma(5) \neq 1$.
\subsubsection*{Case $\sigma(5) = 1$}
Noting that $\b_{[\sigma^{-1}(1)+4]_8} = \b_{[5+4]_8} = \beta_1$, 
we
see that 
$\beta_{j} + \b_{[\sigma^{-1}(j)+4]_8 } \leq 2k-2$ for all $j = 2, \ldots, 8$. 
Using this, we can estimate  
\begin{equation}\label{PPAS}
\sum_{A^{\sigma}_{N}} 
\prod_{j=1}^{8}\frac{|n_j|^{ \b_{j}+ \b_{[\sigma^{-1}(j)+4]_8 } }  } {1+n_j^{2k}}
\lesssim
\sum_{ \mathcal{A}^{\sigma}_{N}} 
\prod_{j=2}^{8} \frac{ 1 } {1+ n_j^2}
\, ,
\end{equation}
where 
\begin{equation}\nonumber
\mathcal{A}^{\sigma}_{N} \! := \! \left\{ (n_2, \ldots, n_8) \! : \! 
| n_{\s(2)} - n_{3} + n_{\s(3)} - n_{4} + n_{\s(4)}| > N , 
\, |  n_{6} - n_{\s(7)} + n_{7} - n_{\s(8)} + n_{8}| > N  
\right\}  
\end{equation}
and we have removed $n_1$ by the summation thanks to the relation $n_{1} = \sum_{j=1}^{4} n_{\sigma(j)} - \sum_{j=2}^{4} n_{j}$
and to the fact that $\mathcal{A}^{\sigma}_{N}$ is independent on $n_1$.
Since $\sigma(5) =1$, it is clear that we can cover $\mathcal{A}^{\sigma}_{N}$ with the following sets
$$
\mathcal{A}^{\s, \ell}_{N} := \{ (n_2, \ldots, n_8) : |n_{\ell}| >  N/5 \}\, , 
$$
where $\ell \in\{3,4,\sigma(2),\sigma(3),\sigma(4)\}$,
and 
that the sum over $\mathcal{A}^{\sigma}_{N}$ in \eqref{PPAS} is bounded by the total contribution of the sums over
these sets. We will show that 
\begin{equation}\label{WWSOT}
\sum_{ \mathcal{A}^{\s, \s(2)}_{N}} 
\prod_{j=2}^{8} \frac{ 1 } {1+ n_j^2} \lesssim \frac{1}{N} \, ,
\end{equation}
all the other sums can be treated in the same way. We have
\begin{equation}\nonumber
\sum_{ \mathcal{A}^{\s, \s(2)}_{N}} 
\prod_{j=2}^{8} \frac{ 1 } {1+ n_j^2}  
 \lesssim
 \sum_{|n_{\s(2)}| > N/5} \frac{1}{n_{\s(2)}^{2}}  
 \prod_{\substack{ j = 2, \ldots, 8 \\ j \neq \sigma(2)} }  
 \sum_{ n_j  \in \Z} \frac{1}{ 1+ n_j^{2}} 
\lesssim \frac{1}{N} \, .
\end{equation}
\subsubsection*{Case $\sigma(5) \neq 1$}
We can write
\begin{equation}\label{DaRif}
\sum_{A^{\sigma}_{N}} 
\prod_{j=1}^{8}\frac{|n_j|^{ \b_{j}+ \b_{[\sigma^{-1}(j)+4]_8 } }  } {1+n_j^{2k}} = 
\sum_{A^{\sigma}_{N}} 
\left(
\frac{|n_1|^{ \b_{1} + \b_{[\sigma^{-1}(1)+4]_8 } }  } {1+n_1^{2k}} 
\frac{|n_{\sigma(5)}|^{ \b_{1} + \b_{\sigma(5)} }} {1+n_{\sigma(5)}^{2k}} 
\prod_{ \substack{j=2, \ldots 8, \\ j \neq \sigma(5)  } } \frac{|n_j|^{\b_{j} + \b_{[\sigma^{-1}(j)+4]_8 } }  } {1 + n_j^{2k}}  
\right)
\, ,
\end{equation} 
and we notice, $j=\ell, \ldots, 8$:
\begin{equation}\label{LogWick}
\b_{1} + \b_{[\sigma^{-1}(1)+4]_8 } \leq 2k - 1, 
\quad 
\b_{1} + \b_{\sigma(5)} \leq 2k-1 ,
\quad
\b_{\ell} + \b_{[\sigma^{-1}(\ell)+4]_8} \leq 2k -2, \ \ell \notin \{1, \s(5)\} 
\, .
\end{equation}
Then we cover 
$A^{\sigma}_{N}$ with the following sets, $\ell=1, \ldots, 4:$
$$
A^{\s, \ell}_{N} := \left\{ n : |n_{\ell}| >  N/5, \ |n_{1}|, |n_{\s(5)}| \leq N, \ \sum_{j=1}^4n_{j} =\sum_{j=1}^4 n_{\s(j)} \right\} , 
$$
and $\ell = 5, \ldots, 8$:
$$
A^{\s, \ell}_{N} := \left\{ n : |n_{\ell}| >  N/5, \ |n_{1}|, |n_{\s(5)}| \leq N, \ \sum_{j=5}^8 n_{j} =\sum_{j=5}^8 n_{\s(j)} \right\} .
$$
It is
clear that the sum over $A^{\sigma}_{N}$ in \eqref{DaRif} is bounded by the total contribution of the sums over
these sets.
Looking at \eqref{DaRif} and using the first condition in \eqref{LogWick} and $n_1 = \sum_{j=1}^4 n_{\s(j)} - \sum_{j=2}^4 n_{j}$ we can bound
\begin{equation}\nonumber
 \sum_{A^{\sigma, 1}_{N}} 
\prod_{j=2}^{8}\frac{|n_j|^{ \b_{j}+ \b_{[\sigma^{-1}(j)+4]_8 } }  } {1+n_j^{2k}} \lesssim 
\frac{1}{N}  
\sum_{|n_{\s(5)}| \leq N}  {\frac{1}{1 + |n_{\s(5)}|}} 
\prod_{ \substack{j=2, \ldots 8, \\ j \neq \sigma(5)  } } \sum_{n_j \in \Z}\frac{ 1 } {1 + n_j^{2}} \lesssim \frac{\ln N}{N}  
\, ,
\end{equation} 
Similarly, using the second condition in \eqref{LogWick} and $n_{\s(5)} = \sum_{j=5}^8 n_{j} - \sum_{j=6}^8 n_{\s(j)}$, we get
\begin{equation}\nonumber
 \sum_{A^{\sigma, \s(5)}_{N}} 
\prod_{j=2}^{8}\frac{|n_j|^{ \b_{j}+ \b_{[\sigma^{-1}(j)+4]_8 } }  } {1+n_j^{2k}} \lesssim 
\frac{1}{N}  
\sum_{|n_{1}| \leq N}  {\frac{1}{1 + |n_{1}|}} 
\prod_{ \substack{j=2, \ldots 8, \\ j \neq \sigma(5)  } } \sum_{n_j \in \Z}\frac{ 1 } {1 + n_j^{2}} \lesssim \frac{\ln N}{N}  
\, .
\end{equation} 
If $j \notin \{ 1, \s(5) \}$, we can use the last condition in \eqref{LogWick} to bound
\begin{align}\nonumber
 \sum_{A^{\sigma, \ell}_{N}} 
 & 
 \prod_{j=1}^{8}\frac{|n_j|^{ \b_{j}+ \b_{[\sigma^{-1}(j)+4]_8 } }  } {1+n_j^{2k}}
  \\ \nonumber
 & 
 \lesssim  
\sum_{|n_{1}| \leq N}  {\frac{1}{1 + |n_{1}|}} 
\sum_{|n_{\s(5)}| \leq N}  {\frac{1}{1 + |n_{\s(5)}|}}
\sum_{|n_{\ell}| > N/5} \frac{1}{ n_{\ell}^{2}} 
\prod_{ \substack{j=1, \ldots 8, \\ j \notin{1, \ell, \s(5)}  } } \sum_{n_j \in \Z}\frac{ 1 } {1 + n_j^{2}} 
\lesssim \frac{(\ln N)^2}{N}  
\, ,
\end{align} 
that concludes the proof.
\end{proof}


\begin{lemma}\label{Lemma:Wick4}
Let $N \in \N$, $0 \leq \beta_1 \leq k$ and $0 \leq \beta_{j} \leq k-1$ for $j = 2, \ldots, 6$.
We have that
\begin{equation}\nonumber
\left\| \int \P \left( (P_N \phi)^{(\beta_{1})}  (P_N \bar \phi)^{(\beta_{4})} (P_N \phi)^{(\beta_{2})} \right)
\P \left(  (P_N \bar \phi)^{(\beta_{5})} (P_N \phi)^{(\beta_{3})}  (P_N \bar \phi)^{(\beta_{6})}\right) \right\|_{L^{2}(\gamma_{k})}
\lesssim \frac{\ln N}{\sqrt N} \, .
\end{equation}
\end{lemma}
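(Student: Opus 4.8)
The proof follows the same scheme as the proof of Lemma~\ref{Lemma:Wick6}, with the simplification that only $\ell=6$ Fourier variables are involved. The plan is to expand the integrand and its complex conjugate in Fourier series, use \eqref{ExpVsL2} to rewrite the square of the $L^2(\g_k)$ norm as the expectation of a degree-$12$ monomial in the Gaussian variables, and evaluate it via the Wick formula \eqref{eq:Wick}. First I would write the integrand so that the first $\P$ carries the factors $\phi(n_1),\bar\phi(m_1),\phi(n_2)$ with output frequency $n_1-m_1+n_2$ and the second $\P$ carries $\bar\phi(m_2),\phi(n_3),\bar\phi(m_3)$ with output frequency $-m_2+n_3-m_3$, assigning the exponents $\beta_1,\beta_2,\beta_3$ to the $\phi$-indices and $\beta_4,\beta_5,\beta_6$ to the $\bar\phi$-indices; the conjugate then contributes the analogous factors with indices $n_4,n_5,n_6$ and $m_4,m_5,m_6$, so that in the unified notation $n_j$ carries exponent $\beta_j$ and $m_j$ carries exponent $\beta_{[j+3]_6}$. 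Integrating over $\T$ imposes the two momentum constraints $n_1+n_2+n_3=m_1+m_2+m_3$ and $n_4+n_5+n_6=m_4+m_5+m_6$, while the projections $\P$ impose the frequency-separation conditions $|n_1-m_1+n_2|>N$ and its conjugate analogue. Exactly as in Lemma~\ref{Lemma:Wick6}, the momentum constraints make the two separation conditions inside each factor coincide (indeed $n_1-m_1+n_2=m_2-n_3+m_3$), so only one separation condition per factor survives.

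Next I would apply the Wick formula \eqref{eq:Wick} with $\ell=6$, turning the expectation into a sum over $\sigma\in S_6$ of constrained lattice sums of the shape $\sum_{A^\sigma_N}\prod_{j=1}^6 \frac{|n_j|^{\beta_j+\beta_{[\sigma^{-1}(j)+3]_6}}}{1+n_j^{2k}}$, where $A^\sigma_N$ records the two momentum identities $\sum_{j=1}^3 n_j=\sum_{j=1}^3 n_{\sigma(j)}$ and $\sum_{j=4}^6 n_j=\sum_{j=4}^6 n_{\sigma(j)}$ together with the two surviving separation conditions $|n_{\sigma(2)}-n_3+n_{\sigma(3)}|>N$ and $|n_{\sigma(4)}-n_4+n_{\sigma(5)}|>N$. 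The crucial input is the exponent budget coming from the hypotheses $\beta_1\leq k$ and $\beta_j\leq k-1$ for $j\geq 2$: any numerator exponent pairing two indices different from $\beta_1$ is $\leq 2k-2$ (decay $\frac{1}{1+n^2}$), any pair containing exactly one $\beta_1$ is $\leq 2k-1$ (borderline decay $\frac{1}{1+|n|}$), and only the doubled pair $2\beta_1\leq 2k$ may fail to decay. Since $\beta_1$ occurs on $n_1$ and on $m_4$ (because $[4+3]_6=1$), I split into the case $\sigma(4)=1$, in which both copies of $\beta_1$ land on $n_1$, and the case $\sigma(4)\neq 1$, in which the two copies sit on $n_1$ and on $n_{\sigma(4)}$, each giving a borderline factor.

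In the case $\sigma(4)=1$ I would eliminate $n_1$ through the momentum relation $n_1=n_{\sigma(1)}+n_{\sigma(2)}+n_{\sigma(3)}-n_2-n_3$; since $1=\sigma(4)$ is a bijective image it cannot occur among the indices of the reduced separation condition $|n_{\sigma(2)}-n_3+n_{\sigma(3)}|>N$, so that condition and all remaining factors involve only variables of genuine $\frac{1}{1+n^2}$ decay. Covering $A^\sigma_N$ by the events $\{|n_\ell|>N/5\}$ for $\ell\in\{\sigma(2),3,\sigma(3)\}$ and summing the corresponding $\frac{1}{n^2}$ tail yields $\frac1N$, while the remaining sums over $\Z$ are $O(1)$, so this case contributes $\lesssim \frac1N$. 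In the case $\sigma(4)\neq 1$ the two borderline sums over $|n|\leq N$ each produce a factor $\ln N$, the separation condition again forces one good-decay variable to have modulus $>N/5$ (giving $\frac1N$), and the free sums are $O(1)$, so this case contributes $\lesssim \frac{(\ln N)^2}{N}$. Summing the finitely many $\sigma\in S_6$ then bounds the square of the left-hand side by $\frac{(\ln N)^2}{N}$, and the claim follows after taking square roots.

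The main bookkeeping obstacle will be precisely the verification, in each permutation, that the variable forced to be large by the surviving separation condition is one carrying honest $\frac{1}{1+n^2}$ decay, and in particular is never the undecaying $n_1$ of the case $\sigma(4)=1$; this is exactly where the bijectivity of $\sigma$ and the single distinguished exponent $\beta_1\leq k$ (as opposed to all others $\leq k-1$) are used, just as in Lemma~\ref{Lemma:Wick6}.
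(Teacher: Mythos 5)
Your setup is the correct adaptation of Lemma \ref{Lemma:Wick6} (which is exactly how the paper proves this lemma: it simply says the argument is adapted directly): the Fourier bookkeeping with $n_j\mapsto\beta_j$, $m_j\mapsto\beta_{[j+3]_6}$, the coincidence of the two separation conditions within each factor via the momentum constraints, the Wick expansion over $S_6$, and the exponent budget singling out the pair $(n_1,m_4)$ are all right, and your treatment of the case $\sigma(4)=1$ is complete. The gap is in the case $\sigma(4)\neq 1$: your claim that ``the separation condition again forces one good-decay variable to have modulus $>N/5$'' is false for general $\sigma$. Nothing prevents $\sigma(2)=1$ or $\sigma(3)=1$ (or $\sigma(4)=3$), in which case the surviving condition $|n_{\sigma(2)}-n_3+n_{\sigma(3)}|>N$ involves one or both of the \emph{borderline} variables $n_1$, $n_{\sigma(4)}$ and can be satisfied with all good-decay variables bounded (e.g.\ $n_1$ of size $N$, the others $O(1)$). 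So your covering of $A^\sigma_N$ by the events $\{|n_\ell|>N/5\}$ with $\ell$ ranging only over good-decay indices is not a covering, and the estimate for this case does not follow as written. Note also that the naive repair --- summing the borderline factor $\frac{1}{1+|n_\ell|}$ freely over $N/5<|n_\ell|\leq N$ --- only gives $O(1)$, not $O(1/N)$, so the resulting bound would be $O(\ln N)$ and the proof would collapse.

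The fix is precisely what the paper does in Lemma \ref{Lemma:Wick6} for the analogous sub-cases $\ell=1$ and $\ell=\sigma(5)$: cover $A^\sigma_N$ by \emph{all} the events $\{|n_\ell|>N/5\}$ for the indices occurring in a separation condition, including the borderline ones; when the large variable is borderline ($\ell=1$ or $\ell=\sigma(4)$), bound its factor \emph{pointwise} by $\frac{1}{1+|n_\ell|}\lesssim\frac1N$ on that event, and then use the momentum relation (e.g.\ $n_1=\sum_{j=1}^3 n_{\sigma(j)}-n_2-n_3$, legitimate since $1\notin\{\sigma(1),\sigma(2),\sigma(3)\}$ would fail only if $\sigma$ were not injective together with the constraint structure --- here one uses whichever of the two momentum identities expresses the large borderline variable through the remaining ones) to remove that variable from the summation rather than summing over it. The remaining borderline variable then contributes the $\ln N$ from $\sum_{|n|\leq N}\frac{1}{1+|n|}$ and the rest are $O(1)$, giving $\frac{\ln N}{N}$ for these events; together with your $\frac{(\ln N)^2}{N}$ for the good-decay events, the total is still $\lesssim\frac{(\ln N)^2}{N}$ and the lemma follows after taking square roots.
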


\begin{proof}
Adapted directly from the proof of Lemma \ref{Lemma:Wick6}.
\end{proof}

\begin{remark}
Here we gave explicit rates of convergence only for those terms in the integrals of motion that we have to treat by the Wick theorem, where we have only $L^2$ convergence, but not on all the other ones, where convergence is a.s. and in $L^2$. Indeed we aimed more to emphasise the different nature of the terms than to provide explicit rates of convergence for all of them. Of course one could use the Wick theorem to obtain an explicit rate of convergence for all the terms. It should be clear that the slowest possible decay is given by the terms dealt by the last two lemmas.
\end{remark}


\section{Invariant Measures}\label{Sect:Proof}

Let $k\geq2$ and $\alpha \in \R$. We define a sequence of measures which approximate the weighted 
Gaussian measure relative to $\g_k$, with weight (we omit the dependance on $\a$ which is irrelevant at this stage)
$$
\prod_{m=0}^{k-1}
\chi_{R_m}\left(\mc E_m[f]\right)
\exp(-\widetilde Q_{k}[f]) \, ,
$$ 
where
\begin{equation}\label{Q}
\widetilde Q_{k}[f] : =  \mc E_{k}[f] - \frac12\|f\|^2_{\dot{H}^k} \,.
\end{equation}
To do so we set
\begin{equation}\nn
\tilde\rho_{k,N}(A) 
 :=  \int_{\mc M_N (A)}
\left(
\prod_{m=0}^{k-1}
\chi_{R_m}\left(\mc E_{m} [P_N \psi] \right)
\right)
e^{- \widetilde Q_k [P_N \psi] }\gamma_{k}(d\psi), 
\,.
\end{equation}
where $\chi_{R_m}$ are smooth non negative cut-off functions like in definition \eqref{eq:GIBBS-measure}
and $\mc M_N (A)$ is defined in \eqref{SecCyl}.

%
In Proposition \ref{sec5prop1} we prove the existence of the weak 
limit of $\{\tilde\rho_{k,N}\}_{N\in\N}$ for any $k\geq2$ along with its integrability properties. We first need the following 
\begin{lemma}\label{Wick2}
The sequence $\{\int P_N f^{(k-1)} P_N \bar f^{(k)}\}_{N \in \N}$ is Cauchy in $L^{2}(\g_k)$, with
$$
\left\| \int P_N f^{(k-1)} P_N \bar f^{(k)} - \int  f^{(k-1)}  \bar f^{(k)}  \right\|_{L^{2}(\g_k)} \lesssim \frac{1}{N} \, .
$$
\end{lemma}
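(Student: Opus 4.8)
The plan is to diagonalise the expression on the Fourier side, where under $\g_k$ the coefficients $\{f(n)\}$ are independent centred complex Gaussians with $\E\left[|f(n)|^2\right]=(1+n^{2k})^{-1}$, and then to read off the decay from a single tail sum. First I would pass to Fourier coefficients. Using that $\partial_x$ acts as the multiplier $in$ and that $\int_\T e^{i(n+m)x}\,dx=2\pi\d_{n+m,0}$, a direct Plancherel computation (pairing the frequency $n$ of $P_N f^{(k-1)}$ with the frequency $-n$ of $P_N\bar f^{(k)}$) gives, for every $N\in\N\cup\{\infty\}$,
$$
\int P_N f^{(k-1)}\, P_N \bar f^{(k)} = -2\pi i\sum_{|n|\leq N} n^{2k-1}|f(n)|^2\,,
$$
since the coefficient of $P_N\bar f^{(k)}$ at frequency $-n$ is $(-in)^k\overline{f(n)}$ and $(in)^{k-1}(-in)^k=-i\,n^{2k-1}$. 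The decisive feature is that this multiplier is odd in $n$ and purely imaginary. Subtracting the $N=\infty$ term, the object to estimate is
$$
\int P_N f^{(k-1)}\, P_N\bar f^{(k)}-\int f^{(k-1)}\bar f^{(k)}=2\pi i\,Y_N\,,\qquad Y_N:=\sum_{|n|>N} n^{2k-1}|f(n)|^2\,.
$$

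Next I would compute the second moment of $Y_N$ using the Wick formula \eqref{eq:Wick} with $\ell=2$ (equivalently, from $\E\left[|f(n)|^4\right]=2(1+n^{2k})^{-2}$ together with independence across $n$). The two contractions produce
$$
\E\left[Y_N^2\right]=\Big(\sum_{|n|>N}\frac{n^{2k-1}}{1+n^{2k}}\Big)^2+\sum_{|n|>N}\frac{n^{4k-2}}{(1+n^{2k})^2}\,.
$$
The disconnected term in parentheses vanishes identically, because $n^{2k-1}/(1+n^{2k})$ is odd and the index set $\{|n|>N\}$ is symmetric; this same parity shows $\E\left[Y_N\right]=0$, so the borderline object $\int f^{(k-1)}\bar f^{(k)}$ (whose summand is \emph{not} absolutely convergent) is already centred and needs no Wick renormalisation. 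Only the diagonal term survives.

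Finally, since $(1+n^{2k})^2\geq n^{4k}$ we have $\frac{n^{4k-2}}{(1+n^{2k})^2}\leq n^{-2}$ for all $n\neq0$, so the tail obeys $\E\left[Y_N^2\right]\leq\sum_{|n|>N} n^{-2}\lesssim N^{-1}$, whence by \eqref{ExpVsL2} the $L^2(\g_k)$ norm of the difference is $\lesssim N^{-1/2}\to0$. Running the identical argument over the symmetric annulus $N<|n|\leq M$ (where the disconnected term again cancels by parity) gives $\big\|\int P_M f^{(k-1)}P_M\bar f^{(k)}-\int P_N f^{(k-1)}P_N\bar f^{(k)}\big\|_{L^2(\g_k)}^2\lesssim N^{-1}$, so the sequence is Cauchy in $L^2(\g_k)$ and its limit is $\int f^{(k-1)}\bar f^{(k)}$. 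The only genuinely delicate point is the parity cancellation of the disconnected contraction, which is what renders this critical-regularity quantity well defined; everything else is a routine Gaussian second-moment estimate. (I note my computation yields the second moment $\lesssim N^{-1}$, hence the $L^2$ rate $N^{-1/2}$, which in particular secures the asserted Cauchy property.)
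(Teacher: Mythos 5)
Your proof is correct, and it is essentially the paper's own argument: pass to Fourier coefficients, apply the Wick formula \eqref{eq:Wick} with $\ell=2$, observe that the disconnected contraction vanishes because $n^{2k-1}/(1+n^{2k})$ is odd and the index set is symmetric under $n\mapsto-n$, and estimate the surviving diagonal sum. Your preliminary diagonalisation (reducing everything to the tail sum $Y_N=\sum_{|n|>N}n^{2k-1}|f(n)|^2$) merely streamlines the paper's bookkeeping with the quadruple $(n_1,m_1,n_2,m_2)$ and the sets $A^{1}_{N,M}$, $A^{2}_{N,M}$.

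The one discrepancy is the rate, and there you are right and the paper is wrong. The connected (diagonal) contraction carries the weight $n^{(k-1)+k+k+(k-1)}=n^{4k-2}$ (the paper misprints $n^{4k-4}$), and
\[
\sum_{N<|n|\le M}\frac{n^{4k-2}}{(1+n^{2k})^{2}}
\]
is comparable to $N^{-1}$ (each term is comparable to $n^{-2}$), not $O(N^{-2})$ as asserted in the last line of the paper's proof. Since $\Var\bigl(|f(n)|^{2}\bigr)=(1+n^{2k})^{-2}$ exactly, this is an exact variance computation: the $L^{2}(\g_k)$ distance is bounded above \emph{and below} by constants times $N^{-1/2}$, so the rate $\lesssim 1/N$ claimed in the Lemma is not merely unproved by your argument — it is false, and what you prove (the Cauchy property with the sharp rate $N^{-1/2}$) is the correct statement. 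One downstream remark: Proposition \ref{sec5prop1} feeds the $1/N$ rate into Chebyshev plus Borel--Cantelli to get $\g_k$-a.s.\ convergence of the whole sequence; with the true rate the probabilities $\lesssim \varepsilon^{-2}N^{-1}$ are no longer summable in $N$, so one should either pass to a subsequence $N=2^{j}$, or note that the partial sums are sums of the independent, centred variables $n^{2k-1}|f(n)|^{2}+(-n)^{2k-1}|f(-n)|^{2}$, whose variances are summable, so that Kolmogorov's one-series theorem gives a.s.\ convergence of the full sequence directly.
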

\begin{proof}
Let $N, M \in N$ with $N < M$. 
We will show that 
\begin{equation}\label{EasyW}
\left\| \int P_M f^{(k-1)} P_M \bar f^{(k)} -  P_N f^{(k-1)} P_N \bar f^{(k)}  \right\|_{L^{2}(\g_k)}
 \lesssim \frac{1}{N} \, ,
\end{equation}
that is enough to deduce the statement. 
We write
$$
P_M f^{(k-1)} P_M \bar f^{(k)} = \sum_{|n_1|, |m_1| \leq M} (i n_1)^{k-1} (-i m_1)^{k}  f(n_1)  \bar f(m_1) e^{i(n_1 - m_1)x} \, , 
$$
and its conjugate as
$$
P_M \bar f^{(k-1)} P_M  f^{(k)} = \sum_{|m_2|, |n_2| \leq M} (-i m_2)^{k-1} (i n_2)^{k}  \bar f(m_2)  f(n_2) e^{-i(m_2 - n_2)x}  \, , 
$$
so that 
$$
\left\| \int P_M f^{(k-1)} P_M \bar f^{(k)} -  P_N f^{(k-1)} P_N \bar f^{(k)}  \right\|^2_{L^{2}(\g_k)}
=
\sum_{A_{N,M}} 
n_1^{k-1} m_1^k n_2^k m_2^{k-1}
\E\left[\prod_{j=1}^2  f(n_j)   \bar f(m_j)\right] \, ,
$$
where letting
where, letting 
$n = (n_1, n_2)$, $m=(m_1, m_2)$, 
we have defined
\begin{equation}\nonumber
A_{N,M} := \left\{ (n, m) : 
n_1 = m_1 , \ n_2=m_2,
\quad
|n_{j}| \leq M, \ |m_{j}| \leq M, \ j =1, 2, 
\quad 
\max( | n_{1}|, |n_2|  > N )
\right\} \, .
\end{equation}
Thus, using the Wick formula \eqref{eq:Wick} with $\ell = 2$, we get
$$
\sum_{A_{N,M}} 
n_1^{k-1} m_1^k n_2^k m_2^{k-1}
\E\left[\prod_{j=1}^2  f(n_j)  \bar f(m_j)\right]
=
\sum_{A^1_{N,M}} 
 \frac{ n_1^{2k-1} n_2^{2k-1} }{ (1+n_1^{2k})(1+n_2^{2k}) }
+
\sum_{A^2_{N,M}}  
\frac{ n_1^{4k-4}  }{ (1+n_1^{2k})^2 }
$$
where
\begin{equation}\nonumber
A^1_{N,M} := \left\{ n : 
\
|n_{1}|, |n_2| \leq M, 
\quad 
\max( | n_{1}|, |n_2|  > N )
\right\} ,
\qquad
A^2_{N,M} := \left\{ n_1 : \ 
N< | n_{1}| \leq M 
\right\} \, .
\end{equation}
Since, using the symmetry $(n_1, n_2) \leftrightarrow (- n_1, n_2)$, the sum over~$A^1_{N,M}$ is zero and the 
sum over~$A^1_{N,M}$ is bounded by a multiple of $N^{-2}$, we have proved the \eqref{EasyW}.
\end{proof}
We are now ready to analyse the convergence of the sequence $\{\tilde\rho_{k,N}\}_{N\in\N}$.
\begin{proposition}\label{sec5prop1}
Given $\beta \in \mathbb{R}$, $k\geq 2$ and $R_1,\ldots,R_{k-1}>0$,  there exists a sufficiently small~$R_0$, which depends on $|\a|,|\beta|, k, R_1, \ldots, R_{k-1}$, so that $\r_{k,N}\rightharpoonup \tilde\r_k$, with
\be\label{eq:gauged-Gibbs}
\tilde\rho_{k}(A)=\int_A \prod_{m=0}^{k-1}
\chi_{R_m}\left(\mc E_m[f]\right)
\exp(-\widetilde Q_{k}[f])\g_k(df), \qquad\text{for every }A\in\B(L^2(\T)) \, ,
\ee
Moreover the Radon-Nykodim derivative $\frac{d\tilde\r_{k}}{d\g_k}$ belongs to $L^2(\g_k)$.
\end{proposition}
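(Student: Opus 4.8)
The plan is to mirror the construction of the un-gauged Gibbs measures $\rho_k$ from \cite{GLV16} that underlies Theorem \ref{Th:Old}, isolating the one genuinely new feature. Writing the density of $\tilde\rho_{k,N}$ with respect to $\g_k$ as
\[
W_N[f] := \prod_{m=0}^{k-1}\chi_{R_m}\big(\mc E_m[P_N f]\big)\,\exp\big(-\widetilde Q_k[P_N f]\big),
\]
I would first use Corollary \ref{DanCor2} to split $\widetilde Q_k=\mc E_k-\tfrac12\|\cdot\|_{\dot H^k}^2$ into three contributions: the bilinear top-order term $ik\alpha\mu\int\phi^{(k-1)}\bar\phi^{(k)}$, the quartic top-order term $\tfrac i4((2k+2)\alpha+(2k+1)\beta)\int|\phi|^2(\phi^{(k)}\bar\phi^{(k-1)}-\phi^{(k-1)}\bar\phi^{(k)})$, and the remainder $\int R_k$ with $R_k\in\tilde{\mc V}_{k-1}$. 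The quartic term and $\int R_k$ have precisely the structure already controlled in \cite{GLV16} (only the numerical coefficients now depend on $\alpha$ as well), so their convergence and the integrability of the associated exponential weights, on the support of the cut-offs and for $R_0$ small, may be imported; this is the origin of the small-mass condition in that term.

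The new object is the bilinear term. By Parseval one has $\int\phi^{(k-1)}\bar\phi^{(k)}=-i\sum_n n^{2k-1}|\phi(n)|^2$, so $ik\alpha\mu\int\phi^{(k-1)}\bar\phi^{(k)}=k\alpha\mu J$ with $J:=\sum_n n^{2k-1}|\phi(n)|^2$ real and, crucially, odd in $n$. This odd symmetry forces $\E[J]=0$, gives $J\stackrel{d}{=}-J$, and — by Lemma \ref{Wick2} — makes the truncations $J_N=\sum_{|n|\leq N}n^{2k-1}|f(n)|^2$ converge in $L^2(\g_k)$ (while $\mu[P_N f]\to\mu$ in $L^2$), so the regularized bilinear term converges in $L^1(\g_k)$. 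The heart of the matter is then the uniform $L^2$ bound on the densities: since $\chi\in[0,1]$ and $\mc E_0=\pi\mu$, on the support of $\chi_{R_0}(\mc E_0)$ one has the pointwise bound $\mu\leq R_0/\pi$, whence $\E[W_N^2]\leq \E\big[\mathbbm 1_{\{\mu\leq R_0/\pi\}}e^{-2\widetilde Q_k}\big]$, and Hölder's inequality splits this into the three exponential factors. For the bilinear factor, $e^{-2k\alpha\mu J}\leq e^{2k|\alpha|(R_0/\pi)|J|}$, and because $J$ is a centred second Wiener chaos variable whose moment generating function $\E[e^{sJ}]=\prod_{p>0}\big(1-s^2p^{2(2k-1)}(1+p^{2k})^{-2}\big)^{-1}$ converges for $|s|$ below a positive threshold — the mixed signs produced by the odd symmetry yield a summable product — choosing $R_0$ small (depending on $k,|\alpha|$) keeps $s=2k|\alpha|R_0/\pi$ below that threshold and gives $\E[e^{2k|\alpha|(R_0/\pi)|J|}]<\infty$. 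Combined with the \cite{GLV16} bounds for the quartic and $R_k$ factors, this yields $\sup_N\|W_N\|_{L^2(\g_k)}<\infty$; the identical computation in the limit gives $\|W\|_{L^2(\g_k)}<\infty$, which is the stated integrability $\tfrac{d\tilde\rho_k}{d\g_k}\in L^2(\g_k)$.

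Finally, convergence in $\g_k$-probability of each factor (a.s.\ for the cut-offs and $\int R_k$, which involve at most $k-1$ derivatives and converge along $P_N f\to f$ in $\bigcap_{s<k-1/2}H^s$; in $L^2(\g_k)$, hence in probability, for the two top-order terms by Lemma \ref{Wick2} and by \cite{GLV16}) yields $W_N\to W$ in probability, and together with the uniform $L^2$ bound, which supplies uniform integrability, Vitali's theorem upgrades this to $W_N\to W$ in $L^1(\g_k)$. The weak convergence $\tilde\rho_{k,N}\rightharpoonup\tilde\rho_k$ then follows exactly as for $\rho_{k,N}\rightharpoonup\rho_k$ in \cite{GLV16}: testing against bounded continuous $F$ and using $P_N f\to f$ in $L^2$ $\g_k$-a.s.\ to pass to the limit in $\int F\,d\tilde\rho_{k,N}$. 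I expect the main obstacle to be precisely the uniform exponential integrability of the bilinear term: it sits exactly at the top order for $\g_k$, and it is only the algebraic cancellation encoded in the odd symmetry of $J$ — not any a priori bound — that renders $\E[e^{sJ}]$ finite, so the argument genuinely hinges on Lemma \ref{Wick2} and on quantifying the smallness of $R_0$ against the spectral threshold of the quadratic form $J$.
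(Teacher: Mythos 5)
Your proposal is correct and follows essentially the same route as the paper's proof: the same decomposition of $\widetilde Q_k$ via Corollary \ref{DanCor2}, the same importation of the \cite{GLV16} bounds for the quartic and remainder terms, the same use of Lemma \ref{Wick2} for the convergence of the bilinear term, and the same mechanism (exponential distribution of the $|g_n|^2$, smallness of $R_0$ against a spectral threshold, summability of $q_n^2=n^{2(2k-1)}(1+n^{2k})^{-2}$) for the uniform exponential integrability of that term. Your exact computation of $\E[e^{sJ}]$ via the $n\leftrightarrow -n$ pairing is in fact a slightly sharper rendering of the paper's mode-by-mode bounds (which estimate negative modes by $1$ and positive modes by $e^{c(sq_n)^2}$), since it makes explicit the cancellation of the linear-in-$q_n$ contributions that the paper's per-mode inequalities leave implicit.
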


The proof goes along the same lines of the main theorem in \cite{GLV16}, with some additional considerations. We will skip some details, just referring to our previous work when the arguments are close enough. 

\begin{proof}
By equation \eqref{Q} and Corollary \ref{DanCor2} we have
\begin{align}\label{Q2}
\widetilde Q_{k}[f] 
& = - ik\alpha\mu\int f^{(k)}\bar f^{(k-1)}
\\ \nonumber
& 
+ \frac i4 \left((2k+2)\alpha+(2k+1)\beta\right)
\int|f|^2( f^{(k)}\bar f^{(k-1)} - \bar f^{(k)} f^{(k-1)} )
+\int R_{k}[f]\,,
\end{align}
where $R_k[f]\in\mathcal V_{k-1}$. The difference with the integrals of motion for DNLS is given by the first addendum on the r.h.s. of \eqref{Q2}. 
Indeed the second term and the remainders $R_k$ have the same structure 
as in~\cite{GLV16}. Thus it is straightforward to adapt that argument to show that 
one can 
choose~$R_0$ small enough in such a way that as $N \to \infty$
\begin{align}\label{GiaFatto}
 \prod_{m=0}^{k-1}
 &
\chi_{R_m}^{1/2}\left(\mc E_m[P_N f]\right)
\exp\left(-\widetilde Q_{k}[P_N f] - ik\alpha\mu\int P_N f^{(k)}P_N \bar f^{(k-1)} \right)
\\ \nonumber
&
\to 
\prod_{m=0}^{k-1}
\chi_{R_m}^{1/2}\left(\mc E_m[f]\right)
\exp\left(-\widetilde Q_{k}[f] - ik\alpha\mu\int f^{(k)} \bar f^{(k-1)} \right)
\quad 
\g_k-\mbox{a.s.} \, ,\nn
\end{align}
and the limit lies in $L^4(\g_k)$. Thus it remains to show the same for the remaining term in the exponential, namely that for $R_0$ sufficiently small
\begin{align}\label{TIWWRWTP}
 \prod_{m=0}^{k-1}
 &
\chi_{R_m}^{1/2}\left(\mc E_m[P_N f]\right)
\exp\left(ik\alpha\mu\int P_N f^{(k)}P_N \bar f^{(k-1)} \right)
\\ \nonumber
&
\to 
\prod_{m=0}^{k-1}
\chi_{R_m}^{1/2}\left(\mc E_m[f]\right)
\exp\left(ik\alpha\mu\int f^{(k)} \bar f^{(k-1)} \right) 
\quad
\g_k-\mbox{a.s.}
\end{align} 
and the limit lies in $L^4(\g_k)$. 

Combining Lemma \ref{Wick2}, Chebyshev's inequality and Borel-Cantelli lemma we 
prove that the sequence $\int P_N f^{(k-1)} P_N \bar f^{(k)}$ converges $\g_k$-a.s. .
This also implies the convergence in measure of $\exp\left(p_0\int P_N f^{(k-1)} P_N \bar f^{(k)}\right)$ for any $p_0\in\R$.  
Arguing as in \cite[Lemma 5.3]{GLV16} and using the Borel-Cantelli lemma, we can show that indeed 
$\exp\left(\int P_N f^{(k-1)} P_N \bar f^{(k)}\right)\to\exp\left(\int f^{(k-1)} \bar f^{(k)}\right)$ $\g_k$-a.s. .
We want 
to prove this limit to be in $L^4(\g_k)$ for sufficiently small $R_0$. 
From that, recalling~\eqref{GiaFatto}, we deduce that the measures $\tilde\rho_{k,N}$ converges weakly to a limit a.c. w.r.t. $\g_k$, whose density is the $L^2(\g_k)$-limit of the densities w.r.t. the finite dimensional Gaussian measures of $\tilde\rho_{k,N}$ (we refer to the proof of Theorem 1.1 in \cite{GLV16} for more details).

We write for $N\in\N$
$$
\left\|\exp\left(ik\alpha\mu\int P_Nf^{(k-1)} P_N\bar f^{(k)}\right)\right\|^4_{L^4(\g_k)}=\int \g_k(df) \exp\left(4ik\alpha\mu\int P_Nf^{(k-1)} P_N\bar f^{(k)}\right)\,.
$$
We then pass to the Fourier coefficients and change variables
$$
\int \g_k(df) \exp\left(4ik\alpha\mu\int P_Nf^{(k-1)} P_N\bar f^{(k)}\right)=\E[e^{4k\alpha\mu\sum_{|n|\leq N}|g_n|^2 q_n}]=\prod_{|n|\leq N}\E[e^{4k\alpha\mu|g_n|^2 q_n}]\,,
$$
where $\{\bar g_n,g_n\}_{|n|\leq N}$ are i.i.d. standard complex random variables and we shortened 
$$q_n:=\frac{n^{2k-1}}{1+n^{2k}} \, .$$ 
Let now assume $\alpha > 0$ (the case $\alpha <0$ is analogous). 
Squared Gaussian random variables are exponentially distributed, thus
$\E[e^{4k\alpha\mu|g_n|^2 q_n}]\leq 1$ for $n\leq0$
and for $n\geq1$ and $4k\alpha\mu<1$ there is an absolute constant $c>0$ such that  
$$
\E[e^{4k\alpha\mu|g_n|^2 q_n}]\leq e^{c(4k\alpha\mu q_n)^2}\,
$$
(see for instance \cite[Lemma 5.15]{ver}). We conclude that for $4k\alpha\mu<1$
$$
\left\|\exp\left(ik\alpha\mu\int P_Nf^{(k-1)} P_N\bar f^{(k)}\right)\right\|^4_{L^4(\g_k)}\leq e^{2(4k\alpha\mu)^2\sum_{n=1}^N q_n^2}\,,
$$
which is bounded uniformly in $N$. 
\end{proof}

\begin{remark}
The convergence in \eqref{GiaFatto}, \eqref{TIWWRWTP} and so that of 
$$
\prod_{m=0}^{k-1}
\chi_{R_m}\left(\mc E_m[P_N f]\right)
\exp(-\widetilde Q_{k}[P_N f])\g_k(df)
$$
can be promoted to convergence in $L^{p_0}$ for any $p_0 \in [1, \infty)$, as long as we choose $R_0$
sufficiently small. For details we refer to the proof of Theorem 1.1 in \cite{GLV16}. 
\end{remark}

\begin{remark}
Even though we preferred to give here a direct proof, the last statement can be also proven just invoking the Girsanov-Ramer theorem \cite{ramer}. The pull-back of $\g_k$ under the anticipative Hilbert-Schimdt map
$$
\phi\mapsto\phi+a\left(\int_0^x\phi-\frac{1}{2\pi}\int_0^{2\pi}\phi\right)\,,\quad a\in\R\,,
$$
is absolutely continuous w.r.t. $\g_k$ and the density is given by the standard Gaussian change of variables under a shift. 

\end{remark}

\begin{remark}\label{rmk:integrability}

With a glance to the proof of \cite[Proposition 5.4]{GLV16}, we realise that Proposition~\eqref{sec5prop1}
remains valid if one considers the modified densities
$$
\left(
\prod_{m=0}^{k-1}
(\chi_m)_{R_m}\left(\mc E_{m}[\phi]\right)
\right)
e^{-\widetilde Q_{k}(\phi)}\gamma_{k}(d\phi)
$$
where the cut-off functions $\chi_m$ are not necessarily all the same. Notice that we have already used this fact in the proof of Proposition, 
when we have replaced $\chi_m$ by $\chi_m^{1/2}$.  
\end{remark}

Until the end of the Section we work with the choice
\begin{equation}\label{AlphaChioce}
\alpha = \alpha_k := -\frac{2k+1}{2k+2}\beta \, ,
\end{equation}
except in Lemma \ref{Lemma:t0}, that holds for all $\a \in \R$.
The main goal will be to prove that $\tilde\r_k$ is invariant under $\Phi_{t,\a_k}$, that is the flow associated to the GDNLS equation
\eqref{eq:GDNLS} under the choice \eqref{AlphaChioce}.  
\begin{proposition}\label{Prop:invarianza-gauged}
Let $k\geq2$. For any $t \in \R$ we have that
$$
\tilde{\rho}_{k}(A) = \tilde{\rho}_{k}(\Phi_{t,\a_k}(A)) \, , 
$$
for every  $A \in \mathscr{B}(L^{2}(\T))$ such that $A \subseteq H^{r}(\T)$ with $5/4 < r < k-1/2$.
\end{proposition}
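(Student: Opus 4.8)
The plan is to follow the Tzvetkov--Visciglia scheme, transferring invariance from the finite-dimensional truncations $\tilde\rho_{k,N}$ to the limit $\tilde\rho_k$. The starting observation is that, since $\widetilde Q_k[f]=\mc E_k[f]-\frac12\|f\|_{\dot H^k}^2$ by \eqref{Q} and the Gaussian density of $\g_k$ on $M_N(\C^{2N+1})$ carries (up to constants absorbed in $Z_N$) the factor $e^{-\frac12\|P_N\psi\|_{L^2}^2-\frac12\|P_N\psi\|_{\dot H^k}^2}$, the density of $\tilde\rho_{k,N}$ with respect to the finite-dimensional Lebesgue measure $dL$ of \eqref{NewLebesgue} is a function of the gauged integrals of motion alone,
$$
G_N(\psi)=\frac1{Z_N}\prod_{m=0}^{k-1}\chi_{R_m}\!\left(\mc E_m[P_N\psi]\right)\,e^{-\mc E_k[P_N\psi]-\mc E_0[P_N\psi]}\,.
$$
By Proposition \ref{LebMeasPres} the truncated flow $\Phi^N_{t,\a_k}$ preserves $dL$, whence
$$
\tilde\rho_{k,N}(\Phi^N_{t,\a_k}(A))=\int_A G_N\!\left(\Phi^N_{t,\a_k}(\psi)\right)dL(\psi)\,,
$$
so that differentiating in $t$ transfers the derivative onto $G_N$, that is onto the quantities $\frac{d}{dt}\mc E_m[\Phi^N_{t,\a_k}(\psi)]$.

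Next I would reduce the time derivative to its value at $t=0$. Using the group property $\Phi^N_{t_0+s,\a_k}=\Phi^N_{s,\a_k}\circ\Phi^N_{t_0,\a_k}$, as in \cite{TV13b} and Lemma \ref{Lemma:t0}, the derivative at a generic time $t_0$ over $A$ coincides with the derivative at $s=0$ over the set $\Phi^N_{t_0,\a_k}(A)$, so it suffices to bound $\frac{d}{ds}\tilde\rho_{k,N}(\Phi^N_{s,\a_k}(B))\big|_{s=0}$ for an arbitrary Borel set $B$. The chain rule expresses this through $\chi_{R_m}'$ and $e^{-\mc E_k}$ multiplied by the expressions $D_N\mc E_m[P_N\psi]$, where $D_N$ is the operator of \eqref{TimeDerivativeOp} and the identity \eqref{DN=ddt} is used. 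Applying Cauchy--Schwarz in $L^2(\g_k)$, together with the $L^2(\g_k)$ bound on the density from Proposition \ref{sec5prop1} and the fact that $\chi_{R_m}'$ and $e^{-\mc E_k}$ lie in every $L^p(\g_k)$ on the support of the cut-offs, I would dominate this derivative by a constant times $\sum_{\ell=0}^{k}\|D_N\mc E_\ell[P_N\psi]\|_{L^2(\g_k)}$, which tends to $0$ as $N\to\infty$ by Proposition \ref{Prop: energie}. Integrating in time then yields the asymptotic invariance $\bigl|\tilde\rho_{k,N}(\Phi^N_{t,\a_k}(A))-\tilde\rho_{k,N}(A)\bigr|\to0$.

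I would then pass to the limit $N\to\infty$. The term $\tilde\rho_{k,N}(A)\to\tilde\rho_k(A)$ by the weak convergence of Proposition \ref{sec5prop1}. For the image set, since $A\subseteq B^{r}(R)$ for some $R>1$ with $5/4<r<k-\frac12$, Proposition \ref{Prop:nearness} gives the inclusion $\Phi^N_{t,\a_k}(A)\subseteq\Phi_{t,\a_k}(A)+B^{s}(\varepsilon)$ for $|t|\le t_R$, large $N$ and some $0\le s<r$; running both flows backward yields the matching reverse inclusion. Combining these inclusions with the weak convergence $\tilde\rho_{k,N}\rightharpoonup\tilde\rho_k$ and the absolute continuity of $\tilde\rho_k$ with respect to $\g_k$ (which forces the $\tilde\rho_k$-mass of the thin shells $\Phi_{t,\a_k}(\partial A)+B^{s}(\varepsilon)$ to vanish as $\varepsilon\to0$) gives $\tilde\rho_{k,N}(\Phi^N_{t,\a_k}(A))\to\tilde\rho_k(\Phi_{t,\a_k}(A))$. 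Hence $\tilde\rho_k(\Phi_{t,\a_k}(A))=\tilde\rho_k(A)$ for $|t|\le t_R$. To remove the smallness-in-time restriction I would iterate this short-time statement, using the group property of $\Phi_{t,\a_k}$, the conservation of the mass $\mu$ from Proposition \ref{GaugedMassCons} (which confines trajectories to a region of controlled size), and the invariance already established, so that the threshold $t_R$ can be reused on each successive interval to reach arbitrary $t\in\R$.

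The main obstacle I anticipate is precisely this simultaneous passage to the limit in the two distinct approximations: the measures $\tilde\rho_{k,N}\rightharpoonup\tilde\rho_k$ on the one hand, and the flows $\Phi^N_{t,\a_k}\to\Phi_{t,\a_k}$ on the other. The flow nearness of Proposition \ref{Prop:nearness} holds only in the weaker $H^s$-topology with $s<r$ and only for short time, so one cannot directly test $\tilde\rho_{k,N}$ against the exact image $\Phi_{t,\a_k}(A)$; the delicate point is to sandwich $\tilde\rho_{k,N}(\Phi^N_{t,\a_k}(A))$ between the $\tilde\rho_{k,N}$-measures of the inner and outer $\varepsilon$-neighbourhoods of $\Phi_{t,\a_k}(A)$ and to show that the resulting boundary-shell errors are negligible under the weak limit, which is exactly where the regularity range $5/4<r<k-\frac12$ and the absolute continuity of $\tilde\rho_k$ must be exploited.
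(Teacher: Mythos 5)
Your opening step---asymptotic invariance of the truncated measures under the truncated flow, obtained from preservation of the Lebesgue measure (Proposition \ref{LebMeasPres}) and of the mass (Proposition \ref{GaugedMassCons}), the reduction of the time derivative to $t=0$, Cauchy--Schwarz against the uniformly $L^2(\g_k)$-bounded densities, and Proposition \ref{Prop: energie}---is exactly the content of the paper's Lemma \ref{Lemma:t0} and Lemma \ref{Prop.invarianza-loc}, and is sound. The gap is in the passage to the limit $N\to\infty$. First, a general Borel set $A\subseteq H^r(\T)$ need not lie in any ball $B^r(R)$, so Proposition \ref{Prop:nearness} cannot be applied to $A$ directly. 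Much more seriously, your ``thin shell'' step is false: what you actually need is
$\tilde\rho_k\bigl(\bigcap_{\e>0}(\Phi_{t,\a_k}(A)+B^s(\e))\setminus\Phi_{t,\a_k}(A)\bigr)=0$,
i.e.\ that the $H^s$-closure of $\Phi_{t,\a_k}(A)$ carries no extra mass. Absolute continuity with respect to $\g_k$ gives nothing of the kind: Gaussian measures charge plenty of closed sets with empty interior, and for a general Borel $A$ (say, a dense countable subset of a ball) the closure of $\Phi_{t,\a_k}(A)$ has strictly larger measure than $\Phi_{t,\a_k}(A)$ itself. This is precisely why the paper does not argue on arbitrary Borel sets: it takes $C\subset B^r(R)$ \emph{compact} in the $H^s$ topology, so that $\Phi_{t,\a_k}(C)$ is again compact (continuity of the flow on $B^r(R)$, via \cite[Theorem 1.1, Corollary 1.2]{Herr} and \eqref{lemma:gauge-bound-HsBis}), hence closed, and $\bigcap_{\e>0}(\Phi_{t,\a_k}(C)+B^s(\e))=\Phi_{t,\a_k}(C)$ with no shell error at all. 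Note also that the paper never needs your reverse inclusion at finite $N$: from the one-sided inclusion it extracts only the one-sided inequality $\tilde\rho_k(C)\le\tilde\rho_k(\Phi_{t,\a_k}(C))$, and then upgrades it to an equality by time reversibility, applying the same inequality to $C'=\Phi_{-t,\a_k}(C)$. The extension from compact sets to Borel subsets of $H^r(\T)$ is then done by inner regularity of the measure, not by boundary estimates.

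The second gap is your globalization in time. The threshold $t_R$ in Proposition \ref{Prop:nearness} and Lemma \ref{lemma:limitatezza-loc-Hs-nongauge} depends on the $H^r$ radius $R$ of the data, not on the mass; conservation of $\mu$ only confines trajectories in $L^2$, so it does not entitle you to ``reuse $t_R$ on each successive interval''---a priori the $H^r$ size of the evolved set could grow and the admissible time steps shrink summably, never reaching all of $\R$. The paper instead uses the global well-posedness and persistence of regularity of \cite{Herr} together with the gauge estimate \eqref{lemma:gauge-bound-HsBis} to guarantee that $\Phi_{t,\a_k}(C)$ remains in a (possibly larger) ball $B^r(R')$, and only then runs the continuation argument, combined with the time-reversibility step above.
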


Then if we set for every $A\in\B(L^2(\T))$
\begin{equation}\label{ultimo}
\hat \r_k(A):= \tilde\rho_{k} (\Ga_{\a_k}(A)) \, ,
\end{equation}
where $\Ga_{\a_k}(A) := \{ \Ga_{\a_k} f : f \in A \}$,
we have
\begin{align}\nn
\hat\r_k(\Phi_t(A)) 
 = \tilde\r_k(\Ga_{\a_k}(\Phi_t(A)))
& 
=
\tilde\r_k(\Ga_{\a_k}(\Phi_t(\Ga_{-\a_k} (\Ga_{\a_k}(A)))))
\\ \nn
& 
= \tilde\r_k(\Phi_{t,\a_k}(\Ga_{\a_k}(A)))=\tilde\r_k(\Ga_{\a_k}(A))
= \hat\r_k(A)\,,
\end{align}
that is $\hat\r_k$ is invariant along the DNLS flow. In the first equality we used \eqref{ultimo}, in the second equality we used \eqref{eq:gauge_properties},
in the third equality we used the definition of the gauged flow given in~\eqref{eq:G-Flusso}, in the fourth equality we used 
Proposition \ref{Prop:invarianza-gauged}, and finally we used \eqref{ultimo} again.
This establishes the existence of the invariant measures $\hat\r_k$ stated in our main Theorem \ref{Th:Main}.

The proof of Proposition \ref{Prop:invarianza-gauged} needs two intermediate lemmas.

\begin{lemma}\label{Prop.invarianza-loc} 
Let $k\geq 2$. Then we have
\be\label{eq:lolcal}
\lim_{N \to \infty} \sup_{t \in \R}\sup_{ A  \in \B(L^{2}(\T))} \left|\frac{d}{dt}  \tilde \rho_{k,N}(\Phi^N_{t,\a_k}(A))  \right|=0 \, . 
\ee
\end{lemma}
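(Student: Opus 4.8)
The plan is to reduce the claim to the $t=0$ estimate of Proposition~\ref{Prop: energie}, by exploiting that the truncated flow is volume preserving and that the density of $\tilde\rho_{k,N}$ with respect to Lebesgue measure is a function of the gauged integrals of motion alone. First I would rewrite $\tilde\rho_{k,N}$ as a finite-dimensional integral on $E_N$. Its density $\prod_{m=0}^{k-1}\chi_{R_m}(\mc E_m[P_N\psi])e^{-\widetilde Q_k[P_N\psi]}$ depends only on $P_N\psi$, and by \eqref{Q}, together with $\mc E_0[f]=\frac12\|f\|_{L^2}^2$ (see \eqref{legge12}), the $\dot H^k$-part of $\widetilde Q_k$ cancels exactly the $\dot H^k$-part of the Gaussian weight in \eqref{Def:gammaK}. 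Hence on $(M_N(\C^{2N+1}),\mathcal T_N)$ one has $\tilde\rho_{k,N}(A)=\int_{P_N A}G_N\,d\mathrm{Leb}$, where
\[
G_N(\zeta)=\tfrac{1}{Z_N}H\big(\mc E_0[\zeta],\dots,\mc E_k[\zeta]\big),\qquad
H(x_0,\dots,x_k):=\Big(\prod_{m=0}^{k-1}\chi_{R_m}(x_m)\Big)e^{-x_0-x_k},
\]
so that $G_N$ is a smooth function of the integrals of motion $\mc E_0,\dots,\mc E_k$ only.

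Next I would remove the supremum over $t$. Since $\Phi^N_{t,\a_k}$ is the flow of an autonomous system it obeys the group law $\Phi^N_{t_0+s,\a_k}=\Phi^N_{s,\a_k}\circ\Phi^N_{t_0,\a_k}$, so for every $t_0\in\R$ and every Borel $A$, writing $B:=\Phi^N_{t_0,\a_k}(A)$ gives
\[
\frac{d}{dt}\tilde\rho_{k,N}(\Phi^N_{t,\a_k}(A))\Big|_{t=t_0}=\frac{d}{ds}\tilde\rho_{k,N}(\Phi^N_{s,\a_k}(B))\Big|_{s=0},
\]
and as $A\mapsto B$ is a bijection of $\B(L^2(\T))$ the two suprema coincide; it thus suffices to control the $s=0$ derivative uniformly in $B$. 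Using that $\Phi^N_{t,\a_k}$ maps $E_N$ into itself and preserves Lebesgue measure (Proposition~\ref{LebMeasPres}), I would transport the flow onto the integrand, $\tilde\rho_{k,N}(\Phi^N_{t,\a_k}(B))=\int_{P_N B}G_N(\Phi^N_{t,\a_k}\zeta)\,d\mathrm{Leb}(\zeta)$, whence
\[
\sup_{B}\Big|\frac{d}{dt}\tilde\rho_{k,N}(\Phi^N_{t,\a_k}(B))\big|_{t=0}\Big|
\leq\int_{E_N}\Big|\frac{d}{dt}G_N(\Phi^N_{t,\a_k}\zeta)\big|_{t=0}\Big|\,d\mathrm{Leb}(\zeta).
\]
By the chain rule the integrand equals $\tfrac{1}{Z_N}\sum_{\ell=0}^k(\partial_{x_\ell}H)(\mc E_0[\zeta],\dots,\mc E_k[\zeta])\,\frac{d}{dt}\mc E_\ell[\Phi^N_{t,\a_k}\zeta]\big|_{t=0}$, and the $\ell=0$ term vanishes because $\mc E_0$ is the conserved mass (Proposition~\ref{GaugedMassCons}).

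Finally, for each $1\leq\ell\leq k$ I would apply Cauchy--Schwarz with respect to $\gamma_k$. Absorbing the Gaussian density, the factor $\tfrac{1}{Z_N}\partial_{x_\ell}H(\mc E_\bullet)$ becomes a ``modified'' density: for $1\leq\ell\leq k-1$ it is obtained from $\frac{d\tilde\rho_{k,N}}{d\gamma_k}$ by replacing the single cut-off $\chi_{R_\ell}$ with the bounded $\chi'_{R_\ell}$, while for $\ell=k$ it is simply $-\frac{d\tilde\rho_{k,N}}{d\gamma_k}$. By Proposition~\ref{sec5prop1} and Remark~\ref{rmk:integrability}, for $R_0$ small enough all these densities are bounded in $L^2(\gamma_k)$ uniformly in $N$. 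The remaining factor is $\frac{d}{dt}\mc E_\ell[\Phi^N_{t,\a_k}\zeta]\big|_{t=0}$, which by \eqref{DN=ddt} equals $D_N\mc E_\ell[P_N\zeta]$ and whose $L^2(\gamma_k)$ norm tends to $0$ by Proposition~\ref{Prop: energie}. Hence each product tends to $0$, and summing the finitely many indices $\ell$ yields \eqref{eq:lolcal}.

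I expect the main obstacle to be the uniform-in-$N$ $L^2(\gamma_k)$ control of the modified densities $\tfrac{1}{Z_N}\partial_{x_\ell}H(\mc E_\bullet)$, and the associated justification of differentiation under the integral sign: this is precisely where the small-mass hypothesis $R_0\ll1$ and the renormalisation estimates behind Proposition~\ref{sec5prop1} (ultimately the Wick bounds of Section~\ref{sect:Wick}) are needed, the rest being the soft measure-theoretic manipulations above.
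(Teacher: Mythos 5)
Your proposal is correct and follows essentially the same route as the paper's proof: reduction to the $t=0$ derivative via the group property of $\Phi^N_{t,\a_k}$ (the paper's Lemma \ref{Lemma:t0}, which you prove inline), then exploiting Lebesgue-measure preservation (Proposition \ref{LebMeasPres}) and mass conservation (Proposition \ref{GaugedMassCons}) to express the derivative as a sum of integrals of modified densities (with one $\chi_{R_\ell}$ replaced by $\chi'_{R_\ell}$, or the full density for $\ell=k$) against $\frac{d}{dt}\mc E_\ell[\Phi^N_{t,\a_k}\cdot]\big|_{t=0}$, concluded by Cauchy--Schwarz together with the uniform $L^2(\g_k)$ bounds of Remark \ref{rmk:integrability} and Proposition \ref{Prop: energie}. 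The only cosmetic difference is that you work directly with the Lebesgue density $G_N=\frac{1}{Z_N}H(\mc E_0,\dots,\mc E_k)$ on $E_N$, whereas the paper performs the equivalent change of variables $f=\Phi^N_{t,\a_k}g$ inside the Gaussian integral.
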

Since an explicit representation of the measure $\tilde\rho_{k,N}$ is available only at $t=0$, the following observation by Tzvetkov and Visciglia is crucial to 
prove Lemma \ref{Prop.invarianza-loc}.  
\begin{lemma}\label{Lemma:t0}
Let $k\geq 2 $. Then for all $\a, t \in \R$ we have
\be\label{eq:TVt0}
\sup_{A \in \B(L^{2}(\T)) } \left|\frac{d}{dt} \tilde \rho_{k,N} (  \Phi^N_{t,\a}(A)  ) \right|
\leq \sup_{A \in \ms \B(L^{2}(\T)) }\left|\frac{d}{dt}  \tilde \rho_{k,N}(  \Phi^N_{t,\a}(A)  )\Big|_{t=0}\right|  \,.
\ee
\end{lemma}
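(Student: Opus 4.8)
The plan is to reduce the derivative at an arbitrary time $t$ to the derivative at the initial time by exploiting the group property of the truncated flow. Since $\Phi^N_{t,\a}$ is the flow of the autonomous system of ODEs obtained from the truncated GDNLS equation \eqref{20140509:DNLSapprox} on $M_N(\C^{2N+1})$, it satisfies the group law $\Phi^N_{t+s,\a} = \Phi^N_{s,\a} \circ \Phi^N_{t,\a}$ for all $s,t \in \R$, and each $\Phi^N_{t,\a}$ is a smooth diffeomorphism of $M_N(\C^{2N+1})$; in particular it maps $\B(L^2(\T))$ bijectively onto itself.

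First I would fix $A \in \B(L^2(\T))$ and set $g_A(t) := \tilde \rho_{k,N}(\Phi^N_{t,\a}(A))$. Writing $B := \Phi^N_{t,\a}(A)$ and using the group law, for every $s \in \R$ one has
$$
g_A(t+s) = \tilde \rho_{k,N}(\Phi^N_{t+s,\a}(A)) = \tilde \rho_{k,N}(\Phi^N_{s,\a}(B)) = g_B(s) \, .
$$
Differentiating this identity in $s$ at $s=0$ gives $g_A'(t) = g_B'(0)$, that is
$$
\frac{d}{dt} \tilde \rho_{k,N}(\Phi^N_{t,\a}(A)) = \frac{d}{ds}\Big|_{s=0} \tilde \rho_{k,N}(\Phi^N_{s,\a}(B)) \, .
$$

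Then I would take absolute values and bound the right-hand side by its supremum over Borel sets. Since, for $t$ fixed, the map $A \mapsto B = \Phi^N_{t,\a}(A)$ is a bijection of $\B(L^2(\T))$, as $A$ ranges over all Borel sets so does $B$; hence
$$
\sup_{A \in \B(L^2(\T))} \left| \frac{d}{dt} \tilde \rho_{k,N}(\Phi^N_{t,\a}(A)) \right| \leq \sup_{B \in \B(L^2(\T))} \left| \frac{d}{ds}\Big|_{s=0} \tilde \rho_{k,N}(\Phi^N_{s,\a}(B)) \right| \, ,
$$
which is exactly \eqref{eq:TVt0} after renaming $s$ as $t$ and $B$ as $A$.

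The routine points to verify are that $g_A$ is genuinely differentiable and that one may differentiate the composed identity in $s$; both follow from the smoothness of the finite-dimensional flow together with the absolute continuity of $\tilde \rho_{k,N}$, with smooth density, with respect to the Lebesgue measure on $M_N(\C^{2N+1})$, in the spirit of Proposition \ref{LebMeasPres}. No delicate estimate is required here: the only structural ingredient is the group property, which is the heart of the Tzvetkov--Visciglia observation, since it is precisely what allows the explicit representation of $\tilde \rho_{k,N}$, available only at $t=0$, to be transported to every time. The one point to be careful about is to invoke the bijectivity of $\Phi^N_{t,\a}$ on Borel sets, so that the supremum over the image sets $B$ collapses to a supremum over all Borel sets.
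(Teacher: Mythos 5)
Your proof is correct and is exactly the argument the paper has in mind: the paper omits the proof with a pointer to \cite{TV13b} (Proposition 5.4, step 2), and that argument is precisely the reduction you give, namely using the group law $\Phi^N_{t+s,\a}=\Phi^N_{s,\a}\circ\Phi^N_{t,\a}$ to write $\frac{d}{dt}\tilde\rho_{k,N}(\Phi^N_{t,\a}(A))$ as a derivative at time zero evaluated on the Borel set $B=\Phi^N_{t,\a}(A)$, and then majorizing by the supremum over all Borel sets. The only cosmetic remark is that full bijectivity of $\Phi^N_{t,\a}$ on Borel sets is not even needed; it suffices that the image $\Phi^N_{t,\a}(A)$ is Borel, which follows since the inverse flow $\Phi^N_{-t,\a}$ is continuous.
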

We omit the proof, that can be done directly following \cite{TV13b} (Proposition 5.4, step 2).

\begin{proof}[Proof of Lemma \ref{Prop.invarianza-loc}]
Let $A \in \ms \B(L^{2}(\T))$.
The proof is based on the identity
\begin{align}\nonumber
  \tilde\rho_{k,N} & (\Phi^N_{t,\a_k}(A)) = 
   \int_{\mc M_N(\Phi^N_{t,\a_k}(A))} \left( \prod_{m=0}^{k-1}
\chi_{R_m}\left(\mc E_m [P_N f] \right) \right)
\exp(-\widetilde Q_{k}[P_N f]) \g_k (df)
  \\ \nonumber
 & = \frac{1}{Z_N}  \int_{P_N(A)} \left( \prod_{m=0}^{k-1}
\chi_{R_m}\left(\mc E_m [\Phi^N_{t,\a_k} g] \right) \right)
\\ \nonumber
& \ \ \ \ \ \ \cdot
\exp \left(-\widetilde Q_{k}[\Phi^N_{t,\a_k} g] - \frac12 ( \| \Phi^N_{t,\a_k} g \|_{\dot{H}^k}^2 \right) \exp(-\frac12 \| \Phi^N_{t,\a_k} g \|_{L^2}^2  )  
 \bigg( \prod_{|n| \leq N} d \Phi^N_{t,\a_k}f d \overline{ \Phi^N_{t,\a_k}f } \bigg)
 \\ \nonumber
 & = \frac{1}{Z_N}  \int_{P_N(A)} \left( \prod_{m=0}^{k-1}
\chi_{R_m}\left(\mc E_m [\Phi^N_{t,\a_k} g] \right) \right) 
\\ \nn
& \ \ \ \ \ \ \cdot 
\exp \left(- \mc E_k [\Phi^N_{t,\a_k} g]  \right) \exp(-\frac12 \| \Phi^N_{t,\a_k} g \|_{L^2}^2  )  
 \bigg( \prod_{|n| \leq N} d \Phi^N_{t,\a_k}f d \overline{ \Phi^N_{t,\a_k}f } \bigg)
\end{align}
where in the first identity, using that $(\Phi^N_{t,\a_k})^{-1} = \Phi^N_{t,-\a_k}$, we have changed variables $f = \Phi^N_{t,\a_k} g$ and used the explicit representation \eqref{Def:gammaK}
and the product structure of the Gaussian measure. Notice that we are identifying with a little abuse of notation the set $P_N(A) := \{P_N f : f \in A \}$ with 
the set of the relative Fourier coefficients $\{ (f(-N), \ldots, f(N))  : P_N f \in A  \}$. The second identity just follows by the definition~\eqref{Q} of $\widetilde Q_k[\cdot]$.  
Using 
$$
\frac{d}{dt} \exp\left( -\frac12 \| \Phi^N_{t,\a_k} g \|_{L^2}^2  \right)  
 \bigg( \prod_{|n| \leq N} d \Phi^N_{t,\a_k}f d \overline{ \Phi^N_{t,\a_k}f } \bigg) = 0
$$
that is a direct consequence of 
Propositions~\ref{GaugedMassCons} and~\ref{LebMeasPres} we can 
compute
\begin{align}\nonumber
\frac{d}{dt}    \tilde\rho_{k,N} & (\Phi^N_{t,\a_k}(A)) \Big|_{t=0}
\\ \nonumber
&= \frac{1}{Z_N} \sum_{\ell=0}^{k-1}\int_{P_N(A)} \left( \frac{d}{dt}\mc E_\ell [\Phi^N_{t,\a_k}f] \right)
\chi'_{R_\ell}(\mc E_\ell [\Phi^N_{t,\a_k}f]) \left( \prod_{\substack{m=0 \\ m \neq \ell}}^{k-1}\chi_{R_m}(\mc E_m [\Phi^N_{t,\a_k}f]) \right)
\\ \nonumber
& \ \ \ \ \ \ \cdot
\exp \left(- \mc E_k [\Phi^N_{t,\a_k} g]  \right) \exp \left(-\frac12 \| \Phi^N_{t,\a_k} g \|_{L^2}^2  \right)  
 \bigg( \prod_{|n| \leq N} d \Phi^N_{t,\a_k}f d \overline{ \Phi^N_{t,\a_k}f } \bigg) \Big|_{t=0} 
\\ \nonumber
&
- \frac{1}{2 Z_N}
\int_{P_N(A)} 
\left( \frac{d}{dt} \mc E_k [\Phi^N_{t,\a_k}f] \right) \left( \prod_{m=0}^{k-1}\chi_{R_m}(\mc E_m[\Phi^N_{t,\a_k}f]) \right) 
\\ \nonumber
& \ \ \ \ \ \ \cdot
\exp \left(- \mc E_k [\Phi^N_{t,\a_k} g]  \right) \exp\left(-\frac12 \| \Phi^N_{t,\a_k} g \|_{L^2}^2  \right)  
 \bigg( \prod_{|n| \leq N} d \Phi^N_{t,\a_k}f d \overline{ \Phi^N_{t,\a_k}f } \bigg) \Big|_{t=0} 
\\ \nonumber
&=  \sum_{\ell=0}^{k-1}\int_{\mc M_N(A)} \left( \frac{d}{dt}\mc E_{\ell}[\Phi^N_{t,\a_k} f] \Big|_{t=0} \right)
\chi'_{R_\ell}(\mc E_{\ell}[P_N f]) \left( \prod_{\substack{m=0 \\ m \neq \ell}}^{k-1}\chi_{R_m}(\mc E_m[P_N f]) \right) 
\\ \nonumber
& \ \ \ \ \ \ \cdot
\exp(\widetilde Q_{k}[P_N f])\g_k(d f )
\\ \nonumber
&
-
\int_{\mc M_N(A)} 
\left( \frac{d}{dt} \mc E_k[\Phi^N_{t,\a_k} f] \Big|_{t=0} \right) \left( \prod_{m=0}^{k-1}\chi_{R_m}(\mc E_m[P_N f]) \right) \exp(\widetilde Q_{k}[P_N f])
\g_k(d f )\,.
\end{align}
Since $\supp \chi'_{R}\subseteq \supp \chi_R$, Remark \ref{rmk:integrability} entails that the functionals ($\ell=1,\ldots,k-1$)
\begin{align}\nonumber
\chi'_{R_\ell}(\mc E_{\ell}[P_N f]) \Bigg( \prod_{\substack{m=0 \\ m \neq \ell}}^{k-1}\chi_{R_m}(\mc E_m[P_N f]) \Bigg) \exp(\widetilde Q_{k}[P_N f])
\, ,
\end{align}
\begin{equation}\nonumber
  \left(  \prod_{m=0}^{k-1}\chi_{R_m}(\mc E_m[P_Nf]) \right) \exp(\widetilde Q_{k}[P_N f])
\end{equation}
are bounded in $L^2(\g_k)$ uniformly in $N$ (namely their $L^2(\g_k)$ norm is bounded by a constant independent on $N \in \mathbb{N}$). 
Thus the Cauchy-Schwarz inequality yields
\be
\left|\frac{d}{dt} \tilde\rho_{k,N}(\Phi^N_{t,\a_k}(A)) \Big|_{t=0}
\right|\lesssim \sum_{m=1}^k\left\|\frac{d}{dt}\mc E_m[\Phi^N_{t,\a_k} f]\Big|_{t=0}\right\|_{L^2(\g_k)}\,,
\ee
and the r.h.s. vanishes as $N \to \infty$ thanks to Proposition \ref{Prop: energie}. Using Lemma \ref{Lemma:t0}, this concludes the proof.
\end{proof}

We are finally ready to prove Proposition \ref{Prop:invarianza-gauged}.

\begin{proof}[Proof of Proposition \ref{Prop:invarianza-gauged}]
We fix $R > 1$, $5/4 < r < k - 1/2$ and $0 \leq s < r$ (recall $k \geq 2$). We consider $C \subset B^r(R)$ that is compact in the $H^s(\T)$ 
topology. 
Given $T > 0$ we integrate \eqref{eq:lolcal} over the interval $[0,T]$, 
so that, letting $N \to \infty$, we obtain 
\begin{equation}\label{SempreRecall}
 \tilde \rho_{k}(C) =
 \lim_{N \to \infty} \tilde \rho_{k,N}(C)   =
\lim_{N \to \infty} \tilde \rho_{k,N}(\Phi^N_{t,\a_k}(C))     \,,
\end{equation}
where we used Proposition \ref{sec5prop1} in the first identity.
Now we take $T = t_R$ given by Proposition~\ref{Prop:nearness}, so that, given any
$\e >0$ we have   
\be\label{eq:inclusione-flussi.fine}
\Phi^N_{t,\a_k}(C)  \subseteq \Phi_{t,\a_k}( C) + B^{s}(\e), \qquad |t| \leq t_R \, .  
\ee
for all sufficiently large $N = N(\e)$. 
Thus
$$
 \tilde \rho_{k,N}(\Phi^N_{t,\a_k}(C))   \leq 
 \tilde \rho_{k,N} ( \Phi_{t,\a_k}( C) + B^{s}(\e) ),     \qquad |t| \leq t_R \, ,
$$
for all $N = N(\e)$ sufficiently large. Recalling Proposition \ref{sec5prop1} we can pass to the limit $N \to \infty$ so that, for all $\e>0$, we find 
\begin{equation}\nonumber
\lim_{N \to \infty} 
\tilde \rho_{k,N}(\Phi^N_{t,\a_k}(C))   
\leq
\tilde \rho_{k} (\Phi_{t,\a_k}( C) + B^{s}(\e) ), \qquad |t| \leq t_R  \, .
\end{equation}
By \eqref{SempreRecall} and the arbitrarity of $\e$ we have arrived to
\begin{equation}\label{SempreRecall2}
 \tilde \rho_{k}(C) \leq \inf_{\e >0} \tilde \rho_{k} (\Phi_{t,\a_k}( C) + B^{s}(\e) ), \qquad |t| \leq t_R \, .
\end{equation} 
Since $C$ is compact in $H^s$ and the flow map is continuous in the $H^s$ topology when we restrict the 
data to $B^r(R)$ (see \cite[Theorem 1.1, Corollary 1.2]{Herr}),
we have that $\Phi_{t,\a_k}(C)$ is compact too. Thus
$$
\bigcap_{\e >0} ( \Phi_{t,\a_k}( C) + B^{s}(\e) ) = \Phi_{t,\a_k}( C)   \, ,
$$
and
$$
\inf_{\e >0} \tilde \rho_{k} (\Phi_{t,\a_k}( C) + B^{s}(\e) ) = \tilde \rho_{k} \left( \bigcap_{\e >0} ( \Phi_{t,\a_k}( C) + B^{s}(\e) ) \right) 
= \tilde \rho_{k} (  \Phi_{t,\a_k}( C) ), \qquad |t| \leq t_R    \, .
$$
Recalling \eqref{SempreRecall2} we arrive to
$$
 \tilde \rho_{k}(C)  \leq \tilde \rho_{k} (  \Phi_{t,\a_k}( C) ), \qquad |t| \leq t_R  \, .
$$
Now, since $C \in B^r(R)$, by \cite[Theorem 1.1, Corollary 1.2]{Herr} and inequality \eqref{lemma:gauge-bound-HsBis}, we 
notice that $\Phi_{t,\a_k}(C)$ belongs to a (eventually larger $R < R'$) ball $B^r(R')$, so that 
a well-known continuation argument allows to take $t_{R} = \infty$,
namely
\begin{equation}\label{DesRes}
\tilde \rho_{k}(C)  \leq \tilde \rho_{k} (  \Phi_{t,\a_k}( C) ), \qquad  t \in \R  \, .
\end{equation}
Letting $R \to \infty$, this bound can be promoted to any compact set in $H^r(\T)$ an then to
to an identity 
\begin{equation}\label{DesResBis}
\tilde \rho_{k}(C) = \tilde \rho_{k} (  \Phi_{t,\a_k}(C) ), \qquad  \mbox{$C$ compact in $H^s(\T)$ and $C \subset H^r(\T)$, $t \in \R$}   \, .
\end{equation}
exploiting the time reversibility of the flow. Indeed, still as 
consequence of~\cite[Theorem 1.1, Corollary 1.2]{Herr} and inequality \eqref{lemma:gauge-bound-HsBis}
we have that  
$\Phi_{t,\a_k}$ is an $H^s$-diffeomorphism once we restrict the data to a ball of $H^r(\T)$. Thus, for any $C'$ compact
in $H^s(\T)$ and contained in a ball of $H^r$ the set $\Phi_{t,\a_k} ( C' )$ has the 
same property, and, letting $C' = \Phi_{-t,\a_k}(C)$, we get
$$
\tilde \rho_{k} ( \Phi_{-t,\a_k}(C))  \leq \tilde \rho_{k}( \Phi_{t,\a_k} ( \Phi_{-t, k}(C) )  
= \tilde \rho_{k} ( C ),
\qquad t \in \R \, ,
$$
that, along with the \eqref{DesRes}, leads to \eqref{DesResBis}.
Finally, using the regularity of the gaussian measure, we get that \eqref{DesResBis} holds for any Borel set of $H^s(\T)$ that are subsets of 
$H^r(\T)$ too. This completes the proof.
\end{proof}

\section{Finite dimensional approximations of the Gauge map}\label{sect:gauge1}

In this Section we show that the Gauge map $\Ga_\a$ can be approximated by a family of 
bijective maps $\Ga_\a^N$ acting on $E_N$. For any $N \in \mathbb{N}$ these maps are still a one-parameter 
group with respect to $(\R,+)$.

%
%
%

Given $f\in E_N$ we define $\Ga^N_a f$, $\a \in \mathbb{R}$ as the (unique) solution of the differential equation 
\begin{equation}\label{eq:gauge-troncataDiff}
\frac{d}{d\a} \Ga_\a^N f=i P_N\left(\mathcal I[\Ga^N_{\a}f]\Ga^N_{\a} f \right)\,,\quad \Ga_0^N f=f \,.
\end{equation}
The \eqref{eq:gauge-troncataDiff} is globally well posed for data in $L^2(\T)$ by the same argument explained in Section \ref{sect:flows} 
for the truncated GDNLS equation \ref{20140509:DNLSapprox}, taking advantage of the
fact that the $L^2$ norm is a conserved quantity, as shown in Lemma \ref{L2NormGauge}.
In the rest of the section we will prove some properties of the solutions of \eqref{eq:gauge-troncataDiff} and of the associated flow map $f \to \Ga^N_\a f$.
Hereafter we will use the notation $\Ga_\a^{\infty} f := \Ga_\a f$ and $E_{\infty} := L^{2}(\T)$.
Let us recall that $B^s(R) := \{ f : \| f \|_{H^s}  \leq R \}$.

\begin{lemma}\label{L2NormGauge}
Let $N \in \N \cup \{ \infty \}$. For all $f \in E_N$ we have
$$
\| \Ga_\a^N f \|_{L^{2}} = \| f \|_{L^2} 
$$
\end{lemma}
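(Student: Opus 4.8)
The plan is to treat both cases $N\in\N$ and $N=\infty$ simultaneously by differentiating the squared $L^2$ norm along the one-parameter family and exploiting that the generator of the flow is, up to the factor $i$, multiplication by the \emph{real} function $\mc I[\,\cdot\,]$.

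First I would fix $f\in E_N$ and write $g=g(\a):=\Ga_\a^N f$ for the solution of \eqref{eq:gauge-troncataDiff}. Since the right-hand side of \eqref{eq:gauge-troncataDiff} lies in $E_N$ (the projection $P_N$ maps into $E_N$) and $g(0)=f\in E_N$, the solution satisfies $g(\a)=P_N g(\a)\in E_N$ for every $\a$ in its interval of existence; in particular $\bar g\in E_N$ as well. I stress that at this stage I use only the \emph{local} solution furnished by standard ODE theory, so the argument is not circular with respect to the global well-posedness that is later deduced from this very lemma.

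Next I would compute
\begin{equation}\nonumber
\frac{d}{d\a}\|g\|_{L^2}^2
=2\,\Re\int_{\T}\bar g\,\frac{dg}{d\a}
=2\,\Re\left(i\int_{\T}\bar g\,P_N\big(\mc I[g]\,g\big)\right).
\end{equation}
Because $P_N$ is self-adjoint for the $L^2$ pairing and $\bar g\in E_N$, the orthogonality relation $\int P_{>N}(\,\cdot\,)P_N(\,\cdot\,)=0$ yields $\int_{\T}\bar g\,P_N(\mc I[g]g)=\int_{\T}\bar g\,\mc I[g]\,g=\int_{\T}\mc I[g]\,|g|^2$. Since $\mc I[g]$ is real (it is the real zero-average primitive of $|g|^2-(2\pi)^{-1}\|g\|_{L^2}^2$, as recalled after \eqref{DefMathcali}), this last integral is a real number, so $i$ times it is purely imaginary and its real part vanishes. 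Therefore $\frac{d}{d\a}\|g\|_{L^2}^2=0$, whence $\|g(\a)\|_{L^2}=\|g(0)\|_{L^2}=\|f\|_{L^2}$ on the whole interval of existence; this conservation is precisely what allows the local solution to be continued to all $\a\in\R$.

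For $N=\infty$ the same computation applies verbatim (omitting $P_N$), or one may simply invoke that $|\Ga_\a f|=|f|$ pointwise, already noted in Section~\ref{sec:gauge}, which gives $\|\Ga_\a f\|_{L^2}=\|f\|_{L^2}$ at once. The only point requiring care — and hence the main (minor) obstacle — is the bookkeeping that $g$ remains in $E_N$, so that $P_N\bar g=\bar g$ and the projection can be discarded; everything else is an immediate consequence of the reality of $\mc I[g]$.
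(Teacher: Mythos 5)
Your proposal is correct and follows essentially the same argument as the paper: differentiate $\|\Ga_\a^N f\|_{L^2}^2$ along the flow, use the orthogonality $P_{>N}\overline{\Ga^N_{\a}f}=0$ to drop the projection $P_N$, and conclude from the reality of $\mc I[\Ga^N_\a f]$ that the derivative vanishes. Your additional remarks (that the solution stays in $E_N$, and that conservation upgrades the local solution to a global one) are sound bookkeeping that the paper leaves implicit.
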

\begin{proof}
Using equation \eqref{eq:gauge-troncataDiff} and its conjugate we can compute
$$
\frac{d}{d\a}\| \Ga_\a^N f \|^2_{L^{2}} = 2 \Re i \int  \overline{\Ga^N_{\a} f} P_N \left(\mathcal I[\Ga^N_{\a}f]  \Ga^N_{\a} f \right)  
$$
Since $P_{>N} \overline{\Ga^N_{\a}f} = 0$, we get, by orthogonality,
\begin{align}
\frac{d}{d\a}\| \Ga_\a^N f \|^2_{L^{2}}
= 2 \Re  i \int   \mathcal I[\Ga^N_{\a}f]  |\Ga^N_{\a} f|^2   = 0 \, ,
\end{align}
where the last identity is immediate since $ \mathcal I [\mathscr G_\alpha^N f]$ is real.
\end{proof}

Then we control the growth of the $\dot H^1$ norm of $\Ga_\a f$.
A similar statement can be proved for general $H^s$, $s\geq 0$ norms, but will be not necessary for our purposes. 
As usual we use the simplified notation $\mu = \mu[f] := \frac{1}{2 \pi}\|f\|^2_{L^2}$.

\begin{lemma}\label{H1NormGauge}
Let $N \in \N \cup \{ \infty \}$. For all $f \in E_N$ we have
\begin{equation}\label{StrIneq}
\| \Ga_\a^N f \|^2_{\dot{H}^{1}} \lesssim  e^{\alpha \mu} ( \| f \|^2_{\dot{H}^{1}} + \mu ) 
\end{equation}
\end{lemma}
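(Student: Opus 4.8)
The plan is to differentiate $\|\Ga_\a^N f\|_{\dot H^1}^2$ along the gauge flow \eqref{eq:gauge-troncataDiff} and to close a \emph{linear} differential inequality whose coefficient is $\mu$, so that Gr\"onwall's lemma produces the exponential factor. Write $g=g_\a:=\Ga_\a^N f$. Since the right-hand side of \eqref{eq:gauge-troncataDiff} lies in $E_N$, the flow preserves $E_N$, whence $g=P_N g$ for all $\a$; moreover by Lemma \ref{L2NormGauge} the $L^2$ mass is conserved, so $\mu[g_\a]=\mu$ is constant in $\a$ and may be treated as a fixed parameter.

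Differentiating and using $\partial_\a g=iP_N(\mathcal{I}[g]g)$ together with $[P_N,\partial_x]=0$ gives $\frac{d}{d\a}\|g\|_{\dot H^1}^2 = 2\Re\, i\int \bar g'\,P_N\partial_x(\mathcal{I}[g]g)$. Because $g'=P_N g'$ and $P_N$ is self-adjoint, the projector can be dropped. Recalling from \eqref{DefMathcali} that $\mathcal{I}[g]'=|g|^2-\mu$, I expand $\partial_x(\mathcal{I}[g]g)=(|g|^2-\mu)g+\mathcal{I}[g]g'$. The contribution of the second summand is $2\Re\, i\int \mathcal{I}[g]\,|g'|^2$, which vanishes identically since $\mathcal{I}[g]$ and $|g'|^2$ are real; thus only the term $A:=2\Re\, i\int(|g|^2-\mu)g\bar g'$ survives (one checks by integration by parts that this integral is in fact purely imaginary, though that is not needed for the bound).

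The heart of the matter is to estimate $A$ so that a single power of $\mu$ — rather than a high Sobolev norm — stands in front of the energy. By Cauchy--Schwarz, $|A|\lesssim \|(|g|^2-\mu)g\|_{L^2}\,\|g\|_{\dot H^1}\le \|g\|_{L^\infty}\,\||g|^2-\mu\|_{L^2}\,\|g\|_{\dot H^1}$. Subtracting the mean is crucial: since $\||g|^2-\mu\|_{L^2}^2=\|g\|_{L^4}^4-2\pi\mu^2\le\|g\|_{L^4}^4\le\|g\|_{L^2}^2\|g\|_{L^\infty}^2\lesssim\mu\|g\|_{L^\infty}^2$, one gains exactly the factor $\mu^{1/2}$. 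Combining this with the one-dimensional Gagliardo--Nirenberg bound $\|g\|_{L^\infty}^2\lesssim\mu+\mu^{1/2}\|g\|_{\dot H^1}$ and Young's inequality yields $|A|\lesssim \mu\big(\|g\|_{\dot H^1}^2+\mu\big)$.

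It then follows that $\frac{d}{d\a}\big(\|g\|_{\dot H^1}^2+\mu\big)\le C\mu\big(\|g\|_{\dot H^1}^2+\mu\big)$, and Gr\"onwall's inequality gives $\|g_\a\|_{\dot H^1}^2+\mu\le e^{C\a\mu}(\|f\|_{\dot H^1}^2+\mu)$ for $\a\ge0$; the case $\a<0$ follows verbatim after reversing the flow (replacing $\a$ by $|\a|$), which yields \eqref{StrIneq}. I expect the only delicate point to be the nonlinear estimate for $A$: the cancellation coming from the zero average of $\mathcal{I}[g]'$ (equivalently, the subtraction of $\mu$) is what downgrades the prefactor from $\|g\|_{\dot H^1}$ to $\mu$, and is therefore responsible for the bound being exponential in $\mu$ rather than merely in $\|f\|_{\dot H^1}$; everything else is routine interpolation. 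The argument is uniform in $N\in\N\cup\{\infty\}$, since it uses nothing beyond $g=P_N g$, mass conservation, and $[P_N,\partial_x]=0$.
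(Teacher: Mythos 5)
Your proof is correct and follows essentially the same route as the paper's: differentiate $\|\Ga_\a^N f\|_{\dot H^1}^2$ along the flow \eqref{eq:gauge-troncataDiff}, drop the projector by self-adjointness, kill the $\int \mathcal I[g]\,|g'|^2$ term using that $\mathcal I[g]$ is real, estimate the surviving term so that the Gr\"onwall coefficient is $\mu$, and conclude by Gr\"onwall together with mass conservation. The only cosmetic difference is in the interpolation step: you keep $|g|^2-\mu$ together and use Cauchy--Schwarz with the $L^4$ and $L^\infty$ Gagliardo--Nirenberg bounds, whereas the paper splits the term and uses $\|g\|_{L^6}^3\leq\|g\|_{\dot H^1}\|g\|_{L^2}^2$; note in particular that the mean subtraction you call ``crucial'' is not actually needed, since the cubic piece $\int |g|^2 g\bar g'$ alone already satisfies the bound $\lesssim \mu\|g\|_{\dot H^1}^2$, as the paper's treatment shows.
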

\begin{proof}
Proceeding as in the proof of Lemma \ref{L2NormGauge}, and using the fact that $\mc I[\ms G_\alpha^Nf]$ is a real function,
we obtain
$$
\frac{d}{d\a}\| \Ga_\a^N f \|^2_{\dot{H}^{1}} = 2 \Re i \int  \mathcal I[\Ga^N_{\a}f]'  \Ga^N_{\a} f \overline{\Ga^N_{\a} f}'
=
- 2 \Im \int  \mathcal I[\Ga^N_{\a}f]'  \Ga^N_{\a} f \overline{\Ga^N_{\a} f}'   \, .
$$
Then, recalling the definition \eqref{DefMathcali} of $\mathcal I [\cdot]$ and using Lemma \ref{L2NormGauge}, we have
$$ 
\mathcal I [\mathscr G_\alpha^N f]' = |\mathscr G_\alpha^N f|^2 - \frac{1}{2\pi} \| \mathscr G_\alpha^N f \|_{L^2}^2 
= 
|\mathscr G_\alpha^N f|^2 - \frac{1}{2\pi} \| f \|_{L^2}^2 
=: 
|\mathscr G_\alpha^N f|^2 - \mu \, ,
$$ 
so that 
$$
\frac{d}{d\a}\| \Ga_\a^N f \|^2_{\dot{H}^{1}} = - 2 \Im  \int |\mathscr G_\alpha^N f|^2  \Ga^N_{\a} f \overline{\Ga^N_{\a} f}'   
+ 2  \mu \Im \int   \Ga^N_{\a} f \overline{\Ga^N_{\a} f}' \, .
$$
Now using the H\"older, Cauchy--Scwartz and the following Gagliardo--Nirenberg inequality
$$
\| h \|_{L^{6}} \leq \| h \|_{\dot{H}^1}^{\frac13} \| h \|_{L^2}^{\frac23} \, ,
$$
and again recalling Lemma \ref{L2NormGauge}, we arrive to
\begin{align}
\frac{d}{d\a} 
\| \Ga_\a^N f \|^2_{\dot{H}^{1}} 
& 
\lesssim \| |\mathscr G_\alpha^N f|^3 \|_{L^{2}} \|\mathscr G_\alpha^N f \|_{\dot{H}^1} + 
\mu \| \Ga^N_{\a} f \|_{L^{2}} \|\mathscr G_\alpha^N f \|_{\dot{H}^1}
\\ \nn
& 
\simeq
\| \mathscr G_\alpha^N f \|_{L^{6}}^3 \|\mathscr G_\alpha^N f \|_{\dot{H}^1} + 
\mu^{3/2}  \|\mathscr G_\alpha^N f \|_{\dot{H}^1}
\\ \nn
&\lesssim
\| \mathscr G_\alpha^N f \|_{L^{2}}^2 \|\mathscr G_\alpha^N f \|_{\dot{H}^1}^2 + 
\mu^{2} +  \mu \|\mathscr G_\alpha^N f \|_{\dot{H}^1}^2
\simeq \mu^{2} +  \mu \|\mathscr G_\alpha^N f \|_{\dot{H}^1}^2 \, .
\end{align}
Thus Gr\"onwall's inequality yields
$$
\| \Ga_\a^N f \|^2_{\dot{H}^{1}} 
 \lesssim 
\| f \|^2_{\dot{H}^{1}} e^{\a \mu} + \mu e^{\a \mu} - \mu \, , 
$$
which implies \eqref{StrIneq}.
\end{proof}

In the following Lemma we show that the 
flow $\Ga^N_\a$ approximates $\Ga^\infty_\a := \Ga_\a$ for large $N$ in the $L^2$ topology. The approximation is uniform for 
initial data in 
a ball of $H^1$. By the proof it will be clear that one can obtain a similar approximation property 
w.r.t the $H^s$ topology and data in a ball of $H^r$, for all $0 \leq s < r$ as long as $r > 1/2$. However we do not need this stronger statement. 

\begin{lemma}\label{HsDecay}
Let $N \in \N $, $R >1$ and $\bar \alpha \in \R$ 
We have
\begin{equation}\label{StabL2Easy}
\lim_{N \to \infty}  \sup_{f \in E_N \cap B^1(R), \, |\a| \leq |\bar \a| } \| \Ga_\a f - \Ga_\a^N f \|_{L^2} = 0 \, . 
\end{equation}
\end{lemma}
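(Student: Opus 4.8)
The plan is to run a Grönwall-type energy estimate in the gauge parameter $\alpha$ on the $L^2$ difference of the two flows, exploiting that both flows preserve the $L^2$ norm and that the only discrepancy between the two equations is the projection $P_N$. First I record that the full gauge $\Ga_\a f = e^{i\alpha\mathcal I[f]}f$ solves the same ODE as \eqref{eq:gauge-troncataDiff} but without the projection: since $|\Ga_\a f| = |f|$ we have $\mathcal I[\Ga_\a f] = \mathcal I[f]$, whence
$$
\frac{d}{d\alpha}\Ga_\a f = i\,\mathcal I[f]\,\Ga_\a f = i\,\mathcal I[\Ga_\a f]\,\Ga_\a f\,.
$$
Writing $g := \Ga_\a f$, $g_N := \Ga_\a^N f$ and $w_N := g - g_N$, and using $iA - iP_N B = i(A-B) + iP_{>N}B$ with $A = \mathcal I[g]g$ and $B = \mathcal I[g_N]g_N$, the difference satisfies
$$
\frac{d}{d\alpha}w_N = i\big(\mathcal I[g]g - \mathcal I[g_N]g_N\big) + i\,P_{>N}\big(\mathcal I[g_N]g_N\big), \qquad w_N|_{\alpha=0}=0\,.
$$

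Next I would estimate $\frac{d}{d\alpha}\|w_N\|_{L^2}^2 = 2\Re\int\overline{w_N}\,\frac{d}{d\alpha}w_N$. By Lemma \ref{L2NormGauge} both flows preserve the mass, so $\mu[g] = \mu[g_N] = \mu[f] =: \mu$, and hence $|g|^2-|g_N|^2$ is mean-zero and $\mathcal I[g]-\mathcal I[g_N]$ is its zero-average primitive. Since a zero-average primitive is bounded in $L^\infty$ by the $L^1$ norm of its argument, I obtain $\|\mathcal I[g]\|_{L^\infty}\lesssim \|f\|_{L^2}^2\lesssim R^2$ and $\|\mathcal I[g]-\mathcal I[g_N]\|_{L^\infty}\lesssim \||g|^2-|g_N|^2\|_{L^1}\lesssim R\,\|w_N\|_{L^2}$. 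Splitting $\mathcal I[g]g - \mathcal I[g_N]g_N = \mathcal I[g]w_N + (\mathcal I[g]-\mathcal I[g_N])g_N$ and using Cauchy--Schwarz together with $\|g_N\|_{L^2}=\|f\|_{L^2}\leq R$, the difference term contributes at most $C R^2\|w_N\|_{L^2}^2$.

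The projection error is where the $H^1$ regularity enters. By Lemma \ref{H1NormGauge} the $\dot H^1$ norm of $g_N$ stays bounded by a constant depending only on $R$ and $\bar\alpha$ for $|\alpha|\leq|\bar\alpha|$; combined with Lemma \ref{L2NormGauge} this gives a uniform $H^1$ bound on $g_N$. Since $H^1(\T)$ is an algebra embedded in $L^\infty(\T)$, the function $\mathcal I[g_N]g_N$ lies in $H^1$ with norm controlled uniformly in $f\in E_N\cap B^1(R)$, $|\alpha|\leq|\bar\alpha|$ and $N$; indeed $(\mathcal I[g_N]g_N)' = (|g_N|^2-\mu)g_N + \mathcal I[g_N]g_N'$ is bounded in $L^2$ by those norms. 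Therefore $\|P_{>N}(\mathcal I[g_N]g_N)\|_{L^2}\lesssim N^{-1}\|\mathcal I[g_N]g_N\|_{H^1}\lesssim C(R,\bar\alpha)N^{-1}$. Feeding the two bounds into the energy identity yields $\frac{d}{d\alpha}\|w_N\|_{L^2}^2 \leq CR^2\|w_N\|_{L^2}^2 + C(R,\bar\alpha)N^{-1}\|w_N\|_{L^2}$; since $w_N(0)=0$, Grönwall's inequality (integrating towards positive or negative $\alpha$, the estimates being symmetric) gives $\sup_{|\alpha|\leq|\bar\alpha|}\|w_N\|_{L^2}\lesssim e^{CR^2|\bar\alpha|}C(R,\bar\alpha)N^{-1}\to 0$, uniformly over $f\in E_N\cap B^1(R)$, which is \eqref{StabL2Easy}.

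I expect the main obstacle to be securing the uniform (in $f$, $\alpha$, $N$) control that makes the projection error vanish: one needs the a priori $H^1$ bound on the truncated gauge flow from Lemma \ref{H1NormGauge} together with the algebra and embedding properties of $H^1(\T)$ to upgrade $\mathcal I[g_N]g_N$ from merely $L^2$ to $H^1$, thereby gaining the decay $N^{-1}$ of $P_{>N}$. The remaining energy estimate is routine once mass conservation identifies $\mu[g]=\mu[g_N]$.
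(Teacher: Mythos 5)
Your proof is correct and follows essentially the same route as the paper's own: an ODE in $\alpha$ for the difference of the two flows, $L^\infty$ bounds on $\mathcal I[\cdot]$ and on $\mathcal I[\Ga_\a f]-\mathcal I[\Ga^N_\a f]$ obtained from mass conservation of both flows, a uniform-in-$N$ $H^1$ bound from Lemma \ref{H1NormGauge} combined with the algebra property of $H^1(\T)$ to gain the $N^{-1}$ decay of the $P_{>N}$ term, and a Gr\"onwall argument starting from $w_N|_{\a=0}=0$. The only deviations are cosmetic: you attach $P_{>N}$ to $\mathcal I[\Ga^N_\a f]\Ga^N_\a f$ while the paper attaches it to the untruncated product (an equivalent regrouping, since Lemma \ref{H1NormGauge} covers $N=\infty$ as well), and you run the energy estimate on $\|w_N\|_{L^2}^2$ rather than on $\|w_N\|_{L^2}$.
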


\begin{proof}
%
%
We will need the immediate inequalities
\begin{equation}\label{Triv1}
\| \mathcal{I}(\Ga^N_\a f) \|_{L^{\infty}} \lesssim \| \Ga^N_\a f \|^2_{L^{2}} \simeq \mu ,
\end{equation}
\begin{equation}\label{Triv2}
\| \mathcal{I}(\Ga_\a f) - \mathcal{I}(\Ga^N_\a f)\|_{L^{\infty}} \lesssim \| \Ga_\a f  +  \Ga^N_\a f \|_{L^2} \| \Ga_\a f  -  \Ga^N_\a f \|_{L^2} \lesssim \sqrt{\mu} \| \Ga_\a f  -  \Ga^N_\a f \|_{L^2} \, ,
\end{equation}
valid for $N \in \N \cup \{ \infty \}$ and $f \in E_{N}$, (recall $\mu = \mu[f] := \frac{1}{2\pi}\|f\|^2_{L^2}$). These follow immediately recalling the form \eqref{DefMathcali} of $\mathcal{I}[\cdot]$ and Lemma \ref{L2NormGauge}.
Let
$$
\delta^N_\a f :=  \Ga_\a f - \Ga^N_\a f 
$$
Hereafter we restrict to $f \in E_N$. 
Notice that $\delta^N_\a f$
solves
\begin{equation}\nn
\frac{d}{d \a} \delta^N_\a f 
= 
i P_{>N} \left( \mathcal{I} [\Ga_\a f] \Ga_\a f  \right) 
+ i  P_N \left( \mathcal{I}[\Ga_\a f]\delta^N_\a f + \left( \mathcal{I} [\Ga_\a f] - \mathcal{I}[\Ga^N_\a f] \right)\Ga^N_\a f  \right) \, .
\end{equation}
Pairing this in $L^2$ with $\delta^N_\a f$ we arrive to
\begin{align}\nn
\frac{d}{d \a}
&\| \delta^N_\a f \|_{L^2} 
\\ \nn
& = 2 \Re i \left(  
- \int \left(  P_{>N} \left( \mathcal{I}[\Ga^N_\a f] \Ga^N_\a f  \right) \right) \overline{\delta^N_\a f}
+ \int   \mathcal{I}[\Ga_\a f] |\delta^N_\a f|^2  
+ 
\int \left(  \mathcal{I}[\Ga_\a f] - \mathcal{I}[\Ga^N_\a f] \right) \Ga^N_\a f   \overline{\delta^N_\a f} \right)   \, .
\end{align}
Using the H\"older and Cauchy--Schwartz inequalities and \eqref{Triv1}, \eqref{Triv2} we arrive to
\begin{align}\label{StabGAlpha}
\frac{d}{d \a} 	\| \delta^N_\a f \|_{L^2} 
& 
\lesssim
\| P_{>N} ( \mathcal{I}[\Ga^N_\a f] \Ga^N_\a f  ) \|^2_{L^2} +  \| \delta^N_\a (f) \|^2_{L^2}
\\ \nn
& 
+ \| \mathcal{I}[\Ga_\a f] \|_{L^{\infty}} \| \delta^N_\a f \|^2_{L^2} 
+ 
\| \mathcal{I}[\Ga_\a f] - \mathcal{I}[\Ga^N_\a f] \|_{L^{\infty}}  \| \Ga^N_\a f \|_{L^2} \| \delta^N_\a f \|_{L^2} 
\\ \nn
& 
\lesssim 
\| P_{>N} ( \mathcal{I}[\Ga^N_\a f] \Ga^N_\a f  ) \|^2_{L^2} 
+ (1 + \mu)  \| \delta^N_\a f \|^2_{L^2}    \, .
\end{align}

Then using the algebra property of $H^1$, the fact that $ \partial_x \mathcal{I}[\Ga_\a f] = \Ga_\a f $ and \eqref{Triv1} and Lemma \ref{H1NormGauge}, 
we can estimate
\begin{align}\nn
\sup_{f \in E_N \cap B^1(R), \, |\a| \leq |\bar \a| }
\| \mathcal{I}[\Ga_\a f] \Ga_\a f \|_{H^1}
& \lesssim 
\sup_{f \in E_N \cap B^1(R), \, |\a| \leq |\bar \a| } \mu \| \Ga_\a f \|_{H^1} + \| \Ga_\a f \|^2_{H^1}  \leq \mu^2  \, ,
\\ \nn
& 
\lesssim \mu e^{\frac{|\bar \a| \mu}{2}} (R + \sqrt \mu) + e^{|\bar \a| \mu} (R^2 + \mu) \lesssim R^2 e^{|\bar \a| R} \, ,
\end{align}
so that 
$$
\sup_{f \in E_N \cap B^1(R), \, |\a| \leq |\bar \a| } 
\| P_{>N} ( \mathcal{I}[\Ga_\a f] \Ga_\a f  ) \|^2_{L^2} \lesssim \frac{1}{N^2}  R^2 e^{|\bar \a| R}
$$
and \eqref{StabGAlpha} becomes
$$
\frac{d}{d \a} \| \delta^N_\a f \|_{L^2}  \lesssim \frac{1}{N^2} R^2 e^{|\bar \a| R} + (1 + \mu)  \| \delta^N_\a f \|^2_{L^2}   \, ,
$$
for $|\a| \leq |\bar \a|$ and $f \in B^1(R)$.
Thus using Gr\"onwall's lemma and the fact that $\delta^N_\a f |_{\a = 0} = 0$ we arrive to
$$
\sup_{f \in E_N \cap B^1(R), \, |\a| \leq |\bar \a| }  \| \delta^N_\a f \|^2_{L^2}  \lesssim \frac{1}{N^2} R^2 e^{|\bar \a| R} e^{(1 + \mu) |\bar \alpha|} \, , 
$$
that implies \eqref{StabL2Easy}. 
\end{proof}

The next result is a direct corollary of Lemmas \ref{H1NormGauge} and \ref{HsDecay}. For the proof we refer to \cite[Proposition 2.10]{sigma}

\begin{corollary}\label{StabCorImp}
Let $\varepsilon > 0$, $R >1$, $\bar \alpha \in \R$. Given $A \subset B^1(R)$ a compact set, there exists
$N^*$ such that 
\begin{equation}\nonumber
\Ga_\a (A) \subset \Ga^N_\a (A + B^s(R)) \, . 
\end{equation}
for all $|\a| \leq |\bar \a|$ and for all $N > N^*$.
\end{corollary}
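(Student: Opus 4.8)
The plan is to reproduce the scheme of \cite[Proposition 2.10]{sigma}, using that for each fixed $N$ the maps $\{\Ga^N_\a\}_{\a\in\R}$ form a one-parameter group on $(\R,+)$, hence $\Ga^N_\a$ is a bijection with inverse $\Ga^N_{-\a}$. To make sense of $\Ga^N_\a$ on arguments outside $E_N$ (such as $\Ga_\a g$, which has infinitely many Fourier modes), I would adopt the standard convention $\Ga^N_\a f:=\Ga^N_\a(P_N f)+P_{>N}f$, which still satisfies $\Ga^N_\a\circ\Ga^N_{-\a}=\mathbb I$ on all of $L^2(\T)$. With this convention the inclusion $\Ga_\a(A)\subseteq\Ga^N_\a(A+B^s(\e))$ is equivalent to showing that, for every $g\in A$, the preimage $h:=\Ga^N_{-\a}(\Ga_\a g)$ lies in $A+B^s(\e)$, since $\Ga^N_\a h=\Ga_\a g$. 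Taking the competitor $a=g\in A$, everything reduces to
\[
\lim_{N\to\infty}\ \sup_{|\a|\leq|\bar{\a}|}\ \sup_{g\in A}\ \big\|\,\Ga^N_{-\a}(\Ga_\a g)-g\,\big\|_{H^s}=0\,,
\]
because this forces $\|h-g\|_{H^s}<\e$ for all large $N$, uniformly in $g$ and $\a$.

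First I would use $g=\Ga_{-\a}(\Ga_\a g)$ from \eqref{eq:gauge_properties} to rewrite the quantity above as $\big(\Ga^N_{-\a}-\Ga_{-\a}\big)(w)$ with $w:=\Ga_\a g$, so the problem becomes the closeness of $\Ga^N_{-\a}$ and $\Ga_{-\a}$ on the relevant family of $w$'s. The key a priori input is Lemma \ref{H1NormGauge}: since $g\in A\subset B^1(R)$ and $\|\Ga_\a g\|_{L^2}=\|g\|_{L^2}\leq R$ by Lemma \ref{L2NormGauge}, the estimate \eqref{StrIneq} guarantees that $w=\Ga_\a g$ stays in a fixed ball $B^1(R')$ with $R'=R'(R,\bar{\a})$ for all $|\a|\leq|\bar{\a}|$. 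Moreover, by joint continuity of $(\a,g)\mapsto\Ga_\a g$ (Lipschitz in $g$ via \eqref{lemma:gauge-bound-HsBis}, and clearly continuous in $\a$) and compactness of $\{|\a|\leq|\bar{\a}|\}\times A$, the family of admissible $w$ is contained in a compact set $K$.

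Next I would split the difference through $P_N$, writing
\[
\big(\Ga^N_{-\a}-\Ga_{-\a}\big)(w)=\big(\Ga^N_{-\a}(P_Nw)-\Ga_{-\a}(P_Nw)\big)+\big(\Ga_{-\a}(P_Nw)-\Ga_{-\a}(w)\big)+P_{>N}w\,.
\]
The first term is controlled by Lemma \ref{HsDecay} applied to $P_Nw\in E_N\cap B^1(R'')$, giving uniform $L^2$ convergence over $|\a|\leq|\bar{\a}|$. The second term is handled by the Lipschitz bound \eqref{lemma:gauge-bound-HsBis} (whose constant depends only on the bounded $H^1$ norms of $P_Nw$ and $w$), reducing it to a multiple of $\|P_Nw-w\|=\|P_{>N}w\|$; the third term is exactly $\|P_{>N}w\|$. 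Both tails vanish uniformly as $N\to\infty$ because $w$ ranges over the compact set $K$, on which $\|P_{>N}\cdot\|\to0$ uniformly.

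I expect the main obstacle to be precisely the mismatch of domains: Lemma \ref{HsDecay} furnishes the approximation $\Ga^N_\a\approx\Ga_\a$ only for arguments already in $E_N$, whereas the natural competitor $w=\Ga_\a g$ is genuinely infinite-dimensional. Bridging this gap—replacing $w$ by $P_Nw$ and absorbing the tail $P_{>N}w$ together with the commutator between truncation and the nonlinear, anticipative gauge into errors that are small uniformly in $g$ and $\a$—is exactly where the compactness of $A$ and the $H^1$ a priori bound of Lemma \ref{H1NormGauge} are indispensable. The passage from the $L^2$ statement to a general $H^s$ ball in the corollary is then routine, invoking the $H^s$-version of Lemma \ref{HsDecay} mentioned in the remark preceding the statement, interpolated against the uniform $H^1$ control.
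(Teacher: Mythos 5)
Your proposal is correct and takes essentially the same route as the paper: the paper gives no argument of its own but refers to \cite[Proposition 2.10]{sigma}, and your proof is precisely a reconstruction of that scheme — extend $\Ga^N_\a$ by $\Ga^N_\a f:=\Ga^N_\a(P_Nf)+P_{>N}f$ (the implicit convention needed for the statement to make sense), use the group property to reduce to showing $\sup_{g\in A,\,|\a|\leq|\bar\a|}\|\Ga^N_{-\a}(\Ga_\a g)-g\|\to0$, and control this via the three-term decomposition using Lemmas \ref{L2NormGauge}, \ref{H1NormGauge}, \ref{HsDecay}, the Lipschitz bound \eqref{lemma:gauge-bound-HsBis}, and the compactness of $A$. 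The only deviations are cosmetic: you silently correct the statement's typo ($B^s(R)$ should be $B^s(\e)$), which is the right reading.
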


The next technical lemma will be used to prove Proposition \ref{FlowMapBij} and then to show the 
absolute continuity of $\g_k$ under the gauge map. Let us recall that we have defined in \eqref{Def:dive} the operator $\dive$ applied to an $n-$th 
dimensional vectorial function of $f, \bar f \in E_N$  as 
$$
\dive F (f, \bar f) = \sum_{|n| \leq N} \left( 
\frac{ \partial F_{n} }{ \partial f (n) } + \frac{ \partial \bar{F}_{n} }{ \partial \bar f (n) }  \right) \, .
$$

\begin{lemma}\label{lemma:div}
We have
 \be\label{eq:div}
\left|  \dive i P_N \left(\mathcal I[P_N f] P_N f)\right) \right| \lesssim \| f \|^2_{H^{1}} \frac{\log N}{\sqrt N} \,.
\ee
 \end{lemma}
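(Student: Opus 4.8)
The plan is to evaluate $\dive\, iP_N(\mathcal{I}[P_Nf]P_Nf)$ explicitly in the Fourier variables $\{f(n)\}_{|n|\le N}$ and then to estimate the resulting lattice sum; the decay in $N$ will come entirely from a harmonic-type cancellation, which is the one delicate point.

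First I would move to the Fourier side. Set $g:=P_Nf$ and recall (from the discussion after \eqref{DefMathcali}) that $\mathcal{I}[g]$ is the zero-average primitive of $|g|^2-\mu$, so $\partial_x\mathcal{I}[g]=|g|^2-\mu$. Writing $b_m:=\sum_k g(k)\overline{g(k-m)}$ for the Fourier coefficients of $|g|^2$, the coefficients of $\mathcal{I}[g]$ are $a_0=0$ and $a_m=b_m/(im)$ for $m\neq0$. Since $P_N$ acts trivially on the $n$-th coefficient for $|n|\le N$, the $n$-th component of the vector field is
\[
F_n=i\,[\mathcal{I}[g]g]^{\wedge}(n)=i\sum_{|\ell|\le N}a_{n-\ell}f(\ell)=\sum_{\substack{|\ell|\le N\\ \ell\neq n}}\frac{b_{n-\ell}}{n-\ell}\,f(\ell).
\]

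Next I would compute the Wirtinger derivatives. In $F_n$ the explicit factor $f(\ell)$ is never $f(n)$ (as $\ell\neq n$), so $\partial/\partial f(n)$ acts only on the holomorphic part of $b_{n-\ell}=\sum_k f(k)\overline{f(k-(n-\ell))}$, picking out $k=n$ and giving $\partial b_{n-\ell}/\partial f(n)=\overline{f(\ell)}$. Hence $\partial F_n/\partial f(n)=\sum_{\ell\neq n}|f(\ell)|^2/(n-\ell)$, and an identical computation yields the same value for $\partial\bar F_n/\partial\bar f(n)$. Summing over $|n|\le N$ and exchanging the order of summation,
\[
\dive\, iP_N(\mathcal{I}[P_Nf]P_Nf)=2\sum_{|\ell|\le N}|f(\ell)|^2 S_\ell,\qquad S_\ell:=\sum_{\substack{|n|\le N\\ n\neq\ell}}\frac{1}{n-\ell}=H_{N-\ell}-H_{N+\ell},
\]
where $H_m=\sum_{j=1}^m 1/j$ (and $H_m=0$ for $m\le0$); the second equality follows by the substitution $j=n-\ell$.

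Finally I would estimate the sum, and here is the only real obstacle: the crude bound $|S_\ell|\le\sum_{n\neq\ell}1/|n-\ell|\lesssim\log N$ destroys all cancellation and gives no decay, so one must exploit $S_\ell=-\sum_{j=N-\ell+1}^{N+\ell}1/j$ (for $\ell>0$), which provides the two complementary estimates $|S_\ell|\le 2|\ell|/(N-|\ell|+1)$ and $|S_\ell|\le H_{2N}\lesssim\log N$. Splitting the $\ell$-sum at $|\ell|=\sqrt N$: for $|\ell|\le\sqrt N$ one has $|S_\ell|\lesssim|\ell|/N\lesssim 1/\sqrt N$, so these terms contribute $\lesssim N^{-1/2}\|f\|_{L^2}^2\le N^{-1/2}\|f\|_{H^1}^2$; for $|\ell|>\sqrt N$ one uses $|S_\ell|\lesssim\log N$ together with $|f(\ell)|^2\le(1+\ell^2)|f(\ell)|^2/N$, giving a contribution $\lesssim N^{-1}\log N\,\|f\|_{H^1}^2$. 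Adding the two yields $|\dive\, iP_N(\mathcal{I}[P_Nf]P_Nf)|\lesssim\|f\|_{H^1}^2\,\log N/\sqrt N$, which is \eqref{eq:div}; in fact the harmonic cancellation even gives the sharper rate $N^{-1}$, but the stated bound already suffices.
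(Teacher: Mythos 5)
Your proof is correct and is essentially the paper's own argument: the same explicit Fourier-side computation of the divergence, the same harmonic-sum cancellation (the paper writes your $S_\ell=H_{N-\ell}-H_{N+\ell}$ as the sum $\sum_{m=N-n+1}^{N+n}\frac1m$ weighted against $|f(-n)|^2-|f(n)|^2$), and the same splitting of the frequency sum at $\sqrt N$ with the $H^1$ weight, yielding the rate $\log N/\sqrt N$. The only differences are trivial bookkeeping (you use $\|f\|_{L^2}\leq\|f\|_{H^1}$ on low frequencies rather than the pointwise bound $|f(n)|^2\leq n^{-2}\|f\|_{H^1}^2$ throughout), plus your parenthetical claim of a sharper rate $N^{-1}$, which should really be $N^{-1}\log N$ but is anyway not used.
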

 \begin{proof}
A direct computation from (\ref{DefMathcali}) yields
\be\label{eq:FTdi-IBis}
(\mathcal{I}[P_N f])(0) = 0, \qquad
(\mathcal{I}[P_N f])(m) = 
- \frac{i}{m} \sum_{|\ell|, |\ell - m| \leq N }  f (\ell) \bar f (\ell - m) \quad \mbox{if}  \quad m \neq 0\,, 
\ee
thus
\begin{equation}\label{FTOTRHS1}
i \left( \mathcal{I}[P_N f] P_N f \right)(n) =   \sum_{m \, : \, m \neq 0, |n-m| \leq N  } \frac{1}{m}
 \sum_{\ell \, : \, |\ell|, |\ell - m|,  \leq N }   f(n-m) f (\ell) \bar f (\ell - m) \, ,
\end{equation}
and
$$
\dive i P_N \left(\mathcal I[P_N f ] P_N f)\right)=   2  \sum_{n \, : \, |n| \leq N} \quad \sum_{m \, : \,  m \neq 0, |n-m| \leq N }  \frac{1}{m} |f(n-m)|^{2}  \, .
$$
%
%
%
Since
$$
\sum_{n \, : \, |n| \leq N} \quad \sum_{m \, : \,  m \neq 0, |n-m| \leq N }  \frac{1}{m} |f(n-m)|^{2} = 
 \sum_{n=1}^{N} \left( |f(-n)|^{2} - |f(n)|^{2} \right) \sum_{m=N-n+1}^{N+n} \frac{1}{m} \, ,
$$
and we clearly have $|f(n)|^{2} \leq \frac{1}{n^2} \| f \|^{2}_{H^{1}}$, we can estimate 
\begin{align}\nn
 \sum_{n=1}^{N} 
 & 
 \left( |f(-n)|^{2} - |f(n)|^{2} \right) \sum_{m=N-n+1}^{N+n} \frac{1}{m}
 \\ \nn
& 
\lesssim \|f\|^{2}_{H^{1}} \sum_{n=1}^{N} \frac{1}{n^{2}} \ln \left( \frac{N+n}{N-n+1}\right) 
= \|f\|^{2}_{H^{1}} \sum_{n=1}^{N} \frac{1}{n^{2}} \ln \left( 1 + \frac{2n-1}{N-n+1} \right) \, .
\end{align}
Then, for $N$ sufficiently large, we have
\begin{align}
 \sum_{n=1}^{N} \frac{1}{n^{2}} \ln \left( 1 + \frac{2n-1}{N-n+1} \right)
& 
\lesssim 
\sum_{n=1}^{\lfloor \sqrt{N} \rfloor} \frac{1}{n^{2}} \ln \left( 1+ \frac{2}{\sqrt{N}}\right)
+ 
\sum_{n=\lfloor \sqrt{N} \rfloor}^{N} \frac{1}{n^{2}} \ln \left( 2N \right)
\\ \nonumber
& 
\lesssim \ln 
\left( 1+ \frac{2}{\sqrt{N}}\right) + \frac{ \ln \left( 2N \right) }{ \sqrt{N}}  \, ,
\end{align}
whence \eqref{eq:div}.
\end{proof}

Now we will prove that the flow maps $f \to \Ga^N_\a f$ are injective and that they preserve the Lebesgue measure in the limit 
$N \to \infty$, if we restrict to bounded subsets of $H^1$.  
We recall that we have defined the Lebesgue measure (see \eqref{NewLebesgue}) 
as proportional to 
$$
\prod_{|n| \leq N} d\phi_{N}(n)d \bar \phi_{N} (n) \, .
$$
When we transform this volume form under $\Ga^N_\a$ we have to take into account the 
Jacobian matrix $D\mathscr G_\alpha^N(f)$, where the differential operator $D$ acts on a map 
$$
T : f  \in E_N \to  Tf \in  E_N
$$ 
in the following way 
\begin{equation}\label{Def:Diff}
(D T) (f) =
\begin{pmatrix}
\left(\frac{\partial (T f )(m)}{\partial f(n)}
\right)_{|m|,|n|\leq N}
&
\left(\frac{\partial (\overline{Tf})(m)}{\partial f(n)}
\right)_{|m|,|n|\leq N}
\\
\left(\frac{\partial (T f)(m)}{\partial \overline{f}(n)}
\right)_{|m|,|n|\leq N}
&
\left(\frac{\partial (\overline{T f})(m)}{\partial \overline{f}(n)}
\right)_{|m|,|n|\leq N}
\end{pmatrix}
\,.
\end{equation}

\begin{proposition}\label{FlowMapBij}
For all $f \in E_N$we have 
\begin{equation}\label{DiffDive}
\det[(D\Ga^N_{\a})(f)] = \exp \left( \int_0^\a d\a'     \dive i P_N  \left(\mathcal I [\Ga^N_{\a'}(f) ] \Ga^N_{\a'}(f) \right) \right) \, .
\end{equation}
In particular, the flow map $f \to \Ga_\a f$ is injective. Moreover for all $\a \in \mb R$ and $f \in H^1$ we have
$\det[(D\Ga^N_{\a})(f)] \to 1$ as $N \to \infty$. In fact the convergence is uniform for $f$ in bounded subsets of~$H^1$ and $\a$ in bounded intervals. 
\end{proposition}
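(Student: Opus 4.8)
The plan is to read \eqref{eq:gauge-troncataDiff} as the flow of the autonomous polynomial vector field $V(f):=iP_N(\mathcal I[P_Nf]P_Nf)$ on the finite-dimensional space $E_N\cong\C^{2N+1}$, and to apply the classical Liouville formula (Jacobi's identity for the derivative of a determinant) for the evolution of the Jacobian along such a flow. For $f\in E_N$ (so that $P_Nf=f$) the map $\alpha\mapsto\Ga^N_\alpha f$ solves $\frac{d}{d\alpha}\Ga^N_\alpha f=V(\Ga^N_\alpha f)$ with $\Ga^N_0f=f$. Setting $J(\alpha):=(D\Ga^N_\alpha)(f)$ with $D$ as in \eqref{Def:Diff}, differentiation of the flow equation in the initial datum gives the linear variational equation $\frac{d}{d\alpha}J(\alpha)=(DV)(\Ga^N_\alpha f)\,J(\alpha)$, $J(0)=\mathbb I$, whence by Jacobi's identity $\frac{d}{d\alpha}\det J(\alpha)=\tr\big((DV)(\Ga^N_\alpha f)\big)\,\det J(\alpha)$. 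A short Wirtinger computation shows that, with the convention \eqref{Def:dive}, one has $\tr(DV)=\dive V=\sum_{|n|\le N}\big(\frac{\partial V_n}{\partial f(n)}+\frac{\partial \bar V_n}{\partial\bar f(n)}\big)$ (the real divergence of $V$ viewed as a field on $\R^{2(2N+1)}$). Integrating from $0$ to $\alpha$ with $\det J(0)=1$ produces exactly \eqref{DiffDive}.

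For the injectivity claim I would use that the explicit computation carried out in the proof of Lemma \ref{lemma:div} shows $\dive V(g)$ to be real for every $g\in E_N$; hence the exponent in \eqref{DiffDive} is real and $\det[(D\Ga^N_\alpha)(f)]>0$ for all $f$ and all $\alpha$. Thus $\Ga^N_\alpha$ is a local diffeomorphism of $E_N$ everywhere, and, combined with the one-parameter group property of $\{\Ga^N_\alpha\}$ (so that $\Ga^N_\alpha$ admits the two-sided inverse $\Ga^N_{-\alpha}$, itself obtained from uniqueness of solutions to \eqref{eq:gauge-troncataDiff}), this makes $\Ga^N_\alpha$ a global diffeomorphism of $E_N$, in particular injective.

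Finally, for the convergence of the determinant I would insert Lemma \ref{lemma:div} into \eqref{DiffDive}. Since $\Ga^N_{\alpha'}f\in E_N$ we have $P_N\Ga^N_{\alpha'}f=\Ga^N_{\alpha'}f$, so Lemma \ref{lemma:div} yields $\big|\dive\,iP_N(\mathcal I[\Ga^N_{\alpha'}f]\Ga^N_{\alpha'}f)\big|\lesssim\|\Ga^N_{\alpha'}f\|^2_{H^1}\,\frac{\log N}{\sqrt N}$. The point is then to bound $\|\Ga^N_{\alpha'}f\|_{H^1}$ uniformly in $N$, in $f\in B^1(R)$ and in $|\alpha'|\le|\bar\alpha|$: this follows from Lemma \ref{L2NormGauge} (which gives $\|\Ga^N_{\alpha'}f\|_{L^2}=\|f\|_{L^2}$) together with Lemma \ref{H1NormGauge} applied in both directions of the parameter (which gives $\|\Ga^N_{\alpha'}f\|^2_{\dot H^1}\lesssim e^{|\alpha'|\mu}(\|f\|^2_{\dot H^1}+\mu)$), so that $\|\Ga^N_{\alpha'}f\|^2_{H^1}\le C(R,\bar\alpha)$ independently of $N$. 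Consequently the exponent in \eqref{DiffDive} is bounded by $|\alpha|\,C(R,\bar\alpha)\,\frac{\log N}{\sqrt N}\to0$ as $N\to\infty$, uniformly for $f$ in bounded subsets of $H^1$ and $\alpha$ in bounded intervals, and therefore $\det[(D\Ga^N_\alpha)(f)]\to1$ with the stated uniformity.

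The genuinely analytic input --- the $\log N/\sqrt N$ smallness of the divergence --- is already supplied by Lemma \ref{lemma:div}, so the remaining difficulty is essentially organizational: correctly matching the divergence \eqref{Def:dive} with the trace of the complex Jacobian \eqref{Def:Diff} inside Jacobi's formula, and securing the uniform $H^1$ a priori bound along the whole gauge orbit that justifies integrating the divergence estimate. I expect the Wirtinger bookkeeping in the Liouville step to be the part most prone to slips, although it is routine once the conventions are fixed.
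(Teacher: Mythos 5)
Your proposal is correct and takes essentially the same route as the paper: the Liouville/Jacobi formula you invoke via the variational equation is exactly what the paper derives by hand (using the group property \eqref{eq:gauge_properties} and the first-order expansion $(D\Ga^N_\e)(g)=\mathbb{I}+\e\, D\, iP_N(\mathcal I[g]g)+\mathcal O(\e)$, whose trace is the divergence \eqref{Def:dive}). The remaining claims are likewise handled as in the paper --- injectivity from the nonvanishing (indeed real, positive) exponent together with the two-sided inverse $\Ga^N_{-\a}$, and the convergence $\det[(D\Ga^N_\a)(f)]\to1$ from \eqref{DiffDive} combined with Lemma \ref{lemma:div} --- with your uniform $H^1$ bound along the gauge orbit (Lemmas \ref{L2NormGauge} and \ref{H1NormGauge}) filling in a detail the paper leaves implicit.
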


\begin{proof}
Since $\Ga_{\a}^N$ is a one parameter group of transformations the chain rule gives 
\be\label{eq:1-gruppo}
(D\Ga^N_{\a+\e})(f)=(D\Ga^N_{\e})(\Ga^N_{\a}(f))\cdot (D\Ga^N_{\a})(f)\,.
\ee 
Then 
\bea\nn
\frac{d}{d\a}\det[(D\Ga^N_{\a})(f)]
&
=
&
\lim_{\e\to0}\frac{\det[(D\Ga^N_{\e})(\Ga^N_{\a}(f)]\det[(D\Ga^N_{\a})(f)]-\det[(D\Ga^N_{\a})(f)]}{\e}\nn\\
&
=
&
\lim_{\e\to0}\frac{\det[(D\Ga^N_{\e})(\Ga^N_{\a}(f))]-1}{\e}\cdot \det[(D\Ga^N_{\a})(f)]\nn\\
&
=
&
\left.\frac{d}{d\e}\det[(D\Ga^N_{\e})(\Ga^N_\a(f))]\right|_{\e=0}\det[(D\Ga^N_{\a})(f)]\,.
\eea
Thereby, since $\det[(D\Ga^N_{0})(f)]=1$, we arrive to
\begin{equation}\label{DefFlowForm}
\det[(D\Ga^N_{\a})(f)]=e^{\int_0^\a d\a' \Lambda_{\a'}(f)}\,
\quad\mbox{with}\quad
\Lambda_{\a'}(f):=\left.\frac{d}{d\e}\det[(D\Ga^N_{\e})(\Ga^N_{\a'}(f))]\right|_{\e=0}\,.
\end{equation}
Thus the \eqref{DiffDive} will be a consequence of the following identity
 \begin{equation}\label{eq:div1New}
\left[\frac{d}{d\e}\det[(D\Ga^N_{\e})(g)]\right]_{\e=0}
=
      \dive  i P_N \left(\mathcal I [g] g \right) \, ,
\end{equation}
valid for all $g \in E_N$. To prove \eqref{eq:div1New} we apply $D$ to
equation \eqref{eq:gauge-troncataDiff} which becomes
\begin{equation}\nonumber
\frac{d}{d\e} (D \Ga_\e^N) (g) =  D \, i P_N \left(\mathcal I[\Ga^N_{\e} g ]\Ga^N_{\e} g \right)\,.
\end{equation}
thus, since $(D \Ga_0^N)(g)=\mathbb{I}$, we get
\begin{equation}\nonumber
(D \Ga_\e^N) (g) = \mathbb{I}  +  \e \left(  D \, i P_N \left(\mathcal I[\Ga^N_{\e} g ]\Ga^N_{\e} g \right) \Big|_{\e = 0} \right)  + \mathcal{O}(\e) 
= \mathbb{I}  +  \e  D \, i P_N \left(\mathcal I[ g ]  g \right)  + \mathcal{O}(\e) \,, 
\end{equation}
from which we obtain
\begin{equation}\nonumber
\det[ (D \Ga_\e^N) (g)] = 1 + \e  \Tr   D \, i P_N  \left(\mathcal I[ g ]  g \right) + \mathcal{O}(\e) \,.
\end{equation}
This immediately implies the identity \eqref{eq:div1New}, recalling the definition of $D$ given in \eqref{Def:Diff} and 
that of divergence in \eqref{Def:dive}.
The second part of the statement is a consequence of \eqref{DiffDive} and of the inequality \eqref{eq:div}.
\end{proof}

%
%
%
%
%
%



\section{Quasi-invariance of $\g_k$ under the gauge map}\label{sect:gauge2}

The goal of this section is to prove Theorem \ref{th:gauge}.
We recall the definition of the restricted measure
\begin{equation}\label{RestrGauss}
\tilde\g_k(A)=\g_k(A\cap\{f \in L^2 : \mu[f] \leq R_0\})\,,\quad R_0>0\,. 
\end{equation}
\begin{remark}\label{AcRemark}
The absolutely continuity of $\hat \r_k$ w.r.t. $\g_k$ follows by Theorem \ref{th:gauge}, 
since $\hat \r_k := \tilde \rho_{k} \circ \Ga_{\a_k}$ and $\tilde \rho_{k}$ is absolutely continuous w.r.t. $\tilde \g_k$ by
Proposition \ref{sec5prop1}.
\end{remark}

We need the following accessory lemma.

\begin{lemma}\label{lemma:bound-exp}
Let $k\geq2$ and $R_{0} > 0$ sufficiently small. There there is $b>0$ such that  
\begin{equation}\label{MainTTEst}
\int \exp\left(b\left| \int u_N^{(k)}u_N^{(\b_2)} u_N^{(\b_3)} u_N^{(\b_4)} \right|\right)\tilde\g_k(df) < C \,, 
\end{equation}
for all $N \in \N$ and for all $ \b_2 \geq \b_3 \geq \b_4 \geq 0$ with $\b_2+ \b_3 + \b_4 = k-1$, where $u_N$ may denote either $P_N f$ or $P_N \bar f$. The 
constant $C$ is independent on $N$.
\end{lemma}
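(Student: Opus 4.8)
The plan is to deduce the exponential bound \eqref{MainTTEst} from polynomial moment estimates. Writing $Q_N:=\int u_N^{(k)}u_N^{(\b_2)}u_N^{(\b_3)}u_N^{(\b_4)}$, I would expand the exponential as a power series and reduce the claim to a uniform-in-$N$ control of the moments $\|Q_N\|_{L^{2p}(\tilde\g_k)}$. Since $\tilde\g_k\le\g_k$, it suffices to estimate against the full Gaussian measure. The target is a bound linear in $p$, namely $\|Q_N\|_{L^{p}(\g_k)}\le C_0\,p$, because then
\begin{equation}\nonumber
\int e^{b|Q_N|}\,\g_k(df)\le\sum_{p\ge0}\frac{b^p}{p!}\,\E[|Q_N|^p]\le\sum_{p\ge0}\frac{b^p(C_0p)^p}{p!}\lesssim\sum_{p\ge0}(bC_0e)^p,
\end{equation}
which converges once $b<1/(C_0e)$. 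The difficulty is that the generic hypercontractivity bound for a degree-four Wiener chaos only gives $\|Q_N\|_{L^{p}}\lesssim p^2\|Q_N\|_{L^2}$, and $p^2$ is not summable at exponent one; upgrading the growth from $p^2$ to $p$ is the heart of the matter, and this is where both the single top-order factor $u_N^{(k)}$ and the mass cut-off $\mu\le R_0$ must enter.

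Next I would exploit the structure of $Q_N$. Only the factor $u_N^{(k)}$ carries the critical number of derivatives, so it should be dualised rather than placed inside a product: writing $Q_N=\sum_{n}(in)^k u_N(n)\,\overline{V_N(n)}$, where $V_N$ is the Fourier transform of the cubic factor $\overline{u_N^{(\b_2)}u_N^{(\b_3)}u_N^{(\b_4)}}$ of total order $k-1$, the form becomes, conditionally on the low modes, linear in the high modes of $u_N^{(k)}$, whose pairing $\int f^{(k-1)}\bar f^{(k)}$ already lies in $L^2(\g_k)$ by Lemma \ref{Wick2}. To turn this heuristic into the moment bound I would run the Wick-theorem computation exactly as in Lemmas \ref{Lemma:Wick6} and \ref{Lemma:Wick4}: expand $\E[|Q_N|^{2p}]$ over all pairings, apply \eqref{eq:Wick}, and count the admissible frequency configurations. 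The presence of a single top factor forces the two highest indices in each contracted block to be nearly antipodal, which suppresses the off-diagonal contractions; showing that the number of free summation indices then grows linearly rather than quadratically in $p$ is what should reduce the effective chaos degree from four to two, and hence the moment growth from $p^2$ to $p$.

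To calibrate the base constant so that $bC_0e<1$, I would bring in the restriction $\mu\le R_0$. On $\{\mu\le R_0\}$ each low-order leg ($\b_j\le k-2$) can be estimated by interpolation and Sobolev embedding, $\|u_N^{(\b_j)}\|_{L^\infty}\lesssim\|f\|_{L^2}^{1-\theta}\|f\|_{H^{k-1/2-\e}}^{\theta}\lesssim R_0^{(1-\theta)/2}\|f\|_{H^{k-1/2-\e}}^{\theta}$ with $\theta=(2\b_j+1)/(2k-1)<1$, so that every leg of order $\le k-2$ produces a genuine smallness factor $R_0^{c}$ with $c>0$. Since at most one of $\b_2,\b_3,\b_4$ can equal $k-1$, at least two legs are of this type, and choosing $R_0$ small makes the constant in front of the surviving bilinear Gaussian quantity as small as needed. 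The residual powers $\|f\|_{H^{k-1/2-\e}}^{\theta}$ are harmless: the $H^{k-1/2-\e}$-norm is a convergent degree-two chaos, so $\E[e^{c\|f\|_{H^{k-1/2-\e}}^2}]<\infty$ for small $c$ by Fernique's theorem (\cite{kuo}), giving sub-Gaussian integrability that absorbs them.

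The main obstacle will be the borderline regularity. The cubic factor of total order $k-1$ fails, by exactly a logarithm, to lie in $H^{1/2}$ — the same $\ln N/\sqrt N$ phenomenon visible in Lemmas \ref{Lemma:Wick6} and \ref{Lemma:Wick4} — so a purely deterministic duality $H^{-1/2-\e}\times H^{1/2+\e}$ cannot close, and the critical leg $\b_2=k-1$ yields no smallness from $R_0$. The genuine content is therefore the Wick counting that tames this critical interaction while simultaneously delivering the linear-in-$p$ moment growth; the mass cut-off only supplies the smallness needed to fix $b$ and does not by itself lower the chaos degree. I expect the bookkeeping of the frequency-constrained pairings in $\E[|Q_N|^{2p}]$, and in particular the verification that the free indices proliferate only linearly in $p$, to be the technically demanding step.
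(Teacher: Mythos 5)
Your plan breaks down at the very first reduction, namely ``since $\tilde\g_k\le\g_k$, it suffices to estimate against the full Gaussian measure.'' Under the full measure $\g_k$ the bound you are aiming for is simply false: $\|Q_N\|_{L^p(\g_k)}$ genuinely grows like $p^2$, and $\int e^{b|Q_N|}\,\g_k(df)=\infty$ for \emph{every} $b>0$. To see this, take the admissible pattern $Q_N=\int (P_Nf)^{(k)}(P_N\bar f)^{(k-1)}|P_Nf|^2$ (i.e.\ $\b_2=k-1$, $\b_3=\b_4=0$). Among its contributions is $|f(0)|^2\cdot i\sum_{0<|n|\le N} n^{2k-1}|f(n)|^2$: the product of the exponential random variable $|f(0)|^2$ with an independent degree-two chaos whose upper tail is itself genuinely exponential (its largest weights, at $|n|=1$, are of order one). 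Conditionally on the nonzero modes, $Q_N$ is a quadratic polynomial in $f(0)$, so no other term can cancel this one for large $|f(0)|$; optimising the cost of $\{|f(0)|^2\sim r\}$ against that of $\{\sum_n n^{2k-1}|f(n)|^2\sim s\}$ over $rs\sim t$ gives $\g_k(|Q_N|>t)\gtrsim e^{-c\sqrt t}$ uniformly in $N$. Consequently no Wick bookkeeping of $\E\left[|Q_N|^{2p}\right]$, however careful, can yield linear-in-$p$ moments: the dangerous contractions are precisely these low-frequency ones, and they are not suppressed by any frequency constraint (the two low-order legs sit at frequency zero while the two top legs pair diagonally). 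Your closing assertion that ``the mass cut-off only supplies the smallness needed to fix $b$ and does not by itself lower the chaos degree'' is therefore exactly backwards: the restriction $\mu\le R_0$ is what makes the lemma true, by capping the low-frequency Gaussians deterministically. Note also that once you keep the cutoff, Wick's theorem is no longer available, since $\tilde\g_k$ is not Gaussian; your scheme is caught between these two incompatibilities.

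For comparison, the paper never estimates moments: it proves the tail bound $\tilde\g_k(|Q_N|\ge t)\lesssim e^{-ct}$ and integrates. The mechanism is as follows. The deterministic Littlewood--Paley bound \eqref{eq:stima-LittlePaley} factors $|Q_N|$ into four weighted dyadic sums; a union bound splits the event among them with thresholds $t^{\theta_1}$, $t^{\theta_\ell}$, where $\theta_1=\frac12-\frac1{4k}$ and $\theta_\ell=\frac1{2k}(\b_\ell+\frac12)$ are chosen so that the exponents sum to $1$. For dyadic blocks $2^j\lesssim t^{1/(2k)}$ the cutoff $\mu\le R_0$ makes the corresponding events have probability \emph{zero} once $R_0$ is small --- this is exactly where the effective degree is lowered --- while for $2^j\gtrsim t^{1/(2k)}$ one uses the $\chi^2$-concentration of $\|\Delta_j u_N\|_{L^2}$, built from $\sim 2^j$ independent Gaussians, whose tail $e^{-c2^j\s_j^2t^{2\theta}}\le e^{-c\s_j^2 t}$ is summable over $j$; the regime $t\le\sqrt N$ is reduced to the previous one by comparing with the truncation at $T=\lfloor t^2\rfloor$ via \cite[Lemma 5.3]{GLV16}. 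If you wish to keep your moment formulation, you must prove $\|Q_N\|_{L^p(\tilde\g_k)}\lesssim p$ directly under the \emph{restricted} measure, and the only available route to that is essentially this tail estimate.
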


The proof uses a slight modification of the argument for proving the same statement for $k=1$ and $\b_3=\b_4=0$ we learned from Thomann and Tzvetkov (unpublished). In the sequel we just underline the few differences given by the case $k\geq2$ and for more details we refer to \cite[Theorem 3.3]{brereton}.

\begin{proof}
We prove that for $R_{0} > 0$ sufficiently small there is $c>0$ depending only on $k$ such that
\begin{equation}\label{eq:bound-exp}
\tilde\g_k\left( \left| \int u_N^{(k)} u_N^{(\b_2)} u_N^{(\b_3)}  u_N^{(\b_4)} \right| \geq t \right)  \lesssim e^{-ct} \, .
\end{equation} 
This immediately implies the statement taking $0< b < c$.
We treat separately the regimes $t\leq\sqrt N$ and $t>\sqrt N$. 

For $t>\sqrt N$ we will prove directly
(as usual $\chi$ is the characteristic function)
\be\label{eq:gauge-tail}
\tilde\g_k\left( \left| \int u_N^{(k)} u_N^{(\b_2)} u_N^{(\b_3)} u_N^{(\b_4)} \right| \geq t \right)\chi_{\{t>\sqrt N\}}\lesssim e^{-ct} \, ,
\ee
for small $R_{0} > 0$ and some $c>0$. Hereafter $c$ will denote several positive small constants, possibly decreasing from line to line.

For $t\leq \sqrt N$ we use the decomposition
\begin{align}
&\tilde\g_k\left( \left| \int u_N^{(k)} u_N^{(\b_2)} u_N^{(\b_3)} u_N^{(\b_4)} \right| \geq t \right)\chi_{\{t\leq\sqrt N\}}\nn\\
&\leq\tilde\g_k\left( \left| \int u_T^{(k)} u_T^{(\b_2)} u_T^{(\b_3)} u_T^{(\b_4)} \right| \geq \frac t2 \right)\chi_{\{t\leq\sqrt N\}}\label{eq:gauge-head1}\\
&+\tilde\g_k\left( \left| \int u_N^{(k)} u_N^{(\b_2)} u_N^{(\b_3)} u_N^{(\b_4)} 
- \int u_T^{(k)} u_T^{(\b_2)} u_T^{(\b_3)} u_T^{(\b_4)}\right| \geq \frac t2 \right)\chi_{\{t\leq\sqrt N\}}\label{eq:gauge-head2}\,,
\end{align}
with $T:=\lfloor t^2 \rfloor$. The term (\ref{eq:gauge-head1}) enjoys the tail estimate (\ref{eq:gauge-tail}). For the
term (\ref{eq:gauge-head2}) \cite[Lemma 5.3]{GLV16} yields 
$$
\tilde\g_k\left( \left| \int u_N^{(k)} u_N^{(\b_2)} u_N^{(\b_3)} u_N^{(\b_4)} 
- \int u_T^{(k)} u_T^{(\b_2)} u_T^{(\b_3)} u_T^{(\b_4)}\right| \geq t \right)\chi_{\{t\leq\sqrt N\}}\lesssim e^{-ct}\,. 
$$
(In fact \cite[Lemma 5.3]{GLV16} is formulated only for $\b_2=k-1, \b_3=\b_4=0$. The extension to our more general case is however straightforward).

It remains to prove (\ref{eq:gauge-tail}). First we show 
\begin{equation}\label{eq:stima-LittlePaley}
\left|  \int u_N^{(k)} u_N^{(\b_2)} u_N^{(\b_3)} u_N^{(\b_4)} \right| 
\lesssim \left( \sum_{j\geq0} 2^{j(k-\frac12)} \| \Delta_j u_N \|_{L^2}   \right)
\prod_{\ell=2}^{4} \left( \sum_{j\geq0} 2^{j(\b_\ell + \frac12)} \| \Delta_j u_N \|_{L^2}   \right) 
 \,,
\end{equation}
where we recall $\D_j:=P_{2^{j}}-P_{2^{j-1}}$ are Paley-Littlewood projectors.
Notice that, since $\b_\ell \leq k-1$ we have $\b_\ell + \frac12 \leq k - \frac12$. 
This bound follows noting that by orthogonality  
\begin{align}\nonumber
\left| \int u_N^{(k)} u_N^{(\b_2)} u_N^{(\b_3)} u_N^{(\b_4)} \right|
& 
\lesssim
\sum_{  j_{\ell} \geq 0, \,  j_1 \leq 3 j_{2}   } 2^{j_{1} k} 2^{j_2  \b_2} 
\int  |\Delta_{j_1} u_N| |\Delta_{j_2} u_N| |\Delta_{j_3} u_N| |\Delta_{j_4} u_N|
\\ \nonumber
& 
\lesssim
\sum_{ j_{\ell} \geq 0   } 2^{j_{1} (k - \frac12 )} 2^{j_{2} (\b_2 + \frac12 )} \int  |\Delta_{j_1} u_N| |\Delta_{j_2} u_N| |\Delta_{j_3} u_N| |\Delta_{j_4} u_N| \, ,
\end{align}
then H\"older's and Bernstein's inequalities lead to
\begin{align}\nonumber
& \left| \int u_N^{(k)} u_N^{(\b_2)} u_N^{(\b_3)} u_N^{(\b_4)} \right|
\lesssim
\sum_{   j_{\ell} \geq 0  } 2^{j_{1} (k - \frac12 )} 2^{j_{2} (\b_2 + \frac12 )} 
\| \Delta_{j_1} u_N \|_{L^{2}} \| \Delta_{j_2} u_N \|_{L^2} 
\| \Delta_{j_3} u_N \|_{L^{\infty}} \| \Delta_{j_4} u_N \|_{L^{\infty}}
\\ \nonumber
& 
\lesssim
\sum_{   j_{\ell} \geq 0 } 2^{j_{1} (k - \frac12 )} 2^{j_{2} (\b_2 + \frac12 )} 2^{j_{3} (\b_3 + \frac12 )} 2^{j_{4} (\b_4 + \frac12 )} 
\| \Delta_{j_1} u_N\|_{L^{2}} \| \Delta_{j_2} u_N \|_{L^2} 
 \| \Delta_{j_3} u_N \|_{L^{2}} \| \Delta_{j_4} u_N \|_{L^{2}}\end{align} 
 from which one deduces the \eqref{eq:stima-LittlePaley}.
Since
$$
\frac12 - \frac{1}{4k} + \sum_{\ell=2}^{4} \frac{1}{2k} \left( \b_{\ell} + \frac12 \right) = 1 \, ,
$$
by (\ref{eq:stima-LittlePaley}) and the union bound (notice that the sums in \eqref{eq:termine1}, \eqref{eq:termine2} are over a finite number of terms)
we arrive to
\begin{align}
\tilde\g_k\left( \left| \int u_N^{(k)}u_N^{(\b_2)} u_N^{(\b_3)} u_N^{(\b_4)} \right| \geq t \right)
& \leq
 \g_k\left(\sum_{j\geq} 2^{j(k-\frac12)} \| \Delta_j u_N \|_{L^2}\geq t^{\frac{1}{2} - \frac{1}{4k}}\right)\label{eq:termine1}
 \\ 
& + \sum_{\ell=2}^{4} \g_k \left( \sum_{j\geq0} 2^{j (\b_{\ell} + \frac12)} \| \Delta_j u_N \|_{L^2} \geq t^{  \frac{1}{2k}(\b_{\ell} + \frac12)} \right)\,.\label{eq:termine2}
\end{align}
First we note that the simple bound
$$
\sum_{0 \leq j < \frac{1}{2k} \ln_2 t } 2^{j(k-\frac12)} \| \Delta_j u_N  \|_{L^2} \lesssim  R_0 t^{ \frac{1}{2} - \frac{1}{4k} }
$$
yields for $R_0$ small enough
\be\label{eq:jpiccoli1}
\g_k\left( \sum_{0 \leq j  < \frac{1}{2k} \ln_2 t } 2^{j(k-\frac12)} \| \Delta_j u_N  \|_{L^2} \geq t^{\frac{1}{2} - \frac{1}{4k}} \right)=0\,.
\ee
Similarly
$$
\sum_{0 \leq j < \frac{1}{2k} \ln_2 t} 2^{ j(\b_{\ell} + \frac12)} \| \Delta_j u_N \|_{L^2} \lesssim R_0 t^{\frac{1}{2k}(\b_{\ell} + \frac12)}
$$
gives 
\be\label{eq:jpiccoli2}
\g_k\left( \sum_{0 \leq j <  \frac{1}{2k} \ln_2 t} 2^{ j(\b_{\ell} + \frac12)} \| \Delta_j u_N \|_{L^2} \geq t^{\frac{1}{2k}(\b_{\ell} + \frac12)}\right)=0\,.
\ee
We introduce now a sequence $\s_j$ so that
$\sum_{j\geq 0}\s_j\leq \frac14$.
Therefore
\be\label{eq:j-int1-1}
\g_k\left( \sum_{ j \geq \frac{1}{2k} \ln_2 t } 2^{j(k-\frac12)} \| \Delta_j u_N \|_{L^2} \geq t^{ \frac12 - \frac{1}{4k} } \right)
\leq \sum_{ j \geq \frac{1}{2k} \ln_2 t  }\g_k\left(  2^{2+j(k-\frac12)} \| \Delta_j u_N \|_{L^2} \geq \s_jt^{ \frac12 - \frac{1}{4k} }\right)\,,
\ee
and
\be\label{eq:j-int2-1}
\g_k\left( \sum_{j \geq \frac{1}{2k} \ln_2 t} 2^{j(\b_{\ell} + \frac12)} \| \Delta_j u_N \|_{L^2} \geq t^{\frac{1}{2k}(\b_{\ell} + \frac12)}\right)
\leq \sum_{j\geq \frac{1}{2k} \ln_2 t} \g_k\left(  2^{2+j(\b_{\ell} + \frac12)} \| \Delta_j u_N \|_{L^2} \geq \s_jt^{\frac{1}{2k}(\b_{\ell} + \frac12)}\right)\,.
\ee
Now we use (recall $t\geq \sqrt N$)
\be\label{eq:j-int1-2}
\g_k\left(  2^{2+j(k-\frac12)} \| \Delta_j u_N \|_{L^2} \geq \s_jt^{\frac12 - \frac{1}{4k}} \right)= 
P\left(\sqrt{\sum_{n=2^{j-1}}^{2^j} |g_n|^2}\geq 2^{\frac j2-k-2}\s_jt^{\frac12 - \frac{1}{4k}}\right)\lesssim e^{-c2^{j}\s^2_j  t^{1- \frac{1}{2k}}}
\ee
where $\{g_n,\bar g_n\}_{n\in\N}$ are i.i.d. standard complex Gaussian random variables and $P$ is the associated probability. In an analog manner we have 
\begin{align}\label{eq:j-int2-2}
\g_k & \left(  2^{2+j(\b_{\ell} + \frac12)} \| \Delta_j u_N \|_{L^2} \geq \s_jt^{\frac{1}{2k}(\b_{\ell} + \frac12)}\right)
 \\ \nonumber
&= P\left(\sqrt {\sum_{n=2^{j-1}}^{2^j} |g_n|^2}
\geq 2^{j(k - \b_{\ell} - \frac 12)-k-2}\s_jt^{\frac{1}{2k}(\b_{\ell} + \frac12 )}\right)\lesssim 
e^{-c2^{j(2k-2\b_{\ell}-1)}\s^2_j t^{\frac{1}{k}(\b_{\ell} + \frac12)}}\,. 
\end{align}
Therefore \eqref{eq:j-int1-1} and \eqref{eq:j-int1-2} give
\be\label{eq:j-int1-fin}
\g_k\left( \sum_{ j \geq \frac{1}{2k} \ln_2 t }  2^{j(k-\frac12)} \| \Delta_j u_N \|_{L^2} \geq t^{\frac12 - \frac{1}{4k}}\right) \lesssim e^{-ct}\,,
\ee
and likewise by \eqref{eq:j-int2-1} and \eqref{eq:j-int2-2} we obtain 
\be\label{eq:j-int2-fin}
\g_k\left( \sum_{ j \geq \frac{1}{2k} \ln_2 t }  2^{j(\b_{\ell} + \frac12)} \| \Delta_j u_N \|_{L^2} \geq \s_jt^{\frac{1}{2k}(\b_{\ell} + \frac12)}\right) \lesssim e^{-ct}\,.
\ee

Summarising, by (\ref{eq:jpiccoli1}), (\ref{eq:jpiccoli2}), (\ref{eq:j-int1-fin}), (\ref{eq:j-int2-fin}) we obtain as $t>\sqrt N$ 
$$
\g_k\left(\sum_{j\geq0} 2^{j(k-\frac12)} \| \Delta_j u_N \|_{L^2}\geq t^{\frac 12 - \frac{1}{4k}}\right)
\lesssim e^{-c t}\,,
\quad \g_k\left( \sum_{j\geq0} 2^{j (\b_{\ell} + \frac12)} \| \Delta_j u_N \|_{L^2} \geq t^{\frac{1}{2k}(\b_{\ell} + \frac12)}\right)\lesssim e^{-c t}\,,
$$
whence (\ref{eq:gauge-tail}) follows. 
\end{proof}

\begin{proposition}\label{lemma:Lp}
Let $k\geq 2$ and $R_0 >0$ sufficiently small. Then there exists $C >0$, which only depends on $k$ and $R_0$,  
so that for any $p>1$
\be\label{eq:bound-der-alpha}
\left\|\frac{d}{d\a}\|(\Ga_\a^N f)\|^2_{H^k}\Big|_{\a=0}\right\|_{L^p(\tilde\g_k)}\leq Cp\,.
\ee
\end{proposition}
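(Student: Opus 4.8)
The plan is to compute the derivative at $\a=0$ in closed form, recognise it as a finite linear combination of the quartic monomials controlled by Lemma~\ref{lemma:bound-exp}, and then upgrade the exponential integrability supplied by that lemma into the claimed linear-in-$p$ bound.

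First I would differentiate using the defining ODE \eqref{eq:gauge-troncataDiff}. Restricting to $f\in E_N$, so that $f=P_Nf$, we have $\Ga_0^N f=f$ and $\frac{d}{d\a}\Ga_\a^N f|_{\a=0}=iP_N(\mathcal I[f]\,f)$. Since $\|\Ga_\a^N f\|_{L^2}$ is conserved by Lemma~\ref{L2NormGauge}, only the homogeneous part of the $H^k$ norm contributes, so that
\begin{equation}\nonumber
\frac{d}{d\a}\|\Ga_\a^N f\|_{H^k}^2\Big|_{\a=0}
=\frac{d}{d\a}\|\Ga_\a^N f\|_{\dot{H}^k}^2\Big|_{\a=0}
=2\Re\Big[i\int (\mathcal I[f]\,f)^{(k)}\,\overline{f^{(k)}}\Big]\,,
\end{equation}
where I used that $\partial_x^k$ commutes with $P_N$ together with self-adjointness of $P_N$ and $f^{(k)}\in E_N$ to discard the outer projection.

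Next I would expand $(\mathcal I[f]f)^{(k)}$ by the Leibniz rule and insert $\mathcal I[f]'=|f|^2-\mu$, which gives $(\mathcal I[f])^{(j)}=(|f|^2)^{(j-1)}$ for $j\geq1$, so that
\begin{equation}\nonumber
(\mathcal I[f]\,f)^{(k)}=\mathcal I[f]\,f^{(k)}+\sum_{j=1}^k\binom{k}{j}(|f|^2)^{(j-1)}f^{(k-j)}\,.
\end{equation}
The first summand contributes $i\int\mathcal I[f]\,|f^{(k)}|^2$, which is purely imaginary because $\mathcal I[f]$ is real, hence annihilated by $\Re$; this is the cancellation that removes the only potential top-order-squared term. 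Applying Leibniz once more to $(|f|^2)^{(j-1)}=(f\bar f)^{(j-1)}$, the remaining summands are integrals of monomials $f^{(a)}\bar f^{(b)}f^{(k-j)}\overline{f^{(k)}}$ with $a+b=j-1$. Each such monomial has exactly one factor of order $k$, namely $\overline{f^{(k)}}$, and three further factors whose orders $a,b,k-j$ are all $\leq k-1$ and sum to $(j-1)+(k-j)=k-1$. After reordering these three factors, every surviving term is, up to conjugation, of the form $\int u_N^{(k)}u_N^{(\b_2)}u_N^{(\b_3)}u_N^{(\b_4)}$ with $\b_2\geq\b_3\geq\b_4\geq0$ and $\b_2+\b_3+\b_4=k-1$, that is, precisely the objects of Lemma~\ref{lemma:bound-exp}. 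Thus the derivative is a linear combination, with finitely many coefficients depending only on $k$, of such monomials.

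Finally I would promote the exponential integrability \eqref{MainTTEst} to the $L^p$ bound. For a single such monomial $X$, Markov's inequality applied to \eqref{MainTTEst} yields the tail bound $\tilde\g_k(|X|>t)\leq Ce^{-bt}$ with $b,C$ depending only on $k,R_0$ and independent of $N$. Integrating $\|X\|_{L^p(\tilde\g_k)}^p=\int_0^\infty p\,t^{p-1}\tilde\g_k(|X|>t)\,dt\leq C\,b^{-p}\Gamma(p+1)$ and using $\Gamma(p+1)^{1/p}\lesssim p$ gives $\|X\|_{L^p(\tilde\g_k)}\lesssim p$ for all $p\geq1$, with implicit constant depending only on $k,R_0$. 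Summing over the finitely many monomials by the triangle inequality then yields \eqref{eq:bound-der-alpha}. The only genuinely delicate step is the algebraic bookkeeping of the previous paragraph — checking the cancellation of the purely imaginary top-order term and confirming that every remaining monomial satisfies the degree and order constraints of Lemma~\ref{lemma:bound-exp} — since the substantive probabilistic input, the sub-exponential tail for these quartic Gaussian polynomials, is already packaged in that lemma.
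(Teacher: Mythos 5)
Your overall route is the same as the paper's: reduce to the untruncated derivative by orthogonality, expand to first order in $\alpha$, and feed the resulting monomials into Lemma \ref{lemma:bound-exp} (your tail-integration step is equivalent to the paper's use of $|x|^p/p^p\leq e^{|x|}$). However, there is a genuine error in the algebraic bookkeeping — precisely the step you flag as the delicate one. From \eqref{DefMathcali} one has $\mathcal I[f]'=|f|^2-\mu$, so $(\mathcal I[f])^{(j)}=(|f|^2)^{(j-1)}$ holds only for $j\geq 2$; at $j=1$ the constant $-\mu$ survives. The correct Leibniz expansion is
\begin{equation}\nonumber
(\mathcal I[f]\,f)^{(k)}=\mathcal I[f]\,f^{(k)}-k\mu f^{(k-1)}+\sum_{j=1}^k\binom{k}{j}(|f|^2)^{(j-1)}f^{(k-j)}\,,
\end{equation}
and the term $-k\mu f^{(k-1)}$ you dropped contributes
\begin{equation}\nonumber
-2k\mu\,\Re\Big[i\int f^{(k-1)}\overline{f^{(k)}}\Big]=-4\pi k\mu\sum_{n}n^{2k-1}|f(n)|^2
\end{equation}
to the derivative, which is generically nonzero. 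This is exactly the term $-2ik\alpha\mu\int f^{(k-1)}\bar f^{(k)}$ appearing in the paper's first-order expansion \eqref{eq:ReallyUsRepr}, and the paper estimates it separately from the quartic monomials.

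The omission matters because this term is not of the form covered by Lemma \ref{lemma:bound-exp}: it is a product of two integrals ($\mu$ times a quadratic form in $f$), not a single integral of a quartic monomial with one factor of order $k$ and three factors of total order $k-1$. So your concluding claim that \emph{every} surviving term satisfies the hypotheses of that lemma is false, and the proof as written does not bound the full derivative. The repair is what the paper does: on the support of $\tilde\gamma_k$ one has $\mu\leq R_0$, and then $\left\|\int f^{(k-1)}\bar f^{(k)}\right\|_{L^p(\gamma_k)}\lesssim p\left\|\int f^{(k-1)}\bar f^{(k)}\right\|_{L^2(\gamma_k)}\lesssim p$ by hypercontractivity of second-order Gaussian chaos together with the $L^2$ bound of Lemma \ref{Wick2}. (Alternatively, one can check directly that $\int f^{(k-1)}\bar f^{(k)}=-2\pi i\sum_n n^{2k-1}|f(n)|^2$ has sub-exponential tails under $\gamma_k$, since the weights $n^{2k-1}/(1+n^{2k})$ are square-summable after the $n\leftrightarrow-n$ pairing, so your own tail-integration argument would apply to it — but some such argument must be supplied.) With this term restored and estimated, the rest of your proof — the cancellation of $\Re\big[i\int\mathcal I[f]\,|f^{(k)}|^2\big]$, the classification of the genuinely quartic monomials, and the moment bound from exponential integrability — is correct and parallels the paper, which obtains the expansion from its Section 2 algebra (Lemma \ref{lem:gauge1} and \eqref{MainReprForm}) rather than by a direct Leibniz computation.
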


\begin{proof}
Pairing in $\dot H^{k}$ equation \eqref{eq:gauge-troncataDiff} with $\Ga_\a^N f$, we get
\begin{align}\label{KeyOrth}\nn
\frac{d}{d\a}\|\Ga_\a^N f\|^2_{\dot{H}^k}\Big|_{\a=0}
&=2\Re\int \overline{\Ga_\a^N f}^{(k)} \left(\frac{d}{d\a}\Ga_\a^N f\right)^{(k)}\Big|_{\a=0}\nn\\
&=2\Re\int \overline{\Ga_\a^N f}^{(k)} P_N(i\mc I[\Ga_\a^N f]\Ga_\a^N f)^{(k)}\Big|_{\a=0}\nn\\
&=2\Re\int \overline{f}^{(k)} P_N(i\mc I[f]f)^{(k)}\nn\\
&=\frac{d}{d\a}\|\Ga_\a f\|^2_{\dot{H}^k}\big|_{\a=0}\,,
\end{align}
where the last identity follows by orthogonality and using equation \eqref{eq:gauge-troncataDiff} again.

Note that, by Lemma \ref{lem:gauge1} (used with $\psi=\ms G_{-\alpha}f$) and the representation of the integrals of motion
of the GDNLS given in \eqref{MainReprForm}, we have that
\begin{align}\label{eq:ReallyUsRepr}\nn
\|\mc \Ga_{\alpha}f\|^2_{\dot{H}^k}
&=\|f\|_{\dot{H}^k}^2
-2ik\alpha\mu\int f^{(k-1)}\bar f^{(k)}
\\
&
+i\alpha \sum_{\substack{ \b_2 + \b_3 + \b_4=k-1\\ \b_2 \geq \b_3 \geq \b_4 }} 
C_{\b_2, \b_3, \b_4}
\int (u^{(k)}  u^{(\b_2)} u^{(\b_3)}  u^{(\b_4)}) + \mc O(\a^2)
\,,
\end{align}
where as usual $u$ denotes either $f$ or $\bar f$.
Therefore, in order to estimate the derivative of $\|\Ga_\a^N y\|_{H^k}$ in $\a=0$, we need to estimate the terms in (\ref{eq:ReallyUsRepr}) which are linear in $\a$. More precisely, using \eqref{KeyOrth} and \eqref{eq:ReallyUsRepr}, we arrive to
\begin{align}\nn
& \left\|\frac{d}{d\a}\|\Ga_\a^N f\|_{H^k}\big|_{\a=0}\right\|_{L^p(\tilde\g_k)}
\\ \nn
& = 
\left\|2k\mu\int f^{(k-1)}\bar f^{(k)}
+ \sum_{\substack{ \b_2 + \b_3 + \b_4=k-1\\ \b_2 \geq \b_3 \geq \b_4 }} 
C_{\b_2, \b_3, \b_4}
\int (u^{(k)}  u^{(\b_2)} u^{(\b_3)}  u^{(\b_4)}) \right\|_{L^p(\tilde\g_k)}
\\ 
&
\leq 2 k  \left\|\mu\int f^{(k-1)}\bar f^{(k)}\right\|_{L^p(\tilde\g_k)}
+  \sum_{\substack{ \b_2 + \b_3 + \b_4=k-1\\ \b_2 \geq \b_3 \geq \b_4 }}
C_{\b_2, \b_3, \b_4} \left \|\int (u^{(k)}  u^{(\b_2)} u^{(\b_3)}  u^{(\b_4)})\right\|_{L^p(\tilde\g_k)}\,.
\label{eq2}
\end{align}
The first term of \eqref{eq2} is easier to handle, as
$$
\left\|\mu\int f^{(k-1)}\bar f^{(k)}\right\|_{L^p(\tilde\g_k)}\leq R_0\left\|\int f^{(k-1)}\bar f^{(k)}\right\|_{L^p(\tilde\g_k)}\lesssim R_0 p
\left\|\int f^{(k-1)}\bar f^{(k)}\right\|_{L^2(\g_k)}
$$
by hyper-contractivity and the $L^2(\g_k)$ norm on the r.h.s. is bounded by an absolute constant due to due to Lemma \ref{Wick2}.
To handle the remaining terms of \eqref{eq2} we use
Lemma \ref{lemma:bound-exp} and the elementary inequality $|x|^p/p^p\leq e^{|x|}$, which lead us to 
\begin{equation}\nn
\left \|\int (u^{(k)}  u^{(\b_2)} u^{(\b_3)}  u^{(\b_4)})\right\|_{L^p(\tilde\g_k)} \lesssim p   \, ,
\end{equation}
so that the proof is concluded. Notice that we have also used that $u_N^{(k)}u_N^{(\b_2)} u_N^{(\b_3)} u_N^{(\b_4)}$ converges $\tilde \g_k$-a.e. to $u^{(k)}u^{(\b_2)} u^{(\b_3)} u^{(\b_4)}$ (that has been proved in \cite{GLV16}) and Fatou's Lemma.
\end{proof}

Once we have that the norm in \eqref{eq:bound-der-alpha} grows at most linearly in $p$ the almost invariance of $\g_k$ w.r.t. the gauge map 
follows taking advantage (again) of the group property
of the map. We follow the argument of
\cite{sigma} and \cite{NLW}. Here we recall the key statements omitting the proofs when they are easily adapted from these papers.
As usual, we approximate $\tilde \g_k$ by a family of weighted measures $\tilde \g_{k,N}$ with density
\be\label{eq:jacobiOLD}
 \tilde \g_{k,N}(df) =  \chi_{\{ \mu [P_N f] \leq R_0 \} }(P_N f) \g_k(d f) \,.
\ee

\begin{lemma}\label{lemma:t0}
Let $k\geq2$. Then
\begin{equation}\label{TzvEst}
\frac{d}{d\a} \tilde\gamma_{k,N} (\Ga_{\a}^N (A))
\lesssim  p \,
\tilde \gamma_{k,N} (\Ga_{\a}^N (A))^{1-\frac{1}{p}} \left(1+\frac{\log N}{\sqrt N}\right)\,.
\end{equation}
for all $A\in\B(L^2(\T))$.
\end{lemma}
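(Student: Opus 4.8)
The plan is to run the \emph{soft} quasi-invariance scheme of \cite{sigma}, whose two quantitative inputs, Proposition \ref{lemma:Lp} and Lemma \ref{lemma:div}, are already in place. Throughout I write $g:=P_N f$ for the low-frequency part and use that $\Ga_\a^N$ moves only the $E_N$ coordinates (fixing $P_{>N}f$), preserves the $L^2$ norm by Lemma \ref{L2NormGauge}, and that on $E_N$ the measure $\g_k$ has density proportional to $e^{-\frac12\|g\|_{H^k}^2}$ with respect to the Lebesgue measure \eqref{NewLebesgue}. \textbf{Step 1.} First I would change variables $f=\Ga_\a^N h$ in $\tilde\g_{k,N}(\Ga_\a^N(A))$: the Lebesgue factor transforms by the Jacobian $\det[(D\Ga_\a^N)(P_N h)]$ of Proposition \ref{FlowMapBij}, the Gaussian weight becomes $e^{-\frac12\|\Ga_\a^N P_N h\|_{H^k}^2}$, and the cut-off is unchanged since $\mu[\Ga_\a^N P_N h]=\mu[P_N h]$. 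This gives
\begin{equation}\nonumber
\tilde\g_{k,N}(\Ga_\a^N(A))=\int_A G_\a(P_N h)\,\tilde\g_{k,N}(dh),\qquad G_\a(g):=e^{-\frac12(\|\Ga_\a^N g\|_{H^k}^2-\|g\|_{H^k}^2)}\det[(D\Ga_\a^N)(g)].
\end{equation}
Differentiating in $\a$ and inserting the representation \eqref{DiffDive} of the determinant yields
\begin{equation}\nonumber
\tfrac{d}{d\a}G_\a(g)=G_\a(g)\Big(-\tfrac12\tfrac{d}{d\a}\|\Ga_\a^N g\|_{H^k}^2+\dive\, iP_N(\mc I[\Ga_\a^N g]\,\Ga_\a^N g)\Big).
\end{equation}

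\textbf{Step 2.} Next I would undo the change of variables, turning the integral over $A$ back into one over $\Ga_\a^N(A)$ against the fixed reference $\tilde\g_{k,N}$. The crucial device is the one-parameter group property of $\Ga_\a^N$: setting $f=\Ga_\a^N h$ one has $\Ga_{\a+\e}^N h=\Ga_\e^N f$, so $\tfrac{d}{d\a}\|\Ga_\a^N P_N h\|_{H^k}^2$ becomes the $\e=0$ derivative of $\|\Ga_\e^N P_N f\|_{H^k}^2$ and the divergence term becomes $\dive\, iP_N(\mc I[P_N f]P_N f)$. Hence
\begin{equation}\nonumber
\frac{d}{d\a}\tilde\g_{k,N}(\Ga_\a^N(A))=\int_{\Ga_\a^N(A)}\Psi_N(f)\,\tilde\g_{k,N}(df),\qquad \Psi_N(f):=-\tfrac12\tfrac{d}{d\e}\big|_{\e=0}\|\Ga_\e^N P_N f\|_{H^k}^2+\dive\, iP_N(\mc I[P_N f]P_N f).
\end{equation}
Both summands of $\Psi_N$ depend only on $P_N f$.

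\textbf{Step 3.} Finally I would apply Hölder's inequality with exponents $p$ and $p/(p-1)$ to the last display, obtaining
\begin{equation}\nonumber
\Big|\frac{d}{d\a}\tilde\g_{k,N}(\Ga_\a^N(A))\Big|\leq\|\Psi_N\|_{L^p(\tilde\g_{k,N})}\,\tilde\g_{k,N}(\Ga_\a^N(A))^{1-1/p},
\end{equation}
after which it remains to show $\|\Psi_N\|_{L^p(\tilde\g_{k,N})}\lesssim p(1+\log N/\sqrt N)$. For the first summand this is precisely Proposition \ref{lemma:Lp} (note $\tfrac{d}{d\e}\|\Ga_\e^N f\|_{H^k}^2=\tfrac{d}{d\e}\|\Ga_\e^N P_N f\|_{H^k}^2$, since $\Ga_\e^N$ fixes the high frequencies). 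For the divergence term I would combine the pointwise bound $|\dive\, iP_N(\mc I[P_N f]P_N f)|\lesssim\|f\|_{H^1}^2\,\log N/\sqrt N$ of Lemma \ref{lemma:div} with the fact that $\|f\|_{H^1}^2$ sits in the (inhomogeneous) second Wiener chaos, so $\big\|\,\|f\|_{H^1}^2\,\big\|_{L^p(\g_k)}\lesssim p$ by hypercontractivity — here $k\geq2$ guarantees $\|f\|_{H^1}$ is $\g_k$-a.s. finite and $\g_k$-integrable. Summing the two contributions gives the required linear-in-$p$ bound and hence the claimed differential inequality.

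\textbf{Main obstacle.} The conceptual heart is Step 2: the group property is exactly what allows the $\a$-derivative of the $H^k$-norm to be read off at $\e=0$ after transport, which is the only regime where the sharp linear-in-$p$ control of Proposition \ref{lemma:Lp} is available; without it one would be forced to estimate $\tfrac{d}{d\a}\|\Ga_\a^N g\|_{H^k}^2$ uniformly in $\a$, for which no direct bound has been proved. A secondary, purely bookkeeping point is that Proposition \ref{lemma:Lp} and the exponential-moment estimate of Lemma \ref{lemma:bound-exp} behind it are phrased with the cut-off $\tilde\g_k$ on $\mu[f]$; since $\Psi_N$ depends only on $P_N f$ and the proofs use only $\mu[P_N f]\leq R_0$, the same estimates transfer verbatim to $\tilde\g_{k,N}$, which is the measure relevant here.
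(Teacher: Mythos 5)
Your proof is correct and follows essentially the same route as the paper: change of variables with the Jacobian formula of Proposition \ref{FlowMapBij}, the one-parameter group property to reduce everything to derivatives at the identity (where Proposition \ref{lemma:Lp} applies), H\"older's inequality, and Lemma \ref{lemma:div} combined with hypercontractivity of $\|f\|_{H^1}^2$ for the divergence term. The only differences are presentational — you change variables first and then invoke the group property pointwise, whereas the paper first reduces to the $\a=0$ derivative and then changes variables — plus your explicit handling of the $\tilde\g_k$ versus $\tilde\g_{k,N}$ bookkeeping, which the paper glosses over.
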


\begin{proof}
Let $A\in\B(L^2(\T))$. For all $\bar\a \in \R$ we have  
\begin{equation}
\frac{d}{d\a} \tilde\gamma_{k,N}( \Ga^N_\a (A))\Big|_{\a=\bar \a}
=
\frac{d}{d\a} \tilde\gamma_{k,N}( \Ga^N_\a (\Ga^N_{\bar \a}(A)))\Big|_{\a=0} \, .
\end{equation}
This is a simple consequence of the fact that $\Ga^N_\a$ is a one parameter group of transformations. One can adapt directly the 
argument in \cite{TV13b} (Proposition 5.4, step 2). 
Using Proposition~\ref{FlowMapBij}, the identity \eqref{eq:div1New} and the fact that $\| \Ga^N_\a f \|^2_{L^2} = \| P_N f \|_{L^2}$ for all $\a \in \R$ (see Lemma \ref{L2NormGauge}), 
we can compute
\begin{align}\nn
\frac{d}{d\a} & \tilde\g_{k,N}(\Ga^N_{\a}(\Ga^N_{\bar\a}))\Big|_{\a=0}
=
\frac{d}{d\a}  \int_{\Ga^N_{\bar\a}(A)} \tilde\g_{k,N}(d(\Ga^N_{\a} f))\Big|_{\a=0} \\ \nn
&=
\frac{d}{d\a} \int_{\Ga^N_{\bar\a}(A)} \tilde\g_{k,N}(df) |\det D\Ga^N_\a (f)|  
e^{- \frac12 ( \| \Ga^N_\a f \|^2_{\dot{H}^k} - \| f \|^2_{\dot{H}^k} ) }  \Big|_{\a=0} \\ \nn
&
=
\int_{\Ga^N_{\bar\a}(A)}  \tilde \g_{k,N}(df)  \frac{d}{d\a}  |\det D\Ga^N_\a (f)|  \Big|_{\a=0}
- \frac12
\int_{\Ga^N_{\bar\a}(A)}  \tilde\g_{k,n}(df) 
|\det D\Ga^N_\a (f)|   
\frac{d}{d\a}\|\Ga_\a^N f\|^2_{\dot{H}^k} \Big|_{\a=0} \\ \nn
&
=
i \int_{\Ga^N_{\bar\a}(A)}  \tilde\g_{k,N}(df)   
    \dive P_N \left(\mathcal I [f] f \right)\Big|_{\a=0} 
- \frac12
 \int_{\Ga^N_{\bar\a}(A)}      \tilde\g_{k,N}(df)
\frac{d}{d\a}\|\Ga_\a^N f\|^2_{\dot{H}^k} \Big|_{\a=0}\, .
\end{align}
%
Then by H\"older's inequality we obtain
\begin{align}
\frac{d}{d\a} \tilde\g_{k,N}( \Ga^N_\a (A))\Big|_{\a=\bar \a}
& 
\leq
\tilde\g_{k,N}(\Ga^N_{\bar\a}(A))^{1-\frac1p} \|   \dive P_N \left(\mathcal I [f] f \right) \|_{L^p(\g_k)}
\\ \nn
& 
+ \tilde\g_{k,N}(\Ga^N_{\bar\a}(A))^{1-\frac1p} \Big\|\frac{d}{d\a}\|(\Ga_\a^N f)\|^2_{H^k}\Big|_{\a=0}\Big\|_{L^p(\tilde\g_k)}
\end{align}
The first term on the r.h.s. is bounded using Lemma \ref{lemma:div}, as $\|\|f\|^2_{H^1}\|_{L^p(\g_k)}\lesssim p$ for any $p\geq1$, $k\geq2$. The second one is bounded by Lemma \ref{lemma:Lp} and we obtain \eqref{TzvEst}. 
\end{proof}

\begin{lemma}\label{TzvOhLemma1}
Let $k \geq 2$.
There exists an absolute constant $\bar \a > 0$ so that the following holds. 
For all $\e >0$ 
there exists $\delta(\e)$ so that the following holds. If $A \in\B(L^2(\T)) $ is such that
with $\g_{k,N}(A) < \delta $ then $\g_{k,N}( \Ga_\a^N A) < \e $
for all $|\alpha| < \bar\a$. 
\end{lemma}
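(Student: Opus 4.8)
The plan is to integrate the differential inequality of Lemma \ref{lemma:t0} and then optimise over the exponent $p$, following the soft argument of \cite{sigma}; here $\g_{k,N}$ is understood as the approximating weighted measure $\tilde\g_{k,N}$ of \eqref{eq:jacobiOLD} controlled by that lemma. The first observation is that the factor $1+\frac{\log N}{\sqrt N}$ on the right-hand side of \eqref{TzvEst} is bounded by an \emph{absolute} constant uniformly in $N$, since $x\mapsto\frac{\log x}{\sqrt x}$ is bounded on $[1,\infty)$. Writing $f(\a):=\tilde\g_{k,N}(\Ga_\a^N(A))$, Lemma \ref{lemma:t0} therefore yields a bound of the form
$$
\left|\frac{d}{d\a} f(\a)\right| \leq C\,p\, f(\a)^{1-\frac1p}\,,
$$
with $C$ independent of $N$, $A$ and $\a$. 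The two-sided estimate (rather than the one-sided statement of \eqref{TzvEst}) is legitimate because the bound in the proof of Lemma \ref{lemma:t0} comes from H\"older's inequality applied to the absolute values of the two contributions, and because $\Ga_\a^N$ is a one-parameter group, so the inequality at a base point reduces to the one at $\a=0$.

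Next I would linearise the inequality by the substitution $g(\a):=f(\a)^{1/p}$, which gives $|g'(\a)| = \frac1p f(\a)^{\frac1p-1}|f'(\a)|\leq C$. Integrating from $0$ and using $f(0)=\tilde\g_{k,N}(A)<\d$ yields
$$
f(\a)^{1/p}\leq \d^{1/p}+C|\a|\,,
\qquad\text{hence}\qquad
f(\a)\leq \d\bigl(1+C|\a|\,\d^{-1/p}\bigr)^{p}\,,
$$
for all $\a$ in the interval of integration. The decisive step is the choice $p=\log(1/\d)$, which is admissible as soon as $\d<e^{-1}$ (so that $p>1$ as required by Lemma \ref{lemma:t0}); with this choice $\d^{-1/p}=e$, and setting $L:=\log(1/\d)$ we obtain
$$
f(\a)\leq \d\,(1+Ce|\a|)^{L}=e^{-L\bigl(1-\log(1+Ce|\a|)\bigr)}=\d^{\,1-\log(1+Ce|\a|)}\,.
$$

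Finally I would fix the absolute constant $\bar\a>0$ by requiring $\log(1+Ce\bar\a)\leq\frac12$, which gives $f(\a)\leq\d^{1/2}$ for all $|\a|\leq\bar\a$, uniformly in $N$ and $A$. Then, given $\e>0$, the choice $\d(\e):=\min\{\e^{2},e^{-2}\}$ ensures both $p=\log(1/\d)\geq2>1$ and $\d^{1/2}\leq\e$, so that $\tilde\g_{k,N}(A)<\d(\e)$ forces $\tilde\g_{k,N}(\Ga_\a^N(A))<\e$ for every $|\a|<\bar\a$, as claimed. Since the heavy analytic content is already packaged into Lemma \ref{lemma:t0}, the present argument is essentially an ODE comparison; the only genuinely delicate points are (i) extracting the $N$-uniform constant from the $\frac{\log N}{\sqrt N}$ term, and (ii) performing the optimisation over $p$ correctly so that the merely linear-in-$p$ growth of the derivative is converted into the small-measure preservation — this optimisation is the crux of the proof.
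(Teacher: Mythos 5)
Your proof is correct and is essentially the paper's own approach: the paper proves this lemma by citing the argument of \cite[Proposition 5.3]{NLW}, which is precisely what you have written out --- integrate the differential inequality of Lemma \ref{lemma:t0} (with the factor $1+\frac{\log N}{\sqrt N}$ absorbed into an $N$-independent constant) via the substitution $g=f^{1/p}$, and then optimise with the choice $p=\log(1/\delta)$ to fix an absolute threshold $\bar\a$. Your two flagged points (the two-sided derivative bound coming from H\"older, and the $N$-uniformity of the constant) are exactly the points that make the cited argument applicable, so nothing is missing.
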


\begin{proof}
It follows by Lemma \ref{lemma:t0}, using exactly the same argument of \cite[Proposition 5.3]{NLW}. 
\end{proof}

Now we want to pass to the limit $N \to \infty$ to get an a priory bound of $\g_{k}( \Ga_\a^N A)$ for small values of $\a$
and sets $A$ of small measure.

\begin{proposition}
Let $k \geq 2$.
There exists an absolute constant $\bar \a > 0$ so that the following holds. Let $R>1$ and $A \in \B(L^2(\T))$ with $A \subset B^1(R)$. 
Then for all $\e >0$ there exists $\d >0$ so that  
$$
\tilde\gamma_k (A) < \delta \Rightarrow \sup_{|\a| \leq \bar \a} \tilde\gamma_k (\Ga_\a A) < \e \,. 
$$  
\end{proposition}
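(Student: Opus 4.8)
The strategy is to transfer the finite‑dimensional ``small sets stay small'' estimate of Lemma \ref{TzvOhLemma1} to the genuine gauge map $\Ga_\a$ by a soft compactness argument, in the same spirit as the passage from the truncated to the true GDNLS flow in the proof of Proposition \ref{Prop:invarianza-gauged}. First I would reduce to \emph{compact} data. Writing $\mu_\a:=(\Ga_{-\a})_*\tilde\g_k$, one has $\tilde\g_k(\Ga_\a A)=\mu_\a(A)$; since $\Ga_{-\a}$ is continuous on $L^2(\T)$ (by \eqref{lemma:gauge-bound-HsBis}), $\mu_\a$ is a finite Borel measure on the Polish space $L^2(\T)$ and is therefore inner regular. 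Hence it suffices to establish the claim for sets $K$ compact in $L^2(\T)$ with $K\subset B^1(R)$, because $\mu_\a(A)=\sup\{\mu_\a(K):K\subset A\text{ compact}\}$ and every such $K$ inherits $\tilde\g_k(K)\le\tilde\g_k(A)<\delta$.

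Fix $\e>0$, let $\bar\a$ be the absolute constant and $\delta_0=\delta_0(\e/2)$ the threshold from Lemma \ref{TzvOhLemma1} (applied with target $\e/2$), and set $\delta:=\delta_0$, which depends only on $\e$ and $k$. The two elementary monotonicity facts I would exploit are: (i) since $\mu[P_Nf]\le\mu[f]$, one has $\chi_{\{\mu[f]\le R_0\}}\le\chi_{\{\mu[P_Nf]\le R_0\}}$, whence $\tilde\g_k(C)\le\tilde\g_{k,N}(C)$ for every Borel $C$, with $\tilde\g_{k,N}$ the weighted measure \eqref{eq:jacobiOLD}; and (ii) by dominated convergence $\tilde\g_{k,N}(C)\to\tilde\g_k(C)$ as $N\to\infty$ for every fixed Borel $C$. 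Now, given a compact $K\subset B^1(R)$ with $\tilde\g_k(K)<\delta_0$, set $B_{\e'}:=K+B^0(\e')$, the $\e'$‑neighbourhood of $K$ in $L^2$. As $K$ is compact, $\tilde\g_k(B_{\e'})\downarrow\tilde\g_k(K)<\delta_0$ as $\e'\downarrow0$ by continuity from above, so I fix $\e'$ small enough that $\tilde\g_k(B_{\e'})<\delta_0$; then, using (ii) and Corollary \ref{StabCorImp} (whose hypotheses $K$ compact in $L^2$, $K\subset B^1(R)$ are met, with $s=0$), I fix $N$ so large that \emph{simultaneously} $\tilde\g_{k,N}(B_{\e'})<\delta_0$ and $\Ga_\a K\subset\Ga_\a^N B_{\e'}$ for all $|\a|\le\bar\a$.

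Chaining these facts gives, for every $|\a|\le\bar\a$,
$$
\tilde\g_k(\Ga_\a K)\;\le\;\tilde\g_k(\Ga_\a^N B_{\e'})\;\le\;\tilde\g_{k,N}(\Ga_\a^N B_{\e'})\;<\;\tfrac{\e}{2},
$$
where the first inequality uses the inclusion from Corollary \ref{StabCorImp} and monotonicity of $\tilde\g_k$, the second is (i), and the third is Lemma \ref{TzvOhLemma1} applied to $B_{\e'}$ (legitimate since $\tilde\g_{k,N}(B_{\e'})<\delta_0$). Thus $\sup_{|\a|\le\bar\a}\tilde\g_k(\Ga_\a K)\le\e/2$ for every admissible compact $K$. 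Taking the supremum over compact $K\subset A$ via inner regularity of $\mu_\a$ yields $\tilde\g_k(\Ga_\a A)=\mu_\a(A)\le\e/2<\e$ for each $|\a|\le\bar\a$, hence $\sup_{|\a|\le\bar\a}\tilde\g_k(\Ga_\a A)<\e$ for all Borel $A\subset B^1(R)$ with $\tilde\g_k(A)<\delta$, which is the assertion.

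The delicate point is not any single estimate but the \emph{bookkeeping of the quantifiers}: the neighbourhood parameter $\e'$ and the truncation level $N$ are chosen internally (depending on $K$, $R$, $\e'$) and must never leak into the final threshold $\delta$, which is forced to depend on $\e$ alone. The main obstacle is therefore to run the two limiting mechanisms in the correct nested order --- first let $\e'\to0$ to absorb the neighbourhood enlargement produced by Corollary \ref{StabCorImp}, and only afterwards let $N\to\infty$ to replace $\tilde\g_{k,N}$ by $\tilde\g_k$ --- while keeping the estimate of Lemma \ref{TzvOhLemma1} uniform in $N$. The one‑sided comparison $\tilde\g_k\le\tilde\g_{k,N}$ in (i) is what makes this possible: it lets me discard the truncated measure from the top of the chain with no error term, so that the Gaussian weight is genuinely controlled only at the very last inequality, where Lemma \ref{TzvOhLemma1} (itself uniform in $N$ thanks to the $1+\log N/\sqrt N$ factor) is invoked.
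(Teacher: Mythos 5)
Your proof is correct and follows exactly the route the paper intends: the paper's own proof is just the one-line citation ``it follows by Lemma \ref{TzvOhLemma1} and Corollary \ref{StabCorImp}, using the argument of Oh--Tzvetkov,'' and your write-up is a faithful, fully detailed version of that argument (inner regularity to reduce to compacta, enlargement $K+B^0(\e')$, the inclusion from Corollary \ref{StabCorImp}, the $N$-uniform smallness from Lemma \ref{TzvOhLemma1}, with the comparison $\tilde\g_k\le\tilde\g_{k,N}$ and dominated convergence handling the truncated measures). In particular you correctly kept $\delta$ depending on $\e$ alone, with $\e'$ and $N$ chosen only inside the argument, which is the one delicate point.
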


\begin{proof}
It follows by 
Lemma \eqref{TzvOhLemma1} and Corollary \ref{StabCorImp}, using exactly the same argument of \cite[Lemma 5.5]{NLW}.
\end{proof}

We are now ready to prove Theorem \ref{th:gauge}.
Given $A \in \B(L^2(\T))$ with $A  \subset B^1(R)$ we have 
$$ 
\tilde\gamma_k (A \cap B^1(R) ) < \delta \Rightarrow \sup_{|\a| \leq \bar \a} \tilde\gamma_k (\Ga_\a (A \cap B^1(R))) < \e
$$
As $\e>0$ is arbitrary we get
$$ 
\tilde\gamma_k (A \cap B^1(R) ) = 0 \Rightarrow \sup_{|\a| \leq \bar \a} \tilde\gamma_k (\Ga_\a (A \cap B^1(R))) = 0 \, .
$$
Since $\a$ is independent on $R$ we can take the limit $R \to \infty$ to get
$$ 
\tilde\gamma_k (A) = 0 \Rightarrow \sup_{|\a| \leq \bar \a} \tilde\gamma_k (\Ga_\a (A)) = 0 \, ,
$$
for all $\tilde\g_k$-measurable subset $A \subseteq H^1(\T)$. Then the statement follows iterating this estimate, using again that $\bar \a$ is an absolute constant, and recalling that $H^1$ has full $\tilde\g_k$-measure for $k\geq2$.


\end{document}